\def\yr{1}
\DeclareMathAlphabet\mathbfcal{OMS}{cmsy}{b}{n}
\newcommand{\be}{\begin{equation}}
\newcommand{\ee}{\end{equation}}
\newcommand{\bea}{\begin{eqnarray}}
\newcommand{\eea}{\end{eqnarray}}
\newcommand{\bes}{\begin{equation*}}
\newcommand{\ees}{\end{equation*}}
\newcommand{\beas}{\begin{eqnarray*}}
\newcommand{\eeas}{\end{eqnarray*}}
\renewcommand{\d}{{\mathrm d}}
\newcommand{\im}{\mathrm{i}}
\newcommand{\e}{\mathrm{e}}
\newtheorem{theo}{Theorem}[section]
\newtheorem{lem}[theo]{Lemma}
\newtheorem{rem}[theo]{Remark}
\newtheorem{problem}[theo]{Riemann-Hilbert Problem}
\newtheorem{remark}[theo]{Remark}
\newtheorem{prop}[theo]{Proposition}
\newtheorem{cor}[theo]{Corollary}
\newtheorem{definition}[theo]{Definition}
\newtheorem{conj}[theo]{Conjecture}
\DeclareFontFamily{U}{mathx}{}
\DeclareFontShape{U}{mathx}{m}{n}{<-> mathx10}{}
\DeclareSymbolFont{mathx}{U}{mathx}{m}{n}
\DeclareMathAccent{\widecheck}{0}{mathx}{"71}
\begin{document}
\title{Asymptotic analysis of a Family of Painlevé Functions with Applications to CUE Derivative Moments}
\author{Thomas Bothner}
\address{School of Mathematics, University of Bristol, Fry Building, Woodland Road, Bristol, BS8 1UG, United Kingdom}
\email{thomas.bothner@bristol.ac.uk}

\author{Fei Wei}
\address{Department of Mathematics, University of Sussex, Brighton, BN1 9RH, United Kingdom}
\email{weif0831@gmail.com}

\date{\today}

\keywords{Asymptotic analysis, Hankel determinant, Painlev\'e-V, Painlev\'e-III$'$, probability density function}
\subjclass[2010]{Primary 33E17; Secondary 11M50, 60B20, 15B52}

\begin{abstract}
The Riemann–Hilbert method is employed to carry out an asymptotic analysis of a family of $\sigma$-Painlev\'e V functions associated with Hankel determinants involving the confluent hypergeometric function of the second kind. In the large-matrix limit, this family degenerates to a family of $\sigma$-Painlev\'e III$'$ functions, whose precise asymptotic behavior is also obtained. Both families of Painlev\'e functions arise in the study of the joint moments of the derivative of the characteristic polynomial of a CUE random matrix and the polynomial itself, whose asymptotics are closely related to the moments of the Riemann zeta function and the Hardy~$\mathsf{Z}$-function on the critical line. One of our main results establishes a representation of the leading coefficients of these joint moments in terms of $\sigma$-Painlevé~III$'$ functions for general real exponents. The other main result resolves a question of Assiotis et al. in \cite{ABGS} 
concerning the existence of a probability density for a random variable arising in the ergodic decomposition of Hua-Pickrell measures. 
\end{abstract}

\maketitle

\setcounter{tocdepth}{1}
\tableofcontents

\section{Introduction}

There are deep and far‑reaching connections between Painlev\'e equations and random matrix theory. In many classical random matrix ensembles, quantities such as gap probabilities, correlation functions, and partition functions can be expressed as Fredholm determinants of integrable integral operators, which in turn satisfy nonlinear differential equations of Painlev\'e type. For instance, Tracy and Widom \cite{tracy1994fredholm} showed that the distribution of the largest eigenvalue in the Gaussian Unitary Ensemble (GUE) is governed by a solution of the Painlev\'e II equation. Forrester and Witte \cite{ForresterWittePainleve1,forrester2002application} applied Okamoto's $\tau$-function theory to relate Painlev\'e equations to certain random matrix theory averages taken over various ensembles, including the Laguerre Unitary Ensemble (LUE), Jacobi Unitary Ensemble (JUE), and Circular Unitary Ensemble (CUE).  For a further discussion on the roles of Painlev\'e functions in other aspects of random matrix theory, we refer readers to, e.g., \cite{ItsAlexander}. The isomonodromic deformation method, developed by Jimbo, Miwa, Ueno \cite{JM} and Flaschka, Newell \cite{FN}, provides a fundamental link between  Painlev\'e equations and Riemann–Hilbert problems. Building on this connection, a systematic theoretical framework was developed by Deift, Its, and their collaborators, who established that many random matrix quantities can be characterised by Riemann–Hilbert problems whose solutions link to Painlev\'e-type equations, thereby providing a unified approach for both exact and asymptotic analyses (e.g., \cite{Deift1999, DZ, ItsN}). In this paper, we use the Riemann–Hilbert method to study the connection between Painlev\'e functions and the joint moments of the derivative of the characteristic polynomial of a CUE random matrix and itself.\bigskip 

Let $\mathbb{U}(N)$ denote the group of $N\times N$ unitary matrices, equipped with the normalised Haar measure $\mu_N$. For any $A\in\mathbb{U}(N)$, we denote its characteristic polynomial by
\begin{equation*}
	Z_A(\theta):=\prod_{j=1}^N\Big(1-\e^{\im(\theta_j-\theta)}\Big),\ \ \ \theta\in[0,2\pi),
\end{equation*}
where $\{\theta_j\}_{j=1}^N\subset[0,2\pi)$ are the eigenangles of $A$. Let
\begin{equation*}
	V_A(\theta):=\exp\Bigg[\frac{\im N}{2}(\theta+\pi)-\frac{\im}{2}\sum_{j=1}^N\theta_j\Bigg]Z_A(\theta),
\end{equation*}
so that $V_A:[0,2\pi)\rightarrow\mathbb{R}$ is real-valued and $|V_A(\theta)|=|Z_A(\theta)|$. Our primary interest lies on the joint moments of $V_{A}(\theta)$ and its derivative, namely on
\begin{equation}\label{i1}
	F_N(s,h):=\int_{\mathbb{U}(N)}|V_A(0)|^{2(s-h)}|V_A'(0)|^{2h}\d\mu_N(A),\ \ \ \ \ \textnormal{Re}(h)>-\frac{1}{2},\ \textnormal{Re}(s-h)>-\frac{1}{2}.
\end{equation}
The study of (\ref{i1}) has attracted a lot of attention over the past twenty-five years. Interest in \eqref{i1} originated from the work of Keating and Snaith \cite{keating2000random} on the moments of characteristic polynomials of Haar-distributed unitary random matrices, which is \eqref{i1} with $h=0$. In that work, they derived an explicit formula for (\ref{i1}) (with $h=0$) and proposed the conjecture that the leading coefficients in this case, as $N\rightarrow\infty$, coincide with those appearing in the asymptotic expansion of the mean value of the $(2s)^{\textnormal{th}}$ moment of the Riemann zeta function $\zeta$ on the critical line $\tfrac{1}{2}+\im t, t\in [0,T]$, as $T\rightarrow \infty$. The latter is known as the Keating-Snaith conjecture. Prior to \cite{keating2000random}, no viable conjecture existed for the mean-value moment asymptotics of $\zeta$ on the critical line when $s\in\mathbb{Z}_{\geq 5}$, even under the assumption of the Riemann hypothesis. Several special cases of the Keating-Snaith conjecture have been confirmed by known results in analytic number theory for small values of $s$. We also refer the reader to \cite{gonek2007hybrid} for theoretical evidence supporting this conjecture.\smallskip 

Inspired by \cite{keating2000random}, Hughes \cite{Hughes} subsequently proposed investigating the asymptotic expression of \eqref{i1}, as $N\rightarrow\infty$, for general real $s$ and real $h$ with $-\frac{1}{2}<h<s+\frac{1}{2}$ as follows\footnote{We use $f(x)\sim g(x)$ as $x\rightarrow\infty$ to denote asymptotic equivalence, i.e. $f(x)/g(x)\rightarrow 1$ as $x\rightarrow\infty$},
\bea\label{object}
F_{N}(s,h)\sim F(s,h)N^{s^2+2h}, \quad \textnormal{as} \quad N \rightarrow \infty,
\eea
aiming to obtain the conjectural leading coefficients of the mean value of the joint moments of the Hardy $\mathsf{Z}$-function, $\mathsf{Z}(t),t\in[0,T]$, as $T \to \infty$. Here, $\mathsf{Z}(t)$ is a real-valued counterpart of the Riemann zeta function on the critical line, obtained by multiplying $\zeta(\frac{1}{2}+\im t)$ by a phase factor to make it real for real $t$, compare \eqref{Hardy Z-function}. 

\subsection{A brief history of $F(s,h)$}\label{history} We now give a brief review of the progress on (\ref{object}). For $(s,h)\in \mathbb{Z}_{\geq 1}^2$ with $2h < 2s + 1$, Hughes \cite{Hughes} provided an explicit formula for $F(s,h)$. Conrey, Rubinstein, and Snaith \cite{conreyetal} gave an alternative proof using representations of these moments in terms of the multiple contour integrals introduced in \cite{CFKRS1,CFKRS2}. They found another expression for the leading coefficient $F(s,s)$, in terms of a Hankel determinant of size $s\times s$. The Conrey-Rubinstein-Snaith result was later extended by Bailey et al. \cite{Bailey_2019} to $F(s,h)$, still with $s,h\in\mathbb{Z}_{\geq 1}$, and shown, based on previous results by Forrester and Witte \cite[Section $5$]{Forrester_2006}, to relate to $\sigma$-Painlev\'e III$'$ functions.
%
Independently, Basor et al. \cite[(1.16)]{Basor_2019} obtained a similar result for $F(s,h),(s,h)\in\mathbb{Z}_{\geq 1}^2$ using a Riemann-Hilbert analysis to derive the large $N$ limit of a certain $s \times s$ Hankel determinant of Laguerre polynomials, which relates to $F_N(s,h)$ as established in \cite{W}. Additionlly, Dehaye \cite{Dehaye2008} gave an alternative proof of \eqref{object} for $s,h\in\mathbb{Z}_{\geq 1}$, and obtained another representation of $F(s,h)$ as a certain combinatorial sum. Moreover, Simm and Wei \cite[Theorem 1.6]{simm2024moments} gave an expression for $F(s,h)$, with $s,h\in \mathbb{Z}_{\geq 1}$, in terms of the Bessel kernel. \smallskip

For $s \in \mathbb{Z}_{\geq 1}$ and $2h \in \mathbb{Z}_{\geq 1}$ odd with $2h < 2s + 1$, Winn \cite{W} proved \eqref{object} and derived an explicit expression for $F(s,h)$ in terms of sums over partitions. Later, Assiotis et al. \cite{ABGS} used a probabilistic approach to obtain an alternative expression for $F(s,h)$ in this case. For any real $2s > -1$ and $0<2 h <2s +1$, Assiotis, Keating, and Warren \cite{AKW} proved that 
 (\ref{object}) holds, and they provided a probabilistic representation of $F(s,h)$ in terms of a real-valued random variable appearing in the ergodic decomposition of the Hua–Pickrell measure. Furthermore, Assiotis et al. \cite[Remark 2.8]{ABGS} gave an alternative expression for $F(s,h)$ in terms of an infinite series. Moreover, in Remark 1.3 of \cite{ABGS}, they also expressed $F(s,h)$ in terms of a $\sigma$-Painlev\'e III$'$ function for $2s > -1$ and $h \in \mathbb{Z}_{\geq 1}$. 
 For exponents with $s \notin \mathbb{Z}_{\geq 1}$ and $h \notin \mathbb{Z}_{\geq 1}$, to the best of our knowledge, no explicit results exist concerning the connection between $F(s,h)$ and Painlev\'e functions. In this work, we address this previously unexplored question by providing an explicit representation of $F(s,h)$ in terms of a $\sigma$-Painlev\'e III$'$ function for real $s > -\tfrac{1}{2}$ and complex $h$, as follows.
 \subsection{$F(s, h)$ in terms of a Painlev\'e function for general exponents $s$ and $h$.}
\begin{theo}\label{maintheorem1}
Let $s > -\tfrac{1}{2}$ and $h \in \mathbb{C}$ with $0 \leq \mathrm{Re}(h) < \tfrac{1}{2} + s$. Then \eqref{object} holds with 
\begin{equation*}
F(s,h)
= \frac{G^2(1+s)}{G(1+2s)} \times
\begin{cases}
(-1)^h \displaystyle\frac{\mathrm{d}^{2h}}{\mathrm{d}t^{2h}}
\exp\left[
\int_{0}^{t} v(z; s)\, \frac{\mathrm{d}z}{z}
\right]\bigg|_{t=0}, & h \in \mathbb{Z}_{\ge 0}\bigskip\\[1.0ex]
-\dfrac{2}{\pi} \sin(\pi h)\, \Gamma(2h - 2M)
\displaystyle\int_{0}^{\infty} \frac{\mathrm{d}^{2M+1}}{\mathrm{d}t^{2M+1}}
\exp\left[
\int_{0}^{t} v(z; s)\, \frac{\mathrm{d}z}{z}
\right]
\frac{\d t}{t^{2h-2M}}, & h \notin \mathbb{Z}_{\ge 0}
\end{cases},
\end{equation*}
where $G(z)$ is the Barnes' $G$-function, cf. \cite[$\S 5.17$]{NIST}, $\Gamma(z)$ Euler's Gamma function, cf. \cite[$\S 5.2.1$]{NIST}, and $M\in\mathbb{Z}_{\geq 0}$ 
such that $\mathrm{Re}(h) \in (M, M+1)$.
The function $v(z)=v(z; s)$ satisfies the $\sigma$-Painlev\'e \textnormal{III}$'$ equation,
\begin{equation}\label{e42}
\left( z \frac{\mathrm{d}^2 v}{\mathrm{d}z^2} \right)^{\!2}
= \left( 4\left( \frac{\mathrm{d}v}{\mathrm{d}z} \right)^{\!2} - 1 \right)
\left( v + s^2 - z \frac{\mathrm{d}v}{\mathrm{d}z} \right) + s^2,\ \ \ \ \ \ z>0,\ \ s>-\frac{1}{2},
\end{equation}
cf. \cite{O}, and it has the explicit form
\begin{equation}\label{e73repeat}
v(z; s)
= -\im \sqrt{\frac{z}{\beta}}\,
Q_1^{21}\left( 2\sqrt{\frac{z}{\beta}}, s \right)
+ \frac{1}{16} - s^2 - \frac{z}{2},
\end{equation}
where \(Q_1^{21}(x,s)\) is the \((2,1)\)-entry of \(Q_1(x,s)\) appearing in the unique solution of RHP \ref{PIIImodel}, see \eqref{A4}.
\end{theo}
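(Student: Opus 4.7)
My approach has three stages: (i) reduce $F_N(s,h)$ to derivatives of a parameter-dependent Hankel determinant; (ii) identify the large-$N$ limit via Deift--Zhou steepest descent and a tau-differential; and (iii) pass from integer to non-integer $h$ by analytic continuation.

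For stage (i), introduce a generating Hankel determinant $D_N(s;t)$ whose $2h$-th $t$-derivative at $t=0$ recovers $F_N(s,h)$ for $h\in\mathbb{Z}_{\ge 0}$ (up to the explicit normalisation $G^2(1+s)/G(1+2s)$). Via Heine's identity on $\mathbb{U}(N)$ and a standard generating-function expansion of $|V_A'(0)|^{2h}$, the object $D_N(s;t)$ becomes an $N\times N$ Hankel determinant whose weight is built from the confluent hypergeometric function of the second kind --- the family already covered (at finite $N$) by the $\sigma$-Painlev\'e V analysis of the preceding sections.

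For stage (ii), perform Deift--Zhou steepest descent on the $2\times 2$ orthogonal-polynomial RHP attached to $D_N(s;t)$, with rescaling $t = z/N$. The usual sequence of transformations --- opening of lenses, construction of a global parametrix with Fisher--Hartwig singularities producing the Barnes' $G$-function prefactor $G^2(1+s)/G(1+2s)$, followed by a local parametrix at the confluent endpoint --- degenerates the finite-$N$ problem to the model RHP \ref{PIIImodel}, whose solution $Q_1(x,s)$ defines $v(z;s)$ via \eqref{e73repeat}. A Jimbo--Miwa--Ueno tau-differential identity, integrated from $0$ to $z$, then yields the uniform-on-compacts limit
\[
N^{-s^2}\,D_N\!\left(s;\tfrac{z}{N}\right) \;\longrightarrow\; \frac{G^2(1+s)}{G(1+2s)}\,\exp\!\left[\int_{0}^{z} v(\z;s)\,\frac{\d\z}{\z}\right],\qquad N\to\infty.
\]
Cauchy's differentiation formula propagates this convergence to all $t$-derivatives at $t=0$, which establishes the $h\in\mathbb{Z}_{\ge 0}$ formula. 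The $\sigma$-Painlev\'e III$'$ equation \eqref{e42} itself follows from the Schlesinger/isomonodromic data of RHP \ref{PIIImodel} together with Okamoto's $\tau$-function formalism.

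For stage (iii), the map $h\mapsto F_N(s,h)$ is holomorphic in the strip $\mathrm{Re}(h)\in(-\tfrac12, s+\tfrac12)$, and so is the candidate limit. After $2M+1$ integrations by parts in $t$ and an application of the Gamma-function reflection identity $\Gamma(z)\Gamma(1-z)=\pi/\sin(\pi z)$, the integer-$h$ formula interpolates to the stated integral expression, which is manifestly holomorphic for $\mathrm{Re}(h)\in(M,M+1)$; uniqueness of analytic continuation in $h$ then identifies it with $F(s,h)$. The principal technical difficulty throughout lies in obtaining uniform-in-$t$ Deift--Zhou error bounds --- including sharp endpoint asymptotics of $v(z;s)$ as $z\to 0^+$ and $z\to+\infty$ extracted from RHP \ref{PIIImodel} by its own auxiliary steepest-descent analysis --- strong enough to justify both the differentiation under the $N\to\infty$ limit and the convergence of the Mellin integral for non-integer $h$; this endpoint analysis is where I expect the bulk of the work to sit.
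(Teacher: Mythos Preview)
Your stages (i) and (ii) are broadly in line with the paper: the Hankel determinant with confluent hypergeometric weight, its logarithmic derivative satisfying $\sigma$-Painlev\'e V, and Deift--Zhou steepest descent producing the scaling limit $v(z;s)$ and the Barnes $G$-prefactor are all there. But stage (iii) has a genuine gap. You propose to identify $F(s,h)$ with the integral formula for non-integer $h$ by ``uniqueness of analytic continuation in $h$'' from the integer values. Two holomorphic functions on a strip that agree on $\mathbb{Z}_{\ge 0}$ need not agree anywhere else --- the integers have no accumulation point in the strip, so the identity theorem does not apply. A Carlson-type argument would require exponential-type growth bounds in $h$ that you have not established and that are not obviously available here. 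More fundamentally, you have not shown that $\lim_{N\to\infty}F_N(s,h)/N^{s^2+2h}$ even \emph{exists} for non-integer $h$; your derivative argument only gives convergence at $h\in\mathbb{Z}_{\ge 0}$.

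The paper circumvents this entirely by routing through probability. The key external input is the Assiotis--Keating--Warren result that $F_N(s,h)/N^{s^2+2h}\to \tfrac{G^2(1+s)}{G(1+2s)}2^{-2h}\,\mathbb{E}[|\mathsf{X}(s)|^{2h}]$ for \emph{all} $h$ in the range, where $\mathsf{X}(s)$ is the random variable from the ergodic decomposition of the Hua--Pickrell measure. The paper then proves that $\mathsf{X}(s)$ has a smooth density (this is where the large-$z$ asymptotics of $v$ enter, giving exponential decay of the characteristic function $\mathbb{E}[\e^{\im t\mathsf{X}(s)}]=\exp[\int_0^{2t}v(x;s)\,\d x/x]$), writes $\mathbb{E}[|\mathsf{X}(s)|^{2h}]$ via a regularised Fourier kernel $K_{2h}^\epsilon(t)=\frac{1}{2\pi}\int|x|^{2h}\e^{-\epsilon|x|-\im tx}\d x$, and evaluates the $\epsilon\downarrow 0$ limit \emph{directly} for each non-integer $h$ by $(2M+1)$-fold integration by parts in $t$. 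No analytic continuation in $h$ is used. The small-$z$ gap structure of $v$ (vanishing of odd Taylor coefficients below order $2s+1$) is what makes the boundary terms at $t=0$ vanish and the resulting Mellin integral converge --- you allude to endpoint asymptotics, but this specific cancellation is the crux and deserves to be named.
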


We remark that the integral representation of \(F(s,h)\) in Theorem \ref{maintheorem1} also remains well-defined  for \(-1 < \mathrm{Re}(2h) < 0\) by taking \(M = -1/2\). The reason this case is not included in the statement is that we rely on the convergence result in \cite[Theorem~1.2]{AKW}, which asserts that the limit \eqref{object} exists for \(0 \leq 2h < 2s + 1\). This result can be extended to \(0 \leq \mathrm{Re}(2h) < 2s + 1\), see Proposition \ref{exofh} below. This, in turn, determines the validity range of \((\ref{object})\). However, we show that the expression of $F(s,h)$ in Theorem \ref{maintheorem1} is an analytic function in $h$ on
$
\{ h \in \mathbb{C} : -1 < \mathrm{Re}(2h) < 2s + 1 \},
$
in Remark \ref{analyticfunction}. So if we temporarily disregard the convergence issue for exponents $h\in (-1/2,0)$, Theorem \ref{maintheorem1} actually provides a representation of $F(s,h)$ in terms of a solution to the $\sigma$-Painlev\'e III$'$ equation \eqref{e42},\eqref{e73repeat} for the full range of the real exponent $h$, and for almost the full range of the complex exponent $h$ for fixed $s$, except along the sets where $\mathrm{Re}(h) \in \mathbb{Z}_{\geq 0} $ and $\textnormal{Im}(h)\neq 0$. Nevertheless, these exceptional lines can be analytically continued away.

\subsubsection{Comparison with previous results}

For \( s \in \mathbb{Z}_{\geq 1} \) and \( h \in \mathbb{Z}_{\geq 1} \), Theorem~\ref{maintheorem1} agrees with the results obtained in \cite[Section~5]{Forrester_2006} and \cite[(1.16)]{Basor_2019}, as discussed in Subsection \ref{history}. In \cite{Forrester_2006,Basor_2019}, the Painlev\'e function \( v(z;s) \) was expressed in terms of the modified \( I \)-Bessel function, cf. \cite[$\S10.25.2$]{NIST}, as  
\begin{align}\label{1104e1}
v(z;s)=z\frac{\d }{\d z}\ln \left(
\e^{-\frac{z}{2}} z^{-\frac{1}{2}s^2}
\det\!\Big[ I_{j+k+1}\!\left( 2\sqrt{z} \right) \Big]_{j,k=0}^{s-1}
\right), \quad z>0.
\end{align}
For \( s>-1/2 \) and \( h \in \mathbb{Z}_{\geq 1} \), Theorem~\ref{maintheorem1} also coincides with \cite[Remark~2.8]{ABGS}, where \( v(z;s) \) is represented in terms of the characteristic function of a random variable. Furthermore, for \( s \in \mathbb{Z}_{\geq 1} \) and \( 0<2h<2s+1 \), combining (\ref{1104e1}) with Theorem~\ref{maintheorem1} re-obtains the computation given in \cite[Remark~2.7]{ABGS}. Moreover, in this case, if we apply Theorem~\ref{maintheorem1} with \(M = -1/2\) and combine it with the power series expansion of the function inside the logarithm on the right-hand side of (\ref{1104e1}) with $z=2t$, followed by term-by-term integration, the resulting expression coincides with the formula for \(F(s,h)\) given in equation~(26) of \cite[Theorem~1.5]{ABGS}.
Finally, for \( s=0 \), the Riemann–Hilbert approach used in the present work yields \( v(z;0)=-z/2 \). Together with the expression provided in Theorem~\ref{maintheorem1}, this leads to an explicit evaluation of \( F(0,h) \) for \( -1<\mathrm{Re}(2h)<1 \), in agreement with \cite[Theorem~1.4]{ABGS}.

\vspace{.2cm}

Following the motivation for studying (\ref{object}) explained in the paragraph preceding Subsection~\ref{history}, we recall that \(F(s,h)\) is expected to capture the leading order coefficients of the integral mean of the moments of the Hardy \(\mathsf{Z}\)-function \(\mathsf{Z}(t)\) and its derivative on \([0,T]\) as \(T \to \infty\). It is worth mentioning that all computations of \(F(s,h)\) from the list of results in Subsection~\ref{history} are consistent with the corresponding results for the Hardy \(\mathsf{Z}\)-function at certain values of \(s\) and \(h\), which have been established in number theory. Theorem~\ref{maintheorem1} provides a representation of \(F(s,h)\) in terms of a distinguished solution to the \(\sigma\)-Painlev\'e~III$'$ equation \eqref{e42}. This may be viewed as heuristic evidence suggesting the existence of a connection between the leading coefficients of the integral mean of the joint moments of Hardy’s \(\mathsf{Z}\)-function and integrable systems, an intriguing direction for further investigation. To make this conjectural connection more explicit, we propose the following conjecture to conclude the current subsection.

\begin{conj}\label{conjectureinnumbertheory}
Let $s>-\frac{1}{2}$ and $-\frac{1}{2} <h<s+\frac{1}{2}$. Let $v(z;s)$ be as given in \eqref{maintheorem1}. Then as $T\rightarrow\infty$,
\begin{align*}
&\frac{1}{T} \int_0^T |\mathsf{Z}'(t)|^{2h} |\mathsf{Z}(t)|^{2s-2h}dt\\
\sim\, &a_s(\log T)^{s^2+2h}\times
\begin{cases}
(-1)^h\displaystyle\frac{\mathrm{d}^{2h}}{\mathrm{d}t^{2h}}
\exp\left[
\int_{0}^{t} v(z; s)\, \frac{\mathrm{d}z}{z}
\right]\bigg|_{t=0}, & h \in \mathbb{Z}_{\ge 0}\bigskip, \\[1.0ex]
-\dfrac{2}{\pi} \sin(\pi h)\, \Gamma(2h - 2M)
\displaystyle\int_{0}^{\infty}\frac{\mathrm{d}^{2M+1}}{\mathrm{d}t^{2M+1}}
\exp\left[
\int_{0}^{t} v(z; s)\, \frac{\mathrm{d}z}{z}
\right]
\frac{\mathrm{d}t}{t^{2h-2M}}, & h \notin \mathbb{Z}_{\ge 0},
\end{cases}
\end{align*}
where $M\in\mathbb{Z}_{\geq 0}$ is such that $h\in (M, M+1)$ and we set $M=-\frac{1}{2}$ when $h\in (-\frac{1}{2},0)$.
The arithmetic factor $a_s$ is defined by
\[
a_s = \prod_{p} \left(1 - \frac{1}{p} \right)^{s^2} \sum_{m=0}^\infty \left( \frac{\Gamma(m+s)}{\Gamma(m+1) \cdot \Gamma(s)} \right)^2  p^{-m},
\]
and Hardy's $\mathsf{Z}$-function by
\begin{equation}
\label{Hardy Z-function}
\mathsf{Z}(t) = \pi^{-\im t/2}\frac{\Gamma(\frac{1}{4} + \frac{\im}{2}t)}{|\Gamma(\frac{1}{4}+\frac{\im}{2}t)|}\,\zeta\Big(\frac{1}{2}+\im t\Big),\ \ t\in\mathbb{R}.
\end{equation}
\end{conj}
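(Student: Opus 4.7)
Conjecture \ref{conjectureinnumbertheory} is a Keating-Snaith-type prediction for the Riemann zeta-function and is out of reach of current analytic number theory: even the diagonal case $h=0$, asserting $\frac{1}{T}\int_0^T|\zeta(\tfrac12+\im t)|^{2s}\,\d t\sim a_s\,\frac{G^2(1+s)}{G(1+2s)}(\log T)^{s^2}$, has been established only for $s\in\{1,2\}$ (Hardy-Littlewood and Ingham), with $s=3,4$ known only as upper/lower bounds of the correct order under RH. A realistic plan is therefore a heuristic derivation via the hybrid Euler-Hadamard framework of Gonek, Hughes and Keating \cite{gonek2007hybrid}, in which the random-matrix side is supplied by Theorem \ref{maintheorem1}.

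The plan is as follows. Fix a splitting scale $X\geq 2$ and write, for $t\in[T,2T]$,
$$\zeta(\tfrac12+\im t)\approx P_X(t)\,Z_X(t),$$
where $P_X$ is a short Euler product over primes $p\leq X$ and $Z_X$ is a Hadamard-type product over zeros of $\zeta$ within $\asymp 1/\log X$ of $\tfrac12+\im t$. Multiplying by the unimodular phase factor to pass from $\zeta$ to $\mathsf{Z}$ and differentiating via Leibniz, the integrand $|\mathsf{Z}|^{2s-2h}|\mathsf{Z}'|^{2h}$ decomposes into a prime-side factor and a zero-side factor. The Gonek-Hughes-Keating splitting conjecture then posits that, in the regime $\log X=o(\log T)$,
$$\frac{1}{T}\int_T^{2T}|\mathsf{Z}|^{2s-2h}|\mathsf{Z}'|^{2h}\,\d t \sim \mathcal{M}_{\mathrm{prime}}(X)\cdot\mathcal{M}_{\mathrm{zero}}(T,X).$$
Evaluating $\mathcal{M}_{\mathrm{prime}}$ by near-independence of different primes produces the arithmetic constant $a_s$ times an appropriate power of $\log X$. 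Modelling $\mathcal{M}_{\mathrm{zero}}$ by the CUE joint moment at matrix size $N=\lfloor\log T/\log X\rfloor$ gives, via \eqref{object}, the asymptotic $F(s,h)(\log T/\log X)^{s^2+2h}$. The $\log X$ factors cancel by design, leaving $a_s F(s,h)(\log T)^{s^2+2h}$; inserting the Painlev\'e representation of $F(s,h)/(G^2(1+s)/G(1+2s))$ from Theorem \ref{maintheorem1} reproduces the stated right-hand side, separately in the two cases $h\in\mathbb{Z}_{\geq 0}$ and $h\notin\mathbb{Z}_{\geq 0}$.

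The hard part is that every step of the above derivation is itself an unproven conjecture for $s>2$. The decoupling of $\mathcal{M}_{\mathrm{prime}}$ from $\mathcal{M}_{\mathrm{zero}}$ is known only for a handful of small exponents, and the CUE approximation of $Z_X$ on which the correspondence $N\leftrightarrow\log T/\log X$ rests is a Keating-Snaith ansatz rather than a theorem. An additional difficulty appearing for $h>0$ is that the Leibniz rule distributes the derivative between $P_X'$ and $Z_X'$, and one must argue heuristically that at leading order the derivative is carried by the zero-side product, which is the mechanism producing the extra $(\log T)^{2h}$ factor; an honest verification of this step would require controlling moments of derivatives of truncated Euler products against moments of derivatives of zero-sums, a problem of the same depth as the underlying moment conjecture. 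What Theorem \ref{maintheorem1} contributes to this picture is to pin down the geometric side of the conjecture in closed $\sigma$-Painlev\'e III$'$ form for the full range of real $s>-\tfrac12$ and complex $h$, thereby isolating all remaining obstructions on the arithmetic-analytic side of the Riemann zeta-function.
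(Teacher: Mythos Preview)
Your assessment is correct and appropriate: the paper does not prove Conjecture~\ref{conjectureinnumbertheory} at all---it is explicitly presented as a conjecture, motivated only by the expectation that $F(s,h)$ should capture the leading coefficients in the joint moments of Hardy's $\mathsf{Z}$-function, combined with the Painlev\'e representation of $F(s,h)$ furnished by Theorem~\ref{maintheorem1}. The paper offers no argument, heuristic or otherwise, beyond this.

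Your proposal goes substantially further than the paper by sketching an actual heuristic derivation via the Gonek--Hughes--Keating hybrid Euler--Hadamard framework, explaining how the arithmetic factor $a_s$ and the random-matrix factor $F(s,h)$ should emerge from the prime-side and zero-side contributions respectively, and honestly flagging the conjectural status of each step (the splitting conjecture, the CUE modelling of $Z_X$, and the localisation of the derivative on the zero-side product). This is a genuine contribution of context that the paper does not supply; it correctly frames what Theorem~\ref{maintheorem1} achieves as pinning down the geometric/random-matrix side while leaving the arithmetic obstructions intact. There is nothing to correct here---you have recognised the statement for what it is and given a more informative account of its status than the paper itself.
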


\subsection{Riemann-Hilbert approach}

To apply the Riemann–Hilbert method in the proof of Theorem~\ref{maintheorem1}, we begin by expressing (\ref{i1}) as an integral involving the Hankel determinant.
\bea\label{NtimesNHankel}
\mathcal{J}_{N}(t;s):=\det\!\Big[U(1-s-N,\,2-2s-2N+j+k,2t)\Big]_{j,k=0}^{N-1},
\eea
where $t>0$ and $U(a,b,z)$ denotes the confluent hypergeometric function of the second kind, cf. \cite[$\S 13.2.6$]{NIST}.
This integral involves the function, with $\epsilon>0$,
\begin{equation}\label{i5}
	K_h^{\epsilon}(t)
	:=\frac{1}{2\pi}\int_{-\infty}^{\infty}|x|^{h}\e^{-\epsilon|x|-\im t x} \d x
	=\frac{\Gamma(1+h)}{2\pi}
	\left[
	\frac{1}{(\epsilon+\im t)^{h+1}}
	+\frac{1}{(\epsilon-\im t)^{h+1}}
	\right],
	\quad t\in\mathbb{R},
\end{equation}
which is well-defined for all $\operatorname{Re}(h)>-1$. Here the branch of $z^h$, $z\in\mathbb{C}\setminus(-\infty,0]$, is chosen to be the principal one.

\subsubsection{A representation of $F_{N}(s,h)$ in terms of a Hankel determinant}

\begin{prop}\label{1102prop1}
For any $\operatorname{Re}(s)>-\tfrac{1}{2}$ and $0<\operatorname{Re}(h)<\operatorname{Re}(s)+\tfrac{1}{2}$, we have
\begin{equation}\label{1104i7}
F_N(s,h)
=\frac{(-1)^{N(N-1)/2}\,2^{1-2h}}{(\Gamma(s+N))^{N}}
\lim_{\epsilon\downarrow 0}\int_0^{\infty}
K_{2h}^{\epsilon}(t)\e^{-Nt}\mathcal{J}_{N}(t;s)\,\d t.
\end{equation}
\end{prop}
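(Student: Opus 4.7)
The proof proceeds in three stages: a Fourier-inversion step that decouples $|V_A(0)|$ from $|V_A'(0)|$, a reduction to a one-dimensional Hankel determinant via Weyl's formula and a rational change of variable, and a final identification of the resulting moment integrals with the function $U$.

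\smallskip

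\textbf{Stage 1 (Fourier inversion).} A direct differentiation of $V_A(\theta)$ at $\theta=0$, combined with the elementary identity $\im e^{\im\theta_j}/(1-e^{\im\theta_j})=-\tfrac{1}{2}\cot(\theta_j/2)-\tfrac{\im}{2}$, shows that since $V_A\colon[0,2\pi)\to\mathbb{R}$,
$$\frac{V_A'(0)}{V_A(0)} \;=\; -\frac{1}{2}\sum_{j=1}^N\cot(\theta_j/2)\in\mathbb{R}.$$
Consequently $|V_A(0)|^{2(s-h)}|V_A'(0)|^{2h}=2^{-2h}|V_A(0)|^{2s}|\sum_j\cot(\theta_j/2)|^{2h}$. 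Under the hypothesis $0<\mathrm{Re}(h)<\mathrm{Re}(s)+\tfrac{1}{2}$, Fourier inversion applied to the $L^1$-function $x\mapsto|x|^{2h}e^{-\epsilon|x|}$, using the definition \eqref{i5}, gives $|x|^{2h}e^{-\epsilon|x|}=\int_{\mathbb{R}}K_{2h}^\epsilon(t)e^{\im tx}\,\mathrm{d}t$. Substituting $x=\sum_j\cot(\theta_j/2)$, dominated convergence (justified by the finiteness of $F_N(s,h)$) and Fubini then yield
$$F_N(s,h)=2^{-2h}\lim_{\epsilon\downarrow 0}\int_{-\infty}^\infty K_{2h}^\epsilon(t)\,\mathcal{H}_N(t;s)\,\mathrm{d}t, \qquad \mathcal{H}_N(t;s):=\int_{\mathbb{U}(N)}|V_A(0)|^{2s}e^{\im t\sum_j\cot(\theta_j/2)}\,\mathrm{d}\mu_N(A).$$

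\smallskip

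\textbf{Stage 2 (Hankel determinant).} Weyl's integration formula converts $\mathcal{H}_N$ to an integral over eigenangles. Substituting $y_j=\cot(\theta_j/2)$ and using the identities $2\sin(\theta_j/2)=(1+y_j^2)^{-1/2}$, $4\sin^2((\theta_j-\theta_k)/2)=(y_j-y_k)^2/[(1+y_j^2)(1+y_k^2)]$, and $\mathrm{d}\theta_j=2\,\mathrm{d}y_j/(1+y_j^2)$ collapses all factors of $(1+y_j^2)$ into a single exponent $-(s+N)$, producing
$$\mathcal{H}_N(t;s)=\frac{2^{N^2+2sN}}{N!\,\pi^N}\int_{\mathbb{R}^N}\prod_{j=1}^N\frac{e^{\im ty_j}}{(1+y_j^2)^{s+N}}\prod_{j<k}(y_j-y_k)^2\prod_j\mathrm{d}y_j.$$
The Andréief identity collapses this $N$-fold integral into a Hankel determinant,
$$\mathcal{H}_N(t;s)=\frac{2^{N^2+2sN}}{\pi^N}\,\det\!\big[a_{j+k}(t;s)\big]_{j,k=0}^{N-1},\quad a_n(t;s):=\int_{\mathbb{R}}\frac{y^n\,e^{\im ty}}{(1+y^2)^{s+N}}\,\mathrm{d}y.$$

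\smallskip

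\textbf{Stage 3 (symmetrization and identification with $U$).} The parity $a_n(-t;s)=(-1)^n a_n(t;s)$ gives $\det[a_{j+k}(-t;s)]=\det[a_{j+k}(t;s)]$, and since $K_{2h}^\epsilon$ is even in $t$, the integral over $\mathbb{R}$ folds to $2\int_0^\infty$. For $t>0$, a contour deformation of $a_n(t;s)$ into the upper half-plane around the branch cut $\im[1,\infty)$ of $(1+y^2)^{-(s+N)}$, followed by the substitution $w=1+2\tau$, puts $a_n(t;s)$ into the form of a Kummer-transformed integral representation of $U$. Combined with the Fourier transform identity $\int_{\mathbb{R}}(1+y^2)^{-\alpha}e^{\im ty}\mathrm{d}y=\pi 2^{1-\alpha}\Gamma(\alpha)^{-1}t^{\alpha-1}e^{-t}U(\alpha,2\alpha,2t)$ (valid for $\mathrm{Re}(\alpha)>1/2$, via the Bessel-$K$ representation and the Tricomi identity $K_\nu(z)=\sqrt{\pi/(2z)}e^{-z}U(\nu+1/2,2\nu+1,2z)$) and the Kummer transformation $U(a,b,z)=z^{1-b}U(a{-}b{+}1,2{-}b,z)$, one identifies the Hankel matrix $[a_{j+k}(t;s)]$ with $[U(1-s-N,2-2s-2N+j+k,2t)]$ multiplied by $e^{-Nt}$ (the $e^{-t}$-prefactors from each row combine into $e^{-Nt}$ outside the determinant), by $(-1)^{N(N-1)/2}$ (from the factors $\im^{j+k}$ arising when one differentiates the $t$-Fourier transform to obtain higher moments, with $\det[\im^{j+k}\cdot]=\im^{N(N-1)}\det$), and by appropriate powers of $t$ and $2$ that combine into the $(\Gamma(s+N))^{-N}$ overall normalization after being matched against the $\Gamma(s+N-j-k)$ denominators in the integral representation of each $U$. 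Adding the $2$ from folding and the $2^{-2h}$ from Stage 1 gives the overall constant $(-1)^{N(N-1)/2}\,2^{1-2h}/(\Gamma(s+N))^N$.

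\smallskip

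\textbf{Main obstacle.} The principal difficulty is the bookkeeping in Stage 3, namely showing that the $\Gamma(s+N-j-k)$-denominators and the various $t$-power prefactors produced by Kummer's transformation collapse precisely into the single factor $(\Gamma(s+N))^{-N}e^{-Nt}$ in front of $\mathcal{J}_N(t;s)$. This requires extracting diagonal matrices $\mathrm{diag}(t^{-j})$ from the rows and columns and reorganizing the resulting Pochhammer symbols via row operations whose determinant is unity; alternatively, it may be verified at the level of the multi-integral representation by undoing the Andréief step on both sides and matching integrands directly, which is the cleaner route when the $j,k$-dependent Gamma factors obstruct a direct entrywise factorization.
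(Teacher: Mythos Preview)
Your Stages 1 and 2 are correct and essentially coincide with the paper's route: Stage 1 reproduces Proposition~\ref{1104inprop} (the paper quotes Winn's integral formula~\eqref{i2} instead of deriving the $\cot$ identity, but the substitution $y_j=\cot(\theta_j/2)$ is precisely what underlies~\eqref{i2}), and Stage 2 sets up the Hankel determinant $\det[a_{j+k}]$ with $a_n(t;s)=\int_{\mathbb{R}} y^n e^{\im ty}(1+y^2)^{-(s+N)}\,\d y$.

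Stage 3, however, contains a genuine gap: the individual moments $a_n(t;s)$ are \emph{not} single $U$-functions. After your contour deformation to the cut $y=\im(1+2\tau)$, $\tau>0$, the integrand carries an extra factor $(1+2\tau)^n$ multiplying $\tau^{-(s+N)}(1+\tau)^{-(s+N)}e^{-2t\tau}$; expanding this binomial shows that $a_n$ is a linear combination of $U$-functions with shifted parameters, not one term. Equivalently, $a_n=(-\im)^n\partial_t^n a_0$ and $\partial_t[e^{-t}U(a,b,2t)]=e^{-t}[U(a,b,2t)-2U(a,b+1,2t)]$, again a two-term combination. So the ``$\Gamma(s+N-j-k)$ denominators'' you anticipate never appear entrywise, and the obstacle you flag is not a bookkeeping issue but a structural one.

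The paper's fix is a one-line trick you are missing: before applying Andr\'eief, replace the monomials $y^{j+k}$ by the shifted monomials $(y-\im\delta)^{j+k}$ with $\delta=\operatorname{sgn}(t)$. This is a unit-determinant row/column operation, so the Hankel determinant is unchanged. But now each entry is $\phi^{(j+k)}(t)$ for $\phi(t)=e^{-|t|}a_0(t)$, and the identity $\partial_z[e^{-z}U(a,b,z)]=-e^{-z}U(a,b+1,z)$ (NIST 13.3.27) gives $\phi^{(k)}(t)=C\,(-2\delta)^k e^{-2|t|}U(1-s-N,\,2-2s-2N+k,\,2|t|)$ as a \emph{single} $U$-function. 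All prefactors then strip off diagonally, yielding $\mathcal{J}_N(|t|;s)$ with exactly the constant $(-1)^{N(N-1)/2}(2\pi e^{-|t|}/\Gamma(s+N))^N 2^{-N^2-2sN}N!$ in front; folding to $t>0$ and inserting into your Stage~1 formula gives~\eqref{1104i7}.
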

In Remark \ref{rangeofFN}, we explain that when $s$ is real, the validity range of \eqref{1104i7} can be extended to 
$-1<\operatorname{Re}(2h)<2s+1$.
Proposition \ref{1102prop1} is motivated by Winn’s work \cite[$(3.9),(4.1),(4.22),(4.29)$]{W}, where a representation of $F_{N}(s,h)$ was obtained for integer parameters $s\in\mathbb{Z}_{\ge1}$ and $2h\in\mathbb{Z}_{\ge1}$. Namely,
\begin{equation}\label{1102i7}
F_N(s,h)
=(-1)^{s(s-1)/2}2^{1-2h}
\lim_{\epsilon\downarrow 0}\int_0^{\infty}
\frac{(-1)^{2h}}{\pi}\frac{\partial^{2h}}{\partial\epsilon^{2h}}
\left(\frac{\epsilon}{\epsilon^2+t^2}\right)
\e^{-Nt}
\det\Big[L_{N+s-1-(j+k)}^{(2s-1)}(-2t)\Big]_{j,k=0}^{s-1}\d t,
\end{equation}
where $L_{n}^{(\alpha)}$ denotes the classical Laguerre polynomial. From \eqref{i5}, when $2h\in\mathbb{Z}_{\ge0}$, the function $K_{2h}^{\epsilon}(t)$ reduces to
\begin{equation}\label{e38}
K_{2h}^{\epsilon}(t)
=\frac{(-1)^{2h}}{\pi}\frac{\partial^{2h}}{\partial\epsilon^{2h}}
\!\left(\frac{\epsilon}{\epsilon^2+t^2}\right),\ \ \ t\in\mathbb{R},
\end{equation}
which is precisely the Poisson-type expression appearing in \eqref{1102i7}. Thus, our $K_{2h}^{\epsilon}(t)$ in \eqref{i5} is a generalisation of \eqref{e38} used in \cite{W}. Since we deal with general complex parameters $s$, we no longer have a Hankel determinant of size $s\times s$, as the one in \eqref{1102i7}, which requires $s$ to be a positive integer.
Motivated by the work of Assiotis, Gunnes, Keating, and Wei~\cite[Proposition 4.2]{AGKW}, which connects the joint moments of higher derivatives of CUE characteristic polynomials to Hankel determinants shifted by integer partitions for general moment exponents $s$, we instead employ the Fourier transform of a Cauchy-type weight to obtain the Hankel determinant $\mathcal{J}_{N}(t;s)$ appearing in (\ref{1104i7}).
Moreover, as presented in Proposition~\ref{relationtoaHankel}, when $s\in\mathbb{Z}_{\ge1}$, the determinant $\mathcal{J}_{N}(t;s)$ can be transformed into the Hankel determinant appearing in (\ref{1102i7}).\smallskip

From Propositions \ref{expressionofJ} and \ref{integralrepresentation}, we see that $t\mapsto \mathcal{J}_{N}(t;s)$ has no zeros on $(0,\infty)\subset\mathbb{R}$ for real $s>-1/2$, though zeros may appear on $(0,\infty)$ for complex $s$. Our derivation of Theorem \ref{maintheorem1} relies on the scaled large-$N$ limit of $\mathcal{J}_{N}(t;s)$. Exceptional zeros of $t\mapsto\mathcal{J}_N(t;s)$ at finite $N$ could accumulate as $N$ grows, potentially leading to poles of the Painlev\'e function $v(z;s)$ in \eqref{e42} for $\textnormal{Re}(z)>0$. Since the locations of these accumulations are unclear, choosing integration paths in Theorem \ref{maintheorem1} that avoid such poles becomes problematic. Therefore, in this paper, we restrict our attention solely to real $s>-1/2$, when $v(z;s)$ will be pole-free on $(0,\infty)\ni z$.\smallskip

Moving ahead, when $s>-1/2$, it is known from Painlev\'e theory (see, e.g., \cite[Proposition 3.2]{forrester2002application}) that $\mathcal{J}_{N}(t;s)$ is the $\tau$-function of a $\sigma$-Painlev\'e V equation. More precisely, for $t>0$, we have
\begin{equation}\label{1102defu}
\mathcal{J}_N(t;s) = \mathrm{C}_N(s) \exp\left[\int_0^{2t} u_N(z;s) \frac{\mathrm{d}z}{z} \right],
\end{equation}
where $u_{N}(z;s)$ solves the $\sigma$-Painlev\'e V equation
\begin{align}\label{1102PainleveV}
\bigg(z\frac{\d^2u_N}{\d z^2}\bigg)^2=\bigg(u_N-Ns\,-\,&z\frac{\d u_N}{\d z}+2\Big(\frac{\d u_N}{\d z}\Big)^2-2N\frac{\d u_N}{\d z}\bigg)^2\nonumber\\
	&-4\frac{\d u_N}{\d z}\bigg(s+\frac{\d u_N}{\d z}\bigg)\bigg(N+s-\frac{\d u_N}{\d z}\bigg)\bigg(N-\frac{\d u_N}{\d z}\bigg).
\end{align}
Here, $\mathrm{C}_{N}(s)$ is a non-zero constant whose explicit expression can be obtained from Proposition \ref{expressionofJ} and (\ref{Selb}). What is needed in \eqref{1104i7} to get to \eqref{object} is the following scaled large $N$-asymptotic formula for $u_N(z/N;s)$, which we derive in Section \ref{sectiononlargeN}.

\subsubsection{Large $N$-asymptotic of $u_{N}(z;s)$}

\begin{prop}\label{asympformulaforlargeN} Let $u_{N}(z;s)$ be as in \eqref{1102defu}. 
As $N\rightarrow\infty$, uniformly for $z>0$ and $s>-\frac{1}{2}$ chosen on compact subsets,
\begin{equation}\label{e72}
	u_N(z/N;s)=-\im\sqrt{\frac{z}{\beta}}\,Q_1^{21}\Big(2\sqrt{\frac{z}{\beta}},s\Big)+\frac{1}{16}-s^2+\mathcal{O}\big(N^{-1}\big).
\end{equation}
Here the error term is twice $z$-differentiable on $(0,\infty)$, and $Q_1^{21}(\cdot,s)$ is as in Theorem \ref{maintheorem1}.
\end{prop}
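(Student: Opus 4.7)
The plan is to read off the asymptotics of $u_N(z;s)$ from a Deift--Zhou steepest descent analysis of the Riemann--Hilbert problem whose $\tau$-function is $\mathcal{J}_N(t;s)$, with the Painlev\'e~III$'$ model RHP~\ref{PIIImodel} furnishing the local parametrix at the origin. First, I would translate $u_N$ into RH data: differentiating \eqref{1102defu} gives $u_N(2t;s) = t\,\partial_t \log \mathcal{J}_N(t;s)$, and since $\mathcal{J}_N(t;s)$ is an $N\times N$ Hankel determinant built from the weight produced by the standard integral representation of $U(1-s-N,b,2t)$, the logarithmic derivative can be expressed through the $(2,1)$-entry of the associated orthogonal-polynomial matrix $Y=Y_N(\cdot;t,s)$ (or equivalently through the resolvent-type kernel at the origin). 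Consequently, the scaled asymptotic $u_N(z/N;s)$ reduces to the $t = z/(2N) \to 0$ behaviour of $Y_N$.

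Second, I would apply the Deift--Zhou nonlinear steepest descent programme to $Y_N$: normalise at infinity by an appropriate $g$-function built from the equilibrium measure of the rescaled weight; open lenses so that the oscillatory/exponential factors on the jump contour are decaying off the real axis; construct the outer parametrix from the equilibrium measure's endpoints; and install local parametrices at the soft edge and at the origin. Under the rescaling $t = z/(2N)$ the origin becomes a hard edge with an additional coalescing singularity at $x = -t$, whose magnitude is of order $1/N$. This is precisely the regime in which the Painlev\'e~III$'$ model RHP~\ref{PIIImodel}, viewed in the variable $x = 2\sqrt{z/\beta}$, is the correct local model: after conjugation by the $g$-function contributions the jumps match those of RHP~\ref{PIIImodel} with parameter $s$, and the matching to the outer parametrix on a small circle around the origin is verified to hold to leading order.

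Third, small-norm theory for the ratio $R = S \cdot (\text{parametrix})^{-1}$ yields $R = I + O(N^{-1})$ uniformly for $z$ in compact subsets of $(0,\infty)$ and $s$ in compact subsets of $(-\tfrac{1}{2},\infty)$. Substituting the expansion of the Painlev\'e~III$'$ local parametrix, whose subleading coefficient at infinity is $Q_1(x,s)$ by \eqref{A4}, into the formula for $u_N$ produces the stated leading term $-\im\sqrt{z/\beta}\,Q_1^{21}(2\sqrt{z/\beta},s)$; the constant shift $\tfrac{1}{16}-s^2$ emerges from the subleading contributions of the $g$-function and of the normalisation constant $\mathrm{C}_N(s)$ upon taking $t\partial_t\log$. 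The twice $z$-differentiability of the $O(N^{-1})$ remainder is inherited from differentiating the small-norm estimate under Cauchy operators, using that the jump matrices depend smoothly on $z$ away from the origin.

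The principal obstacle I anticipate is the uniformity of the whole scheme as $s \downarrow -\tfrac{1}{2}$ and as $z$ approaches the boundaries of a given compact subset of $(0,\infty)$, because the weight and the local parametrix both degenerate in these limits. Establishing that the matching condition and the resulting small-norm bound hold uniformly in $s$, and survive two derivatives in $z$, is the delicate technical core of the argument; once handled, the rest is a standard, if lengthy, steepest descent computation.
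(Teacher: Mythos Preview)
Your overall strategy---Deift--Zhou steepest descent on the orthogonal-polynomial RHP for the weight $(y+z)^s y^s e^{-y}$, with the Painlev\'e~III$'$ model supplying the origin parametrix and small-norm theory for the ratio---is exactly the route the paper takes, and the differentiability argument via Cauchy operators is also what the paper does. So the architecture is correct.

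There is, however, a concrete misattribution in your sketch that would cause the computation to fail if carried out as written. You claim the constant shift $\tfrac{1}{16}-s^2$ ``emerges from the subleading contributions of the $g$-function and of the normalisation constant $\mathrm{C}_N(s)$ upon taking $t\partial_t\log$.'' Neither of these can contribute: the equilibrium measure and hence the $g$-function are independent of the small parameter $z_N=z/(4N^2)$, and $\mathrm{C}_N(s)$ is a constant in $t$, so both are annihilated by $t\partial_t\log$. In the paper's derivation the constant $\tfrac{1}{16}-s^2$ comes entirely from the \emph{Airy parametrix at the soft edge} $\lambda=1$: the residue of the ratio-problem jump on $\partial\mathbb{D}_\epsilon(1)$ produces a diagonal contribution $\tfrac{3(1-16s^2)}{192N}$ in the $(1,1)$-entry of $R_1$, which after multiplication by $4N$ yields $\tfrac{1}{16}-s^2$. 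Meanwhile the origin parametrix contributes $-\im\sqrt{z/\beta}\,Q_1^{21}-s\sqrt{z}$, and the $-s\sqrt{z}$ cancels the $+s\sqrt{z}$ coming from the outer parametrix coefficient $P_1^{11}$. A second, smaller issue: the link between $u_N$ and the RHP data is not through a ``$(2,1)$-entry'' or a kernel at the origin, but through the exact identity $u_N(z;s)=N(N+2s)+o_1^{11}(z,s,N)$, where $o_1^{11}$ is the $(1,1)$-entry of the first subleading coefficient of the RHP solution at infinity; deriving this identity requires a Lax-pair/compatibility argument (combining the $z$-deformation with the $N\mapsto N+1$ Schlesinger shift), not just reading off a single matrix entry.
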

It will follow from \eqref{e72} that the large $N$-limit of $u_N(z/N;s)-z/2$, pointwise in $z>0$ and $s>-1/2$, is $v(z;s)$ in \eqref{e42}, see Corollary \ref{cheat}. To obtain Proposition \ref{asympformulaforlargeN} we will exploit a Riemann-Hilbert representation of $u_N$ that stems from \eqref{i13} and \eqref{e43}, compare our workings in Section \ref{sectiononlargeN}. In detail, it will be a representation of $u_N$ in terms of the Riemann-Hilbert problem \ref{RHP3}, see \eqref{e49}, which can be resolved asymptotically, as $N\rightarrow\infty$, by Deift-Zhou nonlinear steepest descent techniques \cite{DZ,DKMVZ}. What results is the leading order \eqref{e72} that makes use of the model problem \ref{PIIImodel}. Equipped with \eqref{e72}, the reader will notice that Theorem \ref{maintheorem1} also requires the following asymptotic formul\ae\, of $z\mapsto v(z;s)$. The same function is smooth on any compact subset of $(0,\infty)\subset\mathbb{R}$, see Proposition \ref{Fredana} and Remark \ref{mero}.


\subsubsection{Large $z$-asymptotic of $v(z;s)$}
\begin{cor}\label{cor5555} 
With $v=v(z;s)$ as in \eqref{e73}, 
\begin{equation}\label{f1000}
	v(z;s)=-\frac{z}{2}+s\sqrt{z}-\frac{3}{4}s^2+\mathcal{O}\big(z^{-\frac{1}{2}}\big),\ \ \ z\rightarrow+\infty,
\end{equation}
uniformly in $s>-\frac{1}{2}$ on compact sets.
\end{cor}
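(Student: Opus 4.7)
The plan is to extract the expansion \eqref{f1000} by combining two complementary sources of information: the $\sigma$-Painlev\'e III$'$ equation \eqref{e42} itself, and the explicit Riemann-Hilbert representation \eqref{e73repeat}.

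First, I would insert the ansatz $v(z;s) = -z/2 + a\sqrt{z} + b + \mathcal{O}(z^{-1/2})$ into \eqref{e42} and expand each factor in descending half-integer powers of $z$. Collecting the $z^0$ coefficients forces $a^2 = s^2$, and collecting the $z^{-1/2}$ coefficients (for $a\neq 0$, i.e.\ $s\neq 0$) forces $b = (a^2 - 4s^2)/4 = -\tfrac{3}{4}s^2$. The degenerate case $s=0$ is covered separately, since the Riemann-Hilbert analysis gives $v(z;0) = -z/2$ identically.

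To resolve the sign ambiguity $a\in\{+s,-s\}$ left open by the ODE, I would turn to \eqref{e73repeat} and extract the leading-order behaviour of $Q_1^{21}(x,s)$ as $x\to+\infty$ directly from the model RHP \ref{PIIImodel}. A Deift-Zhou steepest-descent deformation of this model problem in the $x\to\infty$ regime (with $s$ held fixed in a compact subset of $(-\tfrac{1}{2},\infty)$) yields $-\im\sqrt{z/\beta}\,Q_1^{21}(2\sqrt{z/\beta},s) = s\sqrt{z} + \mathcal{O}(1)$, which selects $a=+s$. The error in the steepest-descent step is of order $x^{-1} = \mathcal{O}(z^{-1/2})$ after the substitution $x = 2\sqrt{z/\beta}$, producing the uniform remainder bound in \eqref{f1000}.

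The main obstacle is the sign determination and, with it, the rigorous $\mathcal{O}(z^{-1/2})$ control: neither is available from \eqref{e42} alone, both being Stokes-type data of the Painlev\'e III$'$ model. Uniformity in $s$ over compact subsets of $(-\tfrac{1}{2},\infty)$ is inherited from the continuous dependence of the jump data and parametrix construction on $s$, so the implicit constants in the small-norm estimates on the model RHP can be bounded uniformly on any such compact $s$-set.
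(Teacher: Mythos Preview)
Your plan is workable in outline but differs from the paper's route and contains one error-accounting slip that you should fix.

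The paper does not use the ODE at all for \eqref{f1000}. Section~\ref{sectiononlargex} carries out a full Deift--Zhou analysis of RHP~\ref{PIIImodel} as $x\to+\infty$: normalisation \eqref{f1}, explicit outer parametrix $L(\zeta;s)$ in \eqref{f2}, a Bessel-built origin parametrix \eqref{f3}, and a small-norm ratio problem. One iteration of the resulting integral equation yields \eqref{f555}, i.e.\ $Q_1^{21}(x,s)=\im s\sqrt{\beta}-\im(1-4s^2)/(8x)+\mathcal{O}(x^{-2})$, and substituting this into \eqref{e73} gives \eqref{f1000} directly. The $\sigma$-Painlev\'e III$'$ equation is invoked only afterwards (Remark~\ref{remarkaftercorollary49}) to bootstrap to further subleading orders. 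So your ODE step, while correct as a formal computation, is redundant: the RHP work you already commit to in your second step produces both $a$ and $b$ once pushed to its natural precision.

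The slip is in your remainder bookkeeping. You write that the steepest-descent error in $Q_1$ is $\mathcal{O}(x^{-1})$ and conclude $\mathcal{O}(z^{-1/2})$ for $v$; but \eqref{e73} multiplies $Q_1^{21}$ by $\sqrt{z/\beta}\sim x/2$, so an $\mathcal{O}(x^{-1})$ control of $Q_1^{21}$ yields only $\mathcal{O}(1)$ control of $v$. To reach the claimed $\mathcal{O}(z^{-1/2})$ you must compute $Q_1^{21}$ to $\mathcal{O}(x^{-2})$, which means extracting the first correction from the origin parametrix (the residue calculation leading to \eqref{f555}). Once you do that, the constant $-\tfrac{3}{4}s^2$ falls out of the RHP itself, and your ODE step becomes a consistency check rather than part of the proof. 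Alternatively you could argue that the small-norm theory guarantees a full asymptotic expansion of $Q_1$ in descending powers of $x$ with $\mathcal{O}(x^{-2})$ remainder after the first two terms, leave the $1/x$ coefficient undetermined, and then let the ODE pin down $b$; but you should say this explicitly rather than conflate the two error scales.
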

Our approach to prove Corollary \ref{cor5555} is as follows. With \eqref{e72} one simply subjects RHP \ref{PIIImodel} to a Deift-Zhou nonlinear steepest descent analysis, as done in Section \ref{sectiononlargex}. The same approach allows us to derive also subleading terms in \eqref{f1000}, see Remark \ref{remarkaftercorollary49}. Once done, in Section  \ref{sectiononsmallx}, we obtain the asymptotic formula for $v(z;s)$ when $z\downarrow 0$. As a direct consequence of Proposition \ref{loopy}, we have the following conclusion.

\subsubsection{Small $z$-asymptotic of $v(z;s)$}
\begin{prop}\label{1102asymoforsmallz}
Let $s>-\frac{1}{2}$. Then there exists $z_{0}=z_{0}(s)\in(0,1)$ such that for $0<z\leq z_{0}<1$, we have the following asymptotic expansions: for fixed odd $2s\in\mathbb{Z}_{\geq 1}$,
\begin{align*}
v(z;s) =\sum_{k=1}^{s+\frac{1}{2}}\mathsf{d}_{2k}(s)z^{2k}+\mathsf{d}_{1,1}(s)z^{2s+1}\ln z+
\mathcal{O}\big(z^{2s+2}\ln z\big),
\end{align*}
and for fixed $2s\notin\mathbb{Z}_{\geq 0}$, provided that $s\in(m,m+1],m\in\mathbb{Z}_{\geq 0}$ or $s\in(-\frac{1}{2},0)$, 
\begin{align*}
v(z;s)=\sum_{k=1}^{m+1}\mathsf{d}_{2k}(s)z^{2k}+\mathsf{d}_{1,1}(s)z^{2s+1}+\mathcal{O}\big(z^{2s+2}\big),
\end{align*}
where we set $m=0$ when $s\in (-\frac{1}{2},0)$. Moreover, if $2s\in\mathbb{Z}_{\geq 1}$ is even and fixed, then
\begin{equation}\label{powerexpansionforintegers}
v(z;s)= \sum_{k=1}^s\mathsf{d}_{2k}(s)z^{2k}+\mathsf{d}_{2s+1}(s)z^{2s+1}+\mathcal{O}\big(z^{2s+2}\big).
\end{equation}
\end{prop}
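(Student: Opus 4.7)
The plan is to derive the three expansions as direct consequences of Proposition \ref{loopy}, which supplies the small-$x$ asymptotic expansion of the matrix entry $Q_1^{21}(x,s)$ from RHP \ref{PIIImodel}. Via the representation
\begin{equation*}
v(z;s)=-\im\sqrt{z/\beta}\,Q_1^{21}\!\bigl(2\sqrt{z/\beta},s\bigr)+\tfrac{1}{16}-s^2-\tfrac{z}{2}
\end{equation*}
from \eqref{e73repeat}, the substitution $x=2\sqrt{z/\beta}$ converts loopy's $x\downarrow 0$ expansion into the required $z\downarrow 0$ expansion of $v$.

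First, I would track the \emph{regular} part of the small-$x$ expansion of $Q_1^{21}(x,s)$, which comes from the outer part of the parametrix away from $x=0$. Its odd $x$-powers pick up the $\sqrt{z}$ prefactor to yield integer powers $z^k$ of $v$; the constant $\tfrac{1}{16}-s^2$ and the linear term $-\tfrac{z}{2}$ in \eqref{e73repeat} cancel the $z^0$ and $z^1$ contributions from $-\im\sqrt{z/\beta}\,Q_1^{21}$. Hence the regular part of $v$ contains only even powers $z^{2k}$ starting at $z^2$, producing the sums $\sum_{k\geq 1}\mathsf{d}_{2k}(s)z^{2k}$ appearing in all three cases.

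Second, I would handle the \emph{singular} contribution from the Bessel-type local parametrix at $x=0$, which is of order $2s$. After the substitution $x=2\sqrt{z/\beta}$ this generates a term of order $z^{2s+1}$ in $v(z;s)$, whose detailed form depends on $2s$. For non-integer $2s$ the parametrix is built from $I_{\pm 2s}(x)$, whose local behaviour produces a clean $\mathsf{d}_{1,1}(s)z^{2s+1}$ term with no logarithm. For integer $2s$, $I_{-2s}$ coincides with $I_{2s}$ up to sign and one must instead invoke $K_{2s}(x)$, whose series at the origin carries an explicit $\ln x$ factor; this generates both a $z^{2s+1}\ln z$ contribution and an additional pure $z^{2s+1}$ contribution. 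When $2s$ is odd the exponent $2s+1$ is even, so the pure power resonates with the regular sum and is absorbed into $\sum_{k=1}^{s+1/2}\mathsf{d}_{2k}(s)z^{2k}$ at index $k=s+\tfrac{1}{2}$, leaving $\mathsf{d}_{1,1}(s)z^{2s+1}\ln z$ as the visible extra term. When $2s$ is even the exponent $2s+1$ is odd and does not resonate with the regular sum; the specific structure of the local parametrix provided by Proposition \ref{loopy} then forces the logarithmic coefficient at this order to vanish, leaving $\mathsf{d}_{2s+1}(s)z^{2s+1}$ as the visible extra term. The error bounds $\mathcal{O}(z^{2s+2}\ln z)$ or $\mathcal{O}(z^{2s+2})$ come from the next-order Bessel coefficient in loopy, while the existence of $z_0=z_0(s)\in(0,1)$ and the uniformity on compact $s$-sets is inherited directly from Proposition \ref{loopy}.

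The principal anticipated obstacle is the resonance analysis at integer $2s$: verifying that the logarithmic term is present precisely for odd $2s$ and absent for even $2s$. All other steps are mechanical substitutions in the $x$-expansion provided by Proposition \ref{loopy}, and the identification of the coefficients $\mathsf{d}_{2k}(s)$, $\mathsf{d}_{1,1}(s)$, $\mathsf{d}_{2s+1}(s)$ proceeds by direct comparison.
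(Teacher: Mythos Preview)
Your overall plan---derive Proposition~\ref{1102asymoforsmallz} from Proposition~\ref{loopy}---matches the paper, which states explicitly that Proposition~\ref{1102asymoforsmallz} is ``a direct consequence of Proposition~\ref{loopy}''. However, you mischaracterize Proposition~\ref{loopy}: it does not supply an expansion of $Q_1^{21}(x,s)$, it already gives the small-$z$ expansion of $v(z;s)$ itself, with the gap structure $\mathsf{d}_{2j-1}(s)=0$ for $1\le j<s+1$ built in. Once that proposition is in hand, Proposition~\ref{1102asymoforsmallz} follows by truncating at the appropriate order; there is nothing further to argue.

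The substantive gap in your proposal lies in your attempted justification of the gap structure. You assert that ``the regular part of $v$ contains only even powers $z^{2k}$ starting at $z^2$'', but your argument only accounts for the cancellation at orders $z^0$ and $z^1$ via the explicit $\tfrac{1}{16}-s^2$ and $-\tfrac{z}{2}$ terms in \eqref{e73repeat}. For $s>1$ one must also explain why the odd coefficients $\mathsf{d}_3(s),\mathsf{d}_5(s),\ldots$ vanish, and nothing in the Riemann--Hilbert structure alone forces this: the expansion coming out of the small-$x$ analysis (see \eqref{seriesexpansionforv} in the proof of Proposition~\ref{loopy}) contains all integer powers $z^k$ a priori. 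The paper obtains the vanishing of the higher odd coefficients by substituting the RHP-derived expansion into the $\sigma$-Painlev\'e~III$'$ equation \eqref{e42}, together with the initial input $\mathsf{d}_1(s)=0$ from Corollary~\ref{nzero}. This ODE step is entirely absent from your outline and is the key mechanism. Similarly, the absence of logarithms for even $2s$ is not a resonance phenomenon but comes directly from the factor $\sin(\pi s)$ in the parametrix construction \eqref{f8}--\eqref{f10}, which vanishes precisely when $s\in\mathbb{Z}$; see Remark~\ref{structure}.
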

A detailed derivation of \eqref{powerexpansionforintegers} is given in  \cite[Section 4.3]{Basor_2019}. Also, we emphasize that $v(z;0)=-z/2$ for any $z>0$ which corresponds to $s=0$ not addressed in Proposition \ref{1102asymoforsmallz}. What's important to notice in Proposition \ref{1102asymoforsmallz} is the gap structure ($\mathsf{d}_{2j-1}(s)=0$ for $1\leq j<s+1$), obtained from the $\sigma$-Painlev\'e III$'$ equation (\ref{e42}), for it will be crucial in the proof of Theorem \ref{maintheorem1}. Our derivation of Proposition \ref{1102asymoforsmallz} is based on another Deift-Zhou nonlinear steepst descent analysis of RHP \ref{PIIImodel}, worked out in Section \ref{sectiononsmallx}. Using this approach, we can expand  $v(z;s)$ into an asymptotic series as $z\downarrow 0$, and all coefficients in the same series can, in principle, be obtained iteratively. For example, as computed in Corollary \ref{nzero}, 
\begin{align*}
&\mathsf{d}_{1,1}(s)=-\frac{1}{\Gamma(2+2s)}\bigg(\frac{\Gamma(1+s)}{\Gamma(1+2s)}\bigg)^2\frac{1}{2\cos(\pi s)}, \,\,\,\,2s\notin\mathbb{Z}_{\geq 0},\\
&\mathsf{d}_{1,1}(s)=-\frac{1}{\Gamma(2+2s)}\bigg(\frac{\Gamma(1+s)}{\Gamma(1+2s)}\bigg)^2\frac{\e^{2\pi\im s}}{\pi}\sin(\pi s), \,\,\,\,2s\in\mathbb{Z}_{\geq 1}~ \textnormal{odd}.
\end{align*}

\subsection{An integral representation of $F(s,h)$ for general exponents $s$ and $h$}

Another ingredient in our proof of Theorem \ref{maintheorem1} is the following representation of the large-$N$ limit of $F_N(s,h)/N^{s^2+2h}$ in terms of the Painlev\'e function $v(z;s)$ and the function $K_{2h}^{\epsilon}(t)$.

\begin{prop}\label{1103prop}
Let $s>-\frac{1}{2}$. Let $h\in \mathbb{C}$ with $0\leq\textnormal{Re}(h)<\frac{1}{2}+s$. Then \eqref{object} holds and
\begin{equation}
F(s,h)=\frac{G(s+1)^2}{G(2s+1)}2^{1-2h}\lim_{\epsilon\downarrow 0}\int_{0}^{\infty}K_{2h}^{\epsilon}(t)\exp\left[\int_0^{2t}v(z;s)\frac{\d z}{z}\right]\d t
\end{equation}
with $K_{h}^{\epsilon}(t)$ and $v(z;s)$ defined in \eqref{i5} and Theorem \ref{maintheorem1}, respectively.
\end{prop}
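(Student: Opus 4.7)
The plan is to combine Proposition~\ref{1102prop1}, the Painlev\'e V representation \eqref{1102defu}, and the large-$N$ asymptotic of Proposition~\ref{asympformulaforlargeN}, and identify $F(s,h)$---whose existence on $\{h:0\le\textnormal{Re}(h)<s+\tfrac12\}$ is supplied by Proposition~\ref{exofh}---by pushing the limit $N\to\infty$ through the right-hand side of \eqref{1104i7}. First I would insert \eqref{1102defu} into \eqref{1104i7} and rescale $t=\tau/N$ in the outer integral and $z=\zeta/N$ inside the exponent, with the coupled change $\tilde\epsilon=\epsilon N$. The homogeneity $K_{2h}^{\tilde\epsilon/N}(\tau/N)=N^{2h+1}K_{2h}^{\tilde\epsilon}(\tau)$, immediate from \eqref{i5}, together with $\d t=\d\tau/N$, produces a clean factor $N^{2h}$, yielding
\begin{equation*}
\frac{F_N(s,h)}{N^{s^2+2h}}
=\frac{(-1)^{N(N-1)/2}\,\mathrm{C}_N(s)\,2^{1-2h}}{\Gamma(s+N)^N\,N^{s^2}}
\lim_{\tilde\epsilon\downarrow 0}\int_0^\infty K_{2h}^{\tilde\epsilon}(\tau)\,\e^{-\tau}
\exp\!\left[\int_0^{2\tau} u_N(\zeta/N;s)\frac{\d\zeta}{\zeta}\right]\d\tau.
\end{equation*}

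Next I would feed in $u_N(\zeta/N;s)=v(\zeta;s)+\zeta/2+\mathcal{O}(N^{-1})$ from Proposition~\ref{asympformulaforlargeN} and exploit the identity $\int_0^{2\tau}(\zeta/2)\,\d\zeta/\zeta=\tau$, which exactly consumes the $\e^{-\tau}$ weight. Pointwise in $\tau>0$ and $\tilde\epsilon>0$, the integrand then converges to $K_{2h}^{\tilde\epsilon}(\tau)\exp[\int_0^{2\tau}v(\zeta;s)\d\zeta/\zeta]$. Granting the limit exchanges discussed in the last paragraph, the remaining step is to identify
\begin{equation*}
\mathrm{P}(s):=\lim_{N\to\infty}\frac{(-1)^{N(N-1)/2}\,\mathrm{C}_N(s)}{\Gamma(s+N)^N\,N^{s^2}}.
\end{equation*}
The cleanest route is specialization to $h=0$: there $K_0^{\tilde\epsilon}(\tau)=\pi^{-1}\tilde\epsilon/(\tilde\epsilon^2+\tau^2)$ is the boundary Poisson kernel and, for $f$ continuous at $0$, $\lim_{\tilde\epsilon\downarrow 0}\int_0^\infty f(\tau)K_0^{\tilde\epsilon}(\tau)\d\tau=f(0)/2$. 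Since the exponent vanishes at $\tau=0$ the inner limit equals $1/2$, and the derived formula collapses to $F(s,0)=\mathrm{P}(s)$. Comparison with the classical Keating--Snaith limit $F_N(s,0)/N^{s^2}\to G(1+s)^2/G(1+2s)$ \cite{keating2000random} then pins down $\mathrm{P}(s)=G(1+s)^2/G(1+2s)$, which is precisely the claimed prefactor. An alternative (non-circular) identification uses $\mathrm{C}_N(s)=\mathcal{J}_N(0;s)$ together with the closed-form evaluation supplied by Proposition~\ref{expressionofJ} and Barnes-$G$ asymptotics.

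The main obstacle is justifying the two limit interchanges: passing $N\to\infty$ inside the $\tau$-integral, and commuting it with $\lim_{\tilde\epsilon\downarrow 0}$. For the former I would produce an $N$-uniform, integrable majorant for $\tau\mapsto K_{2h}^{\tilde\epsilon}(\tau)\e^{-\tau}\exp[\int_0^{2\tau} u_N(\zeta/N;s)\d\zeta/\zeta]$. Near $\tau=0$ this rests on a uniform-in-$N$ small-$\zeta$ bound $u_N(\zeta/N;s)=\mathcal{O}(\zeta)$, which mirrors the gap structure of $v$ from Proposition~\ref{1102asymoforsmallz} via Proposition~\ref{asympformulaforlargeN} and ensures the logarithmic singularity of $\d\zeta/\zeta$ is absorbed. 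For $\tau\to\infty$, Corollary~\ref{cor5555} gives $v(\zeta;s)=-\zeta/2+s\sqrt{\zeta}+\mathcal{O}(1)$, and the cancellation of the $-\zeta/2$ contribution by the $\e^{-\tau}$ weight confines the exponent to grow at most like $\sqrt{\tau}$, which is defeated by the decay $|K_{2h}^{\tilde\epsilon}(\tau)|=\mathcal{O}(\tau^{-2\textnormal{Re}(h)-1})$ under the hypothesis $\textnormal{Re}(h)<s+\tfrac12$. The $\tilde\epsilon\downarrow 0$ commutation can be handled either by exploiting the explicit representation \eqref{i5} through a contour deformation, or directly from the separate existence of the iterated limits combined with the monotone/bounded behaviour of the approximating kernels.
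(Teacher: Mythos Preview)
Your route is genuinely different from the paper's. The paper does \emph{not} push $N\to\infty$ through the Hankel integral \eqref{1104i7}. Instead it separates the two limits: Proposition~\ref{exofh} already identifies $F(s,h)=\frac{G(1+s)^2}{G(1+2s)}2^{-2h}\mathbb{E}[|\mathsf{X}(s)|^{2h}]$ via the probabilistic convergence $\mathrm{Tr}(X_N)/N\to\mathsf{X}(s)$ and uniform integrability, and then Lemma~\ref{fractionalmoments} computes $\mathbb{E}[|\mathsf{X}(s)|^{2h}]$ directly at the level of the limiting object, by writing it as $\int|x|^{2h}\rho^{(s)}(x)\,\d x$ and applying Fubini against the density (whose existence is Theorem~\ref{expf}, driven by the exponential decay of the characteristic function from Corollary~\ref{cor5555} and \eqref{1008e6}). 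No interchange of $N\to\infty$ with the $\epsilon$-limit or the $t$-integral is ever needed.

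Your direct approach, by contrast, hinges precisely on that interchange, and the justification you sketch has a real gap. Proposition~\ref{asympformulaforlargeN} gives $u_N(\zeta/N;s)=v(\zeta;s)+\zeta/2+\mathcal{O}(N^{-1})$ only uniformly on \emph{compact} subsets of $(0,\infty)$; it provides no uniform-in-$N$ control near $\zeta=0$ or as $\zeta\to\infty$. Your large-$\tau$ argument invokes Corollary~\ref{cor5555}, but that describes the \emph{limit} $v$, not an $N$-uniform bound on $u_N(\cdot/N;s)$, so it cannot furnish a dominating function. (Incidentally, there is no ``cancellation'': after your substitution the factor $\e^{-\tau}$ has already been absorbed into $\int_0^{2\tau}(\zeta/2)\,\d\zeta/\zeta$, so the limiting exponent is $-\tau+\mathcal{O}(\sqrt{\tau})$, i.e. exponentially decaying, not growing like $\sqrt{\tau}$.) Likewise, your small-$\zeta$ bound cannot be read off from Proposition~\ref{1102asymoforsmallz} via Proposition~\ref{asympformulaforlargeN}, since the latter's uniformity does not reach down to $0$. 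The natural $N$-uniform majorant on the integrand is the characteristic-function bound $|\mathcal{F}_N(2\tau)|\le 1$ coming from $\mathcal{F}_N(t)=\mathbb{E}_N^{(s)}[\e^{\im t\,\mathrm{Tr}(X_N)/(2N)}]$; but invoking that is precisely the probabilistic input the paper uses, and even then it only yields integrability of $K_{2h}^{\tilde\epsilon}$ when $\textnormal{Re}(h)>0$, so the boundary case $\textnormal{Re}(h)=0$ and the commutation with $\lim_{\tilde\epsilon\downarrow0}$ remain unaddressed by your sketch. The paper's Fubini-with-density argument sidesteps all of this at once.
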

Our approach to proving Proposition~\ref{1103prop} is based on the probabilistic method introduced in \cite{AKW}, in which the authors used a family of random variables \(\mathsf{X}(s)\) to study the convergence of \( F_N(s,h)/N^{s^2+2h} \) as \( N \to \infty \).

\begin{lem}\cite[Theorem 1.2]{AKW}\label{convergenceat0}
Let $s>-\frac{1}{2}$ and $0<h<s+\frac{1}{2}$. Then \eqref{object} holds and
\begin{equation*}
F(s,h)=\frac{G(s+1)^2}{G(2s+1)}2^{-2h}\mathbb{E}\left[\left|\mathsf{X}(s)\right|^{2h}\right]<\infty.
\end{equation*}
\end{lem}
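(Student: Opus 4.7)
The plan is to reduce $F_N(s,h)$ to an expectation under a truncated Hua-Pickrell eigenangle ensemble, and then identify the large-$N$ limit of the resulting linear statistic with a random variable $\mathsf{X}(s)$ arising from the ergodic decomposition of the corresponding Hua-Pickrell measure on the infinite unitary group $\mathbb{U}(\infty)$.

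The first step is to unwind the integrand. Using the definition of $V_A$ together with the identity $\tfrac{1}{1-\e^{-\im\theta}}=\tfrac{1}{2}-\tfrac{\im}{2}\cot(\theta/2)$, one obtains
\begin{equation*}
\frac{V_A'(0)}{V_A(0)}=-\frac{1}{2}\sum_{j=1}^N\cot(\theta_j/2),
\end{equation*}
so that $|V_A(0)|^{2(s-h)}|V_A'(0)|^{2h}=4^{-h}|V_A(0)|^{2s}\bigl|\sum_{j=1}^N\cot(\theta_j/2)\bigr|^{2h}$. Weyl's integration formula then expresses $F_N(s,h)=4^{-h}C_N(s)\,\mathbb{E}_{\tilde\mu_{N,s}}\!\bigl[\bigl|\sum_{j}\cot(\theta_j/2)\bigr|^{2h}\bigr]$, where $\tilde\mu_{N,s}$ is the probability measure on $[0,2\pi)^N$ with density proportional to $\prod_j|1-\e^{\im\theta_j}|^{2s}\prod_{j<k}|\e^{\im\theta_j}-\e^{\im\theta_k}|^2$ and $C_N(s):=\mathbb{E}_{\mu_N}[|V_A(0)|^{2s}]$. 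Selberg's integral combined with the Barnes $G$-function asymptotic yields $C_N(s)\sim \frac{G(1+s)^2}{G(1+2s)}N^{s^2}$, so the lemma reduces to establishing the $L^{2h}(\tilde\mu_{N,s})$-convergence of $\mathsf{X}_N(s):=\frac{1}{N}\sum_{j=1}^N\cot(\theta_j/2)$ to some random variable $\mathsf{X}(s)$ with $\mathbb{E}|\mathsf{X}(s)|^{2h}<\infty$.

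The second step is to identify the candidate limit abstractly. The measures $\tilde\mu_{N,s}$ are the consistent $N$-truncations of a one-parameter Hua-Pickrell measure $m^{(s)}$ on $\mathbb{U}(\infty)$; by the Borodin-Olshanski theory, $m^{(s)}$ is a mixture of ergodic measures indexed by a point configuration of the corresponding hard-edge determinantal process together with a real shift parameter, whose law is by definition that of $\mathsf{X}(s)$. Distributional convergence $\mathsf{X}_N(s)\Rightarrow\mathsf{X}(s)$ then follows from the scaling limit of the eigenangle process of $\tilde\mu_{N,s}$ near $\theta=0$, where the bulk of the $\cot$-sum is concentrated.

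The main obstacle is upgrading distributional convergence to convergence of the $(2h)$-th moment, for which one needs the uniform bound $\sup_N\mathbb{E}_{\tilde\mu_{N,s}}[|\mathsf{X}_N(s)|^{2h+\delta}]<\infty$ for some $\delta>0$. The pole $\cot(\theta/2)\sim 2/\theta$ is tamed by the Hua-Pickrell weight $|\theta|^{2s}$ near the origin precisely when $2h<2s+1$, which is the restriction stated in the lemma. I would establish this tail bound by splitting eigenangles at distance $\le N^{-1}$ from $0$, controlled by explicit one- and two-point correlation estimates derived from the hard-edge Bessel-type kernel of the limiting determinantal process and its finite-$N$ analogues, from bulk eigenangles, on which $\cot$ is uniformly bounded. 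This truncation-and-tail analysis, which is the source of the strict inequality $h<s+\tfrac12$, is the core technical step and the place where the Hua-Pickrell structure is genuinely used.
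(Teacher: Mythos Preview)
This lemma is not proved in the paper; it is quoted from \cite[Theorem~1.2]{AKW}. Your outline is essentially the AKW strategy and matches how the present paper argues its complex-$h$ extension (Proposition~\ref{exofh}): pass from the CUE average to $\mathbb{E}_N^{(s)}\bigl[|\mathrm{Tr}(X_N)/N|^{2h}\bigr]$ under the Hermitian Hua--Pickrell ensemble (your cotangent identity is precisely the Cayley transform realising \eqref{i2} and \eqref{HP0}), invoke Borodin--Olshanski for the limiting variable $\mathsf{X}(s)$, and upgrade to moment convergence.

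Two refinements are worth flagging. First, the relevant Hua--Pickrell measures sit on infinite Hermitian matrices $\mathbf{H}(\infty)$ rather than $\mathbb{U}(\infty)$, and \cite[Theorem~5.1]{Borodin_2001} actually gives \emph{almost sure} convergence of $\mathrm{Tr}(X_N)/N$ to $\mathsf{X}(s)$ on the projective-limit space, which is stronger than the distributional statement you extract from point-process scaling limits. Second, and more substantively, in the paper's proof of Proposition~\ref{exofh} the uniform-integrability route you propose is invoked only for $s>0$, via the ready-made bound \cite[(17)]{AKW}; for $-\tfrac12<s\le 0$ the argument instead combines almost-sure convergence with Scheff\'e's lemma. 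Your proposed correlation-kernel tail estimate may well cover the full range $s>-\tfrac12$, but the hard-edge singularity of the weight at small $s$ is exactly where one- and two-point bounds alone become most delicate, so this is the step that would require the most care if you want a self-contained argument.
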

We extend this convergence result to the range  $0 \leq \mathrm{Re}(2h) < 2s + 1$ in Proposition~\ref{exofh}. Before presenting further results, we introduce a precise definition of the random variable $\mathsf{X}(s)$. 

\begin{definition}\label{defofxs}
Let \( s \in \mathbb{R} \) with \( s > -\tfrac{1}{2} \).  
Let \( \mathbf{C}^{(s)} \) be the determinantal point process on 
\((-\infty, 0) \cup (0, \infty)\) with correlation kernel
\begin{equation*}
\mathsf{K}^{(s)}(x, y) = 
\frac{1}{2\pi}
\frac{(\Gamma(s + 1))^2}{\Gamma(2s + 1)\Gamma(2s + 2)}
\cdot
\frac{\mathsf{G}^{(s)}(x) \mathsf{H}^{(s)}(y) - \mathsf{G}^{(s)}(y) \mathsf{H}^{(s)}(x)}{x-y},
\end{equation*}
where \( \mathsf{G}^{(s)}(x) \) and \( \mathsf{H}^{(s)}(x) \) are given by the formulae
\begin{align*}
\mathsf{G}^{(s)}(x)&= 2^{2s-\frac{1}{2}} \Gamma \left(s + \tfrac{1}{2}\right)
\, |x|^{-\frac{1}{2}} 
J_{s - \frac{1}{2}} \left(\frac{1}{|x|}\right), \\
\mathsf{H}^{(s)}(x) &= \textnormal{sgn}(x)2^{2s + \frac{1}{2}} \Gamma \left(s + \tfrac{3}{2}\right)
\, |x|^{-\frac{1}{2}} 
J_{s + \frac{1}{2}}\!\left(\frac{1}{|x|}\right),
\end{align*}
and \( J_\nu \) denotes the Bessel function of the first kind with parameter \( \nu \), cf. \cite[$\S 10.2.2$]{NIST}.
Then the random variable \( \mathsf{X}(s) \) is defined as the following principal value sum over the points of \( \mathbf{C}^{(s)} \),
\begin{equation}\label{defofran}
\mathsf{X}(s) = \lim_{m \to \infty} 
\left[
\sum_{x\in \mathbf{C}^{(s)}}x \mathbf{1}_{|x|> m^{-2}}
\right].
\end{equation}
\end{definition}
It is shown in Qiu \cite[Theorem 1.2]{Qiu} that (\ref{defofran}) is well-defined. The appearance of the random variable $\mathsf{X}(s)$ can be traced back to the ergodic decomposition of the Hua–Pickrell measures on the space of infinite Hermitian matrices, where it arises in a natural way \cite{Borodin_2001,Qiu}. In this section, we do not provide further details about the background of \(\mathsf{X}(s)\) and only state specific facts about it that are needed in Section~\ref{sectiononexpf}. For readers interested in more background on \(\mathsf{X}(s)\), we refer to \cite{Borodin_2001, Qiu, AKW, ABGS}.

\subsection{A question of Assiotis, Bedert, Gunnes and Soor posed in \cite[Remark 2.5]{ABGS}} Returning to Proposition~\ref{1103prop}, a key step in its proof is to show that the probability density of \(\mathsf{X}(s)\), with respect to the Lebesgue measure on \(\mathbb{R}\), exists for any \(s > -1/2\). This provides, en route, a positive answer to the question posed in \cite[Remark 2.5]{ABGS}, where the authors inquired about the probability density function for the law of \(\mathsf{X}(s),s>0\). The existence of its density had previously been established for \(s \in \left(-1/2,0\right) \cup \mathbb{Z}_{\geq 0}\) in \cite{ABGS}.

\begin{theo}\label{expf}
For any $s>-\frac{1}{2}$,
the distribution of $\mathsf{X}(s)$ admits a bounded, smooth probability density function with respect to the
Lebesgue measure on $\mathbb{R}$.
\end{theo}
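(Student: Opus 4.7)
The plan is to identify the characteristic function of $\mathsf{X}(s)$ with
$$\Phi(t) := \exp\left[\int_0^{2|t|} v(z;s)\,\frac{\d z}{z}\right],\ \ \ t\in\mathbb{R},$$
exhibit its exponential decay at infinity via Corollary~\ref{cor5555}, and then apply Fourier inversion. The kernel $\mathsf{K}^{(s)}$ in Definition~\ref{defofxs} satisfies $\mathsf{K}^{(s)}(-x,-y)=\mathsf{K}^{(s)}(x,y)$, since $\mathsf{G}^{(s)}$ depends only on $|x|$ while $\mathsf{H}^{(s)}$ carries a factor of $\mathrm{sgn}(x)$. Hence the determinantal process $\mathbf{C}^{(s)}$ is invariant under $x\mapsto -x$, so $\mathsf{X}(s)$ is a symmetric random variable and its characteristic function $\phi_{\mathsf{X}(s)}$ is real-valued, even, and equals $\mathbb{E}[\cos(t\mathsf{X}(s))]$. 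Fourier-inverting the $L^1$ identity \eqref{i5} yields $|x|^{2h}\e^{-\epsilon|x|}=\int_{\mathbb{R}}K_{2h}^{\epsilon}(t)\e^{\im tx}\,\d t$, from which Fubini (legitimate at fixed $\epsilon>0$, justified by finiteness of the moments $\mathbb{E}[|\mathsf{X}(s)|^{2\mathrm{Re}(h)}]$ provided by Proposition~\ref{exofh}) gives
$$\mathbb{E}\bigl[|\mathsf{X}(s)|^{2h}\e^{-\epsilon|\mathsf{X}(s)|}\bigr]=2\int_0^{\infty}K_{2h}^{\epsilon}(t)\,\phi_{\mathsf{X}(s)}(t)\,\d t.$$
Letting $\epsilon\downarrow 0$ and comparing with Proposition~\ref{1103prop}, together with the extension of Lemma~\ref{convergenceat0} furnished by Proposition~\ref{exofh}, shows that $\phi_{\mathsf{X}(s)}$ and $\Phi$ produce the same value when integrated against $K_{2h}^{\epsilon}$ in the limit $\epsilon\downarrow 0$, for every $h$ with $0\leq\mathrm{Re}(h)<\tfrac{1}{2}+s$. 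Since, for $h\notin\mathbb{Z}_{\geq 0}$, this limit is (up to the non-vanishing prefactor $-\Gamma(1+2h)\sin(\pi h)/\pi$) the Mellin transform of the integrand in $t$ evaluated at $-2h$, Mellin uniqueness across the $h$-strip forces $\phi_{\mathsf{X}(s)}(t)=\Phi(t)$ for $t>0$, and by evenness for all real $t$.

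Granted the identification, the decay step is routine. Integrability of $v(z;s)/z$ near $z=0$ follows from Proposition~\ref{1102asymoforsmallz}, and Corollary~\ref{cor5555} gives
$$\int_0^{2t} v(z;s)\,\frac{\d z}{z}=-t+2s\sqrt{2t}-\tfrac{3}{4}s^2\log(2t)+\mathcal{O}(1),\ \ \ t\to\infty.$$
Completing the square $-t+2s\sqrt{2t}=-(\sqrt{t}-s\sqrt{2})^2+2s^2$, one finds a constant $C_s>0$ such that $\Phi(t)\leq C_s\e^{-t/4}$ for all $t\geq 1$. In particular $t\mapsto t^k\Phi(t)$ lies in $L^1(\mathbb{R})$ for every $k\in\mathbb{Z}_{\geq 0}$, so Fourier inversion of the characteristic function produces the smooth function
$$p(x):=\frac{1}{2\pi}\int_{-\infty}^{\infty}\Phi(t)\e^{-\im tx}\,\d t,$$
with $\sup_{x\in\mathbb{R}}|p^{(k)}(x)|\leq (2\pi)^{-1}\|t^k\Phi\|_{L^1(\mathbb{R})}<\infty$ for all $k\in\mathbb{Z}_{\geq 0}$. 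By the identification $\phi_{\mathsf{X}(s)}=\Phi$, $p$ is the probability density of $\mathsf{X}(s)$, which proves Theorem~\ref{expf}.

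The main obstacle lies in the identification step: passing rigorously from the family of moment identities parametrised by $h$ to the pointwise equality $\phi_{\mathsf{X}(s)}=\Phi$. One must justify the interchange of $\lim_{\epsilon\downarrow 0}$ with integration against $K_{2h}^{\epsilon}$ — whose $\epsilon\downarrow 0$ limit is only a tempered distribution when $h\notin\mathbb{Z}_{\geq 0}$ — and then extract uniqueness by varying $h$ across the complex strip where both sides are analytic. The extension of Lemma~\ref{convergenceat0} to complex $h$ provided by Proposition~\ref{exofh} is essential here; the decay bound and the Fourier-inversion step that follow are comparatively standard.
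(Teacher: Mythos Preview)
Your identification step is circular. You invoke Proposition~\ref{1103prop} to produce one side of the moment identity, but in the paper's logical structure Proposition~\ref{1103prop} is proved \emph{after} Theorem~\ref{expf} and depends on it: its proof goes through Lemma~\ref{fractionalmoments}, which opens by writing $\mathbb{E}[|\mathsf{X}(s)|^{2h}]$ as an integral against the density $\rho^{(s)}$ --- exactly the object whose existence you are trying to establish. Even setting the circularity aside, the Mellin-uniqueness step is not solid: for $\mathrm{Re}(h)\geq 0$ the naive integral $\int_0^\infty t^{-2h-1}f(t)\,\d t$ diverges at $t=0$ since $f(0)=1$, so the limit $\epsilon\downarrow 0$ of $\int_0^\infty K_{2h}^\epsilon(t)f(t)\,\d t$ is obtained only after repeated integration by parts (cf.\ Propositions~\ref{lim1}--\ref{lim2}), and those manipulations require smoothness and exponential decay of $f$ that are not known a priori for $\phi_{\mathsf{X}(s)}$ --- indeed, they are precisely what Theorem~\ref{expf} is meant to deliver.

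The paper's route bypasses all of this by identifying the characteristic function directly, \emph{before} touching moments. Proposition~\ref{smooth} establishes \eqref{1008e6}, namely $\mathbb{E}[\e^{\im t\mathsf{X}(s)/2}]=\exp\bigl[\int_0^t v(x;s)\,\d x/x\bigr]$, by combining two pointwise-convergence inputs: $\mathcal{F}_N\to\mathcal{F}=\mathbb{E}[\e^{\im t\mathsf{X}(s)/2}]$ from Lemma~\ref{theanalyticfunction} (quoted from \cite{ABGS}), and $v_N\to v$ from the large-$N$ Riemann--Hilbert analysis \eqref{e73x}. Once \eqref{1008e6} is in hand, your decay-plus-Fourier-inversion paragraph is exactly what the paper does; the whole content of the theorem sits in \eqref{1008e6}, obtained without any moment-matching or Mellin inversion.
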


By Fourier inversion, existence of the probability density of \(\mathsf{X}(s)\) with respect to the Lebesgue measure follows from the exponential decay of its characteristic function,
\begin{equation*}
\mathbb{E}\big[\mathrm{e}^{\im t \mathsf{X}(s)}\big] = \mathcal{O}\big(\mathrm{e}^{-\frac{t}{2}(1-\epsilon)}\big),\ \ \ t>0,
\end{equation*}
for any sufficiently small \(\epsilon\), as was anticipated in \cite[Remark 2.5]{ABGS}. More precisely, Theorem~\ref{expf} is a consequence of the asymptotic analysis of \(v(z;s)\) as \(z \downarrow 0\) and \(z \to \infty\), obtained in Proposition~\ref{1102asymoforsmallz} and Corollary~\ref{cor5555}, respectively, together with the following representation formula. 

\begin{prop}\label{1103addprop}
Let  $s>-\frac{1}{2}$. Let $\rho^{(s)}(x)$ be the probability density  of  the distribution of $\mathsf{X}(s)$. Then for any $x\in \mathbb{R}$, 
\bea\label{densityformula}
\rho^{(s)}(x)=\frac{1}{\pi}\textnormal{Re}\left(\int_{0}^{\infty}\e^{\im xt}\exp\left[\int_0^{t}v(z;s)\frac{\d z}{z}\right]\d x\right),
\eea
with $v(z;s)$ as in Theorem \ref{maintheorem1}.
\end{prop}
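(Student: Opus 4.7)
The plan is Fourier inversion: first identify the characteristic function $\phi_{s}(t):=\mathbb{E}[\mathrm{e}^{\im t\mathsf{X}(s)}]$ in closed form in terms of $v(z;s)$, then invert. Integrability required for the inversion is furnished by Corollary~\ref{cor5555} at infinity and by Proposition~\ref{1102asymoforsmallz} at the origin.

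To extract $\phi_{s}$, I would combine Lemma~\ref{convergenceat0} (extended to the complex strip via Proposition~\ref{exofh}) with Proposition~\ref{1103prop} to obtain the moment identity
\[
\mathbb{E}\bigl[|\mathsf{X}(s)|^{2h}\bigr]=2\lim_{\epsilon\downarrow 0}\int_{0}^{\infty} K_{2h}^{\epsilon}(t)\exp\!\left[\int_{0}^{2t}v(z;s)\,\frac{\mathrm{d}z}{z}\right]\mathrm{d}t
\]
valid for all $h\in\mathbb{C}$ with $0\le\mathrm{Re}(2h)<2s+1$. From the definition \eqref{i5}, $2\pi K_{2h}^{\epsilon}(t)$ is precisely the Fourier transform of $|x|^{2h}\mathrm{e}^{-\epsilon|x|}$. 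Interchanging the order of integration via Fubini (justified by the decay $\exp[\int_{0}^{2t}v\,\mathrm{d}z/z]=\mathcal{O}(\mathrm{e}^{-t(1-o(1))})$ from Corollary~\ref{cor5555}) recasts the right-hand side as $\int_{\mathbb{R}}|x|^{2h}g(x)\,\mathrm{d}x$, where $g$ is an explicit Fourier integral of the symmetric extension of $t\mapsto\exp[\int_{0}^{2t}v\,\mathrm{d}z/z]$. Since both sides represent $2h$-th Mellin moments on a common strip of analyticity, Mellin uniqueness forces $g\equiv \rho^{(s)}$ and identifies $\phi_{s}(t)$ with $\exp[\int_{0}^{2|t|}v(z;s)\,\mathrm{d}z/z]$.

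For the final step I would exploit symmetry: from Definition~\ref{defofxs} the functions $\mathsf{G}^{(s)}$ and $\mathsf{H}^{(s)}$ are even and odd respectively, so $\mathsf{K}^{(s)}(-x,-y)=\mathsf{K}^{(s)}(x,y)$, whence $\mathsf{X}(s)\stackrel{d}{=}-\mathsf{X}(s)$ and $\phi_{s}$ is real-valued and even. This reduces the Fourier inversion integral to the real part of a half-line integral, and a change of variable $t\mapsto t/2$ converts $\int_{0}^{2t}$ into $\int_{0}^{t}$, delivering the stated formula \eqref{densityformula}.

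The main obstacle is the rigorous justification of the identification in paragraph two: one must interchange $\lim_{\epsilon\downarrow 0}$ with the $|x|^{2h}\,\mathrm{d}x$-integration, and then apply Mellin uniqueness in an open strip. The decay of the candidate characteristic function at infinity comes from Corollary~\ref{cor5555}; its continuity and smoothness near $t=0$ follow from Proposition~\ref{1102asymoforsmallz}, which shows that $v(z;s)/z$ is integrable at $z=0^{+}$ (the leading $z^{2k}$-structure of $v$ rules out any $\log z$ contribution), so that $\int_{0}^{t}v(z;s)\,\mathrm{d}z/z$ is smooth near $t=0$. Once these dominated-convergence arguments are in place, the remainder is standard Fourier analysis.
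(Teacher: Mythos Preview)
Your Mellin-uniqueness route is workable in principle, but in the paper's logical structure it is circular: Proposition~\ref{1103prop} is proven via Lemma~\ref{fractionalmoments}, and that lemma already uses the identity
\[
\mathbb{E}\big[\mathrm{e}^{\frac{\im t}{2}\mathsf{X}(s)}\big]=\exp\!\left[\int_0^t v(z;s)\,\frac{\mathrm{d}z}{z}\right]
\]
(equation~\eqref{1008e6}). So by invoking Proposition~\ref{1103prop} you are assuming exactly the characteristic-function identification you then recover by Mellin inversion.

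The paper establishes \eqref{1008e6} much more directly, in Proposition~\ref{smooth}. The point is that for finite $N$ one has the \emph{exact} identity
\[
\mathbb{E}_N^{(s)}\!\big[\mathrm{e}^{\frac{\im t}{2N}\mathrm{Tr}(X_N)}\big]
=J_N(0,s)^{-1}J_N\!\Big(\tfrac{t}{2N},s\Big)
=\exp\!\left[\int_0^t v_N(x;s)\,\frac{\mathrm{d}x}{x}\right],
\]
simply from the Hankel-determinant representation \eqref{i13} and \eqref{1102defu}. The limit $N\to\infty$ on the left is $\mathbb{E}[\mathrm{e}^{\frac{\im t}{2}\mathsf{X}(s)}]$ by the almost-sure convergence of $\mathrm{Tr}(X_N)/N$ (Borodin--Olshanski), and on the right one uses Lemma~\ref{theanalyticfunction} (from \cite{ABGS}) plus \eqref{e73x} to pass from $v_N$ to $v$. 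No moment matching, no Mellin strip, no uniqueness theorem needed. Once \eqref{1008e6} is in hand, Fourier inversion (your final paragraph) finishes the job in one line; this is literally the paper's proof: ``It follows from \eqref{defof} and \eqref{1008e6}.'' Your symmetry argument for the evenness of $\phi_s$ is fine but also unnecessary, since strict positivity and evenness of the characteristic function are quoted from \cite{ABGS}.
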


Since \(\mathsf{X}(s)\) is a continuous random variable, its density function can be used to compute its moments,
\[
\mathbb{E}\Big[|\mathsf{X}(s)|^{2h}\Big] = \int_{-\infty}^{\infty} |x|^{2h} \rho^{(s)}(x) \, \mathrm{d}x.
\]  
See \cite[Theorem 1.5]{ABGS} for explicit expressions in case \(s \in \mathbb{Z}_{\geq 0}, -1<\textnormal{Re}(2h)<2s+1\). By substituting \eqref{densityformula}, and using the asymptotic analysis of \(v(z;s)\) to justify the exchange of certain limits and integrations, we obtain the integral representation of \(F(s,h)\) stated in Proposition \ref{1103prop}. We refer the reader to Section \ref{sectiononexpf} for further details.\bigskip

Finally, we discuss an application of Theorem~\ref{maintheorem1}, which provides a necessary and sufficient condition for the validity of a question posed in \cite[Remark~2.8]{ABGS} in the study of the rationality of the coefficients in the series expansion of \(F(s,h)\) for \(s \in \mathbb{Z}_{\geq 1}\). In that Remark the authors conjectured that, for \(m \in \mathbb{Z}_{\ge 1}\),
\begin{equation}\label{remark2.8}
\lim_{h \to m+\frac{1}{2}} \lim_{N \to \infty} \frac{F_{N}(m,h)}{N^{m^{2}+2h}} = +\infty,
\end{equation}
and verified this for \(m = 1,2,3,4\), while noting that the case \(m \ge 5\) remains open.

\begin{prop}\label{rule}
Let \(m \in \mathbb{Z}_{\ge 1}\). Then \eqref{remark2.8} holds if and only if $\mathsf{d}_{2m+1}(m)\neq 0$. Here, $\mathsf{d}_{2m+1}(m)$ appears in \eqref{powerexpansionforintegers}.
\end{prop}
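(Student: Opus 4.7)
The plan is to apply Theorem \ref{maintheorem1} with $s=m\in\mathbb{Z}_{\ge 1}$ and $h\in(m,m+\tfrac12)$, in which case $M=m$ and
\bes
F(m,h)=\frac{G^2(1+m)}{G(1+2m)}\cdot\Big({-}\frac{2}{\pi}\Big)\sin(\pi h)\,\Gamma(2h-2m)\,I(h),\qquad I(h):=\int_0^\infty \Phi^{(2m+1)}(t)\,t^{2m-2h}\,\d t,
\ees
where $\Phi(t):=\exp\!\left[\int_0^t v(z;m)\tfrac{\d z}{z}\right]$. As $h\to(m+\tfrac12)^-$ the prefactor $-\tfrac{2}{\pi}\sin(\pi h)\Gamma(2h-2m)$ converges to $\tfrac{2(-1)^{m+1}}{\pi}$, a finite nonzero number, so the limit in \eqref{remark2.8} is governed entirely by $I(h)$.

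I would first dispose of the tail of $I(h)$. From Corollary \ref{cor5555} one has $\int_0^t v(z;m)\tfrac{\d z}{z}=-\tfrac{t}{2}+2m\sqrt{t}-\tfrac{3m^2}{4}\ln t+O(1)$ as $t\to\infty$; iterating the identity $\Phi'(t)=(v(t;m)/t)\Phi(t)$ then gives $\Phi^{(k)}(t)=O\!\big(e^{-t/2+2m\sqrt{t}}\big)$ for every $k\ge 0$. Therefore $\int_1^\infty |\Phi^{(2m+1)}(t)|\,t^{2m-2h}\d t$ stays bounded uniformly for $h$ in a small left neighbourhood of $m+\tfrac12$.

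The heart of the argument is the local behaviour at $t=0$. By Proposition \ref{1102asymoforsmallz} in the even-$2s$ case,
\bes
v(z;m)=\sum_{k=1}^m\mathsf{d}_{2k}(m)z^{2k}+\mathsf{d}_{2m+1}(m)z^{2m+1}+O(z^{2m+2}),
\ees
and the representation \eqref{1104e1} together with the parity structure of $I_n(2\sqrt{z})$ shows $v(z;m)$ to be analytic at $z=0$, so the expansion may be integrated and then exponentiated termwise. This produces $\int_0^t v(z;m)\tfrac{\d z}{z}=P(t^2)+\tfrac{\mathsf{d}_{2m+1}(m)}{2m+1}\,t^{2m+1}+O(t^{2m+2})$ with $P$ a polynomial. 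Since $e^{P(t^2)}$ is an even function of $t$, its Taylor series at $0$ contains only even powers of $t$; consequently the coefficient of $t^{2m+1}$ in $\Phi(t)$ is exactly $\tfrac{\mathsf{d}_{2m+1}(m)}{2m+1}$, yielding the key identity
\bes
\Phi^{(2m+1)}(0)=(2m)!\,\mathsf{d}_{2m+1}(m).
\ees

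The proposition follows directly. If $\mathsf{d}_{2m+1}(m)\ne 0$, then $\Phi^{(2m+1)}(t)=(2m)!\,\mathsf{d}_{2m+1}(m)+O(t)$ near $0$, so for a small fixed $\epsilon>0$,
\bes
\int_0^\epsilon \Phi^{(2m+1)}(t)\,t^{2m-2h}\,\d t=\frac{(2m)!\,\mathsf{d}_{2m+1}(m)}{2m-2h+1}\,\epsilon^{2m-2h+1}+O(1),
\ees
which diverges in absolute value as $2m-2h+1\to 0^+$. Combining with the tail bound, together with the fact that $F(m,h)\ge 0$ as a limit of non-negative moments via Lemma \ref{convergenceat0}, forces $F(m,h)\to+\infty$, establishing \eqref{remark2.8}. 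Conversely, if $\mathsf{d}_{2m+1}(m)=0$, then $\Phi^{(2m+1)}(t)=O(t)$ near $0$, whence $\Phi^{(2m+1)}(t)\,t^{2m-2h}=O(t^{2m-2h+1})$ is uniformly integrable on $[0,1]$ as $h\to m+\tfrac12$, so $I(h)$, and hence $F(m,h)$, stays finite, making \eqref{remark2.8} fail. The main obstacle in the plan is securing the parity-driven identity $\Phi^{(2m+1)}(0)=(2m)!\,\mathsf{d}_{2m+1}(m)$; the potentially troublesome sign question (why a divergent $I(h)$ produces $+\infty$ rather than $-\infty$) is settled, free of charge, by positivity of the underlying moments.
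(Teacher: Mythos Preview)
Your proposal is correct and follows essentially the same strategy as the paper's proof: both start from the $M=m$ case of Theorem~\ref{maintheorem1}, bound the tail via Corollary~\ref{cor5555}, and show that the behaviour of $\int_0^\epsilon \Phi^{(2m+1)}(t)\,t^{2m-2h}\,\d t$ as $h\to(m+\tfrac12)^-$ is governed precisely by the constant $\Phi^{(2m+1)}(0)$, which equals $(2m)!\,\mathsf{d}_{2m+1}(m)$ by the gap structure. The only notable difference is that the paper extracts this constant by expanding $\Phi^{(2m+1)}$ via Fa\`a di Bruno's formula and then arguing that every term except $u\cdot f^{(2m+1)}$ is $O(\lambda)$ (because any product with an odd $j$ contributing forces a factor that vanishes at $0$), whereas you obtain it more directly by observing that $e^{P(t^2)}$ is even and hence contributes nothing to the $t^{2m+1}$ Taylor coefficient; your route is a bit cleaner, and your explicit invocation of positivity from Lemma~\ref{convergenceat0} to resolve the sign of the divergence is a point the paper leaves implicit.
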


\begin{remark}
The expression \eqref{1104e1} for $v(z; s)$ with $s=m$ and the recursive formula for the coefficients in the asymptotic expansion of  
\[
\det\!\left[ I_{j+k+1}\!\left( 2\sqrt{|z|} \right) \right]_{j,k=0}^{m-1}
\quad \text{as } z \to 0,
\]
see~\cite[Theorem 5]{keating-fei} and \cite[Proposition 4.3]{forrester2025higher}, provides an efficient way to check whether \(\mathsf{d}_{2m+1}(m) \ne 0\).  
Hence, by Proposition~\ref{rule}, we can verify that \eqref{remark2.8} holds for many values of \(m\). For general $m$, since $v(z;s)$ satisfies the $\sigma$-Painlev\'e \textnormal{III}$'$ equation \eqref{e42} with $s=m$, the equation admits a family of solutions depending on the coefficient of $z^{2m+1}$ in the small-$z$ asymptotic expansion. Hence, if $\mathsf{d}_{2m+1}(m) = 0$, we would have the following gap structure:
\begin{equation}\label{1028e3}
v(z;m) = \sum_{k=1}^{\infty} \mathsf{d}_{2k}(m)z^{2k},\ \ \ z\downarrow 0.
\end{equation}
However, for our specific solution $v(z;s)$ in \eqref{1104e1}, the gap structure \eqref{1028e3} should not occur. In other words, we believe that $\mathsf{d}_{2m+1}(m) \neq 0$, and therefore \eqref{remark2.8} holds by Proposition~\ref{rule}.
\end{remark}

\vspace{.6cm}

\textbf{Notation}. Throughout this paper, the notation $f(x) = \mathcal{O}(g(x))$ means that there is a constant $C>0$ such that 
$|f(x)| \le C\,|g(x)|$ for all $x$ in a neighborhood of the point of interest (or for sufficiently large $x$). The implied constant in the notation  $\mathcal{O}$ depends only on $s$. If it depends on other parameters, these will be indicated explicitly in the subscript of $\mathcal{O}$, such as $\mathcal{O}_{s,a}$. The notation $x \downarrow 0$ means that $x >0$ and $x \to 0$. The Euclidean norm is adopted for \(2\times2\) matrices. Accordingly, for any matrix-valued function \(f:\Sigma \to \mathbb{C}^{2\times2}\) with \(f(\zeta) = (f_{ij}(\xi))_{i,j=1}^{2}\), on a reasonable contour $\Sigma\subset\mathbb{C}$, we define  
\begin{equation*}
	\|f(\cdot)\|_{L^2(\Sigma)}^2 := \sum_{i,j=1}^{2} \int_{\Sigma} |f_{ij}(\xi)|^2\,|\d\xi|,
	\qquad
	\|f(\cdot)\|_{L^{\infty}(\Sigma)} := \sup_{\xi\in \Sigma} \|f(\xi)\|.
\end{equation*}
Since all norms on \(\mathbb{C}^{n\times n}\) are equivalent, the results presented in this paper are independent of the particular choice of matrix norm.

\vspace{.6cm}

\textbf{Acknowledgements.}
This research was initiated while FW visited the School of Mathematics at the University of Bristol in June 2024. FW would like to thank the School for the enjoyable academic environment and warm hospitality during her visit. FW is supported by the Royal Society, grant URF$\backslash$R$\backslash$231028.

\section{Preliminary results} 

In this section, we prove some preliminary results that will be used in the subsequent analysis. By applying the Weyl integration formula to the average over $\mathbb{U}(N)$ with respect to the Haar measure, we have the following multi-dimensional integral representation of $F_{N}(s,h) $. 

\begin{prop} \cite[Proposition 3.4]{W}
We have that, for any $\textnormal{Re}(h)>-\frac{1}{2}$ and $\textnormal{Re}(s-h)>-\frac{1}{2}$,
\begin{equation}\label{i2}
	F_N(s,h)=\frac{2^{N^2+2sN-2h}}{(2\pi)^NN!}\int_{\mathbb{R}^N}\Bigg(\prod_{j=1}^N\frac{1}{(1+x_j^2)^{s+N}}\Bigg)\Bigg|\sum_{j=1}^Nx_j\Bigg|^{2h}\prod_{1\leq j<k\leq N}|x_k-x_j|^2\prod_{\ell=1}^N\d x_{\ell}.
\end{equation}
\end{prop}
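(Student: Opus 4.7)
The plan is to start from the definition of $F_N(s,h)$ as a Haar integral, reduce it via Weyl integration to a trigonometric integral over the eigenangles, and then push it to the real line through the stereographic-type substitution $x_j=\cot(\theta_j/2)$. All the various numerical factors will then collect into the stated constant.

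First I would compute $|V_A(0)|$ and $|V_A'(0)|$ explicitly. Writing $1-\mathrm{e}^{\mathrm{i}\theta_j}=-2\mathrm{i}\sin(\theta_j/2)\mathrm{e}^{\mathrm{i}\theta_j/2}$ yields $|V_A(0)|=|Z_A(0)|=\prod_{j=1}^N 2|\sin(\theta_j/2)|$. For the derivative, the logarithmic derivative of $V_A$ at $\theta=0$ is
\begin{equation*}
\frac{V_A'(0)}{V_A(0)}=\frac{\mathrm{i}N}{2}+\sum_{j=1}^N\frac{\mathrm{i}}{\mathrm{e}^{-\mathrm{i}\theta_j}-1}=-\frac{1}{2}\sum_{j=1}^N\cot(\theta_j/2),
\end{equation*}
after using $\mathrm{e}^{-\mathrm{i}\theta_j}-1=-2\mathrm{i}\sin(\theta_j/2)\mathrm{e}^{-\mathrm{i}\theta_j/2}$ and letting the two $\mathrm{i}N/2$ terms cancel. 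Hence $|V_A'(0)|=\tfrac{1}{2}|V_A(0)|\bigl|\sum_j\cot(\theta_j/2)\bigr|$, which is exactly the factor that will become $|\sum_j x_j|$ after the substitution.

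Next I would apply the Weyl integration formula, writing $\int_{\mathbb{U}(N)}f\,\mathrm{d}\mu_N=\frac{1}{N!(2\pi)^N}\int_{[0,2\pi)^N}f\prod_{j<k}4\sin^2((\theta_j-\theta_k)/2)\,\mathrm{d}\theta$. Combining with the above, the integrand carries $2^{2sN}\prod_j|\sin(\theta_j/2)|^{2s}$ from $|V_A(0)|^{2(s-h)}|V_A'(0)|^{2h}/|V_A(0)|^{-2h}$, the factor $2^{-2h}\bigl|\sum_j\cot(\theta_j/2)\bigr|^{2h}$, and the Vandermonde $4^{N(N-1)/2}\prod_{j<k}\sin^2((\theta_j-\theta_k)/2)$. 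The integrability conditions $\mathrm{Re}(h)>-\tfrac{1}{2}$ and $\mathrm{Re}(s-h)>-\tfrac{1}{2}$ control the (integrable) singularities at $\theta_j=0$.

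Finally I would perform the substitution $x_j=\cot(\theta_j/2)$, under which $\theta_j\in(0,2\pi)\mapsto x_j\in\mathbb{R}$ monotonically, with $\mathrm{d}\theta_j=\tfrac{2\,\mathrm{d}x_j}{1+x_j^2}$ and $\sin^2(\theta_j/2)=(1+x_j^2)^{-1}$. A short trigonometric identity gives
\begin{equation*}
\frac{\sin((\theta_j-\theta_k)/2)}{\sin(\theta_j/2)\sin(\theta_k/2)}=\cot(\theta_k/2)-\cot(\theta_j/2)=x_k-x_j,
\end{equation*}
so $\sin^2((\theta_j-\theta_k)/2)=(x_k-x_j)^2/[(1+x_j^2)(1+x_k^2)]$, and each factor $(1+x_j^2)^{-1}$ from the Vandermonde appears $N-1$ times. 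Summing the exponents of $2$ gives $2sN-2h+N(N-1)+N=N^2+2sN-2h$, and the total exponent of $(1+x_j^2)^{-1}$ in the product over $j$ is $s+(N-1)+1=s+N$. The absolute value $|\sum_j\cot(\theta_j/2)|^{2h}$ becomes $|\sum_j x_j|^{2h}$, which is the desired expression \eqref{i2}.

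The only genuinely delicate point is the bookkeeping of the exponents (of $2$, and of $(1+x_j^2)$), which I expect to be the sole place where a slip could occur; the analytic side is routine because the substitution $x_j=\cot(\theta_j/2)$ is a smooth bijection onto $\mathbb{R}$, and all integrands are absolutely integrable under the stated restrictions on $\mathrm{Re}(h)$ and $\mathrm{Re}(s-h)$, so Fubini justifies the interchange of integrations throughout.
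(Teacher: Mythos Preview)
Your proposal is correct and follows exactly the standard route (Weyl integration over the eigenangles, computation of $V_A'(0)/V_A(0)=-\tfrac12\sum_j\cot(\theta_j/2)$, and the Cayley-type substitution $x_j=\cot(\theta_j/2)$) that underlies \cite[Proposition~3.4]{W}; the paper itself does not reproduce the proof but simply cites that reference. Your bookkeeping of the powers of $2$ and of $(1+x_j^2)$ is accurate.
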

Next, to deal with $|\sum_{j=1}^Nx_j|^{2h}$ in \eqref{i2}, we use the the Fourier integral $K_{h}^{\epsilon}(t)$ defined as in (\ref{i5}). By a residue computation, see also \cite[$3.382$]{GR}, for $\textnormal{Re}(\nu)>0$ and $\epsilon>0$, using principal branches,
\begin{equation*}
	\int_{-\infty}^{\infty}\frac{\e^{\im x\xi}}{(\epsilon+\im\xi)^{\nu}}\d\xi=\frac{2\pi}{\Gamma(\nu)}x^{\nu-1}\e^{-\epsilon x}\begin{cases}1,&x>0\\ 0,&x<0\end{cases};\hspace{0.5cm}\int_{-\infty}^{\infty}\frac{\e^{\im x\xi}}{(\epsilon-\im\xi)^{\nu}}\d\xi=\frac{2\pi}{\Gamma(\nu)}(-x)^{\nu-1}\e^{\epsilon x}\begin{cases}0,&x>0\\ 1,&x<0\end{cases},
\end{equation*}
which leads to the following inversion identity for \eqref{i5}.
\begin{lem} Suppose $\textnormal{Re}(h)>-1$ and $\epsilon>0$. Then for all $x\in\mathbb{R}\setminus\{0\}$,
\begin{equation}\label{i6}
	\int_{-\infty}^{\infty}K_h^{\epsilon}(\xi)\e^{\im\xi x}\d\xi=|x|^h\e^{-\epsilon|x|}.
\end{equation}
\end{lem}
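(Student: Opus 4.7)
The plan is to reduce the identity directly to the two contour–integral formulas displayed immediately above the lemma. Substituting the explicit formula \eqref{i5} for $K_h^\epsilon(\xi)$ into the left–hand side of \eqref{i6} and splitting by linearity, I would write
\begin{equation*}
\int_{-\infty}^{\infty}K_h^{\epsilon}(\xi)\e^{\im\xi x}\d\xi
=\frac{\Gamma(1+h)}{2\pi}
\left[\int_{-\infty}^{\infty}\frac{\e^{\im x\xi}}{(\epsilon+\im\xi)^{h+1}}\d\xi
+\int_{-\infty}^{\infty}\frac{\e^{\im x\xi}}{(\epsilon-\im\xi)^{h+1}}\d\xi\right],
\end{equation*}
and apply the displayed formulas with $\nu=h+1$ (which is allowed because $\mathrm{Re}(h)>-1$ gives $\mathrm{Re}(\nu)>0$). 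For $x>0$ only the first integral survives, producing $\frac{2\pi}{\Gamma(h+1)}x^{h}\e^{-\epsilon x}$; for $x<0$ only the second integral survives, producing $\frac{2\pi}{\Gamma(h+1)}(-x)^{h}\e^{\epsilon x}$. In either case the prefactor $\Gamma(1+h)/(2\pi)$ cancels the reciprocal gamma and the exponent combines to $|x|^h\e^{-\epsilon|x|}$, which is the right–hand side of \eqref{i6}.

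The one genuine subtlety, and the only thing worth dwelling on, is integrability. The Fourier representation \eqref{i5} shows that $K_h^\epsilon(\xi)=\mathcal{O}(|\xi|^{-\mathrm{Re}(h)-1})$ as $|\xi|\to\infty$, so $K_h^\epsilon\in L^1(\mathbb{R})$ only when $\mathrm{Re}(h)>0$. For $\mathrm{Re}(h)\in(-1,0]$, and likewise for the two contour integrals above, the integrals must be interpreted as improper Riemann integrals (the oscillation of $\e^{\im x\xi}$ together with a Jordan–type argument yields conditional convergence), and the identity should then be read as analytic continuation in $h$: both sides are analytic in $h$ on the strip $\mathrm{Re}(h)>-1$ for each fixed $\epsilon>0$ and $x\in\mathbb{R}\setminus\{0\}$, and they agree on the half–plane $\mathrm{Re}(h)>0$ where every step above is justified in $L^1$, hence on the full strip. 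The main obstacle is therefore not the algebra but the justification of the two displayed contour integrals in the conditionally convergent regime; this follows by shifting the contour of integration above (respectively below) the real axis and wrapping it around the branch cut emanating from $\xi=\pm\im\epsilon$, reducing the computation to the standard Hankel representation $\frac{1}{\Gamma(\nu)}=\frac{1}{2\pi\im}\oint \e^{t}t^{-\nu}\d t$.
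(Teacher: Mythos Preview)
Your proof is correct and follows exactly the same approach as the paper, which simply says ``By simply adding the aforementioned integral identities'' referring to the two displayed formulas for $\int_{-\infty}^{\infty}\frac{\e^{\im x\xi}}{(\epsilon\pm\im\xi)^{\nu}}\d\xi$. Your additional discussion of the conditional convergence and analytic continuation in $h$ for $\mathrm{Re}(h)\in(-1,0]$ is a welcome elaboration that the paper omits.
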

\begin{proof} By simply adding the aforementioned integral identities.
\end{proof}

Moving forward, we use \eqref{i6} in the manipulation of \eqref{i2}:
\begin{prop}\label{1104inprop} We have that, for any $\textnormal{Re}(h)>0$ and $\textnormal{Re}(s-h)>-\frac{1}{2}$,
\begin{equation}\label{i7}
F_N(s,h)=\lim_{\epsilon\downarrow 0}\frac{2^{N^2+2sN-2h}}{(2\pi)^NN!}\int_{-\infty}^{\infty}K_{2h}^{\epsilon}(\xi)\Bigg[\int_{\mathbb{R}^N}\Bigg(\prod_{j=1}^N\frac{\e^{\im\xi x_j}}{(1+x_j^2)^{s+N}}\Bigg)\prod_{1\leq j<k\leq N}|x_k-x_j|^2\prod_{l=1}^N\d x_{l}\Bigg]\d\xi.
\end{equation}
\end{prop}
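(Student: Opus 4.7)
The plan is to start from the Weyl-Selberg representation \eqref{i2}, insert the exponential regulariser $\mathrm{e}^{-\epsilon|\sum_jx_j|}$ to be removed by dominated convergence at the end, then use the Fourier inversion identity \eqref{i6} to replace $|\sum_jx_j|^{2h}\mathrm{e}^{-\epsilon|\sum_jx_j|}$ by a $\xi$-integral against $K_{2h}^{\epsilon}(\xi)$, and finally apply Fubini to swap the order of integration. Since $\e^{\im\xi\sum_j x_j}=\prod_j\e^{\im\xi x_j}$, this produces exactly the right-hand side of \eqref{i7}.

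In more detail, the first step is the estimate
\begin{equation*}
	\big||x|^{2h}\mathrm{e}^{-\epsilon|x|}\big|\leq |x|^{2\textnormal{Re}(h)},\qquad x\in\mathbb{R}\setminus\{0\},\,\epsilon\geq 0,
\end{equation*}
which, combined with the fact that the integral in \eqref{i2} is absolutely convergent for $\textnormal{Re}(h)>0,\,\textnormal{Re}(s-h)>-\tfrac{1}{2}$ (so that $F_N(s,h)$ is finite), allows us to invoke dominated convergence and write $F_N(s,h)$ as the $\epsilon\downarrow 0$ limit of the same $N$-fold integral with the factor $\mathrm{e}^{-\epsilon|\sum_jx_j|}$ inserted. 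The set $\{\sum_jx_j=0\}\subset\mathbb{R}^N$ has Lebesgue measure zero, so the pointwise identity \eqref{i6}, applied with $x=\sum_{j=1}^Nx_j$, can be used inside the $N$-fold integral for almost every $(x_1,\ldots,x_N)$.

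The second step is to justify the exchange of the $\xi$-integration with the $x$-integrations by Fubini's theorem. From \eqref{i5}, uniformly in $\epsilon>0$ kept away from $\infty$,
\begin{equation*}
	|K_{2h}^{\epsilon}(\xi)|\leq c(\epsilon)\,\min\!\big(1,|\xi|^{-1-2\textnormal{Re}(h)}\big),\qquad\xi\in\mathbb{R},
\end{equation*}
so $K_{2h}^{\epsilon}\in L^1(\mathbb{R})$ for every $\textnormal{Re}(h)>0$. Together with the absolute integrability of the Cauchy-type ensemble density $\prod_j(1+x_j^2)^{-s-N}\prod_{j<k}|x_k-x_j|^2$ on $\mathbb{R}^N$ (valid precisely because $\textnormal{Re}(s+N)>N-\tfrac{1}{2}$ is implied by $\textnormal{Re}(s-h)>-\tfrac{1}{2}$ with $\textnormal{Re}(h)>0$), this gives the required joint integrability
\begin{equation*}
	\int_{-\infty}^{\infty}|K_{2h}^{\epsilon}(\xi)|\,\d\xi\int_{\mathbb{R}^N}\Bigg(\prod_{j=1}^N\frac{1}{(1+x_j^2)^{s+N}}\Bigg)\prod_{1\leq j<k\leq N}|x_k-x_j|^2\prod_{\ell=1}^N\d x_\ell<\infty,
\end{equation*}
since $|\mathrm{e}^{\im\xi\sum_j x_j}|=1$. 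Applying Fubini and using $\prod_j\mathrm{e}^{\im\xi x_j}=\mathrm{e}^{\im\xi\sum_j x_j}$ delivers the right-hand side of \eqref{i7}.

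\textbf{Main obstacle.} The only delicate point is making sure the limit $\epsilon\downarrow 0$ is not taken prematurely: the Fubini step is performed at fixed $\epsilon>0$, where $K_{2h}^{\epsilon}\in L^1(\mathbb{R})$; only after obtaining the identity for fixed $\epsilon$ does one take $\epsilon\downarrow 0$. Conversely, interchanging the limit and the $\xi$-integral is not expected, which is why the limit has to remain outside in the final formula \eqref{i7}. Apart from this, the argument is a routine application of dominated convergence and Fubini, facilitated by the mild growth hypotheses on $s$ and $h$.
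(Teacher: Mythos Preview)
Your proposal is correct and follows essentially the same approach as the paper: dominated convergence to insert/remove the regulariser $\mathrm{e}^{-\epsilon|\sum_j x_j|}$, the Fourier inversion identity \eqref{i6}, and Fubini at fixed $\epsilon>0$ to swap the $\xi$- and $x$-integrations. The paper organises the steps in a slightly different order (Fubini first, then dominated convergence) and uses the explicit bounds $|K_{2h}^{\epsilon}(\xi)|\leq\frac{|\Gamma(1+2h)|}{\pi(\epsilon^2+\xi^2)^{\textnormal{Re}(h)+1/2}}$ and $\prod_{j<k}|x_k-x_j|\leq N!\prod_j(1+x_j^2)^{(N-1)/2}$, but the substance is the same.
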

\begin{proof} By \eqref{i6}, for any $\epsilon>0$,
\begin{align}
	\int_{\mathbb{R}^N}&\Bigg(\prod_{j=1}^N\frac{1}{(1+x_j^2)^{s+N}}\Bigg)\Bigg|\sum_{j=1}^Nx_j\Bigg|^{2h}\exp\bigg(-\epsilon\bigg|\sum_{j=1}^Nx_j\bigg|\bigg)\prod_{1\leq j<k\leq N}|x_k-x_j|^2\prod_{l=1}^N\d x_{l}\nonumber\\
	=&\,\,\int_{\mathbb{R}^N}\int_{-\infty}^{\infty}K_{2h}^{\epsilon}(\xi)\Bigg(\prod_{j=1}^N\frac{\e^{\im\xi x_j}}{(1+x_j^2)^{s+N}}\Bigg)\prod_{1\leq j<k\leq N}|x_k-x_j|^2\,\d\xi\prod_{l=1}^N\d x_{l},\label{i8}
\end{align}
where, see also \cite[$(3.6),(3.12),(3.22)$]{W}, for any $\textnormal{Re}(h)>-1,\xi\in\mathbb{R},\epsilon>0$ and $x_j\in\mathbb{R}$,
\begin{equation*}
	\big|K_{h}^{\epsilon}(\xi)\big|\leq\frac{|\Gamma(1+h)|}{\pi(\epsilon^2+\xi^2)^{(\textnormal{Re}(h)+1)/2}},\ \ \ \prod_{1\leq j<k\leq N}|x_k-x_j|\leq N!\prod_{j=1}^N\big(1+x_j^2\big)^{(N-1)/2},
\end{equation*}
followed by, for any $x_j\in\mathbb{R}$ and $\textnormal{Re}(h)\geq 0$,
\begin{equation*}
	\Bigg|\sum_{j=1}^Nx_j\Bigg|^{\textnormal{Re}(h)}\leq N^{\textnormal{Re}(h)}\prod_{j=1}^N\big(1+x_j^2\big)^{\textnormal{Re}(h)/2}.
\end{equation*}
Consequently, Fubini's theorem allows us to swap the order of integration in the right hand side of \eqref{i8}, provided $\textnormal{Re}(h)>0$ and $\textnormal{Re}(s)>-\frac{1}{2}$,
\begin{align}\label{i9}
	\textnormal{LHS in}\ \eqref{i8}=\int_{-\infty}^{\infty}K_{2h}^{\epsilon}(\xi)\Bigg[\int_{\mathbb{R}^N}\Bigg(\prod_{j=1}^N\frac{\e^{\im\xi x_j}}{(1+x_j^2)^{s+N}}\Bigg)\prod_{1\leq j<k\leq N}|x_k-x_j|^2\prod_{l=1}^N\d x_{l}\Bigg]\d\xi.
\end{align}
On the other hand, for $\textnormal{Re}(h)\geq 0$ and any $x_j\in\mathbb{R}$,
\begin{align*}
	\prod_{j=1}^N\frac{1}{(1+x_j^2)^{\textnormal{Re}(s)+N}}\Bigg|\sum_{j=1}^Nx_j\Bigg|^{2\textnormal{Re}(h)}\prod_{1\leq j<k\leq N}|x_k-x_j|^2\leq(N!)^2N^{2\textnormal{Re}(h)}\prod_{j=1}^N\frac{1}{(1+x_j^2)^{\textnormal{Re}(s-h)+1}},
\end{align*}
and so the dominated convergence theorem says, provided $\textnormal{Re}(h)>0$ and $\textnormal{Re}(s-h)>-\frac{1}{2}$,
\begin{align}
	\int_{\mathbb{R}^N}&\Bigg(\prod_{j=1}^N\frac{1}{(1+x_j^2)^{s+N}}\Bigg)\Bigg|\sum_{j=1}^Nx_j\Bigg|^{2h}\prod_{1\leq j<k\leq N}|x_k-x_j|^2\prod_{l=1}^N\d x_{l}\nonumber\\
	=&\,\lim_{\epsilon\downarrow 0}\int_{\mathbb{R}^N}\Bigg(\prod_{j=1}^N\frac{1}{(1+x_j^2)^{s+N}}\Bigg)\Bigg|\sum_{j=1}^Nx_j\Bigg|^{2h}\exp\bigg(-\epsilon\bigg|\sum_{j=1}^Nx_j\bigg|\bigg)\prod_{1\leq j<k\leq N}|x_k-x_j|^2\prod_{l=1}^N\d x_{l}.\label{i10}
\end{align}
Combining \eqref{i10},\eqref{i8} and \eqref{i9} back in \eqref{i2} results in \eqref{i7}, for the indicated values of $(s,h)$.
\end{proof}

We note that our \eqref{i7} is a complex-valued generalization of \cite[$(3.9)$]{W}, based on \eqref{i5} rather than \cite[$(3.8)$]{W}. Moving ahead, the innermost integrals over $(x_1,\ldots,x_N)$ in \eqref{i7} constitute a Hankel determinant.
Since we consider, for now, general complex parameters \(s\), we no longer adopt the approach in \cite[Proposition~4.2]{W} to obtain a Hankel determinant of size \(s \times s\) as in \cite[Proposition~4.3]{W}, which requires \(s\) to be a positive integer. Instead, motivated by the work of Assiotis, Gunnes, Keating, and Wei~\cite[Proposition 4.2]{AGKW}, which connects the joint moments of higher derivatives of CUE characteristic polynomials to Hankel determinants shifted by integer partitions for general moment exponents \(s\), we employ the Fourier transform of a Cauchy-type weight to obtain a Hankel determinant as in (\ref{NtimesNHankel}) of size \(N \times N\), with entries involving the confluent hypergeometric function of the second kind depending on the parameters \(s\). 
For notational convenience, we introduce the following 
abbreviation. 
\begin{equation}\label{defofJ}
J_N(\xi,s):=\int_{\mathbb{R}^N}\Bigg(\prod_{j=1}^N\frac{\e^{\im\xi x_j}}{(1+x_j^2)^{s+N}}\Bigg)\prod_{1\leq j<k\leq N}|x_k-x_j|^2\prod_{\ell=1}^N\d x_{\ell},\ \ \ \ \textnormal{Re}(s)>-\frac{1}{2},\ \ \xi\in\mathbb{R},
\end{equation}
and realize the following special value of $J_N$ as a Selberg integral, for $\textnormal{Re}(s)>-\frac{1}{2}$,
\begin{equation}\label{Selb}
	J_N(0,s)=\frac{(2\pi)^NN!}{2^{N^2+2sN}}\prod_{j=1}^N\frac{\Gamma(j)\Gamma(2s+j)}{\Gamma^2(s+j)}=\frac{(2\pi)^NN!}{2^{N^2+2sN}}\frac{G(1+N)G(1+N+2s)G^2(1+s)}{G(1+2s)G^2(1+N+s)},
\end{equation}
with $G(z)$ the Barnes $G$-function.
The sought-after Hankel determinant formula reads as follows:
\begin{lem} For any $\textnormal{Re}(s)>-\frac{1}{2}$ and $\xi\in\mathbb{R}\setminus\{0\}$,
\begin{equation}\label{i11}
	J_N(\xi,s)=(-1)^{N(N-1)/2}N!\,\e^{N|\xi|}\det\big[\phi^{(j+k)}(\xi,s)\big]_{j,k=0}^{N-1},\ \ \ \ \ \ \phi(\xi,s):=\int_{-\infty}^{\infty}\frac{\e^{-\im\xi(x-\im\delta)}}{(1+x^2)^{s+N}}\d x,
\end{equation}
using the shorthand $\phi^{(k)}=\frac{\partial^k}{\partial\xi^k}\phi$ with $k\in\mathbb{Z}_{\geq 0}$ and $\delta=\textnormal{sgn}(\xi)\in\{\pm 1\}$.
\end{lem}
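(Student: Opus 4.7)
The plan is to apply the Heine--Andr\'eief identity for squared Vandermonde integrals, reduce the resulting Fourier-type moments to derivatives of a single Fourier integral $\psi(\xi,s)$, and finally transfer from $\psi$ to $\phi$ by means of a Pascal-type matrix congruence.

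First, Heine's identity applied to \eqref{defofJ} with weight $w(x) = \e^{\im\xi x}/(1+x^2)^{s+N}$ yields
\begin{equation*}
J_N(\xi,s) = N!\det\big[\nu_{j+k}\big]_{j,k=0}^{N-1},\qquad \nu_m := \int_{-\infty}^{\infty}x^m\,\frac{\e^{\im\xi x}}{(1+x^2)^{s+N}}\,\d x,
\end{equation*}
and the assumption $\textnormal{Re}(s)>-\tfrac{1}{2}$ makes every $\nu_m$ with $m\le 2N-2$ absolutely convergent (the integrand decays as $|x|^{-2\textnormal{Re}(s)-2}$), so differentiation under the integral is legitimate. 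This gives $\nu_m = (-\im)^m\psi^{(m)}(\xi,s)$ where $\psi(\xi,s) := \int \e^{\im\xi x}/(1+x^2)^{s+N}\,\d x$. Extracting $(-\im)^j$ from row $j$ and $(-\im)^k$ from column $k$ contributes the overall factor $(-\im)^{N(N-1)} = (-1)^{N(N-1)/2}$, since $N(N-1)$ is even, whence
\begin{equation*}
J_N(\xi,s) = (-1)^{N(N-1)/2}N!\det\big[\psi^{(j+k)}(\xi,s)\big]_{j,k=0}^{N-1}.
\end{equation*}

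It remains to transfer from $\psi$ to $\phi$. Unwinding the definition of $\phi$ using $\xi\delta = |\xi|$ together with evenness of the weight in $x$ yields $\phi(\xi,s) = \e^{-|\xi|}\psi(\xi,s)$, so $\psi = \e^{|\xi|}\phi$. For $\xi>0$ the Leibniz rule gives $\psi^{(m)} = \e^{\xi}\sum_{\ell=0}^{m}\binom{m}{\ell}\phi^{(\ell)}$, and the Chu--Vandermonde identity $\binom{j+k}{\ell} = \sum_{a+b=\ell}\binom{j}{a}\binom{k}{b}$ then separates the row and column indices, exhibiting the Hankel matrix as the Pascal-type congruence
\begin{equation*}
\big[\psi^{(j+k)}(\xi,s)\big]_{j,k=0}^{N-1} = \e^{\xi}\,P\,\big[\phi^{(a+b)}(\xi,s)\big]_{a,b=0}^{N-1}\,P^{\!T},\qquad P := \Big[\binom{j}{a}\Big]_{j,a=0}^{N-1}.
\end{equation*}
Since $P$ is unit lower triangular, $\det P = 1$, hence $\det[\psi^{(j+k)}] = \e^{N\xi}\det[\phi^{(j+k)}]$. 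The case $\xi<0$ is handled identically with $P$ replaced by the signed analogue $\tilde P_{ja} = (-1)^{j-a}\binom{j}{a}$, again unit lower triangular, and produces the factor $\e^{-N\xi}$; both cases merge into the single factor $\e^{N|\xi|}$, which together with the sign from the previous paragraph establishes \eqref{i11}. The only non-routine ingredient is this Pascal-type factorisation; everything else amounts to bookkeeping.
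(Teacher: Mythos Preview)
Your argument is correct. Both your proof and the paper's rest on the Andr\'eief/Heine identity, but they differ in how the shift $\im\delta$ is handled. You first run Andr\'eief with the unshifted Fourier integral $\psi(\xi,s)=\int\e^{\im\xi x}(1+x^2)^{-s-N}\,\d x$, obtain $J_N=(-1)^{N(N-1)/2}N!\det[\psi^{(j+k)}]$, and then transfer from $\psi$ to $\phi=\e^{-|\xi|}\psi$ by Leibniz and the Pascal-matrix congruence, which costs you the Chu--Vandermonde step. The paper instead differentiates $\phi$ directly, producing moments in the shifted variable $x-\im\delta$, and then applies Andr\'eief with those shifted moments; the point is that the Vandermonde $\det[(x_j-\im\delta)^{k-1}]$ is translation-invariant, so the shift drops out of the squared Vandermonde and only contributes the scalar $\e^{-N|\xi|}$ through the weight. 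This bypasses your Pascal factorisation entirely. Your route is a bit longer but perfectly valid, and the Pascal trick is a nice observation in its own right; the paper's route is shorter because it exploits translation-invariance of the Vandermonde up front.
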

\begin{proof} Observe that
\begin{equation*}
	\bigg|\frac{\partial^k}{\partial\xi^k}\frac{\e^{-\im\xi(x-\im\delta)}}{(1+x^2)^{s+N}}\bigg|\leq |x|^{k-2\textnormal{Re}(s)-2N}\ \ \ \forall\,|x|\geq 1,\ \ \ k\in\mathbb{Z}_{\geq 0},\ \ \ \xi\in\mathbb{R}\setminus\{0\},
\end{equation*}
and so, by the dominated convergence theorem, with $\textnormal{Re}(s)>-\frac{1}{2}$ and $\xi\in\mathbb{R}\setminus\{0\}$ arbitrary,
\begin{equation*}
	\frac{\partial^k}{\partial\xi^k}\phi(\xi,s)=(-\im)^k\int_{-\infty}^{\infty}(x-\im\delta)^k\frac{\e^{-\im\xi(x-\im\delta)}}{(1+x^2)^{s+N}}\d x\ \ \ \ k\in\{0,1,2,\ldots,2N-2\}.
\end{equation*}
Consequently, multilinearity of determinants asserts
\begin{align*}
	\det\big[\phi^{(j+k)}(\xi,s)\big]_{j,k=0}^{N-1}=(-1)^{N(N-1)/2}\int_{\mathbb{R}^N}\prod_{j=1}^N(x_j-\im\delta)^{j-1}\det\big[(x_j-\im\delta)^{k-1}\big]_{j,k=1}^N\prod_{\ell=1}^N\frac{\e^{-\im\xi(x_{\ell}-\im\delta)}\d x_{\ell}}{(1+x_{\ell}^2)^{s+N}},
\end{align*}
and thus, after symmetrization,
\begin{align*}
	N!\det\big[\phi^{(j+k)}(\xi,s)\big]_{j,k=0}^{N-1}=&\,\,(-1)^{N(N-1)/2}\int_{\mathbb{R}^N}\Big(\det\big[(x_j-\im\delta)^{k-1}\big]_{j,k=1}^N\Big)^2\prod_{l=1}^N\frac{\e^{-\im\xi(x_{l}-\im\delta)}\d x_{l}}{(1+x_{l}^2)^{s+N}}\\
	=&\,\,(-1)^{N(N-1)/2}\e^{-N|\xi|}\int_{\mathbb{R}^N}\Bigg(\prod_{j=1}^N\frac{\e^{\im\xi x_j}}{(1+x_j^2)^{s+N}}\Bigg)\prod_{1\leq j<k\leq N}|x_k-x_j|^2\prod_{\ell=1}^N\d x_{\ell},
\end{align*}
as claimed in \eqref{i11}. The proof is complete.
\end{proof}
In order to simplify the Hankel determinant in \eqref{i11}, we utilize the following special case of \cite[$3.384$]{GR},
\begin{equation*}
	\frac{1}{\pi}\int_{-\infty}^{\infty}\frac{\e^{-2\im\xi x}}{(1+x^2)^{s+N}}\d x=\frac{|\xi|^{s+N-1}}{\Gamma(s+N)}W_{0,\frac{1}{2}-s-N}(4|\xi|),\ \ \xi\in\mathbb{R}\setminus\{0\},\ \ \textnormal{Re}(s)>-\frac{1}{2},\ \ N\in\mathbb{Z}_{\geq 1},
\end{equation*}
that invokes the Whittaker function $W_{\kappa,\mu}(z)$, cf. \cite[$13.14.3$]{NIST}, and thus naturally $U(a,b,z)$,
\begin{equation}\label{i12}
	\phi(\xi,s)=\frac{\pi}{\Gamma(s+N)}2^{2(1-s-N)}\e^{-2|\xi|}U(1-s-N,2-2s-2N,2|\xi|),\ \ \xi\in\mathbb{R}\setminus\{0\}.
\end{equation}
In turn, by \cite[$13.3.27$]{NIST},
\begin{equation*}
	\phi^{(k)}(\xi,s)=\frac{\pi}{\Gamma(s+N)}2^{2(1-s-N)}(-2\delta)^k\e^{-2|\xi|}U(1-s-N,2-2s-2N+k,2|\xi|),\ \ \ k\in\mathbb{Z}_{\geq 0},
\end{equation*}
and we have thus arrived at the following intermediate result.

\begin{prop}\label{expressionofJ} For any $\textnormal{Re}(s)>-\frac{1}{2}$ and $\xi\in\mathbb{R}\setminus\{0\}$,
\begin{equation}\label{i13}
	J_N(\xi,s)=(-1)^{N(N-1)/2}\bigg(\frac{2\pi\e^{-|\xi|}}{\Gamma(s+N)}\bigg)^N2^{-N^2-2sN}N!\,\mathcal{J}_N(|\xi|;s),\end{equation}
with the Hankel determinant $\mathcal{J}_N(t;s)$ defined as in \eqref{NtimesNHankel}. 
\end{prop}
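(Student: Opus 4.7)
The plan is to package together the two ingredients already assembled in the lead-up to the statement: the Hankel representation \eqref{i11} of $J_N(\xi,s)$, and the explicit formula \eqref{i12} for $\phi(\xi,s)$ in terms of $U(1-s-N,\,2-2s-2N,\,2|\xi|)$ together with its derivatives obtained from \cite[$13.3.27$]{NIST}. The proposition is essentially a bookkeeping consolidation, so the strategy is to substitute the derivative identity into the Hankel determinant entry-wise and peel off scalar factors by multilinearity until what remains is exactly $\mathcal{J}_N(|\xi|;s)$.

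Concretely, I would substitute
\begin{equation*}
\phi^{(j+k)}(\xi,s) = \frac{\pi}{\Gamma(s+N)}\, 2^{2(1-s-N)}\, (-2\delta)^{j+k}\, \e^{-2|\xi|}\, U(1-s-N,\, 2-2s-2N+j+k,\, 2|\xi|)
\end{equation*}
into the $(j,k)$-entry of the determinant in \eqref{i11}. The prefactor $\frac{\pi}{\Gamma(s+N)} 2^{2(1-s-N)} \e^{-2|\xi|}$ is identical in every entry and is pulled out once per row, yielding a global factor raised to the power $N$. The shift factor is then split as $(-2\delta)^{j+k} = (-2\delta)^j(-2\delta)^k$ and extracted row-by-row and column-by-column; the product $\prod_{j=0}^{N-1}(-2\delta)^j \cdot \prod_{k=0}^{N-1}(-2\delta)^k = (-2\delta)^{N(N-1)}$ arises. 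Since $N(N-1)$ is always even, this collapses to $2^{N(N-1)}$, with no residual sign and no residual $\delta$ dependence, and the determinant that remains is exactly $\mathcal{J}_N(|\xi|;s)$ as defined in \eqref{NtimesNHankel}.

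The remaining step is a routine collection of prefactors. Combining the scalar $(-1)^{N(N-1)/2} N!\,\e^{N|\xi|}$ from \eqref{i11} with the extracted $\bigl[\frac{\pi}{\Gamma(s+N)}\bigr]^N \cdot 2^{2N(1-s-N)} \cdot \e^{-2N|\xi|} \cdot 2^{N(N-1)}$, the exponentials merge to $\e^{-N|\xi|}$, the powers of $2$ add to $2^{N - N^2 - 2sN}$, and regrouping $2^N \pi^N / \Gamma(s+N)^N$ as $[2\pi/\Gamma(s+N)]^N$ while leaving the factor $2^{-N^2 - 2sN}$ reproduces \eqref{i13}.

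I do not expect a genuine obstacle, since no new analytic input is required beyond the identities already listed. The one point that deserves care is the sign $\delta = \textnormal{sgn}(\xi)$ appearing inside the Hankel entries: a priori one could worry that the right-hand side would depend on the sign of $\xi$, but the parity observation above (that $N(N-1)$ is even for every $N\in\mathbb{Z}_{\geq 1}$) is precisely what ensures that all $\delta$-dependence cancels. This leaves an expression purely in $|\xi|$, matching the definition \eqref{NtimesNHankel} of $\mathcal{J}_N$, whose argument is a positive real number.
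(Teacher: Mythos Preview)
Your proposal is correct and follows exactly the same approach as the paper, which simply substitutes the derivative identity for $\phi^{(k)}(\xi,s)$ into the Hankel representation \eqref{i11} and extracts scalar factors by multilinearity. The paper in fact does not write out the algebraic consolidation at all, stating only that one has ``thus arrived at the following intermediate result''; your write-up supplies precisely those bookkeeping details, including the key observation that $(-2\delta)^{N(N-1)}=2^{N(N-1)}$ eliminates the $\delta$-dependence.
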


We now give the proof of Proposition~\ref{1102prop1}.
\begin{proof}[Proof of Proposition~\ref{1102prop1}]
The result follows directly from Proposition~\ref{1104inprop}, (\ref{defofJ}), and Proposition~\ref{expressionofJ}, utilising evenness of $\xi\mapsto J_n(\xi,s)$ and $\xi\mapsto K_h^{\epsilon}(\xi)$.
\end{proof}

We now perform a transformation on the Hankel determinant \( \mathcal{J}_{N}(\xi;s) \) in (\ref{i13}) to obtain an alternative form of the integral representation of \( J_{N}(\xi,s) \) as in (\ref{e43}). This serves as the starting point for the Riemann–Hilbert method used in this paper.

\begin{lem}\label{transformation}
Let $N\geq 1$ be an integer. Let $\mu,\nu\in \mathbb{C}$ with $\textnormal{Re}(2\mu)>N-1$. Let $U(a,b,z)$ be the confluent hypergeometric function of the second kind, cf. \cite[$\S 13.2.6$]{NIST}. Then for all $z\in\mathbb{C}$ with $\textnormal{Re}(z)>0$,
\begin{align*}
\det\Big[U(1-2\nu,2-2\nu-2\mu+i+j,z)\Big]_{i,j=0}^{N-1}
=&\left(\prod_{\ell=0}^{N-2}(1-2\nu+\ell)^{N-1-\ell}\right)z^{N(2\mu+2\nu-N)}\left(\prod_{j=0}^{N-1}\frac{1}{\Gamma(2\mu-j)}\right)\\
&\times\,\det\left[\int_{0}^{\infty}\mathrm{e}^{-zx}x^{2\mu-j-1}(1+x)^{2\nu-i-1}dx\right]_{i,j=0}^{N-1}.
\end{align*}
\end{lem}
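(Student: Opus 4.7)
The plan is to introduce the auxiliary matrix $W_{ij} := U(i+1-2\nu,\, 2-2\mu-2\nu+i+j,\, z)$, relate $\det[W_{ij}]$ to the LHS determinant via a lower‑triangular transformation built from a contiguous relation for $U$, and then apply Kummer's transformation together with the standard integral representation of $U$ to arrive at the RHS.

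The key identity to start from is the contiguous relation
\begin{equation*}
U(a,b,z) = U(a,b-1,z) + a\, U(a+1,b,z),\qquad a,b\in\mathbb{C},\ \mathrm{Re}(z)>0,
\end{equation*}
which for $\mathrm{Re}(a)>0$ follows from the integral representation of $U$ via the algebraic identity $(1+t)^{b-a-1}-(1+t)^{b-a-2}=t(1+t)^{b-a-2}$, and extends to all $(a,b)\in\mathbb{C}^2$ by analytic continuation. A short induction on $k\in\mathbb{Z}_{\ge 0}$ then yields the iterate
\begin{equation*}
(a)_k\, U(a+k,b,z)\;=\;\sum_{\ell=0}^{k}\binom{k}{\ell}(-1)^{\ell}\, U(a,b-\ell,z).
\end{equation*}
Apply this with $a=1-2\nu$, $k=i$ and $b=2-2\mu-2\nu+i+j$, and observe that $A_{ij}:=U(1-2\nu,\,2-2\mu-2\nu+i+j,\,z)$ has its first parameter independent of $i$. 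After the reindexing $k=i-\ell$ the iterate becomes
\begin{equation*}
(1-2\nu)_i\, W_{ij}\;=\;\sum_{k=0}^{i}(-1)^{i-k}\binom{i}{k}\, A_{kj}.
\end{equation*}
In matrix form this reads $D\,W=L'\,A$ with $D=\mathrm{diag}((1-2\nu)_i)$ and $L'$ unitriangular, $L'_{ik}=(-1)^{i-k}\binom{i}{k}$ for $k\le i$. Taking determinants gives $\det[A_{ij}]=K\,\det[W_{ij}]$ with $K=\prod_{i=0}^{N-1}(1-2\nu)_i$, and a routine reindexing (the factor $(1-2\nu+\ell)$ appears in exactly those $(1-2\nu)_i$ with $i>\ell$) shows $K=\prod_{\ell=0}^{N-2}(1-2\nu+\ell)^{N-1-\ell}$, matching the claimed prefactor.

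Kummer's transformation $U(a,b,z)=z^{1-b}U(a-b+1,\,2-b,\,z)$ next turns $W_{ij}$ into
\begin{equation*}
W_{ij}\;=\;z^{2\mu+2\nu-1-i-j}\, U(2\mu-j,\,2\mu+2\nu-i-j,\,z).
\end{equation*}
Factoring $z^{-i}$ out of row $i$ and $z^{2\mu+2\nu-1-j}$ out of column $j$ of $\det[W_{ij}]$ extracts the overall scalar $z^{N(2\mu+2\nu-N)}$, as follows from the identity $\sum_{i=0}^{N-1}(-i)+\sum_{j=0}^{N-1}(2\mu+2\nu-1-j)=N(2\mu+2\nu-N)$.

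Finally, the hypothesis $\mathrm{Re}(2\mu)>N-1$ ensures $\mathrm{Re}(2\mu-j)>0$ for every $j\in\{0,\ldots,N-1\}$, so that the integral representation $U(a,b,z)=\frac{1}{\Gamma(a)}\int_0^\infty e^{-zx}x^{a-1}(1+x)^{b-a-1}\mathrm dx$ applies uniformly to every entry $U(2\mu-j,\,2\mu+2\nu-i-j,\,z)$; note that $b-a-1=2\nu-i-1$ here. Factoring $1/\Gamma(2\mu-j)$ out of column $j$ of the resulting determinant then yields exactly the claimed identity. I do not anticipate a substantive obstacle: the only non‑routine step is the iterated contiguous relation, and the rest reduces to Kummer, one standard integral formula, and bookkeeping of row/column scalar factors.
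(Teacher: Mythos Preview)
Your proof is correct and follows essentially the same route as the paper's: both use the contiguous relation $U(a,b,z)=U(a,b-1,z)+aU(a+1,b,z)$ to shift the first parameter of $U$ row by row (the paper does this via column/row operations and multilinearity, you via the explicit iterate and the matrix identity $DW=L'A$), then apply Kummer's transformation and the integral representation of $U$. The only cosmetic difference is that the paper first reaches $\det[U(a+j,b+i+j,z)]$ and then transposes, whereas you work directly with $W_{ij}=U(a+i,b+i+j,z)$.
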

\begin{proof}
Note that by \cite[$\S 13.3.9$]{NIST},
\[
U(a,b,z) = a U(a+1,b,z) + U(a,b-1,z).
\]
So, exploiting multilinearity of determinants and the above recursion, we obtain
\beas
\det\Big[U(a,b+i+j,z)\Big]_{i,j=0}^{N-1}&=&\left(\prod_{\ell=0}^{N-2}(a+\ell)^{N-1-\ell}\right)\det\Big[U(a+j,b+i+j,z)\Big]_{i,j=0}^{N-1}\\
&=&\left(\prod_{j=0}^{N-2}(a+\ell)^{N-1-\ell}\right)\det\Big[U(a+i,b+i+j,z)\Big]_{i,j=0}^{N-1}.
\eeas
Next, with $U(a,b,z) = z^{1-b} U(1+a-b,2-b,z)$, see \cite[$\S 13.2.40$]{NIST}, we obtain furthermore
\begin{align*}
\det\Big[U(a,b+i+j,z)\Big]_{i,j=0}^{N-1}=&\left(\prod_{\ell=0}^{N-2}(a+\ell)^{N-1-\ell}\right)z^{N(1-b)-N(N-1)}\nonumber\\
&\times \, \det\Big[U(1+a-b-j,2-b-i-j,z)\Big]_{i,j=0}^{N-1}.
\end{align*}
Note that for $c$ with $\textnormal{Re}(c)>0$ and any $\textnormal{Re}(z)>0$,
\[
U(c,d,z) = \frac{1}{\Gamma(c)} \int_0^\infty \mathrm{e}^{-zx} x^{c-1} (1+x)^{d-c-1} dx,
\]
and so the Lemma follows.
\end{proof}
Applying \eqref{i13}, Lemma \ref{transformation} and the Andr\'{e}ief identity, we have the following representation of $J_{N}(\xi,s)$.

\begin{prop}\label{integralrepresentation}
Let $N\geq 1$ be an integer. Let $s\in \mathbb{C}$ with $\textnormal{Re}(s)>-1/2$. Let $\xi> 0$. Then
\begin{equation}\label{e43}
J_N(\xi,s)=\frac{(2\pi)^N\e^{-N\xi}}{2^{N^2+2sN}}\prod_{j=1}^N\frac{1}{\Gamma^2(s+j)}\int_{\mathbb{R}_+^N}\Bigg(\prod_{j=1}^N(y_j+2\xi)^sy_j^s\,\e^{-y_j}\Bigg)\prod_{1\leq j<k\leq N}|y_k-y_j|^2\prod_{\ell=1}^N\d y_{\ell}.
\end{equation}
\end{prop}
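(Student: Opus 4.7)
The plan is to chain Proposition~\ref{expressionofJ} with Lemma~\ref{transformation} and the Andr\'eief identity. Starting from Proposition~\ref{expressionofJ}, which for $\xi>0$ expresses $J_N(\xi,s)$ as an explicit constant times the Hankel determinant $\mathcal{J}_N(\xi;s)$, the task reduces to transforming that Hankel determinant of $U$-values into the multi-dimensional integral in \eqref{e43}. I would proceed in three steps: (i) apply Lemma~\ref{transformation} to convert it into a determinant of one-dimensional integrals, (ii) invoke Andr\'eief's identity to lift that determinant to an $N$-fold integral featuring a squared Vandermonde, and (iii) rescale variables and reconcile the constants.

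For step (i), the parameters in Lemma~\ref{transformation} are matched by setting $z = 2\xi$ and $2\mu = 2\nu = s+N$, so that $1-2\nu = 1-s-N$ and $2-2\nu-2\mu = 2-2s-2N$, while the hypothesis $\textnormal{Re}(2\mu)>N-1$ becomes $\textnormal{Re}(s)>-1$, consistent with the standing assumption. This yields
\[
\mathcal{J}_N(\xi;s)=P(s,N)\,(2\xi)^{N^2+2sN}\det\!\left[\int_{0}^{\infty}\e^{-2\xi x}x^{s+N-j-1}(1+x)^{s+N-i-1}\d x\right]_{i,j=0}^{N-1},
\]
with $P(s,N):=\prod_{\ell=0}^{N-2}(1-s-N+\ell)^{N-1-\ell}\prod_{j=0}^{N-1}\Gamma(s+N-j)^{-1}$. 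For step (ii), I would factor the common weight $w(x)=\e^{-2\xi x}x^s(1+x)^s$ out of each entry; the remaining polynomial factors $x^{N-1-j}$ and $(1+x)^{N-1-i}$ span the degree-$\le N-1$ polynomials, so Andr\'eief's identity produces $\tfrac{1}{N!}$ times an $N$-fold integral in which these two families generate Vandermonde determinants in $\{x_\ell\}$ and $\{1+x_\ell\}$, respectively. Since $(1+x_k)-(1+x_j)=x_k-x_j$ and the row-reversal sign $(-1)^{N(N-1)/2}$ appears identically in both, their product is exactly $\prod_{j<k}(x_k-x_j)^2$. Step (iii) is the substitution $y_\ell = 2\xi x_\ell$, which turns $w(x_\ell)\d x_\ell$ into $(2\xi)^{-2s-1}y_\ell^s(y_\ell+2\xi)^s\e^{-y_\ell}\d y_\ell$ and the squared Vandermonde into $(2\xi)^{-N(N-1)}\prod_{j<k}(y_k-y_j)^2$, contributing a total $(2\xi)^{-N^2-2sN}$ that exactly cancels the factor $(2\xi)^{N^2+2sN}$ from step (i).

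The main obstacle is the prefactor bookkeeping. Combining Proposition~\ref{expressionofJ}'s front factor $(-1)^{N(N-1)/2}(2\pi\e^{-\xi}/\Gamma(s+N))^N\,2^{-N^2-2sN}\,N!$ with $P(s,N)/N!$, the task is to verify
\[
(-1)^{N(N-1)/2}\,\frac{P(s,N)}{\Gamma(s+N)^N} \;=\; \prod_{k=1}^{N}\frac{1}{\Gamma(s+k)^2}.
\]
Rewriting $1-s-N+\ell = -(s+N-1-\ell)$ extracts a sign $(-1)^{N(N-1)/2}$ that cancels the one from Proposition~\ref{expressionofJ}, while the telescoping identity $\prod_{k=j}^{N-1}(s+k) = \Gamma(s+N)/\Gamma(s+j)$ reshapes $\prod_{k=1}^{N-1}(s+k)^k$ into $\Gamma(s+N)^{N-1}/\prod_{j=1}^{N-1}\Gamma(s+j)$. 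A short Gamma reshuffle then produces the right-hand side and completes the identification. This reconciliation is elementary but error-prone, and is the part of the argument I would triple-check.
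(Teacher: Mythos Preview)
Your proposal is correct and follows precisely the route the paper indicates (combining \eqref{i13}, Lemma~\ref{transformation}, and the Andr\'eief identity), with the added benefit that you spell out the constant bookkeeping that the paper omits. The parameter matching, the sign cancellation via $1-s-N+\ell=-(s+N-1-\ell)$, and the Gamma telescoping are all accurate.
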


Equation~(\ref{e43}) first appeared in \cite[Proposition 3]{W}, where it was derived by a different approach using the homogeneity of \( \Delta(y_{1},\ldots,y_{N})\), the Vandermonde determinant. Here we present an alternative derivation to illustrate how the general form 
\bea\label{withgeneralparameters}
\det\!\Big[U(1-2\nu,\,2-2\nu-2\mu+i+j,\,z)\Big]_{i,j=0}^{N-1}
\eea
can be rewritten as an integral of the type \eqref{e43}, which is of independent interest. Moreover, this determinant has been actively used in recent research on joint moments arising from various random matrix ensembles. In \cite{BW2}, the determinant (\ref{withgeneralparameters}) with $\mu=(N+s)/2$ and $\nu=(N+\bar{s})/2$ ($s\in \mathbb{C}$) is used in the study of joint moments of characteristic polynomials for the Circular Jacobi Ensemble and in the analysis of the associated Painlev\'e equation. In \cite{assiotis2025joint}, the determinant (\ref{withgeneralparameters}) with $\mu=(a+b)/2+N$ and $\nu=-b/2$ (for some $a,b\in \mathbb{R}$) was used in the study of joint moments of characteristic polynomials from the orthogonal and unitary symplectic groups.

Returning to the joint moments for the CUE studied in the current paper with $\mu=\nu=(N+s)/2$, $s\in \mathbb{R}$ in \eqref{withgeneralparameters}, we note the following. Since it was shown in \cite[Propositions~4.2 and~4.3]{W} that the integral on the right-hand side of (\ref{e43}) can be expressed as a Hankel determinant involving Laguerre polynomials when $s \in \mathbb{Z}_{\geq 1}$, we make the below observation.

\begin{prop}\label{relationtoaHankel}
Let $N,s\geq 1$ be integers. Let $\mathcal{J}_{N}(t;s)$ be as in \eqref{NtimesNHankel} with $t>0$. Then
\beas
\frac{(-1)^{N(N-1)/2}}{(\Gamma(s+N))^N}\mathcal{J}_{N}(t;s)=(-1)^{s(s-1)/2}\det\Big[L_{N+s-1-(j+k)}^{(2s-1)}(-2t)\Big]_{j,k=0}^{s-1};\  L_n^{(\alpha)}(x):=\frac{\e^x}{n!}x^{-\alpha}\frac{\d^n}{\d x^n}\Big(\e^{-x}x^{n+\alpha}\Big).
\eeas
\end{prop}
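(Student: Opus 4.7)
The plan is to obtain two pointwise expressions for the auxiliary quantity $J_N(\xi,s)$ of \eqref{defofJ} and equate them. First, Proposition~\ref{expressionofJ} (in the form \eqref{i13}) already yields $J_N(\xi,s)$ as an explicit $\xi$-independent constant times $\e^{-N\xi}\mathcal{J}_N(\xi;s)$. Second, when $s\in\mathbb{Z}_{\geq 1}$, the $N$-fold integral appearing on the right-hand side of \eqref{e43} is precisely the object evaluated in \cite[Propositions~4.2 and~4.3]{W}: via direct manipulation of the Vandermonde determinant together with the weight $(y+2\xi)^{s}y^{s}\e^{-y}$, it is shown there to equal an explicit constant times the $s\times s$ Hankel determinant $\det\bigl[L_{N+s-1-(j+k)}^{(2s-1)}(-2\xi)\bigr]_{j,k=0}^{s-1}$, pointwise in $\xi>0$. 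Substituting back into \eqref{e43} yields a second formula for $J_N(\xi,s)$, and equating the two reduces the proposition to a scalar identity of the form
\begin{equation*}
\frac{(-1)^{N(N-1)/2}}{(\Gamma(s+N))^N}\,\mathcal{J}_N(\xi;s) = C(N,s)\,\det\bigl[L_{N+s-1-(j+k)}^{(2s-1)}(-2\xi)\bigr]_{j,k=0}^{s-1},\qquad \xi>0,
\end{equation*}
with an explicit constant $C(N,s)$ collected from Winn's evaluation, from the factor $N!$, and from the product $\prod_{j=1}^{N}\Gamma^{2}(s+j)$ appearing in \eqref{e43}; the claim amounts to $C(N,s)=(-1)^{s(s-1)/2}$.

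To pin down $C(N,s)$, I would evaluate both sides at $\xi=0$. The left-hand side is obtained by feeding the Selberg integral \eqref{Selb} into \eqref{i13}, producing an explicit ratio of Barnes $G$-function values. The right-hand side uses the closed form $L_n^{(2s-1)}(0)=\binom{n+2s-1}{n}$, reducing the $s\times s$ Hankel determinant to $\det\bigl[\binom{N+3s-2-(j+k)}{2s-1}\bigr]_{j,k=0}^{s-1}$, a binomial-coefficient Hankel determinant whose value is again a closed-form product of $\Gamma$-factors. Reversing the Hankel indexing on the right, i.e.\ the substitution $j+k\mapsto 2(s-1)-(j+k)$, contributes the sign $(-1)^{s(s-1)/2}$, and matching the two resulting products then identifies $C(N,s)$ as claimed. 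Since both sides of the displayed identity are $\xi$-independent multiples of $J_N(\xi,s)$, agreement at a single value of $\xi$ propagates to all $\xi>0$.

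The main obstacle is the bookkeeping in the constant-matching step: the two sign factors $(-1)^{N(N-1)/2}$ and $(-1)^{s(s-1)/2}$ originate from independent sources --- the Vandermonde/Andr\'eief manipulation underlying Proposition~\ref{expressionofJ} and the Hankel-index reversal in Winn's rewriting, respectively --- and one must check that they appear with the correct multiplicity and do not inadvertently cancel. Beyond this bookkeeping, no new analytic input is required: the statement is essentially a consistency check between the Riemann--Hilbert-style $N\times N$ Hankel representation developed in the present paper and the integer-$s$ representation of \cite{W}.
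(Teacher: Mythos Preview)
Your approach is correct and matches the paper's: the proposition is stated there as a direct observation, obtained by equating the two representations \eqref{i13} and \eqref{e43} of $J_N(\xi,s)$ and invoking \cite[Propositions~4.2 and~4.3]{W} on the latter. The only minor comment is that your $\xi=0$ evaluation is not strictly needed, since Winn's formulae already carry the explicit constants (including the sign $(-1)^{s(s-1)/2}$), so the identification of $C(N,s)$ can be read off directly rather than recovered by specialisation.
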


At the end of this section, we note that when \( s \in \mathbb{R} \), the Hankel determinant \( \mathcal{J}_{N}(t;s) \) in \eqref{NtimesNHankel} is known to be related to \(\sigma\)-Painlev\'e~V, see (\ref{1102defu}). This can be verified, for instance, from \cite[Proposition~3.2]{forrester2002application}, or alternatively from (\ref{e43}) together with \cite{chen2012coulumb}.

%
%
%
%

\section{Large $N$-asymptotics for $J_N(\xi,s)$}\label{sectiononlargeN}

As explained in the introduction, the poles of the Painlev\'e function $u_N$ in \eqref{1102defu} arise from the zeros of \( J_{N}(\xi,s) \) when the exponent \( s \) is complex. Therefore, the present work focuses on the case \( s \in \mathbb{R} \). Our starting point is the integral representation of \( J_N(\xi,s) \) for \( \xi > 0 \) given in (\ref{e43}). Denote 
\begin{equation}\label{e44}
W_N(\xi,s):=\int_{\mathbb{R}_+^N}\Bigg(\prod_{j=1}^N(y_j+\xi)^sy_j^s\,\e^{-y_j}\Bigg)\prod_{1\leq j<k\leq N}|y_k-y_j|^2\prod_{\ell=1}^N\d y_{\ell}.
\end{equation}
Evidently, $W_N$ constitutes a Hankel moment determinant by Andr\'eief's identity,
\begin{equation*}
	W_N(\xi,s)=N!\,\det\bigg[\int_0^{\infty}y^{j+k}\omega(y;\xi,s)\d y\bigg]_{j,k=0}^{N-1},\ \ \ \omega(y;\xi,s):=(y+\xi)^sy^s\e^{-y}, \ y>0,
\end{equation*}
which is well-defined for $\xi>0$ and $s>-1/2$. It is straightforward to verify from the integral representation on the right-hand side of (\ref{e43}) that, for any \( s > -1/2 \), \( J_N(\xi,s) \) is strictly positive for \( \xi > 0 \). So we have the following conclusion. 

\begin{lem}\label{Mor2} The function
\begin{equation*}
	z\mapsto W_N(z,s)=N!\,\det\bigg[\int_0^{\infty}y^{j+k}\omega(y;z,s)\d y\bigg]_{j,k=0}^{N-1}
\end{equation*}
is smooth for $z>0$ for any $N\in\mathbb{N},s>-\frac{1}{2}$, and does not vanish on $(0,\infty)\subset\mathbb{R}$.
\end{lem}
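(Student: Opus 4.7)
The plan is to dispatch the two claims directly from the two available representations of $W_N$: the Hankel moment form and the multi-dimensional integral \eqref{e44}. For smoothness, first I would write
\begin{equation*}
W_N(z,s)=N!\det\big[m_{j+k}(z,s)\big]_{j,k=0}^{N-1},\qquad m_p(z,s):=\int_0^{\infty}y^{p+s}(y+z)^s\e^{-y}\,\d y,
\end{equation*}
and establish that each moment $m_p(\cdot,s)$ lies in $C^{\infty}((0,\infty))$ by differentiating under the integral sign. For $k\in\mathbb{Z}_{\geq 0}$ one has $\partial_z^k(y+z)^s=s(s-1)\cdots(s-k+1)(y+z)^{s-k}$, and on any compact subinterval $[a,b]\subset(0,\infty)$ this quantity is bounded in absolute value by a constant times $(y+a)^{s-k}+(y+b)^{s-k}$, which in turn is integrable on $(0,\infty)$ against $y^{p+s}\e^{-y}$ since $s>-\tfrac{1}{2}$ controls integrability at the origin and the exponential damps everything at infinity. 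Dominated convergence therefore legitimises the interchange of $\partial_z^k$ with the integral to any order, so $m_p(\cdot,s)\in C^{\infty}((0,\infty))$ and the determinant inherits $C^{\infty}$ smoothness in $z>0$.

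For non-vanishing, I would return to \eqref{e44} itself. For any fixed $z>0$ and $s>-\tfrac{1}{2}$ the integrand
\begin{equation*}
\prod_{j=1}^N(y_j+z)^sy_j^s\e^{-y_j}\prod_{1\leq j<k\leq N}|y_k-y_j|^2
\end{equation*}
is a product of non-negative factors on $\mathbb{R}_+^N$, hence non-negative throughout, and strictly positive on the open, full-Lebesgue-measure subset where all $y_j$ are pairwise distinct and positive. Therefore $W_N(z,s)>0$ for every $z>0$, and in particular does not vanish on $(0,\infty)$.

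The main obstacle, if any, is really cosmetic: the only item requiring a small amount of care is ensuring that the dominating function in the first step is uniform in $z$ over compact subsets of $(0,\infty)$, and this is automatic from the monotonicity of $z\mapsto(y+z)^{s-k}$ for each fixed $y\geq 0$.
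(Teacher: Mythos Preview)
Your proof is correct and follows the same approach as the paper, which merely remarks (just before stating the lemma) that strict positivity is ``straightforward to verify from the integral representation'' \eqref{e43} and leaves smoothness implicit. You have supplied the routine details the paper omits: dominated convergence for the moment derivatives and pointwise positivity of the integrand in \eqref{e44}.
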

By Propositions \ref{expressionofJ} and \ref{integralrepresentation}, combining with (\ref{1102defu}), 
\begin{equation*}
u_N(\xi;s)=\xi\frac{\partial}{\partial\xi}\ln W_N(\xi,s),\ \ \ \xi>0.
\end{equation*} 
As $(0,\infty)\ni \xi\mapsto W_{N}(\xi,s)$ is zero-free for any $N\in\mathbb{N}$ and $2s>-1$, the function $(0,\infty)\ni z\mapsto u_N(\frac{z}{N};s)$ is pole free and we can carry out a bona-fide large $N$-asymptotic analysis of the same quantity. To evaluate $W_N$, and relate it to a RHP, we use
%
%
%
%
%
\begin{equation}\label{e46}
	\mathbb{C}[y]\ni p_j(y)=\chi_j\Big(y^j+\gamma_jy^{j-1}+\mathcal{O}\big(y^{j-2}\big)\Big),\ \ \ \ \chi_j=\chi_j(z,s)>0;\ \ \ \ \gamma_j=\gamma_j(z,s)\in\mathbb{R}
\end{equation}
to denote the degree $j\in\mathbb{Z}_{\geq 0}$ orthonormal polynomial with respect to $\omega(y;z,s)$ on $(0,\infty)$. That is we have
\begin{equation*}
	\int_0^{\infty}p_j(y)p_k(y)\omega(y;z,s)\d y=\delta_{jk},\ \ \ \ 0\leq k\leq j,
\end{equation*}
and $p_j\in\mathbb{C}[y]$ exists for any $z>0$ and $s>-\frac{1}{2}$ by Lemma \ref{Mor2}. Moreover,
\begin{equation}\label{e47}
	W_N(z,s)=N!\,\prod_{j=0}^{N-1}\chi_j^{-2}(z,s),\ \ \ \ z>0;\hspace{1.5cm}\frac{W_{N+1}(z,s)}{W_N(z,s)}=(N+1)\chi_N^{-2}(z,s),\ \ \ \ z>0,
\end{equation}
and $p_N$ admits the following RHP characterisation.
\begin{problem}\label{RHP3} Let $N\in\mathbb{N},s>-\frac{1}{2}$ and $z>0$. Now determine $X(\lambda)=X(\lambda;z,s,N)\in\mathbb{C}^{2\times 2}$ so that
\begin{enumerate}
	\item[(1)] $X(\lambda)$ is analytic for $\lambda\in\mathbb{C}\setminus[0,\infty)$ and extends continuously to $\mathbb{C}\setminus\{0\}$ from either side of $(0,\infty)$.
	\item[(2)] The continuous limiting values $X_{\pm}(\lambda)=\lim_{\epsilon\downarrow 0}X(\lambda\pm\im\epsilon),\lambda\in(0,\infty)$ satisfy
	\begin{equation*}
		X_+(\lambda)=X_-(\lambda)\begin{bmatrix}1&\omega(\lambda;z,s)\\ 0&1\end{bmatrix}.
	\end{equation*}
	\item[(3)] Near $\lambda=0$, $X(\lambda)$ is weakly singular in the following entry-wise sense,
	\begin{equation*}
		X(\lambda)=\mathcal{O}\begin{bmatrix}1 & |\lambda|^{s}\\ 1 & |\lambda|^{s}\\ \end{bmatrix},\ -\frac{1}{2}<s<0;\ \ \ X(\lambda)=\mathcal{O}\begin{bmatrix}1 & \ln|\lambda| \\ 1 & \ln|\lambda|\end{bmatrix},\ s=0;\ \ \ X(\lambda)=\mathcal{O}\begin{bmatrix}1&1\\ 1&1\end{bmatrix},\ s>0.
	\end{equation*}
	\item[(4)] As $\lambda\rightarrow\infty$ and $\lambda\notin(0,\infty)$,
	\begin{equation*}
		X(\lambda)=\bigg\{I+\sum_{\ell=1}^2\frac{o_{\ell}}{\lambda^{\ell}}+\mathcal{O}\big(\lambda^{-3}\big)\bigg\}\lambda^{N\sigma_3},\ \ \ o_{\ell}=o_{\ell}(z,s,N)=\Big[o_{\ell}^{jk}(z,s,N)\Big]_{j,k=1}^2,\ \ \ I=\big[\delta_{jk}\big]_{j,k=1}^2.
	\end{equation*}
\end{enumerate}
\end{problem}
By \cite{FIK2}, RHP \ref{RHP3} is uniquely solvable for $N\in\mathbb{N},s>-\frac{1}{2}$ and $z>0$ with its solution being of the form
\begin{equation*}
	X(\lambda)=\begin{bmatrix}\chi_N^{-1}p_N(\lambda) & \frac{\chi_N^{-1}}{2\pi\im}\int_0^{\infty}p_N(y)\omega(y;z,s)\frac{\d y}{y-\lambda}\smallskip\\ -2\pi\im \chi_{N-1}p_{N-1}(\lambda) & -\chi_{N-1}\int_0^{\infty}p_{N-1}(y)\omega(y;z,s)\frac{\d y}{y-\lambda}\end{bmatrix},\ \ \ \lambda\in\mathbb{C}\setminus[0,\infty),
\end{equation*}
expressed in terms of the polynomials in \eqref{e46} and thus relating back to $W_N$ via the identity
\begin{equation}\label{e48}
	\frac{W_{N+1}(z,s)}{W_N(z,s)}=(N+1)\chi_N^{-2}(z,s)=-2\pi\im(N+1)\,o_1^{12}(z,s,N).
\end{equation}
In order to study the large $N$-asymptotic of the previous $v_N$, we first combine \eqref{e44},\eqref{e47} and \eqref{e48}:
\begin{lem} Let $N\in\mathbb{N},s>-\frac{1}{2}$ and $z>0$. Then
\begin{equation}\label{e49}
	u_N(z;s)=N(N+2s)+o_1^{11}(z,s,N),
\end{equation}
with $o_1(z,s,N)$ as in condition $(4)$ of RHP \ref{RHP3} and $u_N(z;s)$ as in \eqref{1102defu}.
\end{lem}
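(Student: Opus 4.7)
The plan is to reduce \eqref{e49} to an explicit differential identity for the logarithmic $z$-derivative of the Hankel determinant $W_N$, expressed in terms of the orthogonal polynomial data $(\chi_j,\gamma_j)$ from \eqref{e46}, and then to identify the resulting subleading coefficient with $o_1^{11}$ by inspection of the large-$\lambda$ expansion of $X(\lambda)$. Starting from $u_N(z;s)=z\partial_z\ln W_N(z,s)$ together with \eqref{e47}, I write $h_j:=\chi_j^{-2}$ for the squared norm of the monic polynomial $P_j:=\chi_j^{-1}p_j$, so that
\[
u_N(z;s)=\sum_{j=0}^{N-1}\frac{z}{h_j}\frac{\partial h_j}{\partial z}.
\]
Since $P_j$ is monic, the parameter derivative $\partial_z P_j$ has degree at most $j-1$ and is orthogonal to $P_j$, so only the weight contribution survives. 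Using $\partial_z\omega(y;z,s)=\tfrac{s}{y+z}\omega(y;z,s)$ and the algebraic splitting $\tfrac{z}{y+z}=1-\tfrac{y}{y+z}$, this yields
\[
u_N(z;s)=Ns-s\sum_{j=0}^{N-1}\frac{1}{h_j}\int_0^{\infty}\frac{y\,P_j(y)^2}{y+z}\,\omega(y;z,s)\,\mathrm dy.
\]

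The main obstacle is to eliminate the troublesome factor $1/(y+z)$ from each integral. The idea is to perform integration by parts in $y$ applied to $\int_0^{\infty}\partial_y(yP_j^2)\,\omega\,\mathrm dy$. The boundary terms at $y=0$ and $y=\infty$ vanish, since $y\,P_j(y)^2\,\omega(y;z,s)\sim z^s P_j(0)^2\,y^{s+1}$ as $y\downarrow 0$ (with $s>-\tfrac12$) and decays exponentially as $y\to\infty$, and the logarithmic $y$-derivative of the weight reads $\partial_y\omega/\omega=\tfrac{s}{y+z}+\tfrac{s}{y}-1$. Combining this with the two elementary identities $\int_0^{\infty} yP_jP_j'\,\omega\,\mathrm dy=j\,h_j$, obtained from $yP_j'(y)=jP_j(y)+(\text{degree}\le j-1)$ together with orthogonality, and $\int_0^{\infty}yP_j^2\,\omega\,\mathrm dy=\alpha_j h_j$, where $\alpha_j$ is the diagonal entry of the three-term recurrence $yP_j=P_{j+1}+\alpha_jP_j+\beta_jP_{j-1}$, I expect to arrive at
\[
\frac{s}{h_j}\int_0^{\infty}\frac{y\,P_j(y)^2}{y+z}\,\omega(y;z,s)\,\mathrm dy=\alpha_j-2j-1-s.
\]
Summing over $0\le j\le N-1$ and invoking $\sum_{j=0}^{N-1}(2j+1)=N^2$ together with the telescoping identity $\sum_{j=0}^{N-1}\alpha_j=\gamma_0-\gamma_N=-\gamma_N$ (which follows from comparing coefficients of $y^j$ on both sides of the recurrence and noting $\gamma_0=0$), this produces $u_N(z;s)=Ns-(-\gamma_N-N^2-Ns)=N(N+2s)+\gamma_N$.

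To conclude, I identify $\gamma_N$ with $o_1^{11}(z,s,N)$ by matching the explicit $(1,1)$-entry $X_{11}(\lambda)=\chi_N^{-1}p_N(\lambda)=\lambda^N+\gamma_N\lambda^{N-1}+\mathcal{O}(\lambda^{N-2})$ against the large-$\lambda$ expansion in condition $(4)$ of RHP \ref{RHP3}, whose $(1,1)$-entry reads $\lambda^N+o_1^{11}(z,s,N)\lambda^{N-1}+\mathcal{O}(\lambda^{N-2})$. Substituting $\gamma_N=o_1^{11}(z,s,N)$ into the previous display yields \eqref{e49}.
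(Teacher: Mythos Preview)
Your argument is correct. The integration-by-parts identity
\[
\frac{s}{h_j}\int_0^{\infty}\frac{y\,P_j(y)^2}{y+z}\,\omega\,\mathrm dy=\alpha_j-2j-1-s
\]
checks out exactly as you indicate, the telescoping $\sum_{j=0}^{N-1}\alpha_j=-\gamma_N$ follows from $\alpha_j=\gamma_j-\gamma_{j+1}$, and the identification $\gamma_N=o_1^{11}$ from the $(1,1)$-entry of the explicit solution of RHP~\ref{RHP3} is immediate. The $s=0$ case is harmless: both sides collapse to zero (the weight becomes $z$-independent, and the Laguerre recurrence gives $\alpha_j=2j+1$).

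Your route is genuinely different from the paper's and considerably more elementary. The paper does not differentiate the norms $h_j$ directly; instead it builds the dressed function $\Phi(\lambda)=X(\lambda)(\lambda+z)^{\frac{s}{2}\sigma_3}\lambda^{\frac{s}{2}\sigma_3}\e^{-\frac{\lambda}{2}\sigma_3}$, writes down the associated Lax pair in $(\lambda,z)$ together with a discrete Schlesinger-type shift $\Phi_{\ell}\mapsto\Phi_{\ell+1}$, and extracts from the compatibility of the $z$-flow with the $\ell$-shift the identity
\[
\partial_z\ln o_1^{12}(z,s,\ell)=\frac{1}{z}\Big(o_1^{11}(z,s,\ell+1)-o_1^{11}(z,s,\ell)\Big)+\frac{2(s+\ell)+1}{z}.
\]
Summing in $\ell$ telescopes to $o_1^{11}(z,s,N)-o_1^{11}(z,s,1)$, and the base case $\partial_z\ln W_1$ is handled by a direct moment computation. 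Your orthogonal-polynomial argument bypasses all of this isomonodromic machinery; the trade-off is that the paper's Lax system \eqref{zz1}--\eqref{zz2} and constraint \eqref{zz3} are set up anyway for the subsequent Deift--Zhou analysis, so the paper is re-using infrastructure rather than introducing it solely for this lemma.
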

\begin{proof} For $\ell\in\mathbb{N},s>-\frac{1}{2}$ and $z>0$, we look at
\begin{equation}\label{e50}
	\Phi(\lambda)\equiv \Phi(\lambda;z,s,\ell):=X(\lambda;z,s,\ell)(\lambda+z)^{\frac{s}{2}\sigma_3}\lambda^{\frac{s}{2}\sigma_3}\e^{-\frac{\lambda}{2}\sigma_3},\ \ \ \lambda\in\mathbb{C}\setminus\mathbb{R}
\end{equation}
with principal branches chosen for $\mathbb{C}\setminus(-\infty,-z]\ni\lambda\mapsto(\lambda+z)^{\alpha}$ as well as $\mathbb{C}\setminus(-\infty,0]\ni\lambda\mapsto\lambda^{\alpha}$.
\begin{figure}[tbh]
	\begin{tikzpicture}[xscale=0.9,yscale=0.9]
        \draw [->] (0,0) -- (3.5,0) node [right] {$\footnotesize{\textnormal{Re}(\lambda)}$};
        \draw [thick,red,decoration={markings,mark= at position 0.5 with {\arrow{>}}} ,postaction={decorate}] (0,0) -- (3,0);
  	\draw [->] (0,-\yr) -- (0,1.5) node [above] {$\footnotesize{\textnormal{Im}(\lambda)}$};
	\draw [thick, color=red, decorate,decoration={zigzag,segment length=4,amplitude=1,post=lineto,post length=0}] (-3.5,0) -- (0,0);
  	\draw [fill, color=blue!60!black] (-2,0) circle [radius=0.04];
  	\draw [fill, color=blue!60!black] (0,0) circle [radius=0.04];

\end{tikzpicture}
\caption{The oriented jump contour $(-\infty,-z)\cup(-z,0)\cup(0,\infty)$ used in the RHP for $\Phi(\lambda)$, in the complex $\lambda$-plane drawn in \textcolor{red}{red}. The (end)points $\lambda=0,-z$ are colored in \textcolor{blue!60!black}{blue}, for one particular choice $z>0$, and the branch cut is oriented to the right.}
\label{fig5}
\end{figure}
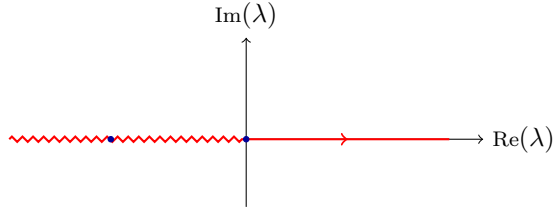

What results from RHP \ref{RHP3} is a RHP for $\Phi(\lambda)$ with $(\lambda,z)$-independent jumps, and consequently the following Lax system, 
\begin{equation}\label{zz1}
	\frac{\partial\Phi}{\partial\lambda}(\lambda)=\bigg\{C_{\infty}+\frac{C_0}{\lambda}+\frac{C_1}{\lambda+z}\bigg\}\Phi(\lambda),\hspace{2cm}  \frac{\partial\Phi}{\partial z}(\lambda)=\bigg\{\frac{D_1}{\lambda+z}\bigg\}\Phi(\lambda),
\end{equation}
with $\lambda$-independent coefficients
\begin{equation*}
	C_{\infty}:=-\frac{1}{2}\sigma_3,\ \ \ \ \ C_1=D_1:=\frac{s}{2}\widehat{\Phi}_{z}(-z)\sigma_3\widehat{\Phi}_{z}(-z)^{-1},\ \ \ \ \ C_0:=\begin{cases}\frac{s}{2}\widehat{\Phi}_0(0)\sigma_3\widehat{\Phi}_0(0)^{-1},&s\neq 0\bigskip\\
	\frac{\im}{2\pi}\widehat{\Phi}_0(0)\bigl[\begin{smallmatrix}0&1\\ 0&0\end{smallmatrix}\bigr]\widehat{\Phi}_0(0)^{-1},&s=0
	\end{cases},
\end{equation*}
that are expressed in terms of locally analytic $\widehat{\Phi}_0(\lambda)$, resp. $\widehat{\Phi}_z(\lambda)$, near $\lambda=0$, resp. $\lambda=-z$. In addition, the coefficients satisfy the constraint
\begin{equation}\label{zz3}
	C_0+C_1=(\ell+s)\sigma_3-\frac{1}{2}\big[o_1(z,s,\ell),\sigma_3\big],
\end{equation}
What's more, the RHP for $\Phi_{\ell}(\lambda)\equiv\Phi(\lambda;z,s,\ell)$ has jumps that are also $\ell$-independent, and so
\begin{equation}\label{zz2}
	\Phi_{\ell+1}(\lambda)=\Big\{\lambda U_{\infty}+E_0^{(\ell)}\Big\}\Phi_{\ell}(\lambda),
\end{equation}
with coefficients
\begin{equation*}
	U_{\infty}=\begin{bmatrix}1&0\\ 0&0\end{bmatrix},\ \ \ E_0^{(\ell)}:=\begin{bmatrix}o_1^{11}(z,s,\ell+1)-o_1^{11}(z,s,\ell) & -o_1^{12}(z,s,\ell)\smallskip\\ o_1^{21}(z,s,\ell+1) & 0\end{bmatrix}.
\end{equation*}
Define the transformation $\Upsilon(\lambda)=\Upsilon(\lambda;z,s,\ell):=(-z)^{-(s+\ell)\sigma_3}\Phi(-\lambda z;z,s,\ell)$, then 
\begin{equation*}
	\frac{\partial\Upsilon}{\partial\lambda}(\lambda)=\bigg\{zA_{\infty}+\frac{\widehat{A}_0}{\lambda}+\frac{\widehat{A}_1}{\lambda-1}\bigg\}\Upsilon(\lambda),\hspace{0.5cm}\frac{\partial\Upsilon}{\partial z}(\lambda)=\big\{\lambda A_{\infty}+\widehat{B}_0\big\}\Upsilon(\lambda),\hspace{0.5cm}\Upsilon_{\ell+1}(\lambda)=\Big\{\lambda U_{\infty}+\widehat{U}_0^{(\ell)}\Big\}\Upsilon_{\ell}(\lambda).
\end{equation*}
Here with the shorthands
\begin{equation*}
	A_{\infty}=-C_{\infty}=\frac{1}{2}\sigma_3,\ \ \ \ \ \ \widehat{A}_0:=(-z)^{-(s+\ell)\sigma_3}C_0(-z)^{(s+\ell)\sigma_3},\ \ \ \ \ \ \widehat{A}_1:=(-z)^{-(s+\ell)\sigma_3}C_1(-z)^{(s+\ell)\sigma_3},
\end{equation*}
\begin{equation*}
	\widehat{B}_0:=\frac{1}{z}\big(\widehat{A}_0+\widehat{A}_1-(\ell+s)\sigma_3\big)\stackrel{\eqref{zz3}}{=}-\frac{1}{2z}(-z)^{-(s+\ell)\sigma_3}\big[o_1(z,s,\ell),\sigma_3\big](-z)^{(s+\ell)\sigma_3},
\end{equation*}
for the $(\lambda,z)$-systems, where $\widehat{B}_0$ is traceless, and
\begin{equation*}
	\widehat{U}_0^{(\ell)}:=(-z)^{-\sigma_3}(-z)^{-(s+\ell)\sigma_3}E_0^{(\ell)}(-z)^{(s+\ell)\sigma_3},
\end{equation*}
for the difference system. Considering both, the $z$-system and the difference system for $\Upsilon$, compatibility says, in particular,
\begin{equation}\label{e51}
	\frac{\partial}{\partial z}\bigg[\frac{o_1^{12}(z,s,\ell)}{-z}(-z)^{-2(s+\ell)}\bigg]=-\bigg[\frac{o_1^{12}(z,s,\ell)}{-z}(-z)^{-2(s+\ell)}\bigg]\bigg[\frac{o_1^{11}(z,s,\ell+1)-o_1^{11}(z,s,\ell)}{-z}\bigg],
\end{equation}
and consequently
\begin{align}
	\frac{\partial}{\partial z}\ln W_N(z,s)\stackrel{\eqref{e47}}{=}\frac{\partial}{\partial z}\ln&\, W_1(z,s)+\sum_{\ell=1}^{N-1}\frac{\partial}{\partial z}\ln\chi_{\ell}^{-2}(z,s)\stackrel{\eqref{e48}}{=}\frac{\partial}{\partial z}\ln W_1(z,s)+\sum_{\ell=1}^{N-1}\frac{\partial}{\partial z}\ln o_1^{12}(z,s,\ell)\nonumber\\
	&\stackrel{\eqref{e51}}{=}\frac{\partial}{\partial z}\ln W_1(z,s)+\frac{1}{z}(N-1)(N+2s+1)+\frac{1}{z}\big(o_1^{11}(z,s,N)-o_1^{11}(z,s,1)\big).\label{e52}
\end{align}
It thus remains to relate $\frac{\partial}{\partial z}\ln W_1(z,s)$ and $o_1^{11}(z,s,1)$. First,
\begin{equation*}
	W_1(z,s)=\chi_0^{-2}(z,s)=\int_0^{\infty}\omega(y;z,s)\d y=z^{1+2s}\int_0^{\infty}(y+1)^sy^s\e^{-yz}\d y,\ \ \ z>0,
\end{equation*}
and so
\begin{align}
	\frac{\partial}{\partial z}\ln W_1(z,s)=&\,\chi_0^2(z,s)\bigg[\frac{1+2s}{z}\chi_0^{-2}(z,s)-z^{1+2s}\int_0^{\infty}(y+1)^sy^{s+1}\e^{-yz}\d y\bigg]\nonumber\\
	=&\,\frac{1+2s}{z}-\frac{1}{z}\chi_0^2(z,s)\int_0^{\infty}\omega(y;z,s)y\,\d y=\frac{1+2s}{z}+\frac{1}{z}o_1^{11}(z,s,1),\label{e53}
\end{align}
by utilizing
\begin{equation*}
	-o_1^{11}(z,s,1)=o_1^{22}(z,s,1)=\chi_0^2(z,s)\int_0^{\infty}\omega(y;z,s)y\,\d y.
\end{equation*}
Substituting \eqref{e53} into \eqref{e52} yields \eqref{e49} and thus completes our proof by \eqref{e44} and Lemma \ref{Mor2}.
\end{proof}

\begin{remark} So far our workings extend to $s\in\mathbb{C}:\textnormal{Re}(s)>-1/2$ and admissible $\textnormal{Re}(z)>0$, at the expense of using discrete exceptional sets. The very subtle impact of $s$ being complex-valued manifests itself in the upcoming asymptotic analysis, specifically in the context of the local analysis of RHP \ref{RHP3} near $\lambda=0$.
\end{remark}

Equipped with \eqref{e49} we now commence the large $N$-asymptotic analysis of RHP \ref{RHP3}. The necessary steps are standard, cf. \cite{DKMVZ,V}, aside from the local analysis near $\lambda=0$. To begin with we introduce the $g$-function
\begin{equation*}
	g(\lambda):=\frac{2}{\pi}\int_0^1\ln(\lambda-x)\sqrt{\frac{1-x}{x}}\d x,\ \ \ \lambda\in\mathbb{C}\setminus(-\infty,1],
\end{equation*}
using the principal branch of the logarithm with $\textnormal{arg}(\lambda-x)\in(-\pi,\pi)$. Evidently
\begin{equation*}
	g_+(\lambda)-g_-(\lambda)=\begin{cases}\displaystyle0,&\lambda>1\\ \displaystyle4\im\int_{\lambda}^1\sqrt{\frac{1-x}{x}}\d x,&\lambda\in(0,1)\\ \displaystyle2\pi\im,&\lambda<0\end{cases};\ \ \ \ \ \ \ \ g(\lambda)=\ln\lambda-\frac{1}{4\lambda}+\mathcal{O}\big(\lambda^{-2}\big),\ \ \ \lambda\rightarrow\infty,
\end{equation*}
and $g_+(\lambda)+g_-(\lambda)-4\lambda+\ell=0$ for $\lambda\in(0,1)$ followed by
\begin{equation*}
	g_+(\lambda)+g_-(\lambda)-4\lambda+\ell=-4\int_1^{\lambda}\sqrt{\frac{x-1}{x}}\d x<0,\ \ \,\lambda\in(1,\infty),
\end{equation*}
where $\ell:=2+4\ln 2$ denotes the relevant Lagrange multiplier.
In turn, the transformation
\begin{equation}\label{e54}
	S(\lambda;z,s,N):=(4N)^{-(N+s)\sigma_3}\e^{\frac{N}{2}\ell\sigma_3}X\Big(4\lambda N;\frac{z}{N},s,N\Big)\e^{-N(g(\lambda)+\frac{\ell}{2})\sigma_3}(4N)^{s\sigma_3},\ \ \ \ \lambda\in\mathbb{C}\setminus[0,\infty),
\end{equation}
leads us to the below, normalised at $\lambda=\infty$, RHP:
\begin{problem}\label{RHP4} Let $N\in\mathbb{N},s>-\frac{1}{2}$ and $z>0$. The function $S(\lambda)=S(\lambda;z,s,N)\in\mathbb{C}^{2\times 2}$ defined in \eqref{e54} has the following properties:
\begin{enumerate}
	\item[(1)] $S(\lambda)$ is analytic for $\lambda\in\mathbb{C}\setminus[0,\infty)$ and extends continuously to $\mathbb{C}\setminus\{0\}$ from either side of $(0,\infty)$.
	\item[(2)] The continuous limiting values $S_{\pm}(\lambda)=\lim_{\epsilon\downarrow 0}S(\lambda\pm\im\epsilon),\lambda\in(0,\infty)$ satisfy
	\begin{equation*}
		S_+(\lambda)=S_-(\lambda)\begin{bmatrix}\e^{N\pi(\lambda)} & (\lambda+z_N)^s\lambda^s\smallskip\\ 0 & \e^{-N\pi(\lambda)}\end{bmatrix},\ \ \lambda\in(0,1);
	\end{equation*}
	with $(0,1)\ni\lambda\mapsto\pi(\lambda):=g_-(\lambda)-g_+(\lambda)=-4\im\int_{\lambda}^1\sqrt{\frac{1-x}{x}}\d x\in\im\mathbb{R}$ and
	\begin{equation*}
		S_+(\lambda)=S_-(\lambda)\begin{bmatrix}1 & (\lambda+z_N)^s\lambda^s\e^{N\eta(\lambda)}\smallskip\\ 0 & 1\end{bmatrix},\ \ \lambda>1;
	\end{equation*}
	where we use $(1,\infty)\ni\lambda\mapsto\eta(\lambda):=2g(\lambda)-4\lambda+\ell\in\mathbb{R}_{<0}$. Also, $z_N:=\frac{z}{4N^2}>0$.
	\item[(3)] Near $\lambda=1$, $S(\lambda)$ is bounded and near $\lambda=0$, $S(\lambda)$ is weakly singular in the following entry-wise sense,
	\begin{equation*}
		S(\lambda)=\mathcal{O}\begin{bmatrix}1 & |\lambda|^{s}\\ 1 & |\lambda|^{s}\\ \end{bmatrix},\ -\frac{1}{2}<s<0;\ \ \ \ S(\lambda)=\mathcal{O}\begin{bmatrix}1 & \ln|\lambda| \\ 1 & \ln|\lambda|\end{bmatrix},\ s=0.
	\end{equation*}
	If $s>0$, then $S(\lambda)$ is bounded near $\lambda=0$ as well.
	\item[(4)] As $\lambda\rightarrow\infty$ and $\lambda\notin(0,\infty)$,
	\begin{equation*}
		S(\lambda)=I+\frac{r_1}{\lambda}+\mathcal{O}\big(\lambda^{-2}\big),
	\end{equation*}
	where $r_1=r_1(z,s,N)$ relates to $o_1(z,s,N)$ in condition $(4)$ of RHP \ref{RHP3} as follows,
	\begin{equation*}
		r_1(z,s,N)=\frac{1}{4}\bigg[N\sigma_3+\frac{1}{N}(4N)^{-(N+s)\sigma_3}\e^{\frac{N}{2}\ell\sigma_3}o_1\Big(\frac{z}{N},s,N\Big)\e^{-\frac{N}{2}\ell\sigma_3}(4N)^{(N+s)\sigma_3}\bigg].
	\end{equation*}
\end{enumerate}
\end{problem}
After the transformation \eqref{e54}, the jump for $S(\lambda)$ is, as $N\rightarrow\infty$, localised on $(0,1)\subset\mathbb{R}$ and near the vicinities of $\lambda=0$ and $\lambda=1$. Still, given the fast oscillations on $(0,1)$, we first introduce
\begin{equation}\label{e55}
	\mathbb{C}\setminus(-\infty,1]\ni\lambda\mapsto\xi(\lambda):=-2\int_1^{\lambda}\bigg(\frac{\zeta-1}{\zeta}\bigg)^{\frac{1}{2}}\d\zeta,
\end{equation}
with path of integration in \eqref{e55} not crossing $[0,1]\subset\mathbb{R}$ and the root in the integrand analytic for $\zeta\in\mathbb{C}\setminus[0,1]$ so that it is positive for $\zeta>1$. Then,
\begin{equation*}
	2\xi_+(\lambda)=-2\xi_-(\lambda)=4\im\int_{\lambda}^1\sqrt{\frac{1-x}{x}}\d x=-\pi(\lambda),\ \ \lambda\in(0,1),
\end{equation*}
followed by $\xi_+(\lambda)-\xi_-(\lambda)=2\pi\im$ for $\lambda\in(-\infty,0)$ as well as $2\xi(\lambda)=2g(\lambda)-4\lambda+\ell=\eta(\lambda)$ for $\lambda\in(1,\infty)$. Lastly,
\begin{equation*}
	\frac{\d}{\d y}\xi(x\pm\im y)\Big|_{y=0}=2\sqrt{\frac{1-x}{x}}>0,\ \ \ x\in(0,1),
\end{equation*}
so that $\textnormal{Re}(\xi(\lambda))>0$ for $0<|\textnormal{Im}(\lambda)|<\varepsilon$ and $0<\textnormal{Re}(\lambda)<1$ with some $\varepsilon>0$. What results from the properties of $\xi$ is the factorisation
\begin{equation}\label{e56}
	\begin{bmatrix}\e^{N\pi(\lambda)} & \phi(\lambda)\smallskip\\ 0 & \e^{-N\pi(\lambda)}\end{bmatrix}=\begin{bmatrix}1 & 0\smallskip\\ \e^{-2N\xi_-(\lambda)}\phi(\lambda)^{-1} & 1\end{bmatrix}\begin{bmatrix}0&\phi(\lambda)\smallskip\\ -\phi(\lambda)^{-1} & 0\end{bmatrix}\begin{bmatrix}1 & 0\smallskip\\ \e^{-2N\xi_+(\lambda)}\phi(\lambda)^{-1}&1\end{bmatrix},\ \ \lambda\in(0,1),
\end{equation}
with $\phi(\lambda)=\phi(\lambda;z_N,s):=(\lambda+z_N)^s\lambda^s$ that we define in the complex plane $\mathbb{C}\ni\lambda$ with cut on the ray $(-\infty,0]\subset\mathbb{R}$ such that $\phi(\lambda)\sim\lambda^{2s}$ as $\lambda\rightarrow+\infty$.
Then, denoting with $\Omega_{\pm}$ the lens-shaped domains shown in Figure \ref{fig6}, both of which lie in the domain $0<|\textnormal{Im}(\lambda)|<\varepsilon$ and $0<\textnormal{Re}(\lambda)<1$ and which do not intersect with $\Gamma$, we define
\begin{equation}\label{e57}
	T(\lambda;z,s,N):=S(\lambda;z,s,N)\begin{cases}\begin{bmatrix}1&0\\ \e^{-2N\xi(\lambda)}\phi(\lambda)^{-1}&1\end{bmatrix}^{-1},&\lambda\in\Omega_+\smallskip\\
	\begin{bmatrix}1&0\\ \e^{-2N\xi(\lambda)}\phi(\lambda)^{-1}&1\end{bmatrix},&\lambda\in\Omega_-\smallskip\\ I,&\lambda\in\mathbb{C}\setminus\overline{\Omega_+\cup\Omega_-}
	\end{cases},
\end{equation}
and are naturally led to the below RHP, compare \eqref{e56}.
\begin{problem}\label{RHP5} Let $N\in\mathbb{N},s>-\frac{1}{2}$ and $z>0$. The function $T(\lambda)=T(\lambda;z,s,N)\in\mathbb{C}^{2\times 2}$ defined in \eqref{e57} has the following properties:
\begin{enumerate}
	\item[(1)] $T(\lambda)$ is analytic for $\lambda\in\mathbb{C}\setminus\Sigma_T$ and extends continuously to $\mathbb{C}\setminus\{0\}$ from either side of $\Sigma_T$. The oriented contour $\Sigma_T$ is shown in Figure \ref{fig6} and consists of $(0,\infty)\subset\mathbb{R}$ as well as the two lens boundaries $\gamma_{\pm}$ connecting $\lambda=0$ and $\lambda=1$, but excluding both endpoints.
	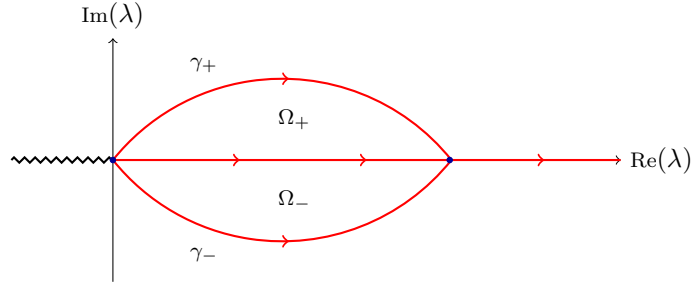
\begin{figure}[tbh]
	\begin{tikzpicture}[xscale=0.9,yscale=0.9]
	        \draw [->] (-2.5,0) -- (5,0) node [right] {$\footnotesize{\textnormal{Re}(\lambda)}$};
	\draw [thick, color=red, decoration={markings, mark=at position 0.25 with {\arrow{>}}},decoration={markings, mark=at position 0.5 with {\arrow{>}}},  decoration={markings, mark=at position 0.85 with {\arrow{>}}}, postaction={decorate}] (-2.5,0) -- (5,0);
	  \draw [->] (-2.5,-1.8) -- (-2.5,1.8) node [above] {$\footnotesize{\textnormal{Im}(\lambda)}$};
		\draw[thick,color=red, decoration={markings, mark=at position 0.5 with {\arrow{<}}}, postaction={decorate} ] (2.5,0) arc (38.65980825:141.3401918:3.201562119);
		\draw[thick,color=red, decoration={markings, mark=at position 0.5 with {\arrow{<}}}, postaction={decorate} ] (2.5,0) arc (-38.65980825:-141.3401918:3.201562119);
	\node [right] at (-1.5,1.4) {{\footnotesize $\gamma_+$}};
	\node [right] at (-0.2,0.6) {{\footnotesize $\Omega_+$}};
	\node [right] at (-1.5,-1.4) {{\footnotesize $\gamma_-$}};
	\node [right] at (-0.2,-0.6) {{\footnotesize $\Omega_-$}};
	\draw [thick,decorate,decoration={zigzag,segment length=4,amplitude=1,post=lineto,post length=0}] (-4,0) -- (-2.5,0);
	\draw[fill, color=blue!60!black] (2.475,0) circle [radius=0.04];
	\draw[fill, color=blue!60!black](-2.5,0) circle [radius=0.04];
\end{tikzpicture}
\caption{The oriented jump contour $\Sigma_T$, in the complex $\lambda$-plane drawn in \textcolor{red}{red}, for $T(\lambda)$. The (end)points $\lambda=0,1$ are colored in \textcolor{blue!60!black}{blue}, and the branch cut for $\lambda\mapsto\phi(\lambda)$ is shown on $(-\infty,0]$.}
\label{fig6}
\end{figure}

	\item[(2)] The continuous limiting values $T_{\pm}(\lambda),\lambda\in\Sigma_T\setminus\{0\}$ satisfy
	\begin{align*}
		T_+(\lambda)=&\,\,T_-(\lambda)\begin{bmatrix}1&0\\ \e^{-2N\xi(\lambda)}\phi(\lambda)^{-1}&1\end{bmatrix},\ \ \lambda\in\gamma_{\pm}\setminus\{0\};\\
		T_+(\lambda)=&\,\,T_-(\lambda)\begin{bmatrix}1&(\lambda+z_N)^s\lambda^s\e^{2N\xi(\lambda)}\\ 0&1\end{bmatrix},\ \lambda>1;
	\end{align*}
	as well as
	\begin{equation}\label{e58}
		T_+(\lambda)=T_-(\lambda)\begin{bmatrix}0&\phi(\lambda)\\ -\phi(\lambda)^{-1}&0\end{bmatrix},\ \ \lambda\in(0,1);\ \ \ \phi(\lambda)=\lambda^s(\lambda+z_N)^s,\ \lambda\in\mathbb{C}\setminus(-\infty,0].
	\end{equation}
	\item[(3)] Near $\lambda=1$, $T(\lambda)$ is bounded and near $\lambda=0$, $T(\lambda)$ is weakly singular in that the behaviour of $S(\lambda)$ as written in condition $(3)$ of RHP \ref{RHP4} is modified according to \eqref{e57}.
	\item[(4)] As $\lambda\rightarrow\infty,\lambda\notin(0,\infty)$,
	\begin{equation*}
		T(\lambda)=I+\frac{r_1}{\lambda}+\mathcal{O}\big(\lambda^{-2}\big),
	\end{equation*}
	with $r_1=r_1(z,s,N)$ as in condition $(4)$ of RHP \ref{RHP4}.
\end{enumerate}
\end{problem}
With $G_T(\lambda)=G_T(\lambda;z,s,N)\in\mathbb{C}^{2\times 2}$ denoting the piecewise defined jump matrix in condition $(2)$ of RHP \ref{RHP5}, the reason for the transformation \eqref{e57} transpires from the following small norm estimate.
\begin{prop}\label{norm1} Suppose $\mathbb{D}_r(z_0)=\{z\in\mathbb{C}:\,|z-z_0|<r\}$ denotes the open disk with center $z_0\in\mathbb{C}$ and radius $r>0$. Then for any $\epsilon\in(0,\frac{1}{2}),s>-\frac{1}{2}$ and $z>0$ there exist $N_0=N_0(\epsilon,s,z)\in\mathbb{N}$ and $c_j=c_j(\epsilon,s,z)>0$ so that

\begin{equation*}
	\|G_T(\cdot;z,s,N)-I\|_{L^2\cap L^{\infty}(\Sigma_T\setminus(\mathbb{D}_{\epsilon}(0)\cup\mathbb{D}_{\epsilon}(1)\cup(\epsilon,1-\epsilon)))}\leq c_1\e^{-c_2N}\ \ \ \ \ \ \forall\,N\geq N_0.
\end{equation*}
\end{prop}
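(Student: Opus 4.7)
The plan is to decompose $\Sigma_T\setminus(\mathbb{D}_\epsilon(0)\cup\mathbb{D}_\epsilon(1)\cup(\epsilon,1-\epsilon))$ into two geometrically distinct pieces and estimate $G_T-I$ on each of them using the sign and growth properties of the $g$-function factor $\xi(\lambda)$ built in \eqref{e55}. After removing the two disks and the sub-arc $(\epsilon,1-\epsilon)$, what remains is $\bigl(\gamma_\pm\setminus(\mathbb{D}_\epsilon(0)\cup\mathbb{D}_\epsilon(1))\bigr)\cup[1+\epsilon',\infty)$ for an appropriate $\epsilon'=\epsilon'(\epsilon)>0$, and I would work separately on the two compact lens arcs and the unbounded real ray.

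On the compact lens pieces the only nonzero off-diagonal entry of $G_T-I$ equals $\e^{-2N\xi(\lambda)}\phi(\lambda)^{-1}$ with $\phi(\lambda)=(\lambda+z_N)^s\lambda^s$. The paragraph following \eqref{e55} records that $\textnormal{Re}(\xi(\lambda))>0$ on an open neighbourhood of $(0,1)$ off the real axis, and the lens contour $\gamma_\pm$ is chosen inside this neighbourhood. Continuity together with compactness of $\gamma_\pm\setminus(\mathbb{D}_\epsilon(0)\cup\mathbb{D}_\epsilon(1))$ then yields a uniform constant $\delta=\delta(\epsilon)>0$ with $\textnormal{Re}(\xi(\lambda))\geq\delta$ throughout. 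Provided $N_0$ is large enough that $z_N=z/(4N^2)<\epsilon/2$, one has $|\lambda+z_N|\geq\epsilon/2$ on the relevant arc and hence $|\phi(\lambda)^{-1}|\leq C(\epsilon,s)$. Combining these bounds gives a pointwise estimate of the form $C\e^{-2\delta N}$, and the $L^2$ bound follows at the same exponential rate because the arcs have bounded length.

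On the ray $[1+\epsilon',\infty)$ the only nonzero off-diagonal entry is $(\lambda+z_N)^s\lambda^s\e^{2N\xi(\lambda)}$. Since $\xi'(\lambda)=-2\sqrt{(\lambda-1)/\lambda}<0$ on $(1,\infty)$ with $\xi(1)=0$, the function $\xi$ is strictly negative and monotone decreasing, so in particular $\xi(\lambda)\leq\xi(1+\epsilon')<0$ throughout the ray. A direct expansion of \eqref{e55} at infinity furnishes $\xi(\lambda)=-2\lambda+O(\ln\lambda)$, hence $\xi(\lambda)\leq -\lambda$ beyond some fixed threshold $\lambda\geq R$. Splitting the integral $\int_{1+\epsilon'}^{\infty}$ at $\lambda=R$, the bounded piece contributes $C\e^{2N\xi(1+\epsilon')}$ times a polynomial factor, whereas the tail is dominated by the super-exponential $\e^{-2N\lambda}$, which easily absorbs the polynomial factor $(\lambda+z_N)^s\lambda^s=O(\lambda^{2|s|})$. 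Both $L^\infty$ and $L^2$ norms on this ray are therefore $\mathcal{O}(\e^{-c_2 N})$ with, say, $c_2=\min\{2\delta,-2\xi(1+\epsilon')\}$ up to an arbitrarily small loss for the tail.

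The main obstacle is bookkeeping rather than analysis: one must ensure uniformity of constants in $\epsilon$, $s$, and $z$ (all allowed to enter $c_1,c_2,N_0$), and choose $N_0$ so that $z_N<\epsilon/2$ in order to keep the contour away from the potential branch-point singularity of $\phi(\lambda)$ at $\lambda=-z_N$. For $-\tfrac12<s<0$ the factor $|\phi(\lambda)^{-1}|$ can get large near $\lambda=\epsilon$, but since $\epsilon>0$ is fixed this only inflates the constant $c_1$ and does not alter the exponential rate $c_2$. Beyond these standard Deift--Zhou considerations, no further analytic input is required.
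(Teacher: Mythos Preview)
Your argument is correct and is exactly the standard Deift--Zhou justification; the paper itself states this proposition without proof, relying implicitly on precisely the sign properties of $\xi$ recorded after \eqref{e55} (namely $\textnormal{Re}\,\xi>0$ on the lens boundaries and $\xi<0$ with linear decay on $(1,\infty)$) that you invoke. Your bookkeeping---choosing $N_0$ so that $z_N<\epsilon/2$ to bound $|\phi(\lambda)^{-1}|$ on the truncated lens arcs, and splitting the ray at a fixed threshold to absorb the polynomial prefactor---is the expected way to make the estimate rigorous.
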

So, the jump matrix $G_T(\lambda)$ away from $\lambda\in(0,1)\subset\mathbb{R}$ and away from two small vicinities of $\lambda=0,\lambda=1$ is close to the identity matrix as $N\rightarrow\infty$, for fixed $(s,z)$, and we can thus focus on the necessary local analysis.

\subsection{The outer parametrix} This one will be modelled according to the jump \eqref{e58} and thus amounts to the use of the Szeg\H{o} function
\begin{equation*}
	\mathcal{D}(\lambda)=\mathcal{D}(\lambda;z_N,s):=\exp\left[-\frac{1}{2\pi}\big(\lambda(\lambda-1)\big)^{\frac{1}{2}}\int_0^1\frac{\ln\phi(x;z_N,s)}{\sqrt{x(1-x)}}\frac{\d x}{x-\lambda}\right],\ \ \lambda\in\mathbb{C}\setminus[0,1],
\end{equation*}
which can be evaluated through explicit quadrature, and where the fractional power is the principal one with cut on $[0,1]\subset\mathbb{R}$. Namely, with principal branches throughout,
\begin{equation}\label{Sze}
	\mathcal{D}(\lambda)=\bigg[\frac{\lambda}{\upsilon(2\lambda-1)}\bigg]^{\frac{s}{2}}\bigg[\frac{\lambda+\vartheta(2\lambda-1;z_N)}{\upsilon(2\lambda-1)}\bigg]^{\frac{s}{2}},\ \ \vartheta(\lambda;z_N):=\lambda z_N+\big(\lambda^2-1\big)^{\frac{1}{2}}\sqrt{z_N(z_N+1)},
\end{equation}
where $\upsilon(\lambda):=\lambda+(\lambda^2-1)^{\frac{1}{2}},\lambda\in\mathbb{C}\setminus[-1,1]$ maps to the exterior of the unit disk. What we need in particular is knowledge of the expansion
\begin{equation*}
	\mathcal{D}(\lambda;z_N,s)=2^{-2s}\bigg(\sqrt{1+z_N}+\sqrt{z_N}\bigg)^s\bigg\{1+\frac{s}{2\lambda}\frac{\sqrt{1+z_N}}{\sqrt{1+z_N}+\sqrt{z_N}}+\mathcal{O}\big(\lambda^{-2}\big)\bigg\},\ \ \lambda\rightarrow\infty,
\end{equation*}
and knowledge of the jump relation $\mathcal{D}_+(\lambda)\mathcal{D}_-(\lambda)=\phi(\lambda),\lambda\in(0,1)$. The Szeg\H{o} function then leads to
\begin{equation}\label{e59}
	P(\lambda;z_N,s):=\mathcal{D}(\infty)^{\sigma_3}\frac{1}{2}\begin{bmatrix}\delta(\lambda)+\delta(\lambda)^{-1} & -\im\big(\delta(\lambda)-\delta(\lambda)^{-1}\big)\\ \im\big(\delta(\lambda)-\delta(\lambda)^{-1}\big) & \delta(\lambda)+\delta(\lambda)^{-1}\end{bmatrix}\mathcal{D}(\lambda)^{-\sigma_3},\ \ \ \lambda\in\mathbb{C}\setminus[0,1],
\end{equation}
with $\delta(\lambda):=(1-1/\lambda)^{\frac{1}{4}}$ analytic off $[0,1]\subset\mathbb{R}$ such that $\delta(\lambda)>0$ for $\lambda>1$, with $\mathcal{D}(\infty)=\lim_{\lambda\rightarrow\infty}\mathcal{D}(\lambda;z_N,s)$, and in turn the below model RHP.
\begin{problem}[Outer parametrix]\label{outerRHP} Let $s>-\frac{1}{2},z>0$ and $N\in\mathbb{N}$. The model function $P(\lambda)=P(\lambda;z_N,s)\in\mathbb{C}^{2\times 2}$ defined in \eqref{e59} has the following properties:
\begin{enumerate}
	\item[(1)] $P(\lambda)$ is analytic for $\lambda\in\mathbb{C}\setminus[0,1]$ and extends continuously to $\mathbb{C}\setminus\{0,1\}$.
	\item[(2)] On $(0,1)\ni\lambda$ the boundary values $P_{\pm}(\lambda)=\lim_{\epsilon\downarrow 0}P(\lambda\pm\im\epsilon)$ obey
	\begin{equation*}
		P_+(\lambda)=P_-(\lambda)\begin{bmatrix}0 & \phi(\lambda)\\ -\phi(\lambda)^{-1} & 0\end{bmatrix};\ \ \ \textnormal{with} \ \ \phi(\lambda)=(\lambda+z_N)^s\lambda^s,\ \ \lambda\in\mathbb{C}\setminus(-\infty,0].
	\end{equation*}
	\item[(3)] The mapping $(0,1)\ni\lambda\mapsto P_{\pm}(\lambda)$ is entry-wise in $L^2(0,1)$.
	\item[(4)] As $\lambda\rightarrow\infty$,
	\begin{equation*}
		P(\lambda)=I+\frac{P_1}{\lambda}+\mathcal{O}\big(\lambda^{-2}\big),
	\end{equation*}
	with leading order coefficient $P_1=P_1(z_N,s)$ equal to
	\begin{equation*}
		P_1(z_N,s)=\frac{\im}{4}\mathcal{D}(\infty)^{\sigma_3}\begin{bmatrix}0&1\\ -1&0\end{bmatrix}\mathcal{D}(\infty)^{-\sigma_3}-\frac{s}{2}\frac{\sqrt{1+z_N}\,\sigma_3}{\sqrt{1+z_N}+\sqrt{z_N}}.
	\end{equation*}
\end{enumerate}
\end{problem}
We expect that $T(\lambda)$ in RHP \ref{RHP5} is well approximated by $P(\lambda)$, as $N\rightarrow\infty$, outside $\mathbb{D}_{\epsilon}(0)\cup\mathbb{D}_{\epsilon}(1)$, for fixed $(s,z)$.
\subsection{The parametrix at $\lambda=0$} The constraint $s\in(-\frac{1}{2},\infty)\subset\mathbb{R}$ will be essential for we use Painlev\'e functions in our construction. The basic idea is to view
\begin{equation*}
	\lambda\mapsto \lambda^s\e^{-N\lambda}(\lambda+z_N)^s
\end{equation*}
as a Fisher-Hartwig perturbation of the Laguerre weight function, where the root singularity at $\lambda=-z_N<0$ falls into the hard edge $\lambda=0$ at the hard-edge rate $\mathcal{O}(N^{-2})$. So, instead of using Bessel functions in the local analysis, Painlev\'e-III transcendents will appear, and those will turn out to be pole-free for $z>0$ and $s>-\frac{1}{2}$, with $s>-\frac{1}{2}$ being the crucial assumption. To begin with we note that 
\begin{equation*}
	\mathbb{D}_{\epsilon}(0)\ni\lambda\mapsto\tau(\lambda):=\big(\im\pi\,\textnormal{sgn}(\textnormal{Im}(\lambda))-\xi(\lambda)\big)^2\stackrel{\eqref{e55}}{=}-16\lambda\bigg(1-\frac{\lambda}{3}+\mathcal{O}(\lambda^2)\bigg),\ \ \ \lambda\rightarrow 0,
\end{equation*}
is locally analytic and so the change of variables, where $\beta>0$ is arbitrary for now,
\begin{equation}\label{e60}
	\mathbb{D}_{\epsilon}(0)\ni\lambda\mapsto\zeta(\lambda)=\zeta(\lambda;z_N):=\beta\e^{-\im\pi}\frac{\tau(\lambda)}{\tau(-z_N)}
\end{equation}
is locally conformal with $\zeta(0)=0$ and $\zeta(-z_N)=-\beta$, furthermore
\begin{equation*}
	\tau(-z_N)=\frac{4z}{N^2}\Big(1+\mathcal{O}\big(N^{-2}\big)\Big)\ \ \ \textnormal{as}\ \ N\rightarrow\infty,
\end{equation*}
and so $\tau(-z_N)>0$ for $z>0$, as $N\rightarrow\infty$. Consequently, we fix $0<\epsilon<\frac{1}{2}$ and consider the model function
\begin{equation}\label{e61}
	U(\lambda;z,s,N):=E(\lambda)Q(\zeta;x,s)\e^{\im\varpi(\zeta;x)\sigma_3}\varphi(\lambda;z_N,s)^{-\frac{1}{2}\sigma_3}\big|_{\zeta=\zeta(\lambda;z_N),\,x=N\sqrt{\tau(-z_N)/\beta}},\ \ \lambda\in\mathbb{D}_{\epsilon}(0)\setminus\Sigma_T,
\end{equation}
defined in terms of the solution $Q(\zeta;x,s)$ of RHP \ref{PIIImodel}, where $\zeta\mapsto\varpi(\zeta;x)=x\zeta^{\frac{1}{2}}$ is cut on $[0,\infty)\subset\mathbb{R}$ with branch fixed by the choice $\textnormal{arg}\,\zeta\in(0,2\pi)$, where
\begin{equation*}
	\varphi(\lambda)=\varphi(\lambda;z_N,s):=(-\lambda)^s(-\lambda-z_N)^s,\ \ \ \ \lambda\in\mathbb{C}\setminus[-z_N,\infty)\ \ \ \textnormal{so that}\ \ \ \varphi(\lambda)>0\ \ \textnormal{for}\ \ \lambda<-z_N,
\end{equation*}
and $\mathbb{D}_{\epsilon}(0)\ni\lambda\mapsto E(\lambda)$ denotes the analytic multiplier
\begin{equation*}
	E(\lambda)=E(\lambda;z_N,s):=P(\lambda;z_N,s)\varphi(\lambda;z_N,s)^{\frac{1}{2}\sigma_3}E_0\big(\zeta(\lambda;z_N);s\big)^{-1}\begin{bmatrix}1 & \im\sqrt{\beta} s\\ 0 & 1\end{bmatrix},\ \ \lambda\in\mathbb{D}_{\epsilon}(0),
\end{equation*}
with
\begin{equation*}
	E_0(\zeta;s):=\zeta^{\frac{1}{4}\sigma_3}\frac{1}{\sqrt{2}}\begin{bmatrix}1&1\\ -1&1\end{bmatrix}\e^{-\im\frac{\pi}{4}\sigma_3}\exp\left[\im\zeta^{\frac{1}{2}}\int_{-\beta}^0\frac{s\sigma_3}{2\sqrt{-x}}\frac{\d x}{x-\zeta}\right],\ \ \zeta\in\mathbb{C}\setminus[-\beta,\infty),\ \ \ \textnormal{arg}\,\zeta\in(0,2\pi).
\end{equation*}
\begin{remark} Analyticity of $\lambda\mapsto E(\lambda)$ in $\mathbb{D}_{\epsilon}(0)$ follows from the observation that $\lambda\mapsto\zeta(\lambda;z_N)$ is locally conformal and so
\begin{equation*}
	E_{0+}(\zeta)=E_{0-}(\zeta)\begin{bmatrix}0&1\\ -1&0\end{bmatrix},\ \zeta\in(0,\epsilon);\ \ \ \ \ \ E_{0+}(\zeta)=E_{0-}(\zeta)\begin{cases}\displaystyle \e^{-\im\pi s\sigma_3},&\zeta\in(-\beta,0)\\
	 I,& \zeta\in(-\epsilon,-\beta)
	 \end{cases},
\end{equation*}
tell us that $E(\lambda)$ is analytic in $\mathbb{D}_{\epsilon}(0)\setminus\{-z_N,0\}$, where we assume $N\geq N_0$ so $z_N\in\mathbb{D}_{\epsilon}(0)$. On the other hand, our explicit formul\ae\,yield the rough estimates
\begin{equation*}
	E(\lambda)=\mathcal{O}(1),\ \ \ \lambda\rightarrow -z_N;\ \ \ \ \ \ \ \ E(\lambda)=\mathcal{O}\big(|\lambda|^{-\frac{1}{2}}\big),\ \ \ \lambda\rightarrow 0,
\end{equation*}
and thus $\lambda\mapsto E(\lambda)$ must be analytic in $\mathbb{D}_{\epsilon}(0)$.
\end{remark}

\begin{remark} By RHP \ref{PIIImodel}, condition $(4)$,
\begin{align*}
	Q(\zeta;x,s)\e^{\im\varpi(\zeta;x)\sigma_3}E_0(\zeta;s)^{-1}=\begin{bmatrix}1&-\im\sqrt{\beta} s\\ 0&1\end{bmatrix}&\Bigg\{I+\frac{1}{\zeta}\begin{bmatrix}\beta s^2/2 & \im\beta^{\frac{3}{2}}(s-s^3)/3\smallskip\\ -\im\sqrt{\beta} s & -\beta s^2/2\end{bmatrix}+\frac{1}{\zeta}\begin{bmatrix}1&\im\sqrt{\beta} s\\ 0 &1\end{bmatrix}\\
	&\hspace{1cm}\times Q_1(x,s)\begin{bmatrix}1 & -\im\sqrt{\beta} s\\ 0&1\end{bmatrix}+\mathcal{O}\big(\zeta^{-2}\big)\Bigg\},
\end{align*}
as $\zeta\rightarrow\infty$ and $\zeta\notin\Sigma_Q$, uniformly for $x>0$ and $s>-\frac{1}{2}$, both chosen from compact subsets.
\end{remark}

What results from RHP \ref{PIIImodel} and the various branch conventions are the following analytic and asymptotic, as $N\rightarrow\infty$, properties of $U(\lambda)$.
\begin{problem}[Origin parametrix]\label{orip} Let $s>-\frac{1}{2},z>0$ and $N\geq N_0$ so that $z_N\in\mathbb{D}_{\epsilon}(0)$. The model function $U(\lambda)=U(\lambda;z,s,N)\in\mathbb{C}^{2\times 2}$ defined in \eqref{e61} has the following properties:
\begin{enumerate}
	\item[(1)] $\lambda\mapsto U(\lambda)$ is analytic for $\lambda\in\mathbb{D}_{\epsilon}(0)\setminus\Sigma_T$ where we assume that $\gamma_{\pm}$ in Figure \ref{fig6} fall into $\lambda=0$ at angle $\frac{\pi}{3}$ and $\frac{5\pi}{3}$. Moreover, $U(\lambda)$ attains continuous limiting values $U_{\pm}(\lambda),\lambda\in(\Sigma_T\cap\mathbb{D}_{\epsilon}(0))\setminus\{0\}$.
	\item[(2)] The limiting values $U_{\pm}(\lambda),\lambda\in(\Sigma_T\cap\mathbb{D}_{\epsilon}(0))\setminus\{0\}$ satisfy the jump conditions
	\begin{align*}
		U_+(\lambda)=&\,\,U_-(\lambda)\begin{bmatrix}0 & \phi(\lambda)\\ -\phi(\lambda)^{-1}&0\end{bmatrix},\ \ \lambda\in(0,\epsilon);\\
		U_+(\lambda)=&\,\,U_-(\lambda)\begin{bmatrix}1&0\\ \e^{-2N\xi(\lambda)}\phi(\lambda)^{-1}&1\end{bmatrix},\ \ \lambda\in\big(\gamma_{\pm}\cap\mathbb{D}_{\epsilon}(0)\big)\setminus\{0\}.
	\end{align*}
	\item[(3)] Near $\lambda=0$, $U(\lambda)$ is weakly singular and of the form
	\begin{equation*}
		U(\lambda)=\widehat{U}(\lambda)\begin{cases}\begin{bmatrix}1 & 0\\ \e^{-2N\xi(\lambda)}\phi(\lambda)^{-1} & 1\end{bmatrix}^{-1},&\lambda\in\Omega_+\cap\mathbb{D}_{\epsilon}(0)\smallskip\\
		\begin{bmatrix}1 & 0\\ \e^{-2N\xi(\lambda)}\phi(\lambda)^{-1} & 1\end{bmatrix},&\lambda\in\Omega_-\cap\mathbb{D}_{\epsilon}(0)\\ I,&\lambda\in\mathbb{D}_{\epsilon}(0)\setminus\big(\overline{\Omega_+}\cup\overline{\Omega_-}\big)\end{cases},
	\end{equation*}
	with $\Omega_{\pm}$ shown in Figure \ref{fig6} where
	\begin{equation*}
		\widehat{U}(\lambda)=\mathcal{O}\begin{bmatrix}1 & |\lambda|^s\\ 1&|\lambda|^s\end{bmatrix},\ \ -\frac{1}{2}<s<0;\ \ \ \ \ \ \ \widehat{U}(\lambda)=\mathcal{O}\begin{bmatrix}1 & \ln|\lambda|\\ 1&\ln|\lambda|\end{bmatrix},\ \ s=0,
	\end{equation*}
	and $\widehat{U}(\lambda)$ bounded at $\lambda=0$ for all other $s>0$.
	\item[(4)] As $N\rightarrow\infty$, $U(\lambda)$ in \eqref{e61} compares to $P(\lambda)$ in \eqref{e59} as follows:
	\begin{equation}\label{e62}
		U(\lambda)=\bigg\{I+U_N(\lambda;z_N,s)+\mathcal{O}\big(N^{-2}\big)\bigg\}P(\lambda),
	\end{equation}
	uniformly for $0<\delta_1\leq|\lambda|\leq\delta_2<\frac{1}{2}$, for any fixed $z>0$ and $s>-\frac{1}{2}$. Here,
	\begin{equation*}
		U_N(\lambda;z_N,s)=\frac{N}{\zeta(\lambda;z_N)}\Big(Q_1^{21}\big(N\sqrt{\tau(-z_N)/\beta},s\big)-\im\sqrt{\beta} s\Big)\widehat{E}_N(\lambda)\begin{bmatrix}0&0\\ 1&0\end{bmatrix}\widehat{E}_N(\lambda)^{-1}=\mathcal{O}\big(N^{-1}\big),
	\end{equation*}
	that we express in terms of the locally analytic
	\begin{equation*}
		\widehat{E}_N(\lambda)=\widehat{E}_N(\lambda;z_N,s):=E(\lambda;z_N,s)\begin{bmatrix}1&-\im\sqrt{\beta} s\\ 0 & 1\end{bmatrix}N^{\frac{1}{2}\sigma_3},\ \ \ \lambda\in\mathbb{D}_{\epsilon}(0),
	\end{equation*}
	and we have used that $\widehat{E}_N(\lambda)=\mathcal{O}(1)$ as $N\rightarrow\infty$ for $0<\delta_1\leq|\lambda|\leq\delta_2<\frac{1}{2}$ and similarly $\zeta(\lambda;z_N)=\mathcal{O}(N^2)$ in the same limit by \eqref{e60}. Note also that
	\begin{equation*}
		x=N\sqrt{\tau(-z_N)/\beta}=2\sqrt{\frac{z}{\beta}}\Big(1+\mathcal{O}\big(N^{-2}\big)\Big),\ \ \ N\rightarrow\infty,
	\end{equation*}
	uniformly for $z>0$ chosen on compact subsets.
\end{enumerate}
\end{problem}
The above concludes our explicit construction near $\lambda=0$, we now move to a neighbourhood of $\lambda=1$.
\begin{remark} By conditions $(2)$ and $(3)$ in RHP \ref{orip} we conclude that
\begin{equation*}
	\lambda\mapsto T(\lambda;z,s,N)U(\lambda;z,s,N)^{-1}
\end{equation*}
is analytic at $\lambda=0$.
\end{remark}

\subsection{The parametrix at $\lambda=1$} Seeing that
\begin{equation*}
	\xi(\lambda)\stackrel{\eqref{e55}}{=}-\frac{4}{3}(\lambda-1)^{\frac{3}{2}}\bigg(1-\frac{3}{10}(\lambda-1)+\mathcal{O}\big((\lambda-1)^2\big)\bigg),\ \ \ \lambda\rightarrow 1,\ \ \lambda\notin[0,1],\ \ \textnormal{arg}(\lambda-1)\in(-\pi,\pi),
\end{equation*}
the relevant construction is standard, explicit, compare \cite[page $150$]{V}, and invokes Airy functions. We begin by introducing the locally conformal change of coordinates
\begin{equation}\label{e63}
	\mathbb{D}_{\epsilon}(1)\ni\lambda\mapsto \zeta(\lambda)=\zeta(\lambda;N):=\bigg[-\frac{3N}{2}\xi(\lambda)\bigg]^{\frac{2}{3}},
\end{equation}
that has $\zeta(1)=0$ and we consider the model function
\begin{equation}\label{e64}
	V(\lambda;z,s,N):=F(\lambda)A(\zeta)\e^{\frac{2}{3}\zeta^{\frac{3}{2}}\sigma_3}\phi(\lambda;z_N,s)^{-\frac{1}{2}\sigma_3}\Big|_{\zeta=\zeta(\lambda;N)},\ \ \ \lambda\in\mathbb{D}_{\epsilon}(1)\setminus\Sigma_T,
\end{equation}
which is defined in terms of the solution $A(\zeta)$ of RHP \ref{Airymodel}, where
\begin{equation*}
	\phi(\lambda)=\phi(\lambda;z_N,s)=(\lambda+z_N)^s\lambda^s,\ \ \ \lambda\in\mathbb{C}\setminus(-\infty,0],
\end{equation*}
appeared in \eqref{e58} and $\mathbb{D}_{\epsilon}(1)\ni\lambda\mapsto F(\lambda)$ denotes the analytic multiplier
\begin{equation*}
	F(\lambda)=F(\lambda;z_N,s):=P(\lambda;z_N,s)\phi(\lambda;z_N,s)^{\frac{1}{2}\sigma_3}F_0\big(\zeta(\lambda;N)\big)^{-1},\ \ \ \lambda\in\mathbb{D}_{\epsilon}(1),
\end{equation*}
with
\begin{equation*}
	F_0(\zeta):=\zeta^{-\frac{1}{4}\sigma_3}\frac{1}{\sqrt{2}}\begin{bmatrix}1&1\\ -1 & 1\end{bmatrix}\e^{-\im\frac{\pi}{4}\sigma_3},\ \ \ \zeta\in\mathbb{C}\setminus(-\infty,0],\ \ \ \textnormal{arg}\,\zeta\in(-\pi,\pi).
\end{equation*}
\begin{remark} Analyticity of $\lambda\mapsto F(\lambda)$ in $\mathbb{D}_{\epsilon}(1)$ follows from the observation that $\lambda\mapsto\zeta(\lambda;N)$ is locally conformal and so
\begin{equation*}
	F_{0+}(\zeta)=F_{0-}(\zeta)\begin{bmatrix}0&1\\ -1 & 0\end{bmatrix},\ \ \ \zeta\in(-\epsilon,0);\ \ \ \ \ \ \ \ F_{0+}(\zeta)=F_{0-}(\zeta),\ \ \ \zeta\in(0,\epsilon),
\end{equation*}
tell us that $F(\lambda)$ is analytic in $\mathbb{D}_{\epsilon}(1)\setminus\{1\}$. On the other hand, our explicit formul\ae\,yield the rough estimate
\begin{equation*}
	F(\lambda)=\mathcal{O}\big(|\lambda-1|^{-\frac{1}{2}}\big),\ \ \ \ \lambda\rightarrow 1,
\end{equation*}
and thus $\lambda\mapsto F(\lambda)$ must be analytic in $\mathbb{D}_{\epsilon}(1)$.
\end{remark}
What results from \eqref{e63} are the following analytic and asymptotic properties of $V(\lambda)$.
\begin{problem}[Parametrix at $\lambda=1$]\label{1para} Let $s>-\frac{1}{2}$ and $z>0$. The model function $V(\lambda)=V(\lambda;z,N,s)\in\mathbb{C}^{2\times 2}$ defined in \eqref{e64} has the following properties:
\begin{enumerate}
	\item[(1)] $\lambda\mapsto V(\lambda)$ is analytic for $\lambda\in\mathbb{D}_{\epsilon}(1)\setminus\Sigma_T$ where we assume that $\gamma_{\pm}$ in Figure \ref{fig6} fall into $\lambda=1$ at angle $\frac{2\pi}{3}$ and $-\frac{2\pi}{3}$. Moreover, $V(\lambda)$ attains continuous limiting values $V_{\pm}(\lambda),\lambda\in(\Sigma_T\cap\mathbb{D}_{\epsilon}(1))\setminus\{1\}$.
	\item[(2)] The limiting values $V_{\pm}(\lambda),\lambda\in(\Sigma_T\cap\mathbb{D}_{\epsilon}(1))\setminus\{1\}$ satisfy the jump conditions
	\begin{align*}
		V_+(\lambda)=&\,\,V_-(\lambda)\begin{bmatrix}0&\phi(\lambda)\\ -\phi(\lambda)^{-1}&0\end{bmatrix},\ \ \ \lambda\in(1-\epsilon,1);\\
		V_+(\lambda)=&\,\,V_-(\lambda)\begin{bmatrix}1&\phi(\lambda)\e^{2N\xi(\lambda)}\\ 0 &1\end{bmatrix},\ \ \ \lambda\in(1,1+\epsilon);\\
		V_+(\lambda)=&\,\,V_-(\lambda)\begin{bmatrix}1&0\\ \e^{-2N\xi(\lambda)}\phi(\lambda)^{-1}&1\end{bmatrix},\ \ \ \lambda\in\big(\gamma_{\pm}\cap\mathbb{D}_{\epsilon}(1)\big)\setminus\{1\}.
	\end{align*}
	\item[(3)] Near $\lambda=1$, $V(\lambda)$ is bounded.
	\item[(4)] As $N\rightarrow\infty$, $V(\lambda)$ in \eqref{e63} compares to $P(\lambda)$ in \eqref{e59} as follows:
	\begin{equation}\label{e65}
		V(\lambda)=\bigg\{I+V_N(\lambda;z_N,s)+\mathcal{O}\big(N^{-2}\big)\bigg\}P(\lambda),
	\end{equation}
	uniformly for $0<\delta_1\leq|\lambda-1|\leq\delta_2<\frac{1}{2}$, for any fixed $z>0$ and $s>-\frac{1}{2}$. Here,
	\begin{equation*}
		V_N(\lambda;z_N,s)=-\frac{7N^{-\frac{1}{3}}}{48\zeta(\lambda;N)}\widehat{F}_N(\lambda)\begin{bmatrix}0&0\\ 1&0\end{bmatrix}\widehat{F}_N(\lambda)^{-1}+\frac{5N^{\frac{1}{3}}}{48\zeta(\lambda;N)^2}\widehat{F}_N(\lambda)\begin{bmatrix}0&1\\ 0&0\end{bmatrix}\widehat{F}_N(\lambda)^{-1}=\mathcal{O}\big(N^{-1}\big),
	\end{equation*}
	that we express in terms of the locally analytic
	\begin{equation*}
		\widehat{F}_N(\lambda)=\widehat{F}_N(\lambda;z_N,s):=F(\lambda;z_N,s)N^{-\frac{1}{6}\sigma_3},\ \ \ \lambda\in\mathbb{D}_{\epsilon}(1),
	\end{equation*}
	and where we have used that $\widehat{F}_N(\lambda)=\mathcal{O}(1)$ as $N\rightarrow\infty$ for $0<\delta_1\leq|\lambda-1|\leq\delta_2<\frac{1}{2}$ and similarly $\zeta(\lambda;N)=\mathcal{O}(N^{\frac{2}{3}})$ in the same limit by \eqref{e63}.
\end{enumerate}
\end{problem}
The above concludes our explicit construction near $\lambda=1$.
\begin{remark} By conditions $(2)$ and $(3)$ in RHP \ref{1para} we conclude that
\begin{equation*}
	\lambda\mapsto T(\lambda;z,s,N)V(\lambda;z,s,N)^{-1}
\end{equation*}
is analytic at $\lambda=1$. Furthermore, the restriction $2s>-1$ and $z>0$ is easily lifted to $\textnormal{Re}(2s)>-1$ and $\textnormal{Re}(z)>0$ in \eqref{e64}, which is not the case when working with \eqref{e61}.
\end{remark}

\subsection{The final ratio transformation} We fix $0<\epsilon<\frac{1}{4}$ and compare the explicit $P(\lambda),U(\lambda),V(\lambda)$ in \eqref{e59},\eqref{e61},\eqref{e64} to $T(\lambda)$ from RHP \ref{RHP5}. This amounts to defining
\begin{equation}\label{e66}
	R(\lambda;z,s,N):=T(\lambda;z,s,N)\begin{cases}U(\lambda;z,s,N)^{-1},&\lambda\in\mathbb{D}_{\epsilon}(0)\smallskip\\
	V(\lambda;z,s,N)^{-1},&\lambda\in\mathbb{D}_{\epsilon}(1)\smallskip\\
	P(\lambda;z_N,s)^{-1},&\lambda\in\mathbb{C}\setminus(\Sigma_T\cup\overline{\mathbb{D}_{\epsilon}(0)}\cup\overline{\mathbb{D}_{\epsilon}(1)})
	\end{cases},
\end{equation}
and collecting the properties of the resulting ratio function below.
\begin{problem}[Small norm problem]\label{ratio} Let $s>-\frac{1}{2},z>0$ and $N\geq N_0$ so that $z_N\in\mathbb{D}_{\epsilon}(0)$. The function $R(\lambda)=R(\lambda;z,s,N)\in\mathbb{C}^{2\times 2}$ defined in \eqref{e66} has the following properties:
\begin{enumerate}
	\item[(1)] $\lambda\mapsto R(\lambda)$ is analytic for $\lambda\in\mathbb{C}\setminus\Sigma_R$ with $\Sigma_R$ shown in Figure \ref{fig7}. On $\Sigma_R$, $R(\lambda)$ admits continuous boundary values $R_{\pm}(\lambda)$ as one approaches $\Sigma_R$ from either side of $\mathbb{C}\setminus\Sigma_R$ in a non-tangential fashion.
	\begin{figure}[tbh]
	\begin{tikzpicture}[xscale=0.9,yscale=0.9]
	        \draw [->] (-4,0) -- (5,0) node [right] {$\footnotesize{\textnormal{Re}(\lambda)}$};
	\draw [thick, color=red, decoration={markings, mark=at position 0.25 with {\arrow{>}}},decoration={markings, mark=at position 0.5 with {\arrow{>}}},  decoration={markings, mark=at position 0.85 with {\arrow{>}}}, postaction={decorate}] (-2.5,0) -- (5,0);
	  \draw [->] (-2.5,-1.8) -- (-2.5,1.8) node [above] {$\footnotesize{\textnormal{Im}(\lambda)}$};
		\draw[thick,color=red, decoration={markings, mark=at position 0.5 with {\arrow{<}}}, postaction={decorate} ] (2.5,0) arc (38.65980825:141.3401918:3.201562119);
		\draw[thick,color=red, decoration={markings, mark=at position 0.5 with {\arrow{<}}}, postaction={decorate} ] (2.5,0) arc (-38.65980825:-141.3401918:3.201562119);
	\node [right] at (-1.5,1.4) {{\footnotesize $\partial\Omega_+$}};
	\node [right] at (-1.5,-1.4) {{\footnotesize $\partial\Omega_-$}};
	\draw [thick, color=red, fill=white, decoration={markings, mark=at position 0.07 with {\arrow{<}}}, decoration={markings, mark=at position 0.4 with {\arrow{<}}}, decoration={markings, mark=at position 0.7 with {\arrow{<}}},postaction={decorate}] (-2.5,0) circle [radius=0.6];
	\draw [thick, color=red, fill=white, decoration={markings, mark=at position 0.15 with {\arrow{<}}}, decoration={markings, mark=at position 0.45 with {\arrow{<}}}, decoration={markings, mark=at position 0.75 with {\arrow{<}}},postaction={decorate}] (2.475,0) circle [radius=0.6];
	\draw[fill, color=blue!60!black] (2.475,0) circle [radius=0.04];
	\draw[fill, color=blue!60!black](-2.5,0) circle [radius=0.04];
\end{tikzpicture}
\caption{The oriented jump contour $\Sigma_R$, in the complex $\lambda$-plane drawn in \textcolor{red}{red}, for $R(\lambda)$. The points $\lambda=0,1$ are colored in \textcolor{blue!60!black}{blue}.}
\label{fig7}
\end{figure}
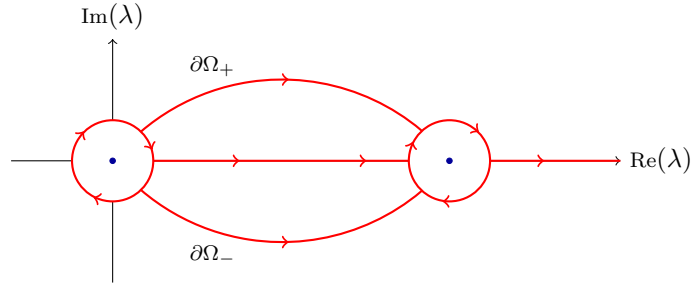

	\item[(2)] On $\Sigma_R$ we record the jump condition $R_+(\lambda)=R_-(\lambda)G_R(\lambda)$ with $G_R(\lambda)=G_R(\lambda;z,s,N)$ given as
	\begin{align*}
		G_R(\lambda)=&\,\,I+\e^{-2N\xi(\lambda)}\phi(\lambda)^{-1}P(\lambda)\begin{bmatrix}0&0\\ 1&0\end{bmatrix}P(\lambda)^{-1},\ \ \lambda\in\partial\Omega_{\pm};\\
		G_R(\lambda)=&\,\,I+\e^{2N\xi(\lambda)}\phi(\lambda)P(\lambda)\begin{bmatrix}0&1\\ 0&0\end{bmatrix}P(\lambda)^{-1},\ \ \lambda\in(1+\epsilon,\infty);
	\end{align*}
	as well as
	\begin{equation*}
		G_R(\lambda)=U(\lambda)P(\lambda)^{-1},\ \ \ \lambda\in\partial\mathbb{D}_{\epsilon}(0);\ \ \ \ \ \ \ \ G_R(\lambda)=V(\lambda)P(\lambda)^{-1},\ \ \ \lambda\in\partial\mathbb{D}_{\epsilon}(1).
	\end{equation*}
	By construction, $R(\lambda)$ has no jumps inside $\mathbb{D}_{\epsilon}(0)\cup\mathbb{D}_{\epsilon}(1)$ and on $(\epsilon,1-\epsilon)\subset\mathbb{R}$.
	\item[(3)] As $\lambda\rightarrow\infty$, $R(\lambda)\rightarrow I$.
\end{enumerate}
\end{problem}
The crucial observation that underwrites RHP \ref{ratio} is contained in the below small norm estimate. That estimate is a direct consequence of Proposition \ref{norm1}, of the fact that $P(\lambda)=\mathcal{O}(1)$ as $N\rightarrow\infty$ for $\lambda$ away from $\{0,1\}$, and of the matching conditions \eqref{e62} and \eqref{e65}.
\begin{prop} Fix $s>-\frac{1}{2}$ and $z>0$. There exist $c=c(s,z)>0,N_0=N_0(s,z)\in\mathbb{N}$ so that
\begin{equation*}
	\|G_R(\cdot;z,s,N)-I\|_{L^2\cap L^{\infty}(\Sigma_R)}\leq\frac{c}{N}\ \ \ \forall\,\epsilon\in\Big(0,\frac{1}{4}\Big),\ \ N\geq N_0.
\end{equation*}
\end{prop}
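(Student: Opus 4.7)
The plan is to partition $\Sigma_R$ into the four pieces visible in Figure~\ref{fig7} — the circle $\partial\mathbb{D}_\epsilon(0)$, the circle $\partial\mathbb{D}_\epsilon(1)$, the two lens arcs $\partial\Omega_\pm$, and the outgoing ray $(1+\epsilon,\infty)$ — and to estimate $G_R-I$ separately on each, assembling the pieces via the triangle inequality. All ingredients have already been prepared earlier in the section: Proposition~\ref{norm1} handles the contributions away from the disks, boundedness of $P(\lambda)^{\pm 1}$ on any compact set disjoint from $\{0,1\}$ (immediate from~\eqref{Sze} and~\eqref{e59}) controls the conjugation factors, and the matching expansions~\eqref{e62} and~\eqref{e65} supply the contributions on the two circles.

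On the disk boundaries, the definition~\eqref{e66} gives $G_R(\lambda)=U(\lambda)P(\lambda)^{-1}$ on $\partial\mathbb{D}_\epsilon(0)$ and $G_R(\lambda)=V(\lambda)P(\lambda)^{-1}$ on $\partial\mathbb{D}_\epsilon(1)$. On these arcs $\lambda$ lies at distance exactly $\epsilon$ from $0$ and from $1$, respectively, so the uniform matching formul\ae~\eqref{e62} and~\eqref{e65} apply and yield
\begin{equation*}
G_R(\lambda)-I=U_N(\lambda;z_N,s)+\mathcal{O}\big(N^{-2}\big)\ \ \textnormal{and}\ \ G_R(\lambda)-I=V_N(\lambda;z_N,s)+\mathcal{O}\big(N^{-2}\big),
\end{equation*}
with both $U_N,V_N=\mathcal{O}(N^{-1})$ uniformly on the respective circles by RHP~\ref{orip}(4) and RHP~\ref{1para}(4). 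Since the circles have finite length, this bound transfers from $L^\infty$ to $L^2$ with the same rate $c/N$.

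On the lens arcs $\partial\Omega_\pm$ (restricted to the exterior of the disks) and on the ray $(1+\epsilon,\infty)$, the jump $G_R(\lambda)-I$ is the rank-one matrix proportional to $\e^{-2N\xi(\lambda)}\phi(\lambda)^{-1}$ or $\e^{2N\xi(\lambda)}\phi(\lambda)$, respectively. On $\partial\Omega_\pm\setminus(\mathbb{D}_\epsilon(0)\cup\mathbb{D}_\epsilon(1))$ we have $\textnormal{Re}(\xi(\lambda))\geq c_1>0$ uniformly by the sign properties of $\xi$ established after~\eqref{e55}, while $\phi(\lambda)^{-1}$ and $P(\lambda)^{\pm 1}$ are uniformly bounded there; hence $\|G_R-I\|_{L^\infty(\partial\Omega_\pm)}\leq c\,\e^{-c_1 N}$ and finite arc length gives the same exponential bound in $L^2$. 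On $(1+\epsilon,\infty)$ the function $\xi(\lambda)$ is negative and grows linearly in $\lambda$, so $\e^{2N\xi(\lambda)}$ easily dominates the polynomial factor $\phi(\lambda)=\mathcal{O}(\lambda^{2s})$ and the bounded contribution from $P(\lambda)$, giving $\|G_R-I\|_{L^2\cap L^\infty((1+\epsilon,\infty))}\leq c\,\e^{-c_2 N}$, consistent with Proposition~\ref{norm1}.

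Collecting the four estimates, the exponentially small contributions from the lens arcs and the ray are dwarfed by the $\mathcal{O}(N^{-1})$ contribution from the disk boundaries, which yields $\|G_R-I\|_{L^2\cap L^\infty(\Sigma_R)}\leq c/N$ for $N\geq N_0$. No step is a genuine obstacle; the only bookkeeping requirement is that all constants remain uniform in $s>-\tfrac{1}{2}$ and $z>0$ on compact subsets, which is already built into the uniformity clauses of the matching estimates in RHP~\ref{orip}(4) and RHP~\ref{1para}(4) and into Proposition~\ref{norm1}.
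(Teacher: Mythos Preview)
Your proposal is correct and follows essentially the same approach as the paper, which states (without further detail) that the estimate is a direct consequence of Proposition~\ref{norm1}, the boundedness of $P(\lambda)$ away from $\{0,1\}$, and the matching conditions~\eqref{e62} and~\eqref{e65}. You have simply elaborated these three ingredients into a careful piecewise estimate.
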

So, RHP \ref{ratio} is a small norm problem and thus the general theory of \cite{DZ} applies to its solvability, for $N\geq N_0$.
\begin{theo}\label{newby1} For every $s>-\frac{1}{2}$ and $z>0$, there exist $c=c(s,z)>0$ and $N_0=N_0(s,z)\in\mathbb{N}$ so that RHP \ref{ratio} is uniquely solvable for all $N\geq N_0$ and any $\epsilon\in(0,\frac{1}{4})$. Its solution admits the integral representation
\begin{equation}\label{e67}
	R(\lambda)=I+\frac{1}{2\pi\im}\int_{\Sigma_R}R_-(\zeta)\big(G_R(\zeta)-I\big)\frac{\d\zeta}{\zeta-\lambda},\ \ \ \lambda\in\mathbb{C}\setminus\Sigma_R,
\end{equation}
where
\begin{equation}\label{e68}
	\|R_-(\cdot;z,s,N)-I\|_{L^2(\Sigma_R)}\leq\frac{c}{N}\ \ \ \ \forall\,N\geq N_0,\ \ \ \epsilon\in\Big(0,\frac{1}{4}\Big).
\end{equation}
\end{theo}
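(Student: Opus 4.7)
The plan is to apply the standard Deift-Zhou small-norm theory for Riemann-Hilbert problems, cf. \cite{DZ,DKMVZ}, which is directly enabled by the preceding proposition giving $\|G_R - I\|_{L^2 \cap L^\infty(\Sigma_R)} \leq c/N$ for $N \geq N_0$.

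First I would recast RHP \ref{ratio} as a singular integral equation on $\Sigma_R$. Let $C_\pm : L^2(\Sigma_R) \to L^2(\Sigma_R)$ denote the $L^2$-bounded Plemelj operators for the Cauchy transform on $\Sigma_R$, and define $C_{G_R} f := C_-\!\big(f(G_R - I)\big)$. The uniform $L^2$-boundedness of $C_-$ on $\Sigma_R$ (a finite union of smooth arcs and circles meeting transversally at $\lambda = 0, 1$ and at $\lambda = 1+\epsilon$) together with the $L^\infty$-bound on $G_R - I$ yields the operator-norm estimate $\|C_{G_R}\|_{L^2 \to L^2} \leq c'/N$. Hence $(I - C_{G_R})^{-1}$ exists as a Neumann series on $L^2(\Sigma_R)$ for every $N \geq N_0$, and the unique solution $\mu \in I + L^2(\Sigma_R)$ of $\mu = I + C_{G_R}\mu$ produces, via the Cauchy transform, a function $R$ analytic off $\Sigma_R$ with $R(\lambda) \to I$ as $\lambda \to \infty$ and boundary jump $G_R$. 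This yields both existence and the integral representation \eqref{e67} through the Plemelj-Sokhotski formula.

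For the quantitative bound \eqref{e68}, I would write $R_- - I = (I - C_{G_R})^{-1} C_-(G_R - I)$ and estimate
\[
\|R_- - I\|_{L^2(\Sigma_R)} \leq \big(1 - \|C_{G_R}\|_{L^2 \to L^2}\big)^{-1} \|C_-\|_{L^2 \to L^2}\, \|G_R - I\|_{L^2(\Sigma_R)} \leq \frac{c}{N},
\]
uniformly for $\epsilon \in (0, \tfrac14)$, since all implicit constants depend on the geometry of $\Sigma_R$ only through bounds absorbable into $c$. Uniqueness within the class of solutions normalised to $I$ at infinity is standard: given two solutions $R, \widetilde R$ of RHP \ref{ratio}, the product $R \widetilde R^{-1}$ is analytic across $\Sigma_R$ because both factors share the same jump $G_R$, is bounded at the self-intersection points $\lambda = 0, 1$ (where both $R$ and $\widetilde R$ are bounded thanks to the parametrices $U, V$), and tends to $I$ at infinity, so Liouville forces $R \equiv \widetilde R$.

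I anticipate no substantive obstacle: all the hard work has already been done in constructing the parametrices $P, U, V$ so that the matching estimates \eqref{e62}, \eqref{e65} together with Proposition \ref{norm1} give $G_R - I$ small in $L^2 \cap L^\infty$. The only technical point worth mentioning is verifying admissibility of $\Sigma_R$ for the Cauchy operator theory, namely that it is a finite union of smooth arcs with finitely many transversal self-intersections, which is immediate from Figure \ref{fig7}, so the small-norm machinery of \cite{DZ} applies verbatim.
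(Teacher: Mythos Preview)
Your proposal is correct and follows essentially the same approach as the paper: the paper simply invokes the general small-norm theory of \cite{DZ} for Theorem~\ref{newby1}, and when it later spells out the details for the analogous small-norm problem in Section~\ref{sectiononsmallx} (explicitly noting ``we did not do the same in the context of Theorem~\ref{newby1}''), it uses precisely the singular integral reformulation, the operator $\mathcal{C}_G f = \mathcal{C}_-[f(G_Z-I)]$, and the Neumann series argument that you describe.
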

We have now completed the asymptotic legwork for RHP \ref{RHP3} and are left to harvest our results.

\subsection{Distilling asymptotics} From \eqref{e49},\eqref{e54},\eqref{e57},\eqref{e66} and \eqref{e67}  results the exact chain of identities
\begin{align*}
	u_N(z/N;s)=&\,\,N(N+2s)+o_1^{11}(z/N,s,N)=N(N+2s)+N\big(4r_1(z,s,N)-N\big)\\
	=&\,\,2sN+4N\lim_{\lambda\rightarrow\infty}\lambda\big(R(\lambda)P(\lambda)-I\big)^{11}=2sN+4N\big(P_1^{11}+R_1^{11}\big),
\end{align*}
with $P_1=P_1(z_N,s)$ written in condition $(4)$ of RHP \ref{outerRHP} and $R_1=R_1(z,s,N)$ equal to the integral
\begin{equation}\label{e68x}
	R_1=\frac{\im}{2\pi}\int_{\Sigma_R}R_-(\zeta)\big(G_R(\zeta)-I\big)\d\zeta.
\end{equation}
Evidently, for fixed $z>0$,
\begin{equation*}
	P_1^{11}=-\frac{s}{2}\frac{\sqrt{1+z_N}}{\sqrt{1+z_N}+\sqrt{z_N}}=-\frac{s}{2}\bigg(1-\frac{\sqrt{z}}{2N}+\frac{z}{4N^2}+\mathcal{O}\big(N^{-3}\big)\bigg),\ \ N\rightarrow\infty,
\end{equation*}
and we have thus obtained
\begin{lem} As $N\rightarrow\infty$,
\begin{equation}\label{e69}
	u_N(z/N;s)=s\sqrt{z}+4NR_1^{11}+\mathcal{O}\big(N^{-1}\big),
\end{equation}
for any $s>-\frac{1}{2},z>0$ on compact subsets. See \eqref{e68x} for the matrix-valued coefficient $R_1=R_1(z,s,N)$.
\end{lem}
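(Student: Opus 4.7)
The plan is to chain together the exact identities produced by the transformations \eqref{e54}, \eqref{e57}, \eqref{e66}, followed by a straightforward expansion of the explicit coefficient $P_1^{11}$ as $N\to\infty$. Since every step has already been prepared in the preceding RHPs, only a short calculation remains.

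First, I apply \eqref{e49} with $z$ replaced by $z/N$, obtaining
\begin{equation*}
u_N(z/N;s)=N(N+2s)+o_1^{11}(z/N,s,N).
\end{equation*}
To read off $o_1^{11}(z/N,s,N)$ from the normalised problem, I invoke condition~$(4)$ of RHP~\ref{RHP4}, which gives the exact relation
\begin{equation*}
r_1^{11}(z,s,N)=\frac{N}{4}+\frac{1}{4N}o_1^{11}(z/N,s,N),
\end{equation*}
since the diagonal conjugating factors $(4N)^{\pm(N+s)\sigma_3}e^{\pm\frac{N}{2}\ell\sigma_3}$ act trivially on the $(1,1)$ entry. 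Solving and substituting back yields
\begin{equation*}
u_N(z/N;s)=2sN+4Nr_1^{11}(z,s,N).
\end{equation*}

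Next, for $\lambda$ large (outside both disks $\mathbb{D}_\epsilon(0),\mathbb{D}_\epsilon(1)$) the ratio definition \eqref{e66} reads $T(\lambda)=R(\lambda)P(\lambda)$. Expanding $R(\lambda)=I+R_1/\lambda+\mathcal{O}(\lambda^{-2})$ and $P(\lambda)=I+P_1/\lambda+\mathcal{O}(\lambda^{-2})$ as $\lambda\to\infty$ and comparing with condition~$(4)$ of RHP~\ref{RHP5} gives
\begin{equation*}
r_1=R_1+P_1,\qquad \text{in particular}\qquad r_1^{11}=R_1^{11}+P_1^{11}.
\end{equation*}

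Finally, I expand $P_1^{11}$ using its closed form from condition~$(4)$ of RHP~\ref{outerRHP},
\begin{equation*}
P_1^{11}=-\frac{s}{2}\frac{\sqrt{1+z_N}}{\sqrt{1+z_N}+\sqrt{z_N}},\qquad z_N=\frac{z}{4N^2}.
\end{equation*}
With $\sqrt{z_N}=\sqrt{z}/(2N)$ and $\sqrt{1+z_N}=1+\mathcal{O}(N^{-2})$, a Taylor expansion gives
\begin{equation*}
P_1^{11}=-\frac{s}{2}\left(1-\frac{\sqrt{z}}{2N}+\mathcal{O}(N^{-2})\right),
\end{equation*}
uniformly for $(s,z)$ on compact subsets of $(-\tfrac12,\infty)\times(0,\infty)$. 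Multiplying by $4N$ produces $4NP_1^{11}=-2sN+s\sqrt{z}+\mathcal{O}(N^{-1})$, and the $-2sN$ cancels the $2sN$ from the earlier identity, leaving \eqref{e69}.

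There is no real obstacle here: solvability of RHP~\ref{ratio} and the estimate \eqref{e68} on $R_-$ have already been delivered by Theorem~\ref{newby1}, so $R_1$ is well-defined and furnished by the integral \eqref{e68x}; the only minor point to watch is the uniformity of the $P_1^{11}$ expansion in $s$ near $-1/2$ and in $z$ on compacts bounded away from $0$, which is manifest from the explicit formula for $P_1^{11}$.
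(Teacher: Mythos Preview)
Your proof is correct and follows essentially the same route as the paper: chain the exact identities from \eqref{e49} and condition~(4) of RHP~\ref{RHP4} to reach $u_N(z/N;s)=2sN+4Nr_1^{11}$, then use $T=RP$ at infinity to split $r_1^{11}=R_1^{11}+P_1^{11}$, and finally Taylor-expand the explicit $P_1^{11}$. The paper presents the same chain slightly more tersely and carries the expansion of $P_1^{11}$ to one extra order, but there is no substantive difference.
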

Noting that by \eqref{e68}, Proposition \ref{norm1} and \eqref{e62},\eqref{e65}, through Cauchy-Schwarz inequality,
\begin{equation*}
	R_1=\frac{\im}{2\pi}\oint_{\partial\mathbb{D}_{\epsilon}(0)}\big(G_R(\zeta)-I\big)\d\zeta+\frac{\im}{2\pi}\oint_{\partial\mathbb{D}_{\epsilon}(1)}\big(G_R(\zeta)-I\big)\d\zeta+\mathcal{O}\big(N^{-2}\big),
\end{equation*}
we now utilise \eqref{e62} and \eqref{e65} to evaluate the remaining contour integrals over $\partial\mathbb{D}_{\epsilon}(0)\cup\partial\mathbb{D}_{\epsilon}(1)$, as $N\rightarrow\infty$.
\begin{prop} As $N\rightarrow\infty$, uniformly in $z>0$ and $s>-\frac{1}{2}$ chosen on compact subsets,
\begin{align}\label{e70}
	\frac{\im}{2\pi}\oint_{\partial\mathbb{D}_{\epsilon}(0)}&\,\big(G_R(\zeta)-I\big)\d\zeta=\frac{\im}{2\pi}\oint_{\partial\mathbb{D}_{\epsilon}(0)}U_N(\zeta;z_N,s)\d\zeta+\mathcal{O}\big(N^{-2}\big)\nonumber\\
	=&\,\,\frac{\im}{4N}\sqrt{\frac{z}{\beta}}\Big(Q_1^{21}\Big(2\sqrt{\frac{z}{\beta}},s\Big)-\im\sqrt{\beta}s\Big)\mathcal{D}(\infty)^{\sigma_3}\e^{\im\frac{\pi}{4}\sigma_3}\begin{bmatrix}-1&-1\\ 1 & 1\end{bmatrix}\e^{-\im\frac{\pi}{4}\sigma_3}\mathcal{D}(\infty)^{-\sigma_3}+\mathcal{O}\big(N^{-2}\big).
\end{align}
\end{prop}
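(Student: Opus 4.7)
The first equality is immediate from the matching \eqref{e62}: on $\partial\mathbb{D}_{\epsilon}(0)$ we have $G_R(\lambda)=U(\lambda)P(\lambda)^{-1}=I+U_N(\lambda;z_N,s)+\mathcal{O}(N^{-2})$ uniformly, and the remainder, integrated over a contour of fixed arclength, preserves the $\mathcal{O}(N^{-2})$ bound.

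For the second equality I would apply the residue theorem. By construction (the Remark after the definition of $E(\lambda)$), $\widehat{E}_N(\lambda)$ is analytic throughout $\mathbb{D}_{\epsilon}(0)$, so the only singularity of $U_N$ inside the disk is the simple pole at $\lambda=0$ inherited from the factor $1/\zeta(\lambda;z_N)$. Since $\partial\mathbb{D}_{\epsilon}(0)$ in $\Sigma_R$ is oriented clockwise, $\frac{\im}{2\pi}\oint_{\partial\mathbb{D}_{\epsilon}(0)}U_N\,\d\lambda=\textnormal{Res}_{\lambda=0}U_N$. Using $\zeta(\lambda;z_N)=-\beta\tau(\lambda)/\tau(-z_N)$ with $\tau(\lambda)=-16\lambda(1+\mathcal{O}(\lambda))$ and $\tau(-z_N)=4z/N^2+\mathcal{O}(N^{-4})$, one gets
\begin{equation*}
\frac{N}{\zeta'(0;z_N)}=\frac{z}{4\beta N}\bigl(1+\mathcal{O}(N^{-2})\bigr),\qquad N\sqrt{\tau(-z_N)/\beta}=2\sqrt{z/\beta}\bigl(1+\mathcal{O}(N^{-2})\bigr),
\end{equation*}
so that
\begin{equation*}
\textnormal{Res}_{\lambda=0}U_N=\frac{z}{4\beta N}\Bigl(Q_1^{21}\bigl(2\sqrt{z/\beta},s\bigr)-\im\sqrt{\beta}s\Bigr)\,\widehat{E}_N(0)\begin{bmatrix}0&0\\1&0\end{bmatrix}\widehat{E}_N(0)^{-1}+\mathcal{O}(N^{-2}).
\end{equation*}

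The remaining technical step is to compute $\widehat{E}_N(0)\begin{bmatrix}0&0\\1&0\end{bmatrix}\widehat{E}_N(0)^{-1}$ explicitly, and to show it equals $\im\sqrt{\beta/z}\,\mathcal{D}(\infty)^{\sigma_3}\e^{\im\pi\sigma_3/4}\begin{bmatrix}-1&-1\\1&1\end{bmatrix}\e^{-\im\pi\sigma_3/4}\mathcal{D}(\infty)^{-\sigma_3}$ up to $\mathcal{O}(N^{-1})$. Writing $\widehat{E}_N(\lambda)=P(\lambda)\varphi(\lambda;z_N)^{\sigma_3/2}E_0(\zeta(\lambda;z_N);s)^{-1}N^{\sigma_3/2}$, the individually singular factors at the origin (the $\delta^{\pm 1}\sim\lambda^{\mp 1/4}\e^{\pm\im\pi/4}$ inside $P$, the Fisher--Hartwig root $(-\lambda)^{s\sigma_3/2}$ of $\varphi^{\sigma_3/2}$, and the prefactor $\zeta^{-\sigma_3/4}$ of $E_0^{-1}$) combine analytically by construction. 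The scaling $\zeta(\lambda;z_N)\sim 4\beta N^2\lambda/z$ near $\lambda=0$ converts $\zeta^{-\sigma_3/4}$ into an $N^{-\sigma_3/2}$ piece that absorbs the explicit $N^{\sigma_3/2}$ in the definition of $\widehat{E}_N$, leaving an $N$-independent limit. The diagonalising constant $\frac{1}{\sqrt{2}}\begin{bmatrix}1&1\\-1&1\end{bmatrix}\e^{-\im\pi\sigma_3/4}$ in $E_0$ produces the rotation matrices $\e^{\pm\im\pi\sigma_3/4}$, while the $\mathcal{D}(\infty)^{\sigma_3}$-conjugation is inherited from the global prefactor of $P(\lambda)$ in \eqref{e59}. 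Combining with the scalar identity $(z/(4\beta N))\cdot\im\sqrt{\beta/z}=\im\sqrt{z/\beta}/(4N)$ then gives the claimed RHS.

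The main obstacle is the final matrix evaluation: the singular factors at $\lambda=0$ must be cancelled consistently with the branch conventions ($\textnormal{arg}\,\zeta\in(0,2\pi)$ in $E_0$; principal branch of $\delta$ off $[0,1]$; $\varphi$ analytic off $[-z_N,\infty)$), and one has to track the $N$-scaling carefully, since $\zeta(0;z_N)=0$ while $\zeta'(0;z_N)=\mathcal{O}(N^2)$, so the cancellation between the $\zeta^{-\sigma_3/4}$ from $E_0^{-1}$ and the $N^{\sigma_3/2}$ in $\widehat{E}_N$ is what determines the correct order of magnitude of the final integral.
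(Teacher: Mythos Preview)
Your proposal is correct and follows essentially the same route as the paper: reduce to the residue of $U_N$ at $\lambda=0$ via \eqref{e62}, identify the scalar prefactor $N/\zeta'(0;z_N)=z/(4\beta N)+\mathcal{O}(N^{-3})$, and then evaluate $\widehat{E}_N(0)\sigma_-\widehat{E}_N(0)^{-1}$ explicitly. The paper carries out the last step by writing down the Puiseux expansions of $\mathcal{D}(\lambda)$, $P(\lambda)\omega^{\frac{s}{2}\sigma_3}$ and $E_{0+}(\zeta;s)$ near the origin, obtaining $\widehat{E}_N(0)=\mathcal{D}(\infty)^{\sigma_3}\e^{\im\frac{\pi}{4}\sigma_3}\frac{1}{\sqrt{2}}\bigl[\begin{smallmatrix}\kappa_1+\kappa_2 & -\kappa_1^{-1}\\ \kappa_1-\kappa_2 & \kappa_1^{-1}\end{smallmatrix}\bigr]\e^{-\im\frac{\pi}{4}\sigma_3}+\mathcal{O}(\lambda)$ with $\kappa_1^2\to\tfrac{1}{2}\sqrt{z/\beta}$, from which your claimed identity $\widehat{E}_N(0)\sigma_-\widehat{E}_N(0)^{-1}=\im\sqrt{\beta/z}\,\mathcal{D}(\infty)^{\sigma_3}\e^{\im\pi\sigma_3/4}\bigl[\begin{smallmatrix}-1&-1\\1&1\end{smallmatrix}\bigr]\e^{-\im\pi\sigma_3/4}\mathcal{D}(\infty)^{-\sigma_3}+\mathcal{O}(N^{-1})$ follows by direct matrix multiplication (the factor $\im$ comes from $\e^{-\im\pi\sigma_3/4}\sigma_-\e^{\im\pi\sigma_3/4}=\im\sigma_-$, and $\kappa_2$ drops out of the conjugation).
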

\begin{proof} Suppose $\lambda<0$. Then as $\omega:=-\lambda\downarrow 0$, we have the Puiseux expansion
\begin{equation*}
	\mathcal{D}(\lambda)\stackrel{\eqref{Sze}}{=}(\omega z_N)^{\frac{s}{2}}\bigg[1+s\big(\sqrt{1+z_N}-2\sqrt{z_N}\big)\sqrt{\frac{\omega}{z_N}}-\frac{s}{2}\big(4s\sqrt{z_N(1+z_N)}-s(1+5z_N)+1\big)\frac{\omega}{z_N}+\mathcal{O}\big(\omega^{\frac{3}{2}}\big)\bigg],
\end{equation*}
from which it follows that, in the same limit $\omega=-\lambda\downarrow 0$,
\begin{align*}
	P(\lambda;z_N,s)\omega^{\frac{s}{2}\sigma_3}=\mathcal{D}(\infty)^{\sigma_3}\frac{1}{2}&\,\e^{\im\frac{\pi}{4}\sigma_3}\bigg\{\frac{\iota(\omega)}{\sqrt[4]{\omega}}\begin{bmatrix}1&-1\\ -1 & 1\end{bmatrix}+\frac{\sqrt[4]{\omega}}{\iota(\omega)}\begin{bmatrix}1 & 1\\ 1 & 1\end{bmatrix}\bigg\}z_N^{-\frac{s}{2}\sigma_3}\bigg\{I-s\sigma_3\big(\sqrt{1+z_N}-2\sqrt{z_N}\big)\\
	&\,\times\sqrt{\frac{\omega}{z_N}}+\frac{s}{2}\Big(\big((1+5z_N)s-4s\sqrt{z_N(1+z_N)}\big)I+\sigma_3\Big)\frac{\omega}{z_N}+\mathcal{O}\big(\omega^{\frac{3}{2}}\big)\bigg\}\e^{-\im\frac{\pi}{4}\sigma_3},
\end{align*}
where $\omega\mapsto\iota(\omega):=(1+\omega)^{\frac{1}{4}}$ is analytic and non-zero at $\omega=0$. Next, for $\zeta\in(-\beta,0)$,
\begin{equation*}
	\lim_{\epsilon\downarrow 0}\int_{-\beta}^0\frac{\im\zeta^{\frac{1}{2}}\d x}{\sqrt{-x}(x-(\zeta+\im\epsilon))}=-\im\pi-\ln\bigg(\frac{\sqrt{\beta}+\sqrt{-\zeta}}{\sqrt{\beta}-\sqrt{-\zeta}}\bigg)=-\im\pi- 2\sqrt{\frac{\varsigma}{\beta}}+\mathcal{O}\big(\varsigma^{\frac{3}{2}}\big),\ \ \ \varsigma=-\zeta\downarrow 0,
\end{equation*}
where $\mathbb{C}\setminus[0,\infty)\ni\zeta\mapsto\zeta^{\frac{1}{2}}$ is defined with $\textnormal{arg}\,\zeta\in(0,2\pi)$. Consequently,
\begin{equation*}
	E_{0+}(\zeta;s)=\big(\sqrt[4]{\varsigma}\big)^{\sigma_3}\e^{\im\frac{\pi}{4}\sigma_3}\frac{1}{\sqrt{2}}\begin{bmatrix}1&1\\ -1 & 1\end{bmatrix}\e^{-\im\frac{\pi}{4}\sigma_3}\e^{-\im\frac{\pi}{2}s\sigma_3}\bigg\{I-s\sigma_3\sqrt{\frac{\varsigma}{\beta}}+\frac{Is^2 }{2\beta}\varsigma +\mathcal{O}\big(\varsigma^{\frac{3}{2}}\big)\bigg\},\ \ \ \varsigma=-\zeta\downarrow 0,
\end{equation*}
and thus all together, utilising analyticity of $\mathbb{D}_{\epsilon}(0)\ni\lambda\mapsto E(\lambda)$,
\begin{equation*}
	\widehat{E}_N(\lambda)=\mathcal{D}(\infty)^{\sigma_3}\e^{\im\frac{\pi}{4}\sigma_3}\bigg\{\frac{1}{\sqrt{2}}\begin{bmatrix}\kappa_1+\kappa_2 & -\kappa_1^{-1}\smallskip\\ \kappa_1-\kappa_2 & \kappa_1^{-1}\end{bmatrix}+\mathcal{O}(\lambda)\Bigg\}\e^{-\im\frac{\pi}{4}\sigma_3},\ \ \ \lambda\rightarrow 0,
\end{equation*}
that depends on
\begin{align*}
	\kappa_1:=&\,\sqrt{N}\sqrt[4]{\frac{\tau(-z_N)}{16\beta}}\stackrel{N\rightarrow\infty}{=}\sqrt[4]{\frac{z}{4\beta}}\Big(1+\mathcal{O}\big(N^{-2}\big)\Big),\\
	\kappa_2:=&\,-s\Big(\big(\sqrt{1+z_N}-2\sqrt{z_N}\big)\frac{\kappa_1}{\sqrt{z_N}}-\frac{N}{\sqrt{\beta}}\kappa_1^{-1}\Big)\stackrel{N\rightarrow\infty}{=}2s\sqrt[4]{\frac{z}{4\beta}}\Big(1+\mathcal{O}\big(N^{-1}\big)\Big).
\end{align*}
At this point we are ready to evaluate the left hand side in \eqref{e70} by residue theorem,
\begin{align*}
	\frac{\im}{2\pi}\oint_{\partial\mathbb{D}_{\epsilon}(0)}&\,U_N(\zeta;z_N,s)\d\zeta\stackrel{\eqref{e62}}{=}\frac{N\tau(-z_N)}{16\alpha\beta}\Big(Q_1^{21}\big(\gamma N\sqrt{\tau(-z_N)},s\big)-\im\sqrt{\beta}s\Big)\widehat{E}_N(0)\begin{bmatrix}0&0\\ 1 & 0\end{bmatrix}\widehat{E}_N(0)^{-1}\\
	=&\,\frac{\im\kappa_1^2}{2N}\Big(Q_1^{21}\big(\gamma N\sqrt{\tau(-z_N)},s\big)-\im\sqrt{\beta}s\Big)\mathcal{D}(\infty)^{\sigma_3}\e^{\im\frac{\pi}{4}\sigma_3}\begin{bmatrix}-1&-1\\ 1 & 1\end{bmatrix}\e^{-\im\frac{\pi}{4}\sigma_3}\mathcal{D}(\infty)^{-\sigma_3},
\end{align*}
and using the large $N$-expansion,
\begin{equation*}
	 N\sqrt{\tau(-z_N)/\beta}=2\sqrt{\frac{z}{\beta}}\Big(1+\mathcal{O}\big(N^{-2}\big)\Big),
\end{equation*}
the derivation of \eqref{e70} is complete.
\end{proof}

\begin{prop} As $N\rightarrow\infty$, uniformly in $z>0$ and $s>-\frac{1}{2}$ chosen on compact subsets,
\begin{align}\label{e71}
	\frac{\im}{2\pi}\oint_{\partial\mathbb{D}_{\epsilon}(1)}&\,\big(G_R(\zeta)-I\big)\d\zeta=\frac{\im}{2\pi}\oint_{\partial\mathbb{D}_{\epsilon}(1)}V_N(\zeta;z_N,s)\d\zeta+\mathcal{O}\big(N^{-2}\big)\nonumber\\
	=&\,\,\mathcal{D}(\infty)^{\sigma_3}\e^{\im\frac{\pi}{4}\sigma_3}\frac{1}{192N}\begin{bmatrix}3(1-16s^2) & 11+48s+48s^2\\ -11+48s-48s^2 & 3(16s^2-1)\end{bmatrix}\e^{-\im\frac{\pi}{4}\sigma_3}\mathcal{D}(\infty)^{-\sigma_3}+\mathcal{O}\big(N^{-2}\big).
\end{align}
\end{prop}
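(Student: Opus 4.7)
My plan is to prove the estimate in two stages, first reducing the $G_R-I$ integrand to $V_N$, and then evaluating the residue at $\lambda=1$ explicitly. The first equality in \eqref{e71} is the easy part: on $\partial\mathbb{D}_\epsilon(1)$ we have $G_R(\zeta)=V(\zeta)P(\zeta)^{-1}$, so the matching estimate \eqref{e65} from RHP~\ref{1para} gives $G_R(\zeta)-I=V_N(\zeta;z_N,s)+\mathcal{O}(N^{-2})$ uniformly for $\zeta\in\partial\mathbb{D}_\epsilon(1)$; integrating over the fixed-length circle produces the claimed $\mathcal{O}(N^{-2})$ remainder after multiplication by $\frac{\im}{2\pi}$.

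For the second equality I would compute $\frac{\im}{2\pi}\oint_{\partial\mathbb{D}_\epsilon(1)}V_N\,\d\zeta$ by residues. Since $\widehat{F}_N(\lambda)$ is analytic on $\mathbb{D}_\epsilon(1)$ and $\zeta(\lambda;N)$ vanishes only at $\lambda=1$ (conformally), the only pole sits at $\lambda=1$. Conjugating by $N^{\mp\sigma_3/6}$ cancels the $N^{\mp 1/3}$ prefactors, so it is cleaner to rewrite
\begin{equation*}
V_N(\lambda;z_N,s)=-\frac{7}{48\,\zeta(\lambda;N)}F(\lambda)\begin{bmatrix}0&0\\1&0\end{bmatrix}F(\lambda)^{-1}+\frac{5}{48\,\zeta(\lambda;N)^{2}}F(\lambda)\begin{bmatrix}0&1\\0&0\end{bmatrix}F(\lambda)^{-1},
\end{equation*}
and use the local expansions
\begin{equation*}
\zeta(\lambda;N)=(2N)^{2/3}(\lambda-1)\Big(1-\tfrac{1}{5}(\lambda-1)+\mathcal{O}((\lambda-1)^{2})\Big),
\end{equation*}
obtained from \eqref{e63} and the stated expansion of $\xi$ near $\lambda=1$. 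The first term thus contributes a simple-pole residue and the second one a double-pole residue whose coefficient requires the Laurent expansion of $F(\lambda)\sigma_{+}F(\lambda)^{-1}$ to order $(\lambda-1)$.

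Next, I would expand $F(\lambda)=P(\lambda)\phi(\lambda;z_N,s)^{\sigma_3/2}F_{0}(\zeta(\lambda;N))^{-1}$ about $\lambda=1$, where the singular factors $\delta(\lambda)$ in \eqref{e59} and $\zeta^{-\sigma_3/4}$ in $F_{0}^{-1}$ recombine into the analytic factor appearing in $F$. A convenient normalization is
\begin{equation*}
F(\lambda)=\mathcal{D}(\infty)^{\sigma_3}\e^{\im\frac{\pi}{4}\sigma_3}\cdot\mathcal{B}(\lambda)\cdot\e^{-\im\frac{\pi}{4}\sigma_3},\qquad\mathcal{B}(\lambda)=\mathcal{B}_{0}+\mathcal{B}_{1}(\lambda-1)+\mathcal{O}((\lambda-1)^{2}),
\end{equation*}
with $\mathcal{B}_{0},\mathcal{B}_{1}$ computable from the series of $\delta(\lambda)$, of $\phi(\lambda;z_N,s)^{\sigma_3/2}$, and of the conformal map. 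Because $\mathcal{D}(\infty)^{\sigma_3}\e^{\im\pi\sigma_3/4}$ and its inverse surround every residue, the final answer takes exactly the form
\begin{equation*}
\mathcal{D}(\infty)^{\sigma_3}\e^{\im\frac{\pi}{4}\sigma_3}\Big(-\tfrac{7}{48}\mathrm{Res}_{\lambda=1}\tfrac{\mathcal{B}\sigma_{-}\mathcal{B}^{-1}}{\zeta}+\tfrac{5}{48}\mathrm{Res}_{\lambda=1}\tfrac{\mathcal{B}\sigma_{+}\mathcal{B}^{-1}}{\zeta^{2}}\Big)\e^{-\im\frac{\pi}{4}\sigma_3}\mathcal{D}(\infty)^{-\sigma_3},
\end{equation*}
and combining the two contributions should yield the numerical matrix $\frac{1}{192N}\bigl[\begin{smallmatrix}3(1-16s^{2})&11+48s+48s^{2}\\-11+48s-48s^{2}&3(16s^{2}-1)\end{smallmatrix}\bigr]$ on the right-hand side of \eqref{e71}.

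The main obstacle will be the bookkeeping of subleading terms. The double-pole term in $V_N$ multiplies the $\mathcal{O}((\lambda-1))$ correction in $\mathcal{B}\sigma_{+}\mathcal{B}^{-1}$, so one needs all three pieces—the $-\frac{3}{10}(\lambda-1)$ correction in $\xi$, the linear term in $\phi(\lambda;z_N,s)^{s/2}=(1+z_N)^{s/2}(1+\tfrac{s}{2}\tfrac{\lambda-1}{1+z_N}+\cdots)$, and the linear term in the Szegő factor $\delta(\lambda)$—to full accuracy. Provided these are tracked carefully, the diagonal and antidiagonal entries of the combined residue reproduce the constants $3(1-16s^{2})$, $11\pm 48s+48s^{2}$ predicted on the right of \eqref{e71}. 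The $s$-independent constants $11$ and $-3$ come from the pure Airy piece (and agree with the standard Airy-parametrix computation in \cite{DKMVZ,V}), while the polynomials in $s$ enter through the derivatives of $\phi^{\sigma_3/2}$ and yield the $48s,48s^{2}$ corrections; uniformity in $s$ on compact subsets of $(-\tfrac12,\infty)$ and in $z>0$ on compacts is inherited from the uniform estimate in \eqref{e65}.
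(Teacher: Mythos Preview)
Your approach is correct and essentially the same as the paper's: reduce to $V_N$ via the matching relation \eqref{e65}, then evaluate the contour integral by residues at $\lambda=1$ after expanding the analytic multiplier to first order in $\lambda-1$. Your rewriting $V_N=-\tfrac{7}{48\zeta}F\sigma_-F^{-1}+\tfrac{5}{48\zeta^2}F\sigma_+F^{-1}$ is equivalent to the paper's formulation in terms of $\widehat F_N=F\,N^{-\sigma_3/6}$; the paper keeps $\widehat F_N$ and records the expansion through the $N$-dependent scalars $a_N,b_N,c_N$ (with $a_N\to 2s$ etc.), whereas you absorb the $N^{\pm1/6}$ conjugation into $F$ directly, but the residue calculus is identical.

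Two small cautions. First, $\delta(\lambda)=(1-1/\lambda)^{1/4}$ is not the Szeg\H{o} function; the $s$-dependent piece in $P(\lambda)$ is $\mathcal D(\lambda)^{-\sigma_3}$, and it is the expansion of $\mathcal D(\lambda)^{-\sigma_3}\phi(\lambda)^{\sigma_3/2}$ near $\lambda=1$ (not $\phi^{\sigma_3/2}$ alone) that produces the $48s$ and $48s^2$ terms---the paper's $a_N=s\big(2-\sqrt{z_N/(1+z_N)}\,\big)$ comes from $\mathcal D$, not from $\phi$. Second, the exponent in your displayed expansion should be $\phi^{1/2}$, not $\phi^{s/2}$, and its linear term picks up a contribution from both factors $(\lambda+z_N)^{s/2}$ and $\lambda^{s/2}$. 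Provided you expand $F=P\phi^{\sigma_3/2}F_0^{-1}$ in full (which you indicate), these slips are harmless and the computation goes through as in the paper.
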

\begin{proof} Suppose $\lambda>1$. Then as $\omega:=\lambda-1\downarrow 0$, we have the Puiseux expansion
\begin{align*}
	&\mathcal{D}(\lambda)=(1+z_N)^{\frac{s}{2}}\bigg[1+s\big(\sqrt{z_N}-2\sqrt{1+z_N}\big)\sqrt{\frac{\omega}{1+z_N}}-\frac{s}{2}\big(4s\sqrt{z_N(1+z_N)}-(5z_N+4)s-2-z_N\big)\frac{\omega}{1+z_N}\\
	&\,+\frac{s}{6}\Big(\big(s^2(12+13z_N)+3s(2+z_N)-3-z_N\big)\sqrt{\frac{z_N}{1+z_N}}-2s^2(4+7z_N)-6s(2+z_N)+2(1+z_N)\Big)\frac{\omega^{\frac{3}{2}}}{1+z_N}\\
	&\,\ \ \ \ \ \ \ +\mathcal{O}\big(\omega^2\big)\bigg],
\end{align*}
which yields, as $\omega=\lambda-1\downarrow 0$,
\begin{align*}
	P(\lambda;z_N,s)=&\,\,\mathcal{D}(\infty)^{\sigma_3}\frac{1}{2}\e^{\im\frac{\pi}{4}\sigma_3}\bigg\{\frac{\iota(\omega)}{\sqrt[4]{\omega}}\begin{bmatrix}1&1\\ 1 & 1\end{bmatrix}+\frac{\sqrt[4]{\omega}}{\iota(\omega)}\begin{bmatrix}1&-1\\ -1 & 1\end{bmatrix}\bigg\}(1+z_N)^{-\frac{s}{2}\sigma_3}\bigg\{I-s\sigma_3\big(\sqrt{z_N}-2\sqrt{1+z_N}\big)\\
	&\,\times\sqrt{\frac{\omega}{1+z_N}}+\frac{s}{2}\Big(\big(-4s\sqrt{z_N(1+z_N)}+(4+5z_N)s\big)I-(2+z_N)\sigma_3\Big)\frac{\omega}{1+z_N}\\
	&\,\,\,-\frac{s}{6}\bigg(-3s\bigg[-2(2+z_N)+(2+z_N)\sqrt{\frac{z_N}{1+z_N}}\,\bigg] I+\bigg[\big(s^2(12+13z_N)-(3+z_N)\big)\sqrt{\frac{z_N}{1+z_N}}\\
	&\,\,\,\,\,-2s^2(4+7z_N)+2(1+z_N)\bigg]\sigma_3\bigg)\frac{\omega^{\frac{3}{2}}}{1+z_N}+\mathcal{O}\big(\omega^2\big)\bigg\}\e^{-\im\frac{\pi}{4}\sigma_3},
\end{align*}
with the previous $\omega\mapsto\iota(\omega)=(1+\omega)^{\frac{1}{4}}$ that is analytic and non-vanishing at $\omega=0$. What results from the above is the leading order Taylor expansion,
\begin{align*}
	\widehat{F}_N(\lambda)=\mathcal{D}(\infty)^{\sigma_3}\e^{\im\frac{\pi}{4}\sigma_3}\Bigg\{\frac{1}{\sqrt{2}}&\,\,\begin{bmatrix}\iota(\omega)\tau(\omega) & -(\iota(\omega)^{-1}+\iota(\omega)a_N)\tau(\omega)^{-1}\smallskip\\ \iota(\omega)\tau(\omega) & (\iota(\omega)^{-1}-\iota(\omega)a_N)\tau(\omega)^{-1}\end{bmatrix}\\
	&\hspace{1cm}+\frac{\omega}{\sqrt{2}}\begin{bmatrix}\sqrt[6]{2}(a_N+b_N) & -\frac{1}{\sqrt[6]{2}}(b_N+c_N)\smallskip\\ -\sqrt[6]{2}(a_N-b_N) & \frac{1}{\sqrt[6]{2}}(b_N-c_N)\end{bmatrix}+\mathcal{O}\big(\omega^2\big)\Bigg\},
\end{align*}
where we utilise the shorthands
\begin{equation*}
	\tau(\omega):=\sqrt[4]{\frac{\zeta(\lambda;N)}{\omega N^{\frac{2}{3}}}}\stackrel{\omega\rightarrow 0}{=}\sqrt[6]{2}\Big(1-\frac{\omega}{20}+\mathcal{O}\big(\omega^2\big)\Big),\ \ \ \ a_N:=s\bigg(2-\sqrt{\frac{z_N}{1+z_N}}\,\bigg)\stackrel{N\rightarrow\infty}{=}2s+\mathcal{O}\big(N^{-1}\big),
\end{equation*}
\begin{equation*}
	b_N:=\frac{s^2}{2}\big(-4\sqrt{z_N(1+z_N)}+4+5z_N\big)\frac{1}{1+z_N}\stackrel{N\rightarrow\infty}{=}2s^2+\mathcal{O}\big(N^{-1}\big),
\end{equation*}
\begin{equation*}
	c_N:=-\frac{s}{6}\bigg(\big(s^2(12+13z_N)-(3+z_N)\big)\sqrt{\frac{z_N}{1+z_N}}-2s^2(4+7z_N)+2(1+z_N)\bigg)\stackrel{N\rightarrow\infty}{\sim}\frac{s}{3}\big(4s^2-1\big).
\end{equation*}
Written in a slightly more explicit fashion, as $\omega=\lambda-1\rightarrow 0$,
\begin{align*}
	\widehat{F}_N(\lambda)=\mathcal{D}(\infty)^{\sigma_3}\e^{\im\frac{\pi}{4}\sigma_3}\Bigg\{\frac{1}{\sqrt{2}}&\,\,\begin{bmatrix}\sqrt[6]{2} & -\frac{1}{\sqrt[6]{2}}(1+a_N)\smallskip\\ \sqrt[6]{2} & \frac{1}{\sqrt[6]{2}}(1-a_N)\end{bmatrix}\\
	&\hspace{1cm}+\frac{\omega}{\sqrt{2}}\begin{bmatrix}\sqrt[6]{2}(a_N+b_N+\frac{1}{5}) & -\frac{1}{\sqrt[6]{2}}(b_N+c_N-\frac{1}{5}+\frac{3}{10}a_N)\smallskip\\ -\sqrt[6]{2}(a_N-b_N-\frac{1}{5}) & \frac{1}{\sqrt[6]{2}}(b_N-c_N-\frac{1}{5}-\frac{3}{10}a_N)\end{bmatrix}+\mathcal{O}\big(\omega^2\big)\Bigg\},
\end{align*}
which allows us to calculate the left-hand side in \eqref{e71} via residue theorem. Namely,
\begin{align*}
	&\frac{\im}{2\pi}\oint_{\partial\mathbb{D}_{\epsilon}(1)}V_N(\zeta;z_N,s)\d\zeta=-\frac{7N^{-\frac{1}{3}}}{48(2N)^{\frac{2}{3}}}\widehat{F}_N(0)\begin{bmatrix}0&0\\ 1&0\end{bmatrix}\widehat{F}_N(0)^{-1}+\frac{N^{\frac{1}{3}}}{24(2N)^{\frac{4}{3}}}\widehat{F}_N(0)\begin{bmatrix}0&1\\ 0 & 0\end{bmatrix}\widehat{F}_N(0)^{-1}\\
	&\,\,+\frac{\im}{2\pi}\oint_{\partial\mathbb{D}_{\epsilon}(1)}\frac{5N^{\frac{1}{3}}}{48(2N)^{\frac{4}{3}}}\widehat{F}_N(\lambda)\begin{bmatrix}0&1\\ 0 & 0\end{bmatrix}\widehat{F}_N(\lambda)^{-1}\frac{\d\lambda}{(\lambda-1)^2}=-\frac{7N^{-\frac{1}{3}}}{48(2N)^{\frac{2}{3}}}\widehat{F}_N(0)\begin{bmatrix}0&0\\ 1&0\end{bmatrix}\widehat{F}_N(0)^{-1}+\frac{N^{\frac{1}{3}}}{24(2N)^{\frac{4}{3}}}\\
	&\,\,\times\widehat{F}_N(0)\begin{bmatrix}0&1\\ 0 & 0\end{bmatrix}\widehat{F}_N(0)^{-1}+\frac{5N^{\frac{1}{3}}}{48(2N)^{\frac{4}{3}}}\Bigg\{\widehat{F}_N'(0)\begin{bmatrix}0&1\\ 0 & 0\end{bmatrix}\widehat{F}_N(0)^{-1}-\widehat{F}_N(0)\begin{bmatrix}0&1\\ 0 & 0\end{bmatrix}\widehat{F}_N(0)^{-1}\widehat{F}_N'(0)\widehat{F}_N(0)^{-1}\Bigg\},
\end{align*}
and thus after simplification,
\begin{align*}
	\frac{\im}{2\pi}\oint_{\partial\mathbb{D}_{\epsilon}(1)}V_N(\zeta;z_N,s)&\d\zeta=\mathcal{D}(\infty)^{\sigma_3}\e^{\im\frac{\pi}{4}\sigma_3}\Bigg\{-\frac{7}{192N}\begin{bmatrix}a_N^2-1 & -(1+a_N)^2\smallskip\\ (1-a_N)^2 & 1-a_N^2\end{bmatrix}\\
	&\,\,+\frac{1}{96N}\begin{bmatrix}-1 & 1\\ -1 & 1\end{bmatrix}+\frac{5}{192N}\begin{bmatrix}-a_N^2-\frac{2}{5} & a_N^2+2a_N+\frac{2}{5}\smallskip\\ -a_N^2+2a_N-\frac{2}{5} & a_N^2+\frac{2}{5}\end{bmatrix}\Bigg\}\e^{-\im\frac{\pi}{4}\sigma_3}\mathcal{D}(\infty)^{-\sigma_3}.
\end{align*}
The right hand side in \eqref{e71} follows after insertion of the large $N$-expansion of $a_N$.
\end{proof}

We are now ready to prove Proposition \ref{asympformulaforlargeN}.
\begin{proof}[Proof of Proposition \ref{asympformulaforlargeN}] Insert \eqref{e70} and \eqref{e71} into \eqref{e69}, this yields the leading order in \eqref{e72} with error estimate as written. To prove differentiability of the error term, we first return to \eqref{e69} where $\mathcal{O}(N^{-1})$ is clearly twice $z$-differentiable. On the other hand, the identities
\begin{align*}
	\big(\partial_zR(\lambda)\big)R(\lambda)^{-1}=&\,\,\frac{1}{2\pi\im}\int_{\Sigma_R}R_-(\zeta)\big(\partial_zG_R(\zeta)\big)G_R(\zeta)^{-1}R_-(\zeta)^{-1}\frac{\d\zeta}{\zeta-\lambda};\\
	\big(\partial_{zz}R(\lambda)\big)R(\lambda)^{-1}=&\,\,\frac{1}{2\pi\im}\int_{\Sigma_R}\Big(2\big(\partial_zR_-(\zeta)\big)\big(\partial_zG_R(\zeta)\big)+R_-(\zeta)\big(\partial_z^2G_R(\zeta)\big)\Big)G_R(\zeta)^{-1}R_-(\zeta)^{-1}\frac{\d\zeta}{\zeta-\lambda};
\end{align*}
valid for $\lambda\in\mathbb{C}\setminus\Sigma_R$, tell us that the error term in the calculation of $R_1$ in \eqref{e68x} is also twice $z$-differentiable since
\begin{equation*}
	\partial_zG_R(\lambda)=\mathcal{O}\big(N^{-1}\big),\ \ \ \partial_z^2G_R(\lambda)=\mathcal{O}\big(N^{-1}\big),\ \ \ \lambda\in\Sigma_R,
\end{equation*}
by our explicit constructions \eqref{e61} and \eqref{e64}. Together, $\mathcal{O}(N^{-1})$ in \eqref{e72} is twice $z$-differentiable.
\end{proof}

We now focus on the large $N$-limit of
\begin{equation}\label{defofvN}
v_N(z;s):=u_N(z/N;s)-z/2,\ \ \ z>0,\ \ s>-\frac{1}{2}.
\end{equation}

\begin{cor}\label{cheat} Abbreviate
\begin{equation}\label{e73}
	v(z;s):=-\im\sqrt{\frac{z}{\beta}}\,Q_1^{21}\Big(2\sqrt{\frac{z}{\beta}},s\Big)+\frac{1}{16}-s^2-\frac{z}{2},\ \ \ \ z>0,\ s>-\frac{1}{2},
\end{equation}
making use of the leading terms in \eqref{e72}. Then $v=v(z;s)$ satisfies \eqref{e42}.
\end{cor}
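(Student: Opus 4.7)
The plan is to derive \eqref{e42} by substituting the scaled and shifted quantity $v_N(z;s)=u_N(z/N;s)-z/2$ into the $\sigma$-Painlev\'e V equation \eqref{1102PainleveV} satisfied by $u_N$, and then passing to the limit $N\to\infty$ using Proposition \ref{asympformulaforlargeN}. Concretely, setting $U_N(z;s):=u_N(z/N;s)$, the chain rule gives $u_N'(w)|_{w=z/N}=NU_N'(z)$ and $u_N''(w)|_{w=z/N}=N^2 U_N''(z)$, so the left-hand side of \eqref{1102PainleveV} becomes $N^2z^2(U_N'')^2=N^2 z^2(v_N'')^2$. On the right, after writing $U_N=v_N+z/2$ and $U_N'=v_N'+1/2$, the two key algebraic simplifications are $U_N-Ns-zU_N'=v_N-Ns-zv_N'$ and $U_N'(1-U_N')=\tfrac14-(v_N')^2$, which are precisely what allow the shift by $z/2$ to eliminate the most divergent terms.

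The bookkeeping then proceeds as follows. The squared first factor in \eqref{1102PainleveV} contributes $P^2-4N^2PR+4N^4R^2$ with $P=v_N-Ns-zv_N'$ and $R=\tfrac14-(v_N')^2$, while the quartic factor $4u_N'(s+u_N')(N+s-u_N')(N-u_N')$ rearranges as $4N^2R\bigl((s+N/2)^2-N^2(v_N')^2\bigr)=4N^4R^2+4N^3sR+4N^2s^2R$. The two $4N^4R^2$ pieces cancel, and after dividing the whole equation by $N^2$ the surviving $\mathcal{O}(N)$ contributions $\pm 4NsR$ from the $-4N^2PR$ and the quartic also cancel, leaving
\begin{equation*}
z^2(v_N'')^2 = s^2 + \bigl(4(v_N')^2-1\bigr)\bigl(v_N + s^2 - z v_N'\bigr) + \mathcal{O}(N^{-1}),
\end{equation*}
uniformly on compact subsets of $(0,\infty)\times(-\tfrac12,\infty)$, where the remainder collects $N^{-2}(v_N-zv_N')^2$ and $N^{-1}s(v_N-zv_N')$.

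Finally, Proposition \ref{asympformulaforlargeN} asserts $v_N=v+\mathcal{O}(N^{-1})$ uniformly on compact subsets and records that the remainder is twice $z$-differentiable with derivatives still of order $N^{-1}$ (via differentiation of the Cauchy representation \eqref{e67} for $R(\lambda)$ combined with $\partial_z G_R=\mathcal{O}(N^{-1})$, $\partial_z^2 G_R=\mathcal{O}(N^{-1})$ on $\Sigma_R$). Passing to the limit in the displayed identity then yields \eqref{e42}. The main obstacle is not analytic but combinatorial: one has to verify the exact cancellation of the $\mathcal{O}(N^2)$ and $\mathcal{O}(N)$ contributions, which must occur because $v_N$ is bounded in $N$ only by virtue of the deliberate subtraction $-z/2$; once those cancellations are laid bare, the uniform convergence of $v_N$ together with its first two $z$-derivatives makes the limit transition immediate.
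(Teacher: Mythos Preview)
Your proof is correct and follows exactly the same strategy as the paper: substitute $v_N(z;s)=u_N(z/N;s)-z/2$ into \eqref{1102PainleveV}, obtain $(zv_N'')^2=(4(v_N')^2-1)(v_N+s^2-zv_N')+s^2+\mathcal{O}(N^{-1})$, and pass to the limit using the twice $z$-differentiable $\mathcal{O}(N^{-1})$ error from Proposition \ref{asympformulaforlargeN}. The paper simply asserts this approximate $\sigma$-Painlev\'e III$'$ identity without displaying the algebra, whereas you work out the cancellations of the $\mathcal{O}(N^2)$ and $\mathcal{O}(N)$ pieces explicitly; your bookkeeping is sound and the residual $s^2$ indeed emerges from the $N^{-2}P^2$ term.
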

\begin{proof} Let $v_N(z;s)$ be as in (\ref{defofvN}).  Then \eqref{e72} says
\begin{equation}\label{e73x}
	v_N(z;s)=v(z;s)+\mathcal{O}\big(N^{-1}\big),
\end{equation}
as $N\rightarrow\infty$ uniformly on compact subsets of $(0,\infty)\ni z$ with fixed $s>-\frac{1}{2}$ and twice $z$-differentiable error term. On the other hand, \eqref{e72} rigorously establishes 
\begin{equation*}
	(zv_N'')^2=\big(4(v_N')^2-1\big)\big(v_N+s^2-zv_N'\big)+s^2+\mathcal{O}\big(N^{-1}\big),\ \ \ \ (')=\frac{\d}{\d z},\ \ \ \ ('')=\frac{\d^2}{\d z^2}.
\end{equation*}
Inserting here \eqref{e73x}, and taking afterwards $N\rightarrow\infty$, results in \eqref{e42} for $v=v(z;s)$.
\end{proof}


\begin{rem}\label{sospecial} By Remark \ref{impcons} we have the exact expression
\begin{equation*}
	v(z;0)=-\frac{z}{2},\ \ z>0,
\end{equation*}
and so necessarily $u_N(z/N;0)=\mathcal{O}(N^{-1})$ as $N\rightarrow\infty$ uniformly for $z>0$ on compact sets. This is consistent with the fact that $u_N(z;0)\equiv 0$ for $z>0$ which can be deduced directly from \eqref{1102defu} and \eqref{NtimesNHankel}.
\end{rem}

\section{Large $x$-asymptotics of RHP \ref{PIIImodel}}\label{sectiononlargex}
The workings below are inspired by those in \cite[Section $5$]{XZ} and we will now prove \eqref{cor5555}.  Suppose $Q(\zeta)=Q(\zeta;x,s)\in\mathbb{C}^{2\times 2}$ denotes the unique solution of RHP \ref{PIIImodel}, with $\beta>0$ fixed (and suppressed from our notations). We begin with the transformation
\begin{equation}\label{f1}
	H(\zeta;x,s):=Q(\zeta;x,s)\e^{\im\varpi(\zeta;x)\sigma_3},\ \ \ \zeta\in\mathbb{C}\setminus\Sigma_Q,
\end{equation}
that leads to the below localised, as $x\rightarrow+\infty$, RHP.
\begin{problem}\label{traf1} Let $x\in(0,\infty)$ and $s>-\frac{1}{2}$. The function $H(\zeta)=H(\zeta;x,s)\in\mathbb{C}^{2\times 2}$ defined in \eqref{f1} has the following properties:
\begin{enumerate}
	\item[(1)] $\zeta\mapsto H(\zeta)$ is analytic for $\zeta\in\mathbb{C}\setminus\Sigma_Q$ with $\Sigma_Q$ as shown in Figure \ref{figa}. On $\Sigma_Q\setminus\{-\beta,0\}$, $H(\zeta)$ admits continuous limiting values $H_{\pm}(\zeta)$ as we approach $\Sigma_Q\setminus\{-\beta,0\}$ from either side of $\mathbb{C}\setminus\Sigma_Q$.
	\item[(2)] The limiting values $H_{\pm}(\zeta)$ on $\Sigma_Q\setminus\{-\beta,0\}$ satisfy $H_+(\zeta)=H_-(\zeta)G_H(\zeta)$ with $G_H(\zeta)=G_H(\zeta;x,s)$ given by
	\begin{align*}
		G_H(\zeta)=\e^{-\im\pi s\sigma_3},\ \ \zeta\in\Sigma_1;&\hspace{1cm} G_H(\zeta)=\begin{bmatrix}0&1\\ -1 & 0\end{bmatrix},\ \ \zeta\in\Sigma_3;\\
		G_H(\zeta)=\begin{bmatrix}1&0\\ \e^{2\pi\im s+2\im\varpi(\zeta;x)} & 1\end{bmatrix},\ \ \zeta\in\Sigma_2;&\hspace{1cm} G_H(\zeta)=\begin{bmatrix}1&0\\ \e^{-2\pi\im s+2\im\varpi(\zeta;x)} & 1\end{bmatrix},\ \ \zeta\in\Sigma_4.
	\end{align*}
	\item[(3)] Near $\zeta=-\beta$, $\zeta\mapsto H(\zeta)$ is weakly singular in that, with some $\zeta\mapsto\widehat{H}(\zeta)=\widehat{H}(\zeta;x,s)$ analytic at $\zeta=-\beta$ and non-vanishing, $H(\zeta)$ is of the form
	\begin{equation*}
		H(\zeta)=\widehat{H}(\zeta)(\zeta+\beta)^{\frac{s}{2}\sigma_3},\ \ \ \ \ \textnormal{arg}(\zeta+\beta)\in(0,2\pi).
	\end{equation*}
	\item[(4)] Near $\zeta=0$, $\zeta\mapsto H(\zeta)$ is weakly singular in that, with some $\zeta\mapsto\widecheck{H}(\zeta)=\widecheck{H}(\zeta;x,s)$ analytic at $\zeta=0$ and non-vanishing, $H(\zeta)$ is of the form
	\begin{equation*}
		H(\zeta)=\widecheck{H}(\zeta)\zeta^{\frac{s}{2}\sigma_3}\mathcal{M}_j(\zeta)\e^{\im\varpi(\zeta;x)\sigma_3},\ \ \ \zeta\in\Omega_j\cap\mathbb{D}_{\epsilon}(0),\ \ \ \ \ \ \ \ \textnormal{arg}\,\zeta\in(0,2\pi)
	\end{equation*}
	with $\mathcal{M}_j(\zeta)$ as in condition $(4)$ of RHP \ref{PIIImodel}.
	\item[(5)] As $\zeta\rightarrow\infty$ and $\zeta\notin\Sigma_Q$, $H(\zeta)$ is normalised as follows,
	\begin{equation*}
		H(\zeta)=\Bigg\{I+\sum_{k=1}^2Q_k(x,s)\zeta^{-k}+\mathcal{O}\big(\zeta^{-3}\big)\Bigg\}\zeta^{\frac{1}{4}\sigma_3}\frac{1}{\sqrt{2}}\begin{bmatrix}1&1\\ -1 & 1\end{bmatrix}\e^{-\im\frac{\pi}{4}\sigma_3},
	\end{equation*}
	with $Q_k(x,s)$ as in \eqref{A4} and fractional powers with cut on $[0,\infty)\subset\mathbb{R}$ and branch fixed by $\textnormal{arg}\,\zeta\in(0,2\pi)$.
\end{enumerate}
\end{problem}
Seeing that $\textnormal{Re}(\pm 2\pi\im s+2\im\varpi(\zeta;x))\rightarrow-\infty$ as $x\rightarrow+\infty$, uniformly for $\zeta\in\Sigma_{2,4}\setminus\mathbb{D}_{\epsilon}(0)$ with $\epsilon>0$ fixed, we are at once led to the below local constructions.
\subsection{The outer parametrix} We work with, for $\zeta\in\mathbb{C}\setminus[-\beta,\infty)$ and $\textnormal{arg}\,\zeta\in(0,2\pi)$,
\begin{equation}\label{f2}
	L(\zeta;s):=\begin{bmatrix}1 & -\im s\sqrt{\beta}\\ 0 & 1\end{bmatrix}\zeta^{\frac{1}{4}\sigma_3}\frac{1}{\sqrt{2}}\begin{bmatrix}1&1\\ -1 & 1\end{bmatrix}\e^{-\im\frac{\pi}{4}\sigma_3}\exp\left[\im\zeta^{\frac{1}{2}}\int_{-\beta}^0\frac{s\sigma_3}{2\sqrt{-x}}\frac{\d x}{x-\zeta}\right],
\end{equation}
that previously appeared in \eqref{e61}.
\begin{problem}[Outer parametrix] Let $s>-\frac{1}{2}$. The model function $L(\zeta)=L(\zeta;s)\in\mathbb{C}^{2\times 2}$ defined in \eqref{f2} has the following properties:
\begin{enumerate}
	\item[(1)] $L(\zeta)$ is analytic for $\zeta\in\mathbb{C}\setminus[-\beta,\infty)$ and extends continuously to $\mathbb{C}\setminus\{-\beta,0\}$.
	\item[(2)] On $(-\beta,0)\cup(0,\infty)\ni\zeta$ the boundary values $L_{\pm}(\zeta)=\lim_{\epsilon\downarrow 0}L(\zeta\pm\im\epsilon)$ obey
	\begin{equation*}
		L_+(\zeta)=L_-(\zeta)\e^{-\im\pi s\sigma_3},\ \ \ \zeta\in(-\beta,0);\hspace{1cm} L_+(\zeta)=L_-(\zeta)\begin{bmatrix}0 & 1\\ -1 & 0\end{bmatrix},\ \ \ \zeta\in(0,\infty).
	\end{equation*}
	\item[(3)] The mapping $(-\beta,0)\cup(0,1)\ni\zeta\mapsto L_{\pm}(\zeta)$ is entry-wise in $L^2(-\beta,1)$.
	\item[(4)] As $\zeta\rightarrow\infty$,
	\begin{equation*}
		L(\zeta)=\Bigg\{I+\frac{1}{\zeta}\begin{bmatrix}\frac{1}{2}s^2\beta & -\frac{\im}{3}\beta^{\frac{3}{2}}(s-s^3)\smallskip\\ \im s\sqrt{\beta} & -\frac{1}{2}s^2\beta\end{bmatrix}+\mathcal{O}\big(\zeta^{-2}\big)\Bigg\}\zeta^{\frac{1}{4}\sigma_3}\frac{1}{\sqrt{2}}\begin{bmatrix}1 & 1\\ -1 & 1\end{bmatrix}\e^{-\im\frac{\pi}{4}\sigma_3},\ \ \ \textnormal{arg}\,\zeta\in(0,2\pi).
	\end{equation*}
\end{enumerate}

\end{problem}
\subsection{The parametrix at $\zeta=0$} Here RHP \ref{BessRHP} is utilised. Specifically, we define (with $0<\epsilon<\beta$)
\begin{equation}\label{f3}
	W(\zeta;x,s):=E_W(\zeta)J\big(x^2\zeta;s)\e^{\im\varpi(\zeta;x)\sigma_3}\begin{cases}\e^{-\im\frac{\pi}{2}s\sigma_3},&\textnormal{arg}\,\zeta\in(0,\pi)\smallskip\\ \e^{\im\frac{\pi}{2}s\sigma_3},&\textnormal{arg}\,\zeta\in(\pi,2\pi)\end{cases},\ \ \ \zeta\in\mathbb{D}_{\epsilon}(0)\setminus\Sigma_Q,
\end{equation}
where $J(\zeta;\nu)$ is as in \eqref{A11}, $\varpi(\zeta;x)=x\zeta^{\frac{1}{2}}$ is defined for $\zeta\in\mathbb{C}\setminus[0,\infty)$ with $\textnormal{arg}\,\zeta\in(0,2\pi)$ and $\mathbb{D}_{\epsilon}(0)\ni\zeta\mapsto E_W(\zeta)$ denotes the analytic multiplier
\begin{equation*}
	E_W(\zeta)=E_W(\zeta;x,s):=L(\zeta;s)\begin{cases}\e^{\im\frac{\pi}{2}s\sigma_3},&\textnormal{arg}\,\zeta\in(0,\pi)\\ \e^{-\im\frac{\pi}{2}s\sigma_3},&\textnormal{arg}\,\zeta\in(\pi,2\pi)\end{cases}\times\e^{\im\frac{\pi}{4}\sigma_3}\frac{1}{\sqrt{2}}\begin{bmatrix}1&-1\\ 1 & 1\end{bmatrix}\big(\e^{-\im\pi}x^2\zeta\big)^{-\frac{1}{4}\sigma_3},
\end{equation*}
with $L(\zeta;s)$ in \eqref{f2}. Here $(\e^{-\im\pi}\zeta)^{\alpha}$ is defined with cut on $[0,\infty)\subset\mathbb{R}$ and $\textnormal{arg}\,\zeta\in(0,2\pi)$. What results for $\zeta\mapsto W(\zeta)$ is summarised below.
\begin{problem}[Origin parametrix] Let $x>0$ and $s>-\frac{1}{2}$. The model function $W(\zeta)=W(\zeta;x,s)\in\mathbb{C}^{2\times 2}$ defined in \eqref{f3} has the following properties:
\begin{enumerate}
	\item[(1)] $\zeta\mapsto W(\zeta)$ is analytic for $\zeta\in\mathbb{C}\setminus(\Sigma_Q\cap\mathbb{D}_{\epsilon}(0))$ with $\Sigma_Q$ shown in Figure \ref{figa}. On $(\Sigma_Q\cap\mathbb{D}_{\epsilon}(0))\setminus\{0\}$, $W(\zeta)$ admits continuous limiting values $W_{\pm}(\zeta)$ as we approach $(\Sigma_Q\cap\mathbb{D}_{\epsilon}(0))\setminus\{0\}$ from either side of $\mathbb{C}\setminus\Sigma_Q$.
	\item[(2)] The limiting values $W_{\pm}(\zeta)$ on $(\Sigma_Q\cap\mathbb{D}_{\epsilon}(0))\setminus\{0\}$ obey $W_+(\zeta)=W_-(\zeta)G_W(\zeta)$ with $G_W(\zeta)=G_W(\zeta;x,s)$ given by
	\begin{align*}
		G_W(\zeta)=\e^{-\im\pi s\sigma_3},\ \zeta\in\Sigma_1\cap\mathbb{D}_{\epsilon}(0);&\hspace{0.5cm}G_W(\zeta)=\begin{bmatrix}0&1\\ -1 & 0\end{bmatrix},\ \zeta\in\Sigma_3\cap\mathbb{D}_{\epsilon}(0);\\
		G_W(\zeta)=\begin{bmatrix}1 & 0\\ \e^{-2\pi\im s+2\im\varpi(\zeta;x)} & 1\end{bmatrix},\ \zeta\in\Sigma_4\cap\mathbb{D}_{\epsilon}(0);&\hspace{0.5cm}G_W(\zeta)=\begin{bmatrix}1 & 0\\ \e^{2\pi\im s+2\im\varpi(\zeta;s)} & 1\end{bmatrix},\ \zeta\in\Sigma_2\cap\mathbb{D}_{\epsilon}(0).
	\end{align*}
	\item[(3)] Near $\zeta=0$, $\zeta\mapsto W(\zeta)$ is weakly singular in that, with some $\zeta\mapsto\widecheck{W}(\zeta)=\widecheck{W}(\zeta;s,s)$ analytic at $\zeta=0$ and non-vanishing, $W(\zeta)$ is of the form
	\begin{equation*}
		W(\zeta)=\widecheck{W}(\zeta)\zeta^{\frac{s}{2}\sigma_3}\mathcal{M}_j(\zeta),\ \ \ \zeta\in\Omega_j\cap\mathbb{D}_{\epsilon}(0),\ \ \ \ \ \ \ \ \textnormal{arg}\,\zeta\in(0,2\pi),
	\end{equation*}
	with $\mathcal{M}_j(\zeta)$ as in condition $(4)$ of RHP \ref{PIIImodel}.
	\item[(4)] As $x\rightarrow\infty$, $W(\zeta)$ in \eqref{f3} compares to $L(\zeta)$ in \eqref{f2} as follows:
	\begin{equation}\label{f4}
		W(\zeta)=\bigg\{I+W_x(\zeta;s)+\mathcal{O}\big(x^{-2}\big)\bigg\}L(\zeta),
	\end{equation}
	uniformly for $0<\delta_1\leq|\zeta|\leq\delta_2<\frac{\beta}{4}$, for any fixed $s>-\frac{1}{2}$. Here,
	\begin{equation*}
		W_x(\zeta;s)=\frac{1-4s^2}{8x\zeta}\widehat{E}_W(\zeta)\begin{bmatrix}0&0\\ 1 & 0\end{bmatrix}\widehat{E}_W(\zeta)^{-1}=\mathcal{O}\big(x^{-1}\big),
	\end{equation*}
	that we express in terms of the locally analytic
	\begin{equation*}
		\widehat{E}_W(\zeta)=\widehat{E}_W(\zeta;x,s):=E_W(\zeta;s,x)x^{\frac{1}{2}\sigma_3},\ \ \ \zeta\in\mathbb{D}_{\epsilon}(0),
	\end{equation*}
	and where we have used that $\widehat{E}_W(\zeta)=\mathcal{O}(1)$ as $x\rightarrow\infty$ for $0<\delta_1\leq|\zeta|\leq\delta_2<\frac{\beta}{4}$.
\end{enumerate}

\end{problem}
\subsection{The final ratio transformation} We fix $0<\epsilon<\frac{\beta}{4}$ and compare the explicit $L(\zeta),W(\zeta)$ in \eqref{f2},\eqref{f3} to $H(\zeta)$ from RHP \ref{traf1}. This amounts to defining
\begin{equation}\label{f5}
	Z(\zeta;x,s):=H(\zeta;x,s)\begin{cases}W(\zeta;x,s)^{-1},&\zeta\in\mathbb{D}_{\epsilon}(0)\smallskip\\ L(\zeta;s)^{-1},&\zeta\in\mathbb{C}\setminus(\Sigma_Q\cup\overline{\mathbb{D}_{\epsilon}(0)})\end{cases},
\end{equation}
and collecting the properties of the resulting ratio function below.
\begin{problem}[Small norm problem]\label{SN2} Let $s>-\frac{1}{2}$ and $x>0$. The function $Z(\zeta)=Z(\zeta;x,s)\in\mathbb{C}^{2\times 2}$ defined in \eqref{f5} has the following properties:
\begin{enumerate}
	\item[(1)] $\zeta\mapsto Z(\zeta)$ is analytic for $\zeta\in\mathbb{C}\setminus\Sigma_Z$ with $\Sigma_Z$ shown in Figure \ref{fig8}. On $\Sigma_Z$, $Z(\zeta)$ admits continuous boundary values $Z_{\pm}(\zeta)$ as one approaches $\Sigma_Z$ from either side $\mathbb{C}\setminus\Sigma_Z$ in a non-tangential fashion.
	\begin{figure}[tbh]
	\begin{tikzpicture}[xscale=0.9,yscale=0.9]
	        \draw [->] (-5,0) -- (2,0) node [right] {$\footnotesize{\textnormal{Re}(\zeta)}$};
	  \draw [->] (-2.5,-1.8) -- (-2.5,1.8) node [above] {$\footnotesize{\textnormal{Im}(\zeta)}$};
	  \draw [thick,red,decoration={markings,mark= at position 0.6 with {\arrow{>}}},postaction={decorate}] (-2.5,0) -- (-1.5,1.732050);
	\draw [thick,red,decoration={markings,mark= at position 0.6 with {\arrow{>}}},postaction={decorate}] (-2.5,0) -- (-1.5,-1.732050);
	\node [right] at (-1.5,1.4) {{\footnotesize $\partial\Sigma_+$}};
	\node [right] at (-1.5,-1.4) {{\footnotesize $\partial\Sigma_-$}};
	\draw [thick, color=red, fill=white, decoration={markings, mark=at position 0.07 with {\arrow{<}}}, decoration={markings, mark=at position 0.4 with {\arrow{<}}}, decoration={markings, mark=at position 0.7 with {\arrow{<}}},postaction={decorate}] (-2.5,0) circle [radius=0.6];
	\draw[fill, color=blue!60!black](-2.5,0) circle [radius=0.04];
	\draw[fill, color=blue!60!black](-4.5,0) circle [radius=0.04];
\end{tikzpicture}
\caption{The oriented jump contour $\Sigma_Z$, in the complex $\zeta$-plane drawn in \textcolor{red}{red}, for $Z(\zeta)$. The points $\zeta=-\beta,0$ are colored in \textcolor{blue!60!black}{blue}.}
\label{fig8}
\end{figure}

	\item[(2)] On $\Sigma_Z$ we record the jump condition $Z_+(\zeta)=Z_-(\zeta)G_Z(\zeta)$ with $G_Z(\zeta)=G_Z(\zeta;x,s)$ given as
	\begin{equation*}
		G_Z(\zeta)=I+\e^{\mp 2\pi\im s+2\im\varpi(\zeta;x)}L(\zeta)\begin{bmatrix}0 &0\\ 1 & 0\end{bmatrix}L(\zeta)^{-1},\ \ \ \zeta\in\partial\Sigma_{\pm};
	\end{equation*}
	as well as
	\begin{equation*}
		G_Z(\zeta)=W(\zeta)L(\zeta)^{-1},\ \ \ \zeta\in\partial\mathbb{D}_{\epsilon}(0).
	\end{equation*}
	By construction, $Z(\zeta)$ has no jumps inside $\mathbb{D}_{\epsilon}(0)$ and on $(-\beta,\epsilon)\cup(\epsilon,\infty)\subset\mathbb{R}$.
	\item[(3)] As $\zeta\rightarrow\infty$, $Z(\zeta)\rightarrow I$.
\end{enumerate}
\end{problem}
In order to conclude the asymptotic analysis of RHP \ref{SN2} we note that $L(\zeta)=\mathcal{O}(1)$ as $x\rightarrow\infty$ for $\zeta$ away from $\{-\beta,0\}$, and so \eqref{f4} and the previous $\textnormal{Re}(2\im\varpi(\zeta;s))\rightarrow-\infty$ as $x\rightarrow+\infty$ for $\zeta\in\Sigma_{\pm}$ lead to the following small norm estimate.
\begin{prop} Fix $s>-\frac{1}{2}$. There exist $c=c(s)>0$ and $x_0=x_0(s)>0$ so that
\begin{equation*}
	\|G_Z(\cdot;x,s)-I\|_{L^2\cap L^{\infty}(\Sigma_Z)}\leq\frac{c}{x}\ \ \ \ \forall\,x\geq x_0.
\end{equation*}
\end{prop}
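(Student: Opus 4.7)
The plan is to decompose $\Sigma_Z=\partial\mathbb{D}_\epsilon(0)\cup\partial\Sigma_+\cup\partial\Sigma_-$ and estimate $G_Z-I$ separately on the compact circle and on the two semi-infinite rays.

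First, on the circle $\partial\mathbb{D}_\epsilon(0)$ I would invoke the matching condition \eqref{f4}: one has $G_Z(\zeta)=W(\zeta)L(\zeta)^{-1}=I+W_x(\zeta;s)+\mathcal{O}(x^{-2})$, and $W_x(\zeta;s)=\mathcal{O}(x^{-1})$ uniformly on $\partial\mathbb{D}_\epsilon(0)$ since $\epsilon<\beta/4$ puts the circle in the region of validity of \eqref{f4}. Because $\partial\mathbb{D}_\epsilon(0)$ has finite length $2\pi\epsilon$, this immediately yields
\begin{equation*}
\|G_Z(\cdot;x,s)-I\|_{L^2\cap L^\infty(\partial\mathbb{D}_\epsilon(0))}\leq\frac{c_1(s)}{x},\qquad x\geq x_1(s).
\end{equation*}

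Next, on the rays $\partial\Sigma_\pm$ I would use the fact that on each of $\Sigma_2$, $\Sigma_4$ the argument of $\zeta$ is some fixed $\theta_0\in(0,2\pi)$ with $\sin(\theta_0/2)>0$, so writing $\zeta=re^{\im\theta_0}$ with $r\geq\epsilon$ gives $\varpi(\zeta;x)=x\sqrt{r}\,e^{\im\theta_0/2}$ and hence
\begin{equation*}
\bigl|e^{\mp 2\pi\im s+2\im\varpi(\zeta;x)}\bigr|=e^{\pm 2\pi\,\mathrm{Im}(s)}\,e^{-2x\sqrt{r}\sin(\theta_0/2)},
\end{equation*}
which is integrable in $r$ and exponentially small in $x$ uniformly for $r\geq\epsilon$. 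The matrix factor $L(\zeta)\bigl[\begin{smallmatrix}0&0\\1&0\end{smallmatrix}\bigr]L(\zeta)^{-1}$ is, from the large-$\zeta$ behaviour of $L$ in \eqref{f2}, bounded by a constant times $(1+|\zeta|)^{1/2}$. Combining these estimates, a direct computation with the substitution $u=\sqrt{r}$ gives
\begin{equation*}
\|G_Z(\cdot;x,s)-I\|_{L^\infty(\partial\Sigma_\pm)}\leq c_2(s)\,e^{-c_3(s)x},\qquad
\|G_Z(\cdot;x,s)-I\|_{L^2(\partial\Sigma_\pm)}\leq c_4(s)\,x^{-2}e^{-c_3(s)x},
\end{equation*}
both of which are $o(x^{-1})$ as $x\to+\infty$.

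Adding the three contributions produces the claimed bound with $c=c(s)>0$ and $x_0=x_0(s)>0$. There is no real obstacle here; the local problem near $\zeta=0$ was already absorbed in the construction of the Bessel parametrix \eqref{f3}, and the exponential decay along $\partial\Sigma_\pm$ is the standard steepest-descent estimate. The only minor point to verify carefully is that the polynomial growth of $L(\zeta)$ and $L(\zeta)^{-1}$ at infinity is harmless, which follows from the Gaussian-type integral above.
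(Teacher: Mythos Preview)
Your proposal is correct and follows exactly the approach the paper sketches in the sentence preceding the proposition: the paper simply notes that $L(\zeta)=\mathcal{O}(1)$ (in $x$) away from $\{-\beta,0\}$, invokes the matching \eqref{f4} on $\partial\mathbb{D}_\epsilon(0)$, and the exponential decay $\textnormal{Re}(2\im\varpi(\zeta;x))\to-\infty$ on $\partial\Sigma_\pm$, which is precisely your decomposition. One harmless slip: since $s>-\tfrac12$ is real here, your factor $e^{\pm 2\pi\,\mathrm{Im}(s)}$ equals $1$ and can be dropped.
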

Consequently, RHP \ref{SN2} is solvable for $x\geq x_0$ and we summarise our findings below.
\begin{theo}\label{newby2} For every $s>-\frac{1}{2}$, there exist $c=c(s)>0$ and $x_0=x_0(s)>0$ so that RHP \ref{SN2} is uniquely solvable for all $x\geq x_0$ and any $\epsilon\in(0,\frac{\beta}{4})$. Its solution admits the integral representation
\begin{equation*}
	Z(\zeta)=I+\frac{1}{2\pi\im}\int_{\Sigma_Z}Z_-(\mu)\big(G_Z(\mu)-I\big)\frac{\d\mu}{\mu-\zeta},\ \ \ \ \zeta\in\mathbb{C}\setminus\Sigma_Z,
\end{equation*}
where
\begin{equation*}
	\|Z_-(\cdot;x,s)-I\|_{L^2(\Sigma_Z)}\leq\frac{c}{x}\ \ \ \ \ \forall\,x\geq x_0.
\end{equation*}
\end{theo}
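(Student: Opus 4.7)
This is a routine application of the small-norm theory for Riemann--Hilbert problems, see \cite{DZ,DKMVZ}. The only nontrivial input has already been secured by the preceding proposition, namely
\begin{equation*}
\|G_Z(\cdot;x,s)-I\|_{L^2\cap L^\infty(\Sigma_Z)}\leq\frac{c}{x},\qquad x\geq x_0,
\end{equation*}
with $c,x_0$ depending only on $s$. The plan is to convert RHP \ref{SN2} into a singular integral equation on $L^2(\Sigma_Z)$ and invert it by Neumann series.

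Concretely, I would reformulate the jump condition $Z_+=Z_-G_Z$ on $\Sigma_Z$ together with the normalization $Z(\infty)=I$ as the integral equation
\begin{equation*}
\big(1-\mathcal{C}_{G_Z-I}\big)(Z_--I)=\mathcal{C}_-(G_Z-I)
\end{equation*}
on $\Sigma_Z$, where $\mathcal{C}_-$ is the minus boundary Cauchy operator and $\mathcal{C}_V(f):=\mathcal{C}_-(fV)$. Boundedness of $\mathcal{C}_-$ on $L^2(\Sigma_Z)$ is classical since $\Sigma_Z$ is a finite union of smooth oriented arcs (a circle $\partial\mathbb{D}_\epsilon(0)$ together with the two rays $\partial\Sigma_\pm$) with only finitely many transverse intersections; the operator norm depends only on the geometry, hence only on $\epsilon$ and $\beta$. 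Combining this with the $L^\infty$ estimate,
\begin{equation*}
\|\mathcal{C}_{G_Z-I}\|_{L^2\to L^2}\leq \|\mathcal{C}_-\|_{L^2\to L^2}\,\|G_Z-I\|_{L^\infty(\Sigma_Z)}\leq\frac{C}{x}.
\end{equation*}
Enlarging $x_0$ if necessary so that $C/x_0\leq 1/2$, the operator $1-\mathcal{C}_{G_Z-I}$ becomes invertible by Neumann series with $\|(1-\mathcal{C}_{G_Z-I})^{-1}\|\leq 2$. Existence of $Z_--I\in L^2(\Sigma_Z)$ and the bound $\|Z_--I\|_{L^2}\leq c/x$ then follow from $\|\mathcal{C}_-(G_Z-I)\|_{L^2}\leq \|\mathcal{C}_-\|\,\|G_Z-I\|_{L^2}\leq c/x$. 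Reinserting $Z_-$ into the Plemelj--Sokhotski reconstruction formula produces the asserted integral representation for $Z(\zeta)$ on $\mathbb{C}\setminus\Sigma_Z$.

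Uniqueness is standard: given two candidate solutions $Z^{(1)},Z^{(2)}$, the ratio $Z^{(1)}(Z^{(2)})^{-1}$ has trivial jump across $\Sigma_Z$ (the common factor $G_Z$ cancels) and tends to $I$ at infinity, hence equals $I$ everywhere by Liouville's theorem, provided $Z^{(2)}$ is invertible off $\Sigma_Z$. Invertibility follows from $\det Z^{(2)}\equiv 1$: the scalar function $\det Z^{(2)}$ satisfies a Riemann--Hilbert problem with unit jump (each block in the explicit formula for $G_Z$ in RHP \ref{SN2} is either a rank-one perturbation of $I$ with trivial determinant, or the product $W L^{-1}$, whose determinant equals one since both $W$ and $L$ inherit $\det(\cdot)=1$ from their model ingredients) and normalization $1$ at infinity. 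There is no real obstacle in this argument; the only harmonic-analytic input beyond elementary operator theory is $L^2$-boundedness of $\mathcal{C}_-$ on the Lipschitz contour $\Sigma_Z$, which is well known.
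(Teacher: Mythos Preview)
Your proposal is correct and follows exactly the approach the paper takes: the paper simply invokes the general small-norm theory of \cite{DZ} without spelling out details, citing the preceding $L^2\cap L^\infty$ estimate as the sole nontrivial input, and your write-up is a faithful expansion of that argument (indeed, the paper gives essentially the same Neumann-series derivation explicitly for the analogous small-$x$ theorem in Section~\ref{sectiononsmallx}).
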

We are now left to extract the needed large $x$-asymptotics.
\subsection{Distilling asymptotics} Simply trace back \eqref{f1} and \eqref{f5},
\begin{align*}
	Q_1(x,s)=&\,\,\lim_{\zeta\rightarrow\infty}\zeta\Bigg\{Q(\zeta;x,s)\e^{\im\varpi(\xi;x)\sigma_3}\e^{\im\frac{\pi}{4}\sigma_3}\frac{1}{\sqrt{2}}\begin{bmatrix}1&-1\\ 1 & 1\end{bmatrix}\zeta^{-\frac{1}{4}\sigma_3}-I\Bigg\}\\
	&\hspace{2cm}=\begin{bmatrix}\frac{1}{2}s^2\beta & -\frac{\im}{3}\beta^{\frac{3}{2}}(s-s^3)\smallskip\\ \im s\sqrt{\beta} & -\frac{1}{2}s^2\beta\end{bmatrix}+\frac{\im}{2\pi}\int_{\Sigma_Z}Z_-(\mu)\big(G_Z(\mu)-I\big)\d\mu,
\end{align*}
and note that, through an application of the Cauchy-Schwarz inequality,
\begin{equation*}
	\frac{\im}{2\pi}\int_{\Sigma_Z}Z_-(\mu)\big(G_Z(\mu)-I\big)\d\mu=\frac{\im}{2\pi}\oint_{\partial\mathbb{D}_{\epsilon}(0)}\big(G_Z(\mu)-I\big)\d\mu+\mathcal{O}\big(x^{-2}\big),\ \ \ x\rightarrow+\infty,
\end{equation*}
and so we are left to evaluate the below contour integral, as $x\rightarrow+\infty$.
\begin{prop} As $x\rightarrow+\infty$, uniformly in $s>-\frac{1}{2}$ chosen on compact subsets,
\begin{align*}
	\frac{\im}{2\pi}\oint_{\partial\mathbb{D}_{\epsilon}(0)}\big(G_Z(\mu)-I\big)\d\mu=&\,\,\frac{\im}{2\pi}\oint_{\partial\mathbb{D}_{\epsilon}(0)}W_x(\mu;s)\d\mu+\mathcal{O}\big(x^{-2}\big)\\
	=&\,\,\frac{1-4s^2}{8x}\e^{\im\frac{\pi}{4}\sigma_3}\begin{bmatrix}-s\sqrt{\beta} & -\beta s^2\smallskip\\ 1 & s\sqrt{\beta}\end{bmatrix}\e^{-\im\frac{\pi}{4}\sigma_3}+\mathcal{O}\big(x^{-2}\big).
\end{align*}
\end{prop}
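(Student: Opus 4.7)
The first equality follows immediately from the matching \eqref{f4}: on $\partial\mathbb{D}_{\epsilon}(0)$ one has $G_Z(\zeta)-I = W(\zeta)L(\zeta)^{-1}-I = W_x(\zeta;s)+\mathcal{O}(x^{-2})$ uniformly in $\zeta$, and integrating around the compact circle $\partial\mathbb{D}_{\epsilon}(0)$ preserves this bound.

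For the second equality I would apply the residue theorem. Because $\widehat{E}_W(\zeta)$ is analytic throughout $\mathbb{D}_{\epsilon}(0)$ by construction, the integrand $W_x(\zeta;s)=\frac{1-4s^2}{8x\zeta}\widehat{E}_W(\zeta)\bigl[\begin{smallmatrix}0 & 0\\ 1 & 0\end{smallmatrix}\bigr]\widehat{E}_W(\zeta)^{-1}$ is meromorphic on $\mathbb{D}_{\epsilon}(0)$ with a simple pole only at $\zeta=0$. With the clockwise orientation of $\partial\mathbb{D}_{\epsilon}(0)$ visible in Figure \ref{fig8}, the residue theorem reduces the task to
\begin{equation*}
\frac{\im}{2\pi}\oint_{\partial\mathbb{D}_{\epsilon}(0)}W_x(\mu;s)\,\d\mu \;=\; \frac{1-4s^2}{8x}\,\widehat{E}_W(0)\begin{bmatrix}0 & 0\\ 1 & 0\end{bmatrix}\widehat{E}_W(0)^{-1}.
\end{equation*}
Everything then comes down to evaluating $\widehat{E}_W(0)$ and performing the $2\times 2$ conjugation.

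To compute $\widehat{E}_W(0)$, I would unwind the definitions of $\widehat{E}_W$, $E_W$ and $L$. The identity $(\e^{-\im\pi}x^2\zeta)^{-\sigma_3/4}=\e^{\im\pi\sigma_3/4}\zeta^{-\sigma_3/4}x^{-\sigma_3/2}$ shows that the $x$-dependence is annihilated by the trailing $x^{\sigma_3/2}$. Writing $M=\tfrac{1}{\sqrt{2}}\bigl[\begin{smallmatrix}1 & 1\\ -1 & 1\end{smallmatrix}\bigr]$ and absorbing all commuting diagonal factors into one scalar exponent, the remaining $\zeta$-dependent structure reads
\begin{equation*}
\widehat{E}_W(\zeta)\;=\;\begin{bmatrix}1 & -\im s\sqrt{\beta}\\ 0 & 1\end{bmatrix}\zeta^{\sigma_3/4}\bigl(M\e^{\alpha(\zeta)\sigma_3}M^{-1}\bigr)\zeta^{-\sigma_3/4}\e^{\im\pi\sigma_3/4},
\end{equation*}
with scalar $\alpha(\zeta)=\im\pi s/2+\im\sqrt{\zeta}(s/2)\int_{-\beta}^{0}(\sqrt{-x}(x-\zeta))^{-1}\d x$. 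The substitution $u=\sqrt{-x}$ evaluates the Szeg\H{o}-type integral to $-\pi/\sqrt{\zeta}+2/\sqrt{\beta}+\mathcal{O}(\zeta)$ as $\zeta\downarrow 0$, so the deliberately inserted phase $\e^{\im\pi s\sigma_3/2}$ in $E_W$ exactly cancels the singular $-\im\pi s/2$ contribution, leaving $\alpha(\zeta)=\im s\sqrt{\zeta}/\sqrt{\beta}+\mathcal{O}(\zeta^{3/2})$.

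The subtle step — and the main obstacle — is tracking how this $\mathcal{O}(\sqrt{\zeta})$ remainder survives the limit. Since $M\e^{\alpha\sigma_3}M^{-1}=I-\alpha\sigma_1+\mathcal{O}(\alpha^2)I$ is only $I+\mathcal{O}(\sqrt{\zeta})$, one might naively discard it; however, the conjugation $\zeta^{\sigma_3/4}\sigma_1\zeta^{-\sigma_3/4}=\bigl[\begin{smallmatrix}0 & \zeta^{1/2}\\ \zeta^{-1/2} & 0\end{smallmatrix}\bigr]$ promotes the $(2,1)$-entry by $\zeta^{-1/2}$, so the $\mathcal{O}(\sqrt{\zeta})$ term contributes an $\mathcal{O}(1)$ rank-one piece at $\zeta=0$ (while the $(1,2)$-entry vanishes). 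Combining this with the two upper-triangular factors yields a concrete $\widehat{E}_W(0)$ with $\det=1$, and a direct $2\times 2$ multiplication followed by the identity $\e^{\im\pi\sigma_3/4}\bigl[\begin{smallmatrix}a & b\\ c & d\end{smallmatrix}\bigr]\e^{-\im\pi\sigma_3/4}=\bigl[\begin{smallmatrix}a & \im b\\ -\im c & d\end{smallmatrix}\bigr]$ produces exactly the claimed matrix $\e^{\im\pi\sigma_3/4}\bigl[\begin{smallmatrix}-s\sqrt{\beta} & -\beta s^2\\ 1 & s\sqrt{\beta}\end{smallmatrix}\bigr]\e^{-\im\pi\sigma_3/4}$, completing the proof.
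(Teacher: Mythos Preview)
Your proposal is correct and follows the same route as the paper: use the matching \eqref{f4} for the first equality, then the residue theorem at the simple pole $\zeta=0$ for the second, reducing everything to $\widehat{E}_W(0)\bigl[\begin{smallmatrix}0&0\\1&0\end{smallmatrix}\bigr]\widehat{E}_W(0)^{-1}$. Where the paper simply records $\widehat{E}_W(0)=\e^{\im\frac{\pi}{4}\sigma_3}\bigl[\begin{smallmatrix}1-s^2 & -s\sqrt{\beta}\\ s/\sqrt{\beta} & 1\end{smallmatrix}\bigr]$ and conjugates, you supply the underlying computation (correctly catching that the $\mathcal{O}(\sqrt{\zeta})$ term in $M\e^{\alpha\sigma_3}M^{-1}$ survives in the $(2,1)$-entry after the $\zeta^{\sigma_3/4}(\cdot)\zeta^{-\sigma_3/4}$ conjugation); your resulting $\widehat{E}_W(0)$ agrees with the paper's after commuting the diagonal factor through.
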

\begin{proof} Since
\begin{equation*}
	\widehat{E}_W(\zeta)=\e^{\im\frac{\pi}{4}\sigma_3}\Bigg\{\begin{bmatrix}1-s^2&-s\sqrt{\beta}\\ \frac{s}{\sqrt{\beta}} & 1\end{bmatrix}+\mathcal{O}(\zeta)\Bigg\},\ \ \ \zeta\rightarrow 0,
\end{equation*}
we obtain by residue theorem
\begin{equation*}
	\frac{\im}{2\pi}\oint_{\partial\mathbb{D}_{\epsilon}(0)}W_x(\mu;s)\d\mu=\frac{1-4s^2}{8x}\widehat{E}_W(0)\begin{bmatrix}0 & 0\\ 1 & 0\end{bmatrix}\widehat{E}_W(0)^{-1}=\frac{1-4s^2}{8x}\e^{\im\frac{\pi}{4}\sigma_3}\begin{bmatrix}-s\sqrt{\beta} & -\beta s^2\smallskip\\ 1 & s\sqrt{\beta}\end{bmatrix}\e^{-\im\frac{\pi}{4}\sigma_3},
\end{equation*}
which concludes our proof.
\end{proof}
\begin{cor} As $x\rightarrow+\infty$, uniformly in $s>-\frac{1}{2}$ on compact subsets,
\begin{equation}\label{f555}
	Q_1(x,s)=\begin{bmatrix}\frac{1}{2}s^2\beta & -\frac{\im}{3}\beta^{\frac{3}{2}}(s-s^3)\smallskip\\ \im s\sqrt{\beta} & -\frac{1}{2}s^2\beta\end{bmatrix}+\frac{1-4s^2}{8x}\e^{\im\frac{\pi}{4}\sigma_3}\begin{bmatrix}-s\sqrt{\beta} & -\beta s^2\smallskip\\ 1 & s\sqrt{\beta}\end{bmatrix}\e^{-\im\frac{\pi}{4}\sigma_3}+\mathcal{O}\big(x^{-2}\big).
\end{equation}
\end{cor}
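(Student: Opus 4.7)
All the essential ingredients have been assembled in the preceding analysis, so the plan is simply to combine them. Just before the statement, tracing back the transformations \eqref{f1} and \eqref{f5} together with condition (4) of the outer parametrix RHP and the small-norm representation from Theorem \ref{newby2} yields the exact identity
\begin{equation*}
Q_1(x,s)=\begin{bmatrix}\tfrac{1}{2}s^2\beta & -\tfrac{\im}{3}\beta^{3/2}(s-s^3)\\ \im s\sqrt{\beta} & -\tfrac{1}{2}s^2\beta\end{bmatrix}+\frac{\im}{2\pi}\int_{\Sigma_Z}Z_-(\mu)\bigl(G_Z(\mu)-I\bigr)\d\mu.
\end{equation*}
The constant matrix on the right is already the leading term appearing in \eqref{f555}, so the task reduces entirely to extracting the $\mathcal{O}(x^{-1})$ contribution of the integral.

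To do this, I would first write $Z_-(G_Z-I)=(Z_--I)(G_Z-I)+(G_Z-I)$ and estimate the first piece by Cauchy--Schwarz, using $\|Z_--I\|_{L^2(\Sigma_Z)}\le c/x$ from Theorem \ref{newby2} together with $\|G_Z-I\|_{L^2(\Sigma_Z)}\le c/x$; this contribution is $\mathcal{O}(x^{-2})$. Next, on the lens boundaries $\partial\Sigma_\pm$ the jump $G_Z-I$ is exponentially small in $x$ because $\textnormal{Re}(2\im\varpi(\zeta;x))\to-\infty$ uniformly there, while on $\partial\mathbb{D}_\epsilon(0)$ the matching relation \eqref{f4} yields $G_Z-I=W_x(\zeta;s)+\mathcal{O}(x^{-2})$. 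Therefore
\begin{equation*}
\frac{\im}{2\pi}\int_{\Sigma_Z}Z_-(\mu)\bigl(G_Z(\mu)-I\bigr)\d\mu=\frac{\im}{2\pi}\oint_{\partial\mathbb{D}_\epsilon(0)}W_x(\mu;s)\d\mu+\mathcal{O}\bigl(x^{-2}\bigr).
\end{equation*}

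The preceding Proposition has already computed the right-hand side explicitly by residue theorem, using the value of the locally analytic prefactor $\widehat{E}_W(0)$; its conclusion is precisely
\begin{equation*}
\frac{1-4s^2}{8x}\e^{\im\frac{\pi}{4}\sigma_3}\begin{bmatrix}-s\sqrt{\beta} & -\beta s^2\\ 1 & s\sqrt{\beta}\end{bmatrix}\e^{-\im\frac{\pi}{4}\sigma_3}+\mathcal{O}\bigl(x^{-2}\bigr).
\end{equation*}
Substituting into the expression for $Q_1(x,s)$ delivers the claimed asymptotic \eqref{f555}, with uniformity in $s$ on compact subsets of $(-\tfrac{1}{2},\infty)$ inherited from the uniformity of all the estimates above. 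There is no real obstacle here beyond careful bookkeeping; the only subtle point is ensuring that the uniformity of the matching \eqref{f4} and of the small-norm bound in Theorem \ref{newby2} both hold on compact $s$-subsets, which they do because the constructions \eqref{f2} and \eqref{f3} depend smoothly on $s\in(-\tfrac{1}{2},\infty)$ and the $s$-dependent prefactors are locally bounded.
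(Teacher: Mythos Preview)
Your proposal is correct and follows essentially the same approach as the paper: the corollary is simply the combination of the exact identity for $Q_1(x,s)$ obtained by tracing back \eqref{f1},\eqref{f5}, the Cauchy--Schwarz reduction to the circle integral, and the preceding Proposition's residue computation. The paper presents these steps in the paragraphs leading up to the corollary rather than in a separate proof, but the logic is identical to what you wrote.
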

The above concludes our workings on the large $x$-asymptotics of $Q(\zeta;x,s)$.
\begin{proof}[Proof of Corollary \ref{cor5555}]
This follows directly from (\ref{f555}).
\end{proof}

\begin{rem}\label{remarkaftercorollary49}
Using that $v(z;s)$ solves \eqref{e42}, compare Corollary \ref{cheat}, \eqref{f1000} is sufficient to compute subleading terms, in particular
\begin{equation*}
	v(z;s)=-\frac{z}{2}+s\sqrt{z}-\frac{3}{4}s^2+\frac{s(4s^2-1)}{32\sqrt{z}}+\frac{s^2(4s^2-1)}{64z}+\frac{3s(4s^2-1)(4s^2+3)}{2048z\sqrt{z}}	
	+\mathcal{O}\big(z^{-2}\big),\ \ \ z\rightarrow+\infty,
\end{equation*}
uniformly for $s>-\frac{1}{2}$ on compact sets.
\end{rem}

\section{Small $x$-asymptotics of RHP \ref{PIIImodel}}\label{sectiononsmallx}
We now establish the estimates in Proposition \ref{1102asymoforsmallz}. We draw inspiration from \cite[Section $6$]{XZ}. Suppose $Q(\zeta)=Q(\zeta;x,s)\in\mathbb{C}^{2\times 2}$ denotes the unique solution of RHP \ref{PIIImodel}, with $\beta>0$ fixed (and suppressed from our notations). We begin with the transformation
\begin{equation}\label{f6}
	H(\zeta;x,s):=x^{\frac{1}{2}\sigma_3}Q\big(\zeta x^{-2};x,s\big),\ \ \ \zeta\in\mathbb{C}\setminus\Sigma_Q,
\end{equation}
that leads to the below RHP.
\begin{problem}\label{traf2} Let $x>0$ and $s>-\frac{1}{2}$. The function $H(\zeta)=H(\zeta;x,s)\in\mathbb{C}^{2\times 2}$ defined in \eqref{f6} has the following properties:
\begin{enumerate}
	\item[(1)] $\zeta\mapsto H(\zeta)$ is analytic for $\zeta\in\mathbb{C}\setminus\Sigma_H$ where $\Sigma_H=\bigcup_{j=1}^4\Gamma_j$ consists of the four oriented rays
	\begin{equation*}
		\Gamma_1:=(-\beta x^2,0)\subset\mathbb{R},\ \ \ \ \ \ \Gamma_2:=\e^{\im\frac{5\pi}{3}}(0,\infty),\ \ \ \ \ \ \Gamma_3:=(0,\infty)\subset\mathbb{R},\ \ \ \ \ \ \Gamma_4:=\e^{\im\frac{\pi}{3}}(0,\infty),
	\end{equation*}
	and is shown in Figure \ref{fig9}. On $\Sigma_H\setminus\{-\beta x^2,0\}$, $H(\zeta)$ admits continuous limiting values $H_{\pm}(\zeta)$ as we approach $\Sigma_H\setminus\{-\beta x^2,0\}$ from either side of $\mathbb{C}\setminus\Sigma_H$.
	\begin{figure}[tbh]
	\begin{tikzpicture}[xscale=0.9,yscale=0.9]
	\draw [thick,red,decoration={markings,mark= at position 0.6 with {\arrow{>}}},postaction={decorate}] (0,0) -- (1,1.732050);
	\draw [thick,red,decoration={markings,mark= at position 0.6 with {\arrow{>}}},postaction={decorate}] (0,0) -- (1,-1.732050);
	 \draw [->] (-2,0) -- (3.5,0) node [right] {$\footnotesize{\textnormal{Re}(\zeta)}$};
        \draw [thick,red,decoration={markings,mark= at position 0.5 with {\arrow{>}}},postaction={decorate}] (0,0) -- (3,0);
  \draw [->] (0,-2) -- (0,2) node [above] {$\footnotesize{\textnormal{Im}(\zeta)}$};
  	\draw [thick,red,decoration={markings,mark= at position 0.6 with {\arrow{>}}},postaction={decorate}] (-1,0) -- (0,0);
  \draw [fill, color=blue!60!black] (-1,0) circle [radius=0.04];
  \draw [fill, color=blue!60!black] (0,0) circle [radius=0.04];
		\node [right] at (1,1.72) {{\footnotesize $\Gamma_4$}};
		\node [right] at (1,-1.72) {{\footnotesize $\Gamma_2$}};
		\node [above] at (2,0.1) {{\footnotesize $\Gamma_3$}};
		\node [above] at (-0.65,0.1) {{\footnotesize $\Gamma_1$}};
		\node [below] at (-0.5,-0.75) {{\footnotesize\textcolor{cyan}{ $\Omega_1$}}};
		\node [above] at (-0.5,0.75) {{\footnotesize\textcolor{cyan}{ $\Omega_4$}}};
		\node [above] at (2.5,0.75) {{\footnotesize\textcolor{cyan}{ $\Omega_3$}}};
		\node [below] at (2.5,-0.75) {{\footnotesize\textcolor{cyan}{ $\Omega_2$}}};
\end{tikzpicture}
\caption{The oriented jump contour $\Sigma_H$ in the complex $\zeta$-plane. The (end) points $\zeta=0,-\beta x^2$ are colored in \textcolor{blue!60!black}{blue}. The sectors in between in \textcolor{cyan}{cyan}.}
\label{fig9}
\end{figure}

	\item[(2)] The limiting values $H_{\pm}(\zeta)$ on $\Sigma_H\setminus\{-\beta x^2,0\}$ satisfy $H_+(\zeta)=H_-(\zeta)G_H(\zeta)$ with $G_H(\zeta)=G_H(\zeta;s)$ given by
	\begin{align*}
		G_H(\zeta)=\e^{-\im\pi s\sigma_3},\ \ \zeta\in\Gamma_1;&\hspace{1cm} G_H(\zeta)=\begin{bmatrix}0&1\\ -1&0\end{bmatrix},\ \ \zeta\in\Gamma_3;\\
		G_H(\zeta)=\begin{bmatrix}1 & 0 \\ \e^{2\pi\im s}& 1\end{bmatrix},\ \ \zeta\in\Gamma_2;&\hspace{1cm} G_H(\zeta)=\begin{bmatrix}1&0\\ \e^{-2\pi\im s} & 1\end{bmatrix},\ \ \zeta\in\Gamma_4.
	\end{align*}
	\item[(3)] Near $\zeta=-\beta x^2$, $\zeta\mapsto H(\zeta)$ is weakly singular in that, with some $\zeta\mapsto\widehat{H}(\zeta)=\widehat{H}(\zeta;x,s)$ analytic at $\zeta=-\beta x^2$ and non-vanishing, $H(\zeta)$ is of the form
	\begin{equation*}
		H(\zeta)=\widehat{H}(\zeta)(\zeta+\beta x^2)^{\frac{s}{2}\sigma_3},\ \ \ \ \textnormal{arg}(\zeta+\beta x^2)\in(0,2\pi).
	\end{equation*}
	\item[(4)] Near $\zeta=0$, $\zeta\mapsto H(\zeta)$ is weakly singular in that, with some $\zeta\mapsto\widecheck{H}(\zeta)=\widecheck{H}(\zeta;x,s)$ analytic at $\zeta=0$ and non-vanishing, $H(\zeta)$ is of the form
	\begin{equation*}
		H(\zeta)=\widecheck{H}(\zeta)\zeta^{\frac{s}{2}\sigma_3}\mathcal{M}_j(\zeta),\ \ \zeta\in\Omega_j\cap\mathbb{D}_{\epsilon}(0),\ \ \ \ \ \ \textnormal{arg}\,\zeta\in(0,2\pi),
	\end{equation*}
	with $\mathcal{M}_j(\zeta)$ as in condition $(4)$ of RHP \ref{PIIImodel}.
	\item[(5)] As $\zeta\rightarrow\infty$ and $\zeta\notin\Sigma_H$, $H(\zeta)$ is normalized as follows,
	\begin{equation*}
		H(\zeta)=\bigg\{I+\sum_{k=1}^2H_k(x,s)\zeta^{-k}+\mathcal{O}\big(\zeta^{-3}\big)\bigg\}\zeta^{\frac{1}{4}\sigma_3}\frac{1}{\sqrt{2}}\begin{bmatrix}1&1\\ -1&1\end{bmatrix}\e^{-\im\frac{\pi}{4}\sigma_3}\e^{-\im\zeta^{\frac{1}{2}}\sigma_3},
	\end{equation*}
	with $H_k(x,s)=x^{\frac{1}{2}\sigma_3}Q_k(x,s)x^{2k}x^{-\frac{1}{2}\sigma_3}$ independent of $\zeta$ and all fractional powers defined with cut on $[0,\infty)\subset\mathbb{R}$ such that $\textnormal{arg}\,\zeta\in(0,2\pi)$.
\end{enumerate}
\end{problem}
Noticing that $\Gamma_1$ shrinks to $\emptyset$ as $x\downarrow 0$, we are at once led to the below local constructions.
\subsection{The outer parametrix} We work with $\e^{\im\frac{\pi}{4}\sigma_3}J(\zeta;2s)$ as defined in \eqref{A11} for the outer parametrix which approximates $H(\zeta;x,s)$ well for $\zeta\in\Sigma_H\setminus\mathbb{D}_{\epsilon}(0)$ with $\epsilon>0$ fixed. Here we assume $x$ to be sufficiently close to zero so that $-\beta x^2\in\mathbb{D}_{\epsilon}(0)$.
\subsection{The parametrix at $\zeta=0$} We recall \eqref{A11} as well as Remark \ref{nearzero} and define, for $x$ suffciently close to zero so that $-\beta x^2\in\mathbb{D}_{\epsilon}(0)$, assuming first $2s\notin\mathbb{Z}_{\geq 0}$,
\begin{equation}\label{f7}
	W(\zeta;x,s):=E_W(\zeta)\begin{bmatrix}1 &m(\zeta x^{-2})\smallskip \\ 0 & 1\end{bmatrix}\big(\zeta+\beta x^2\big)^{\frac{s}{2}\sigma_3}\big(\e^{-\im\pi}\zeta\big)^{\frac{s}{2}\sigma_3}\begin{bmatrix}1&\frac{\im}{2}\frac{1}{\sin(2\pi s)}\\ 0 & 1\end{bmatrix}\mathcal{N}(\zeta;2s),\ \ \zeta\in\mathbb{D}_{\epsilon}(0)\setminus\Sigma_H.
\end{equation}
Here, $\textnormal{arg}(\zeta+\beta x^2)\in(0,2\pi)$ and $\textnormal{arg}(\e^{-\im\pi}\zeta)\in(-\pi,\pi)$, $m(\zeta)=m(\zeta;x,s)$ denotes the scalar function
\begin{equation*}
	m(\zeta;x,s):=-\frac{x^{4s}}{2\pi\im}\int_{-\beta}^0\frac{\e^{\im\pi s}(-\tau)^s(\tau+\beta)^s}{2\cos(\pi s)}\frac{\d\tau}{\tau-\zeta},\ \ \ \ \zeta\in\mathbb{C}\setminus[-\beta,0],
\end{equation*}
and $\mathbb{D}_{\epsilon}(0)\ni\zeta\mapsto E_W(\zeta)$ is the locally analytic multiplier
\begin{equation}\label{defEWfornonint}
E_W(\zeta)=E_W(\zeta;s):=\e^{\im\frac{\pi}{4}\sigma_3}J(\zeta;2s)\mathcal{N}(\zeta;2s)^{-1}\begin{bmatrix}1 & -\frac{\im}{2}\frac{1}{\sin(2\pi s)}\\ 0 & 1\end{bmatrix}\big(\e^{-\im\pi}\zeta\big)^{-s\sigma_3}\e^{-\im\frac{\pi}{2}s\sigma_3},
\end{equation}
with $J(\zeta;\nu)$ from \eqref{A11}. We emphasize that $\zeta\mapsto m(\zeta;s)$ is constructed in such a way that
\begin{equation*}
	m_+\big(\zeta x^{-2}\big)-m_-\big(\zeta x^{-2}\big)=-\frac{\e^{\im\pi s}(-\zeta)^s}{2\cos(\pi s)}\big(\zeta+\beta x^2\big)^s,\ \ \zeta\in\Gamma_1;\ \ \ \ \ \ m(\zeta)=\mathcal{O}\big(\zeta^{-1}\big),\ \ \zeta\rightarrow\infty.
\end{equation*}
If on the other hand $2s\in\mathbb{Z}_{\geq 0}$, then \eqref{f7} needs to be modified according to, for $\zeta\in\mathbb{D}_{\epsilon}(0)\setminus\Sigma_H$,
\begin{equation}\label{f8}
	W(\zeta;x,s):=E_W(\zeta)\begin{bmatrix}1 & m(\zeta x^{-2})\smallskip\\ 0 & 1\end{bmatrix}\big(\zeta+\beta x^2\big)^{\frac{s}{2}\sigma_3}\big(\e^{-\im\pi}\zeta\big)^{\frac{s}{2}\sigma_3}\begin{bmatrix}1 & -\frac{\e^{2\pi\im s}}{2\pi\im}\ln(\e^{-\im\pi}\zeta)\smallskip\\ 0 & 1\end{bmatrix}\mathcal{N}(\zeta;2s),
\end{equation}
where $m(\zeta)=m(\zeta;s,x)$ is the scalar function
\begin{equation*}
	m(\zeta;x,s):=-\frac{x^{4s}}{2\pi\im}\int_{-\beta}^0\frac{\e^{3\pi\im s}(-\tau)^s(\tau+\beta)^s\ln(-\tau x^2)\sin(\pi s)}{\pi}\frac{\d\tau}{\tau-\zeta},\ \ \ \ \zeta\in\mathbb{C}\setminus[-\beta,0],
\end{equation*}
and $\mathbb{D}_{\epsilon}(0)\ni\zeta\mapsto E_W(\zeta)$ the analytic multiplier
\begin{equation}\label{defEwforint}
E_W(\zeta)=E_W(\zeta;x,s):= \e^{\im\frac{\pi}{4}\sigma_3}J(\zeta;2s)\mathcal{N}(\zeta;2s)^{-1}\begin{bmatrix}1 & \frac{\e^{2\pi\im s}}{2\pi\im}\ln(\e^{-\im\pi}\zeta)\\ 0 & 1\end{bmatrix}\big(\e^{-\im\pi}\zeta\big)^{-s\sigma_3}\e^{-\im\frac{\pi}{2}s\sigma_3}.
\end{equation}
Observe that this time
\begin{equation*}
	m_+\big(\zeta x^{-2}\big)-m_-\big(\zeta x^{-2}\big)=-\frac{1}{\pi}\e^{3\pi\im s}\ln(-\zeta)\big(\zeta+\beta x^2\big)^s(-\zeta)^s\sin(\pi s),\ \zeta\in\Gamma_1;\ \ \ m(\zeta)=\mathcal{O}\big(\zeta^{-1}\big),\ \ \zeta\rightarrow\infty.
\end{equation*}
The relevant analytic and asymptotic properties of \eqref{f7},\eqref{f8} are summarised below.
\begin{problem}[Origin parametrix] Let $0<x\ll 1$ and $s>-\frac{1}{2}$ so that $-\beta x^2\in\mathbb{D}_{\epsilon}(0)$. The model function $W(\zeta)=W(\zeta;x,s)\in\mathbb{C}^{2\times 2}$ defined in \eqref{f7},\eqref{f8} has the following properties:
\begin{enumerate}
	\item[(1)] $\zeta\mapsto W(\zeta)$ is analytic for $\zeta\in\mathbb{C}\setminus(\Sigma_H\cap\mathbb{D}_{\epsilon}(0))$ with $\Sigma_H$ shown in Figure \ref{fig9}. On $(\Sigma_H\cap\mathbb{D}_{\epsilon}(0))\setminus\{-\beta x^2,0\}$, $W(\zeta)$ admits continuous limiting values $W_{\pm}(\zeta)$ as we approach $(\Sigma_H\cap\mathbb{D}_{\epsilon}(0))\setminus\{-\beta x^2,0\}$ from either side of $\mathbb{C}\setminus\Sigma_H$.
	\item[(2)] The limiting values $W_{\pm}(\zeta)$ on $(\Sigma_H\cap\mathbb{D}_{\epsilon}(0))\setminus\{-\beta x^2,0\}$ obey $W_+(\zeta)=W_-(\zeta)G_W(\zeta)$ with $G_W(\zeta)=G_W(\zeta;x,s)$ given by
	\begin{align*}
		G_W(\zeta)=\e^{-\im\pi s\sigma_3},\ \zeta\in\Gamma_1\cap\mathbb{D}_{\epsilon}(0);&\hspace{1cm}G_W(\zeta)=\begin{bmatrix}0&1\\ -1 & 0\end{bmatrix},\ \zeta\in\Gamma_3\cap\mathbb{D}_{\epsilon}(0);\\
		G_W(\zeta)=\begin{bmatrix}1&0\\ \e^{2\pi\im s}& 1\end{bmatrix},\ \zeta\in\Gamma_2\cap\mathbb{D}_{\epsilon}(0);&\hspace{1cm}G_W(\zeta)=\begin{bmatrix}1 & 0\\ \e^{-2\pi\im s} & 1\end{bmatrix},\ \zeta\in\Gamma_4\cap\mathbb{D}_{\epsilon}(0).
	\end{align*}
	\item[(3)] Near $\zeta=-\beta x^2,\zeta\mapsto W(\zeta)$ is weakly singular in that, with some $\zeta\mapsto\widehat{W}(\zeta)=\widehat{W}(\zeta;x,s)$ analytic at $\zeta=-\beta x^2$ and non-vanishing, $W(\zeta)$ is of the form
	\begin{equation*}
		W(\zeta)=\widehat{W}(\zeta)\big(\zeta+\beta x^2\big)^{\frac{s}{2}\sigma_3},\ \ \ \ \textnormal{arg}(\zeta+\beta x^2)\in(0,2\pi).
	\end{equation*}
	\item[(4)] Near $\zeta=0$, $\zeta\mapsto W(\zeta)$ is weakly singular in that, with some $\zeta\mapsto\widecheck{W}(\zeta)=\widecheck{W}(\zeta;x,s)$ analytic at $\zeta=0$ and non-vanishing, $W(\zeta)$ is of the form
	\begin{equation*}
		W(\zeta)=\widecheck{W}(\zeta)\zeta^{\frac{s}{2}\sigma_3}\mathcal{M}_j(\zeta),\ \ \ \zeta\in\Omega_j\cap\mathbb{D}_{\epsilon}(0),\ \ \ \ \ \ \ \textnormal{arg}\,\zeta\in(0,2\pi),
	\end{equation*}
	with $\mathcal{M}_j(\zeta)$ as in condition $(4)$ of RHP \ref{PIIImodel}.
	\item[(5)] As $x\downarrow 0$, $W(\zeta)$ in \eqref{f7},\eqref{f8} compares to $J(\zeta;2s)$ in \eqref{A11} as follows: if $2s\notin\mathbb{Z}_{\geq 0}$, then
	\begin{equation}\label{f9}
		W(\zeta)=\bigg\{I+W_x(\zeta;s)+\mathcal{O}\big(x^{2\kappa}\big)\bigg\}\e^{\im\frac{\pi}{4}\sigma_3}J(\zeta;2s),\ \ \ \ \kappa:=2\min\{1,1+s\}>1,
	\end{equation}
	uniformly for $0<\delta_1\leq|\zeta|\leq\delta_2<\epsilon$, for any fixed $s>-\frac{1}{2}$. Here,
	\begin{equation*}
		W_x(\zeta;s)=E_W(\zeta)\frac{x^2}{\zeta}\begin{bmatrix}\frac{\beta s}{2} & \mathsf{h}_{s}(x)\smallskip\\ 0 & -\frac{\beta s}{2}\end{bmatrix}E_W(\zeta)^{-1};\ \ \ \ \ \ \mathsf{h}_{s}(x):=\frac{x^{4s}}{2\pi\im}\int_{-\beta}^0\frac{\e^{\im\pi s}(-\tau)^s(\tau+\beta)^s}{2\cos(\pi s)}\d\tau.
	\end{equation*}
	If on the other hand $2s\in\mathbb{Z}_{\geq 0}$, then instead
	\begin{equation}\label{f10}
		W(\zeta)=\bigg\{I+W_x(\zeta;s)+\mathcal{O}\big(\max\{x^4,x^{4+4s}\ln x\}\big)\bigg\}\e^{\im\frac{\pi}{4}\sigma_3}J(\zeta;2s),
	\end{equation}
	uniformly for $0<\delta_1\leq|\zeta|\leq\delta_2<\epsilon$, for any fixed $s>-\frac{1}{2}$. Here,
	\begin{equation*}
		W_x(\zeta;s)=E_W(\zeta)\frac{x^2}{\zeta}\begin{bmatrix}\frac{\beta s}{2} & \tilde{\mathsf{h}}_{s}(x)\\ 0 & -\frac{\beta s}{2}\end{bmatrix}E_W(\zeta)^{-1};\ \ \ \ \ \ \tilde{\mathsf{h}}_{s}(x):=\frac{x^{4s}}{2\pi\im}\int_{-\beta}^0\frac{e^{3\pi\im s}(-\tau)^s(\tau+\beta)^s\ln(-\tau x^2)\sin(\pi s)}{\pi}\d\tau.
	\end{equation*}
\end{enumerate}
\end{problem}
\begin{rem} In establishing conditions $(3)$ we make use of the local expansions
\begin{align*}
	m(\zeta x^{-2})=&\,\,-\frac{\im}{2}\frac{\beta^sx^{2s}}{\sin(2\pi s)}\big(\zeta+\beta x^2\big)^s+\mathcal{O}(1),\ \ \ \ \ 2s\notin\mathbb{Z}_{\geq 0};\\
	m(\zeta x^{-2})=&\,\,\frac{\e^{2\pi\im s}}{2\pi\im}\beta^sx^{2s}\ln(\beta x^2)\big(\zeta+\beta x^2\big)^s+\mathcal{O}(1),\ \ \ \ \ \ 2s\in\mathbb{Z}_{\geq 0};
\end{align*}
valid as $\zeta\rightarrow-\beta x^2$ with $\textnormal{arg}(\zeta+\beta x^2)\in(0,2\pi)$. These follow from our integral representations of $m(\zeta)$. One can argue similarly in condition $(4)$, or simply observe that \eqref{f7},\eqref{f8} leave $\zeta=-\beta x^2,0$ to be isolated singularities of
\begin{equation*}
	\mathbb{D}_{\epsilon}(-\beta x^2)\ni\zeta\mapsto W(\zeta)\big(\zeta+\beta x^2\big)^{-\frac{s}{2}\sigma_3},\ \ \ \ \ \ \mathbb{D}_{\epsilon}(0)\ni\zeta\mapsto W(\zeta)\mathcal{M}_j(\zeta)^{-1}\zeta^{-\frac{s}{2}\sigma_3},
\end{equation*}
which have to be removable by our explicit formulae \eqref{f7},\eqref{f8}.
\end{rem}
\begin{rem}\label{structure} A closer inspection of the workings underwriting \eqref{f9} reveals that the same matching between $W(\zeta)$ and $\e^{\im\frac{\pi}{4}\sigma_3}J(\zeta;2s)$ as $x\downarrow 0$ on $0<\delta_1\leq|\zeta|\leq\delta_2<\epsilon$ extends to a full asymptotic matching of the form
\begin{equation*}
	W(\zeta)\big(\e^{\im\frac{\pi}{4}\sigma_3}J(\zeta;2s)\big)^{-1}\sim\begin{cases} I+x^{4s}\mathbb{C}[[x^2]]+\mathbb{C}[[x^2]],& 2s\notin\mathbb{Z}_{\geq 0}\smallskip\\ I+\mathbb{C}[[x^2]],&2s\in\mathbb{Z}_{\geq 1},\ 2s\ \textnormal{even}\smallskip\\ I+x^{4s}\ln(x^2)\mathbb{C}[[x^2]]+x^{4s}\mathbb{C}[[x^2]]+\mathbb{C}[[x^2]],&2s\in\mathbb{Z}_{\geq 1},\ 2s\ \textnormal{odd}
	\end{cases}
\end{equation*}
where $\mathbb{C}[[x^2]]$ are matrix-valued  power series in $x^2$ of degree at least one, with coefficients that are meromorphic in $\zeta$. Moreover there is an $x_{0}$ (depending only on $\delta_{1}$ and $\delta_{2}$) such that each series  $\mathbb{C}[[x^2]]$ absolutely convergent for $|x|<x_0$, uniformly for $\zeta$ with $|\zeta|\in[\delta_1,\delta_2]$. In particular, when $s=0$, then we have equality
\begin{equation*}
	W(\zeta)=\e^{\im\frac{\pi}{4}\sigma_3}J(\zeta;0),\ \ \ \ 0<\delta_1\leq|\zeta|\leq\delta_2<\epsilon.
\end{equation*}
\end{rem}

\begin{proof}[Proof of Remark \ref{structure}]
Let $0<\delta_1\le|\zeta|\le \delta_2<\epsilon$ and set 
$
x_0= \sqrt{\delta_1/(2\beta)}.
$
In the following, we show that for $|x|<x_0$, the asymptotic series expansion, as $x\downarrow 0$, of
\[
W(\zeta;x,s) \big(\e^{\im \frac{\pi}{4} \, \sigma_3} J(\zeta;2s)\big)^{-1},
\]
converges absolutely and uniformly with respect to $\zeta$. By \eqref{f7} and \eqref{f8}, we have 
\begin{equation}\label{1019formula1}
W(\zeta;x,s) \big(\e^{\im\frac{\pi}{4} \, \sigma_3} J(\zeta;2s)\big)^{-1}=E_{W}(\zeta)\begin{bmatrix}1 & m(\zeta x^{-2}) \\ 0 & 1\end{bmatrix}(1+\beta x^2/\zeta)^{\frac{s}{2}\sigma_{3}}E_{W}(\zeta)^{-1}.
\end{equation}
Consider the factor $(\zeta + \beta x^2)^{s/2}$. Using the binomial series,
$
(\zeta + \beta x^2)^{s/2} = \zeta^{s/2} \sum_{k=0}^\infty \binom{s/2}{k} \left(\beta x^2/\zeta\right)^k.
$
Since 
\bea\label{quantitivecon}
|\beta x^2 / \zeta| \le \beta x^2 / \delta_1 < 1/2
\eea
for $|x|<x_0$, this series converges absolutely, uniformly for $\zeta$ in the annulus $\delta_1 \le |\zeta|\le \delta_2$. For $2s\notin\mathbb{Z}_{\ge 0}$, the scalar function
\[
m(\zeta x^{-2};x,s) = -\frac{x^{4s}}{2\pi i} \int_{-\beta}^0 \frac{(-\tau)^s (\tau+\beta)^s \e^{\im\pi s}}{2 \cos(\pi s)} \frac{d\tau}{\tau - \zeta x^{-2}}, \ \ \ \ \zeta x^{-2}\in \mathbb{C}\setminus[-\beta,0]
\]
can be expanded using the geometric series
$
(\tau - \zeta x^{-2})^{-1}=-\zeta^{-1}x^2 \sum_{k=0}^\infty \left( \zeta^{-1}\tau x^2\right)^k, |\tau x^2 / \zeta| < 1/2.
$
Substituting this series into the integral gives
$
m(\zeta x^{-2};x,s) = x^{4s} \sum_{k=1}^\infty \mathsf{f}_k(s) (\zeta^{-1}x^2)^{k}, 
$
where
\bea\label{defforfk}
\mathsf{f}_k(s) = \frac{1}{2\pi \im}\int_{-\beta}^0 \frac{(-\tau)^{s+k-1} (\tau+\beta)^s \e^{\im\pi s}}{\cos(\pi s)} d\tau=
\frac{\e^{\im \pi s} \beta^{2s}}{4\pi \im \cos(\pi s )}\frac{\Gamma(s+k)\Gamma(s+1)}{\Gamma(2s+1+k)}.
\eea
Note that $\mathsf{f}_{k}(s)=\mathcal{O}(k^{-s-1})$ so that the series $\sum_{k=1}^\infty \mathsf{f}_k(s)(\zeta^{-1}x^{2})^k$ converges absolutely for $|x|<x_0$, uniformly in $\zeta\in[\delta_1,\delta_2]$. By (\ref{1019formula1}), for $2s\notin \mathbb{Z}_{\geq 0}$, with $E_{W}(\zeta)$ given as in (\ref{defEWfornonint}), 
\begin{align}\label{explicitfornonint}
&W(\zeta;x,s) \big(\e^{\im \frac{\pi}{4} \, \sigma_3} J(\zeta;2s)\big)^{-1}-I\nonumber\\
=&\,\,E_{W}(\zeta)\begin{bmatrix}
\sum_{k=1}^\infty \binom{s/2}{k} \left(\beta x^2/\zeta\right)^k & x^{4s}\sum_{k=1}^\infty \mathsf{g}_{k}(s)\left(\beta x^2/\zeta\right)^k\smallskip\\
0&\sum_{k=1}^\infty \binom{-s/2}{k} \left(\beta x^2/\zeta\right)^k
\end{bmatrix}E_{W}(\zeta)^{-1},
\end{align}
where $\mathsf{g}_{k}(s)=\mathsf{f}_{k}(s)+\sum_{n=1}^{k-1}\mathsf{f}_{k-n}(s)\binom{-s/2}{n}\beta^n$. Here $\binom{-s/2}{n}=O(n^{s/2-1})$ and we recall \eqref{quantitivecon}. So all the series involved in the matrix \eqref{explicitfornonint} converge absolutely for $|x|<x_0$, uniformly in $\zeta \in [\delta_1, \delta_2]$. Note that $E_{W}(\zeta
)$ is independent of $x$ and  analytic when $\zeta\in \mathbb{D}_{\epsilon}(0)$, so we have the conclusion in Remark \ref{structure} for $2s\notin \mathbb{Z}_{\geq 0}$. When $2s\in\mathbb{Z}_{\ge 1}$, the integral defining $m(\zeta x^{-2})$ may produce logarithmic terms, leading to contributions of the form $x^{4s}\ln(x^2) \mathbb{C}[[x^2]]$. More specifically, by a similar argument to that used in the case $2s\notin \mathbb{Z}_{\geq 0}$, we conclude that for $2s\in\mathbb{Z}_{\ge 1}$, with $E_{W}(\zeta)$ given as in (\ref{defEwforint}),
\begin{align}\label{explicitforint}
&W(\zeta;x,s) \big(\e^{\im \frac{\pi}{4} \, \sigma_3} J(\zeta;2s)\big)^{-1}-I\nonumber\\
=&\,\,E_{W}(\zeta)\begin{bmatrix}
\sum_{k=1}^\infty \binom{s/2}{k} \left(\beta x^2/\zeta\right)^k & x^{4s}\sum_{k=1}^\infty \left(\mathsf{g}_{k,1}(s)+\ln(x^2)\mathsf{g}_{k,2}(s)\right)\left(\beta x^2/\zeta\right)^k\smallskip\\
0&\sum_{k=1}^\infty \binom{-s/2}{k} \left(\beta x^2/\zeta\right)^k
\end{bmatrix}E_{W}(\zeta)^{-1},
\end{align}
where 
\begin{align*}
\mathsf{f}_{k,1}(s)& =\frac{\e^{3\im\pi s}\sin(\pi s)\beta^{2s}}{2\im \pi^2}\int_{0}^{1} \tau^{s+k-1} (1-\tau)^s \ln(\beta \tau)d\tau\\
&=\frac{\e^{3\im\pi s}\sin(\pi s)\beta^{2s}}{2\im \pi^2}\frac{\Gamma(s+k)\Gamma(s+1)}{\Gamma(2s+1+k)}\left(\ln \beta+\psi(s+k)-\psi(2s+k+1)\right),
\end{align*}
with $\psi$ the digamma function,
and
\beas
\mathsf{f}_{k,2}(s)=
\frac{\e^{3\im \pi s}\sin(\pi s) \beta^{2s}}{2\im \pi^2}\frac{\Gamma(s+k)\Gamma(s+1)}{\Gamma(2s+1+k)},
\eeas
together with $\mathsf{g}_{k,1}(s)=\mathsf{f}_{k,1}(s)+\sum_{n=1}^{k-1}\mathsf{f}_{k-n,1}(s)\binom{-s/2}{n}\beta^{n}$, 
$\mathsf{g}_{k,2}(s)=\mathsf{f}_{k,2}(s)+\sum_{n=1}^{k-1}\mathsf{f}_{k-n,2}(s)\binom{-s/2}{n}\beta^{n}$. When $2s$ is even, $\mathsf{f}_{k,1}(s) = \mathsf{f}_{k,2}(s)=0$, so the claim in this remark holds for both even and odd $2s$.
Finally, when $s=0$, $m(\zeta x^{-2})\equiv 0$ and $(\zeta+\beta x^2)^{0} = 1$, so we have exact equality
$
W(\zeta;0) = \e^{\im \frac{\pi}{4}\sigma_3} J(\zeta;0), \delta_1 \le |\zeta| \le \delta_2.
$
\end{proof}
\subsection{The final ratio transformation} We take $0<x\ll 1$ and $0<\epsilon<1$ so that $-\beta x^2\in\mathbb{D}_{\epsilon}(0)$. Now compare $J(\zeta;2s),W(\zeta)$ in \eqref{A11},\eqref{f7},\eqref{f8} to $H(\zeta)$ from RHP \ref{traf2} by defining
\begin{equation}\label{f11}
	Z(\zeta;x,s):=H(\zeta;x,s)\begin{cases}W(\zeta;x,s)^{-1},&\zeta\in\mathbb{D}_{\epsilon}(0)\smallskip\\ J(\zeta;2s)^{-1}\e^{-\im\frac{\pi}{4}\sigma_3},&\zeta\in\mathbb{C}\setminus(\Sigma_H\cup\overline{\mathbb{D}_{\epsilon}(0)})\end{cases},
\end{equation}
and collect the ratio function's properties below.
\begin{problem}[Small norm problem]\label{SN3} Let $s>-\frac{1}{2}$ and $0<x\ll 1$. The function $Z(\zeta)=Z(\zeta;x,s)\in\mathbb{C}^{2\times 2}$ defined in \eqref{f11} has the following properties:
\begin{enumerate}
	\item[(1)] $\zeta\mapsto Z(\zeta)$ is analytic for $\zeta\in\mathbb{C}\setminus\Sigma_Z$ with $\Sigma_Z=\partial\mathbb{D}_{\epsilon}(0)$ shown in Figure \ref{fig9x}. On $\Sigma_Z$, $Z(\zeta)$ admits continuous boundary values $Z_{\pm}(\zeta)$ as one approaches $\Sigma_Z$ from either side of $\mathbb{C}\setminus\Sigma_Z$ non-tangentially.
	\begin{figure}[tbh]
	\begin{tikzpicture}[xscale=0.9,yscale=0.9]
	        \draw [->] (-5,0) -- (2,0) node [right] {$\footnotesize{\textnormal{Re}(\zeta)}$};
	  \draw [->] (-2.5,-1.8) -- (-2.5,1.8) node [above] {$\footnotesize{\textnormal{Im}(\zeta)}$};
	\draw [thick, color=red, fill=white, decoration={markings, mark=at position 0.07 with {\arrow{<}}}, decoration={markings, mark=at position 0.4 with {\arrow{<}}}, decoration={markings, mark=at position 0.7 with {\arrow{<}}},postaction={decorate}] (-2.5,0) circle [radius=0.6];
	\draw[fill, color=blue!60!black](-2.5,0) circle [radius=0.04];
	\draw[fill, color=blue!60!black](-2.75,0) circle [radius=0.04];
\end{tikzpicture}
\caption{The oriented jump contour $\Sigma_Z$, in the complex $\zeta$-plane drawn in \textcolor{red}{red}, for $Z(\zeta)$. The points $\zeta=-\beta x^2,0$ are colored in \textcolor{blue!60!black}{blue}.}
\label{fig9x}
\end{figure}
	\item[(2)] On $\Sigma_Z$ we record the jump condition $Z_+(\zeta)=Z_-(\zeta)G_Z(\zeta)$ with $G_Z(\zeta)=G_Z(\zeta;x,s)$ given as
	\begin{equation*}
		G_Z(\zeta)=W(\zeta)J(\zeta;2s)^{-1}\e^{-\im\frac{\pi}{4}\sigma_3},\ \ \ \zeta\in\Sigma_Z.
	\end{equation*}
	By construction, $Z(\zeta)$ has no jumps on $(\Gamma_2\cup\Gamma_3\cup\Gamma_4)\cap\{|\zeta|\geq\epsilon\}$
	\item[(3)] As $\zeta\rightarrow\infty$, $Z(\zeta)\rightarrow I$.
\end{enumerate}
\end{problem}

The fact that RHP \ref{SN3} constitutes a small norm problem, as $x\downarrow 0$, is made precisely below. It follows from RHP \ref{SN3}, condition $(2)$ and Remark \ref{structure}.
\begin{prop}\label{smallL2norm} Fix $s>-\frac{1}{2}$. There exists $x_0\in(0,1), c=c(s,x_{0})$ so that
\begin{equation*}
	\|G_Z(\cdot;x,s)-I\|_{L^2\cap L^{\infty}(\Sigma_Z)}\leq c\begin{cases}\max\{x^2,x^{2+4s}\},&2s\notin\mathbb{Z}_{\geq 0}\\ x^2,&2s\in\mathbb{Z}_{\geq 1},\ 2s\ \textnormal{even}\\ \max\{x^2,x^{2+4s}|\ln(x^2)|\},&2s\in\mathbb{Z}_{\geq 1},\ 2s\ \textnormal{odd}\end{cases}\ \ \ \ \ \ \  \forall\,\,0<x\leq x_0<1.
\end{equation*}
Moreover, $\|G_Z(\cdot;x,0)-I\|_{L^2\cap L^{\infty}(\Sigma_Z)}=0$.
\end{prop}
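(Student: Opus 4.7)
The plan is to derive Proposition \ref{smallL2norm} as an immediate consequence of Remark \ref{structure}, specialised to the fixed circle $\Sigma_Z = \partial \mathbb{D}_\epsilon(0)$. By condition $(2)$ of RHP \ref{SN3},
\begin{equation*}
G_Z(\zeta;x,s) - I = W(\zeta;x,s)\bigl(\e^{\im\frac{\pi}{4}\sigma_3}J(\zeta;2s)\bigr)^{-1} - I,
\end{equation*}
so $G_Z - I$ is precisely the object for which Remark \ref{structure} supplies an asymptotic expansion. I will apply that remark on the compact annulus $[\delta_1,\delta_2] = [\epsilon,\epsilon]$; this is permitted because the parametrix $W$ is, by its explicit formulas \eqref{f7}--\eqref{f8}, analytic on a slightly larger disk $\mathbb{D}_{\epsilon''}(0)$ with $\epsilon'' > \epsilon$, so the uniform absolute convergence claim in Remark \ref{structure} extends to $\zeta \in \Sigma_Z$ for $|x| < x_0$.

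Next I will extract the leading order in each of the three cases distinguished in Remark \ref{structure}. The crucial observation is that every power series $\mathbb{C}[[x^2]]$ appearing there has minimal degree one in $x^2$. It follows that a summand of type $\mathbb{C}[[x^2]]$ is $\mathcal{O}(x^2)$, a summand of type $x^{4s}\mathbb{C}[[x^2]]$ is $\mathcal{O}(x^{2+4s})$, and a summand of type $x^{4s}\ln(x^2)\mathbb{C}[[x^2]]$ is $\mathcal{O}(x^{2+4s}|\ln(x^2)|)$, each bound uniformly in $\zeta \in \Sigma_Z$ by the uniform convergence asserted there. Summing these contributions in each of the three cases reproduces exactly the three $L^\infty(\Sigma_Z)$ bounds written in the statement.

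The passage from $L^\infty(\Sigma_Z)$ to $L^2(\Sigma_Z)$ will then be cost-free: since $\Sigma_Z$ has finite length $2\pi\epsilon$, the estimate $\|f\|_{L^2(\Sigma_Z)} \le \sqrt{2\pi\epsilon}\,\|f\|_{L^\infty(\Sigma_Z)}$ absorbs the length factor into the constant $c$. Finally, for $s = 0$, the exact identity $W(\zeta;x,0) = \e^{\im\frac{\pi}{4}\sigma_3}J(\zeta;0)$ on $\Sigma_Z$ recorded at the end of Remark \ref{structure} immediately gives $G_Z(\cdot;x,0) - I \equiv 0$, whence both norms vanish. No step here is a genuine obstacle: the whole of Proposition \ref{smallL2norm} amounts to converting the series expansion of $W / J$ produced in Remark \ref{structure} into contour-norm estimates on a fixed compact curve, and the real work has already been invested in proving Remark \ref{structure}.
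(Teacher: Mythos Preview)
Your proposal is correct and follows exactly the approach the paper takes: the paper states (without further detail) that Proposition~\ref{smallL2norm} ``follows from RHP \ref{SN3}, condition $(2)$ and Remark \ref{structure},'' and your write-up simply unpacks this, reading off the leading-order contributions from the structured series in Remark~\ref{structure} and passing from $L^\infty$ to $L^2$ via the finite length of $\Sigma_Z$.
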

Consequently, RHP \ref{SN3} is solvable for $0<x\leq x_0$, cf. \cite{DZ}, and we summarise our findings below.
\begin{theo} For every $s>-\frac{1}{2}$, there exist $c=c(s)>0$ and $x_0=x_0(s)\in(0,1)$ so that RHP \ref{SN3} is uniquely solvable for all $0<x\leq x_0$ and any $\epsilon\in(0,1)$. Its solution admits the integral representation
\begin{equation}\label{z1}
	Z(\zeta)=I+\frac{1}{2\pi\im}\int_{\Sigma_Z}Z_-(\mu)\big(G_Z(\mu)-I\big)\frac{\d\mu}{\mu-\zeta},\ \ \ \ \zeta\in\mathbb{C}\setminus\Sigma_Z,
\end{equation}
where
\begin{equation*}
	\|Z_-(\cdot;x,s)-I\|_{L^2(\Sigma_Z)}\leq c\begin{cases}\max\{x^2,x^{2+4s}\},&2s\notin\mathbb{Z}_{\geq 0}\\ x^2,&2s\in\mathbb{Z}_{\geq 1},\ 2s\ \textnormal{even}\\ \max\{x^2,x^{2+4s}|\ln(x^2)|\},&2s\in\mathbb{Z}_{\geq 1},\ 2s\ \textnormal{odd}\end{cases}\ \ \ \ \ \forall\,\,0<x\leq x_0<1.
\end{equation*}
In addition, if $s=0$, then $Z(\zeta)\equiv I$.
\end{theo}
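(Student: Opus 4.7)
The plan is to apply the standard Deift--Zhou small-norm theory \cite{DZ} to RHP \ref{SN3}, using the $L^2\cap L^\infty$ estimate from Proposition \ref{smallL2norm} as the key quantitative input. Since $\Sigma_Z=\partial\mathbb{D}_\epsilon(0)$ is a smooth simple closed curve, the non-tangential Cauchy boundary operator $\mathcal{C}_-$ is bounded on $L^2(\Sigma_Z)$; moreover, by a scaling argument, its operator norm admits a bound that is uniform over all circles $\partial\mathbb{D}_\epsilon(0)$ with $\epsilon\in(0,1)$. This uniformity is essential for the assertion that $c$ and $x_0$ depend only on $s$.

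Setting $w:=G_Z-I$ and $\mathcal{C}_w(f):=\mathcal{C}_-(fw)$ on $L^2(\Sigma_Z)$, I would first estimate
\begin{equation*}
\|\mathcal{C}_w\|_{L^2\to L^2}\leq\|\mathcal{C}_-\|_{L^2\to L^2}\,\|w\|_{L^\infty(\Sigma_Z)}.
\end{equation*}
Proposition \ref{smallL2norm} says that $\|w\|_{L^\infty(\Sigma_Z)}\to 0$ as $x\downarrow 0$ at a rate depending only on $s$, so shrinking $x_0=x_0(s)\in(0,1)$ if necessary, the operator $I-\mathcal{C}_w$ becomes invertible on $L^2(\Sigma_Z)$ via a Neumann series with $\|(I-\mathcal{C}_w)^{-1}\|_{L^2\to L^2}\leq 2$ uniformly in $0<x\leq x_0$. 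The standard reformulation of RHP \ref{SN3} then asserts that its solution is equivalent to solving the singular integral equation $(I-\mathcal{C}_w)(Z_--I)=\mathcal{C}_-(w)$ in $L^2(\Sigma_Z)$, and inversion yields
\begin{equation*}
\|Z_--I\|_{L^2(\Sigma_Z)}\leq 2\,\|\mathcal{C}_-\|_{L^2\to L^2}\,\|w\|_{L^2(\Sigma_Z)},
\end{equation*}
from which the claimed $L^2$ bound follows directly from the $L^2$ part of Proposition \ref{smallL2norm}. Defining $Z$ off $\Sigma_Z$ by the Cauchy integral \eqref{z1} and applying the Plemelj--Sokhotski relations $Z_+-Z_-=Z_-(G_Z-I)$ recovers the jump $Z_+=Z_-G_Z$, while the $1/\zeta$ decay of the Cauchy kernel gives $Z(\zeta)\to I$; all remaining conditions of RHP \ref{SN3} are thus in place. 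Uniqueness follows from the usual Liouville argument combined with $\det G_Z\equiv 1$, which is inherited from the unit-determinant jumps of $H$ and of $J(\zeta;2s)$.

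For the degenerate case $s=0$, Remark \ref{structure} furnishes the exact equality $W(\zeta)=\e^{\im\frac{\pi}{4}\sigma_3}J(\zeta;0)$ on $\Sigma_Z$, whence $G_Z\equiv I$, the integral equation collapses to $Z_--I=0$, and \eqref{z1} forces $Z\equiv I$. No step of the argument is genuinely hard; the only point requiring care — more bookkeeping than obstacle — is tracking that the constant in the final estimate depends only on $s$. This is handled by observing that Proposition \ref{smallL2norm} already provides $s$-dependent bounds with no reference to $\epsilon$, while the $L^2$-boundedness constant of $\mathcal{C}_-$ is $\epsilon$-uniform for the family of circles $\partial\mathbb{D}_\epsilon(0),\epsilon\in(0,1)$.
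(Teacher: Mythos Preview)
Your proposal is correct and follows essentially the same route as the paper: both arguments apply the standard Deift--Zhou small-norm machinery by introducing the operator $\mathcal{C}_w(f)=\mathcal{C}_-(fw)$ with $w=G_Z-I$, invoke Proposition \ref{smallL2norm} to make $\|\mathcal{C}_w\|$ small, invert $I-\mathcal{C}_w$ by Neumann series, and read off the $L^2$ bound; the $s=0$ case is handled identically via $G_Z\equiv I$. Your treatment is in fact slightly more explicit than the paper's in two respects---you note the scale-invariance of $\|\mathcal{C}_-\|_{L^2\to L^2}$ over the family of circles $\partial\mathbb{D}_\epsilon(0)$ to justify $\epsilon$-uniformity of the constants, and you spell out the uniqueness argument---whereas the paper simply cites \cite{DZ} and then records the Neumann series in detail for later use in Proposition \ref{loopy}.
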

\begin{proof}
It follows from the singular integral equation approach of \cite{DZ} for solving Riemann–Hilbert problems that the solution admits the integral representation \eqref{z1}. Furthermore, the same method yields a Neumann series representation of $Z_{-}(\zeta;x,s)$ for $\zeta \in \Sigma_{Z}$. For completeness and for ease of reference in subsequent results, we give the derivation of this expansion below (we did not do the same in the context of Theorem \ref{newby1} and \ref{newby2}, but do it now). By Proposition \ref{smallL2norm}, there is an $x_{0}=x_{0}(s)\in (0,1)$ and $c=c(s)>0$ 
such that for $0<x\le x_0$, 
\bea\label{less1}
\|G_Z(\cdot;x,s)-I\|_{L^2\cap L^\infty(\Sigma_Z)} \leq cb(x)<1;\ \ \ b(x):=\begin{cases}
\max\{x^2, x^{2+4s}\}, & 2s\notin \mathbb{Z}_{\ge 0},\\
x^2, & 2s\in \mathbb{Z}_{\ge 1}, \ 2s\text{ even}\\
\max\{x^2, x^{2+4s}|\ln(x^2)|\}, & 2s\in \mathbb{Z}_{\ge 1}, \ 2s\text{ odd}
\end{cases}.
\eea
Let $\mathcal{C}_{\pm}: L^2(\Sigma_{Z})\rightarrow  L^2(\Sigma_{Z})$ be the Cauchy operators, which are given by 
\beas
\mathcal{C}_{\pm}[f](\zeta)
  = \lim_{\substack{\zeta'\to\zeta\\ \zeta'\in\text{side }\pm}}
    \frac{1}{2\pi \im}\int_{\Sigma_Z} f(\mu)\frac{\d\mu}{\mu-\zeta'}
\quad \zeta\in\Sigma_Z,
\eeas
where ``side $+$" is outside of $\Sigma_Z$ and ``side $-$" is inside of $\Sigma_{Z}$. By the Sokhotski-Plemelj theorem, 
\beas
\mathcal{C}_{+}[f]-\mathcal{C}_{-}[f]=f\ \ \ \ \textnormal{a.e. on}\ \Sigma_Z,
\eeas
and it is known that $\mathcal{C}_{\pm}$ are bounded linear operators on $L^2(\Sigma_Z)$.
By equation (\ref{z1}), 
\beas
Z_{-}=I+\mathcal{C}_{-}[Z_{-}(G_{Z}-I)].
\eeas
For convenience, we introduce the notation.
\bea\label{defofcg}
\mathcal{C}_{G}[f]:=\mathcal{C}_{-}[f(G_{Z}-I)],
\eea
and observe that $\mathcal{C}_{G}$ is a bounded operator on $L^{2}(\Sigma_{Z})$. In fact, in operator norm on $L^2(\Sigma_Z)$,
\bea\label{normofcg}
\|\mathcal{C}_{G}\|\leq \|\mathcal{C}_{-}\|\|G_{Z}(\cdot;x,s)-I\|_{L^{\infty}(\Sigma_{Z})}\stackrel{\eqref{less1}}{\leq}\frac{1}{2},
\eea
where the last inequality holds for $x$ sufficiently small. Consequently, for sufficiently small $x$,
\begin{align}\label{neumann}
Z_{-}=(I-\mathcal{C}_{G})^{-1}[I]=\sum_{n=0}^{\infty}\mathcal{C}_{G}^{n}[I],
\end{align}
where $\mathcal{C}_{G}^{n}$ denotes the $n$-fold composition of the operator $\mathcal{C}_{G}$ with itself, and the series \eqref{neumann} converges in norm on $L^2(\Sigma_{Z})$. What results by \eqref{normofcg} is therefore, with some $c=c(\Sigma_Z)>0$,
\beas
\|Z_-(\cdot;x,s)-I\|_{L^2(\Sigma_Z)}=\left\|\sum_{n=1}^{\infty}\mathcal{C}_{G}^{n}[I](\cdot;x,s)\right\|_{L^2(\Sigma_Z)}\leq c\,\|G_Z(\cdot;x,s)-I\|_{ L^\infty(\Sigma_Z)},
\eeas
for sufficiently small $x$. Then by \eqref{less1}, we obtain the claim in this theorem. In addition, if $s=0$, then $G_{Z}(\mu)\equiv I$ for $\mu\in \Sigma_{Z}$ and thus $Z(\zeta)\equiv I$ when $\zeta\in \mathbb{C}\setminus \Sigma_Z$.
\end{proof}

We are now left to extract the sought after small $x$-asymptotics.
\subsection{Distilling asymptotics} Simply trace back \eqref{f6},\eqref{f11}, with $\nu=2s$,
\begin{align}
	&x^{\frac{1}{2}\sigma_3}Q_1(x,s)x^{-\frac{1}{2}\sigma_3}=\lim_{\zeta\rightarrow\infty}x^{\frac{1}{2}\sigma_3}\Bigg\{Q(\zeta;x,s)\e^{\im\varpi(\zeta;x)\sigma_3}\e^{\im\frac{\pi}{4}\sigma_3}\frac{1}{\sqrt{2}}\begin{bmatrix}1 & -1\\ 1 & 1\end{bmatrix}\zeta^{-\frac{1}{4}\sigma_3}-I\Bigg\}x^{-\frac{1}{2}\sigma_3}\nonumber\\
	=&\,\,\frac{1}{x^2}\Bigg\{\frac{\im}{2\pi}\int_{\Sigma_Z}Z_-(\mu)\big(G_Z(\mu)-I\big)\d\mu+R(\nu)\Bigg\}\label{expressionofQ1},
\end{align}
where
\begin{align}\label{defofRnu}
R(\nu)=
\begin{bmatrix}\frac{1}{128}(4\nu^2-1)(4\nu^2-9) & \frac{\im}{1536}(4\nu^2-1)(4\nu^2-9)(4\nu^2-13)\smallskip\\ \frac{\im}{8}(4\nu^2-1) & -\frac{1}{128}(4\nu^2-1)(4\nu^2-9)\end{bmatrix}\Bigg\}.
\end{align}
Here, the first equality follows from \eqref{A4}, while the second follows from \eqref{f6} and \eqref{f11}. Next, by an application of the Cauchy-Schwarz inequality, as $x\downarrow 0$, 
\begin{equation*}
	\frac{\im}{2\pi}\int_{\Sigma_Z}Z_-(\mu)\big(G_Z(\mu)-I\big)\d\mu=\frac{\im}{2\pi}\oint_{\partial\mathbb{D}_{\epsilon}(0)}\big(G_Z(\mu)-I\big)\d\mu+\begin{cases}\mathcal{O}(\max\{x^4,x^{4+8s}\}),&2s\notin\mathbb{Z}_{\geq 0}\\ \mathcal{O}(x^4),&2s\ \textnormal{even}\\ \mathcal{O}(\max\{x^4,x^{4+8s}|\ln(x^2)|^2\}),&2s\ \textnormal{odd}\end{cases},
\end{equation*}
where the remaining integral vanishes identically for $s=0$. We are thus left to evaluate a contour integral.
\begin{prop}\label{prop68} 
Let $s>-\frac{1}{2}$. Assuming $s\neq 0$, as $x\downarrow 0$,
\begin{align}
	\frac{\im}{2\pi}\oint_{\partial\mathbb{D}_{\epsilon}(0)}\big(G_Z(\mu)-I\big)\d\mu
	=&\,\,x^2\begin{bmatrix}E_{11} & E_{12}\\ E_{21} & E_{22}\end{bmatrix}+\begin{cases}\mathcal{O}(\max\{x^4,x^{4+4s}\}),&2s\notin\mathbb{Z}_{\geq 0}\\ \mathcal{O}(x^4),&2s\ \textnormal{even}\\ \mathcal{O}(\max\{x^4,x^{4+4s}|\ln(x^2)|\}),&2s\ \textnormal{odd}\end{cases},\label{f12}
\end{align}
with
\begin{equation*}
	E_{21}=\frac{\im\beta}{4}+\frac{\pi}{4s^2}\frac{\e^{-\im\pi s}\mathsf{H}_{s}(x)}{(4^{s}\Gamma(2s))^2},
\ \ \ \    \mathsf{H}_{s}(x)=\begin{cases} 
    \mathsf{h}_{s}(x),&2s\notin\mathbb{Z}_{\geq 0}\\ \tilde{\mathsf{h}}_{s}(x),&2s\in\mathbb{Z}_{\geq 1}\end{cases}
\ \ \ \ \nu=2s\neq 0.
\end{equation*}
The entries $E_{11}=-E_{22}$ and $E_{12}$ are irrelevant for us and $\mathsf{h}_{s}(x)$, $\tilde{\mathsf{h}}_{s}(x)$ appeared in \eqref{f9},\eqref{f10}.
\end{prop}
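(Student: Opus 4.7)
The plan is to combine the matching expansions \eqref{f9}, \eqref{f10} with a single residue computation on $\partial\mathbb{D}_{\epsilon}(0)$. First, for $x$ sufficiently small that $-\beta x^2\in\mathbb{D}_{\epsilon}(0)$, the jump on $\partial\mathbb{D}_{\epsilon}(0)$ is $G_Z(\mu)=W(\mu)J(\mu;2s)^{-1}\e^{-\im\pi\sigma_3/4}$, so \eqref{f9}/\eqref{f10} give
\begin{equation*}
G_Z(\mu)-I = W_x(\mu;s) + \mathcal{E}(\mu;x,s),\qquad \mu\in\partial\mathbb{D}_{\epsilon}(0),
\end{equation*}
where $\mathcal{E}$ is of order $\mathcal{O}(x^{2\kappa})$, $\mathcal{O}(x^4)$ or $\mathcal{O}(\max\{x^4,x^{4+4s}|\ln(x^2)|\})$ in the three cases described in the statement, uniformly in $\mu$ on the circle. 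Integrating term by term then absorbs $\mathcal{E}$ into the error in \eqref{f12}, reducing matters to evaluating $\frac{\im}{2\pi}\oint_{\partial\mathbb{D}_{\epsilon}(0)}W_x(\mu;s)\,\d\mu$.

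Next, I would exploit the explicit structure
\begin{equation*}
W_x(\mu;s) = \frac{x^{2}}{\mu}\,E_W(\mu)\begin{bmatrix}\beta s/2 & \mathsf{H}_{s}(x)\\ 0 & -\beta s/2\end{bmatrix}E_W(\mu)^{-1},
\end{equation*}
combined with analyticity of $\mu\mapsto E_W(\mu)$ in $\mathbb{D}_{\epsilon}(0)$. The integrand is thus meromorphic inside the disk with a unique simple pole at $\mu=0$, and the residue theorem (taking into account the clockwise orientation of $\partial\mathbb{D}_{\epsilon}(0)$ in Figure \ref{fig9x}) yields
\begin{equation*}
\frac{\im}{2\pi}\oint_{\partial\mathbb{D}_{\epsilon}(0)}W_x(\mu;s)\,\d\mu = x^{2}\,E_W(0)\begin{bmatrix}\beta s/2 & \mathsf{H}_{s}(x)\\ 0 & -\beta s/2\end{bmatrix}E_W(0)^{-1}.
\end{equation*}
This is the linear-in-$x^{2}$ contribution in \eqref{f12}, with the matrix entries $E_{ij}$ being the entries of the displayed conjugation.

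The main obstacle is then the explicit evaluation of $E_W(0)$ in its two incarnations \eqref{defEWfornonint} and \eqref{defEwforint}. The factor $J(\zeta;2s)\mathcal{N}(\zeta;2s)^{-1}$ has a singularity at $\zeta=0$ dictated by the small-argument expansion of the modified Bessel functions entering $J(\zeta;2s)$ (cf. \eqref{A11}); the subsequent multiplication by $\bigl[\begin{smallmatrix}1 & *\\ 0 & 1\end{smallmatrix}\bigr](\e^{-\im\pi}\zeta)^{-s\sigma_3}\e^{-\im\pi s\sigma_3/2}$ is designed precisely to cancel that singular behaviour, so computing $E_W(0)$ amounts to tracking the finite part after this cancellation. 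Extracting the $(2,1)$ entry of $E_W(0)\bigl[\begin{smallmatrix}\beta s/2 & \mathsf{H}_{s}(x)\\ 0 & -\beta s/2\end{smallmatrix}\bigr]E_W(0)^{-1}$ and simplifying with Euler's reflection formula $\Gamma(s+1/2)\Gamma(1/2-s)=\pi/\cos(\pi s)$ (and $\Gamma(2s)\Gamma(1-2s)=\pi/\sin(2\pi s)$ in the non-integer case) should produce the claimed $E_{21}=\im\beta/4+\pi \e^{-\im\pi s}\mathsf{H}_{s}(x)/(4s^{2}(4^{s}\Gamma(2s))^{2})$. The other entries $E_{11}=-E_{22}$ (traceless because the residue matrix has zero trace conjugated by $E_W(0)$) and $E_{12}$ follow from the same calculation but are immaterial for the subsequent application in Proposition \ref{1102asymoforsmallz}, so I would note rather than record them.
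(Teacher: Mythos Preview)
Your proposal is correct and follows essentially the same approach as the paper: reduce via \eqref{f9}/\eqref{f10} to the residue of $W_x(\mu;s)$ at $\mu=0$, then evaluate $E_W(0)$ explicitly. The paper carries out the $E_W(0)$ computation using the small-$\zeta$ expansions of $I_\nu$ and $K_\nu$ recorded in Remark \ref{nearzero}, arriving at the same conjugation formula you wrote down and the stated $E_{21}$ after simplification.
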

\begin{proof} From Remark \ref{nearzero}, with $\nu=2s\notin\mathbb{Z}_{\geq 0}$, as $\zeta\rightarrow 0$,
\begin{equation*}
	E_W(\zeta)=\e^{\im\frac{\pi}{4}\sigma_3}\begin{bmatrix}\frac{1}{8}(4\nu^2+3) & 1\\ -1 & 0\end{bmatrix}\sqrt{\pi}\,\e^{-\im\frac{\pi}{4}}\Bigg\{\begin{bmatrix} I_{11}(0;\nu) & -\frac{\im}{2}\frac{1}{\sin(\pi\nu)}I_{11}(0;-\nu)\smallskip\\ I_{21}(0;\nu) & -\frac{\im}{2}\frac{1}{\sin(\pi\nu)}I_{21}(0;-\nu)\end{bmatrix}+\mathcal{O}\big(\zeta\big)\Bigg\}\e^{-\im\frac{\pi}{2}s\sigma_3},
\end{equation*}
where $I_{11}(0;\nu)=2^{-\nu}\Gamma(1+\nu)^{-1}$ and $I_{21}(0;\nu)= 2^{-\nu}\Gamma(\nu)^{-1}$. If $\nu=2s\in\mathbb{Z}_{\geq 1}$, then, as $\zeta\rightarrow 0$,
\begin{align*}
	E_W(\zeta)=&\,\e^{\im\frac{\pi}{4}\sigma_3}\begin{bmatrix}\frac{1}{8}(4\nu^2+3) & 1\\ -1 & 0\end{bmatrix}\sqrt{\pi}\,\e^{-\im\frac{\pi}{4}}\Bigg\{\begin{bmatrix}I_{11}(0;\nu) & I_{12}(0;\nu)\smallskip\\ I_{21}(0;\nu) & I_{22}(0;\nu)\end{bmatrix}+\mathcal{O}(\zeta)\Bigg\}\e^{-\im\frac{\pi}{2}s\sigma_3},
\end{align*}
with $I_{11}(0;\nu)=2^{-\nu}\Gamma(1+\nu)^{-1}, I_{21}(0;\nu)=2^{-\nu}\Gamma(\nu)^{-1},I_{12}(0;\nu)=-\frac{\im}{\pi}2^{\nu-1}\Gamma(\nu)$ and $I_{22}(0;\nu)=\frac{\im}{\pi}2^{\nu-1}\Gamma(1+\nu)$. In turn, by residue theorem,
\begin{equation*}
	\frac{\im}{2\pi}\oint_{\partial\mathbb{D}_{\epsilon}(0)}W_x(\mu;s)\d\mu=x^2E_W(0)\begin{bmatrix}\frac{\beta s}{2} & \mathsf{H}_{s}(x)\smallskip\\ 0 & -\frac{\beta s}{2}\end{bmatrix}E_W(0)^{-1},
\end{equation*}
which yields our result, after simplification, compare \eqref{f9} and \eqref{f10}. The proof is complete.
\end{proof}
\begin{cor} 
Let $s>-\frac{1}{2}$. Let $R(2s)$ be given as in (\ref{defofRnu}). Assuming $s\neq 0$, as $x\downarrow 0$,
\begin{align}
	x^{\frac{1}{2}\sigma_3}Q_1(x,s)&\,x^{-\frac{1}{2}\sigma_3}=\frac{R(2s)}{x^2}+\begin{bmatrix}E_{11}&E_{12}\\ E_{21}& E_{22}\end{bmatrix}+\frac{1}{x^2}\begin{cases}\mathcal{O}(\max\{x^4,x^{4+8s}\}),&2s\notin\mathbb{Z}_{\geq 0}\\ \mathcal{O}(x^4),&2s\ \textnormal{even}\\ \mathcal{O}(\max\{x^4,x^{4+4s}|\ln(x^2)|\}),&2s\ \textnormal{odd}\end{cases},
	\label{f13}
\end{align}
where $E_{jk}$ are as in \eqref{f12}. If $s=0$, then
\begin{equation*}
	x^{\frac{1}{2}\sigma_3}Q_1(x,0)x^{-\frac{1}{2}\sigma_3}=\frac{1}{8x^2}\begin{bmatrix}\frac{9}{16} & \frac{39\im}{64}\smallskip\\ -\im & -\frac{9}{16}\end{bmatrix}.
\end{equation*}
\end{cor}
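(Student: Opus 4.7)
The plan is to derive the Corollary by direct substitution of the already-established residue calculation \eqref{f12} into the exact identity \eqref{expressionofQ1}. First, I would recall that \eqref{expressionofQ1} writes $x^{\frac{1}{2}\sigma_3}Q_1(x,s)x^{-\frac{1}{2}\sigma_3}$ as $x^{-2}$ times the sum of the explicit matrix $R(2s)$ and the Cauchy integral $\frac{\im}{2\pi}\int_{\Sigma_Z}Z_-(\mu)(G_Z(\mu)-I)\,\d\mu$ over the circle $\Sigma_Z=\partial\mathbb{D}_{\epsilon}(0)$.

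Next, I would invoke the Cauchy-Schwarz reduction stated just before Proposition \ref{prop68}, which allows one to replace the weighted integral by the plain contour integral $\frac{\im}{2\pi}\oint_{\partial\mathbb{D}_{\epsilon}(0)}(G_Z(\mu)-I)\,\d\mu$ at the cost of an error controlled by $\|Z_--I\|_{L^2(\Sigma_Z)}\|G_Z-I\|_{L^2(\Sigma_Z)}$. By Proposition \ref{smallL2norm} this quadratic remainder is precisely of the size $\max\{x^4,x^{4+8s}\}$, $x^4$, or $\max\{x^4,x^{4+4s}|\ln(x^2)|\}$ in the three regimes of $s$. Inserting \eqref{f12} then provides the leading contribution $x^2[E_{jk}]$ with error of the same magnitude. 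Collecting everything and multiplying by $x^{-2}$ yields the expansion \eqref{f13} claimed in the Corollary.

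For the singular case $s=0$, I would instead use the last assertion of the Theorem preceding Proposition \ref{prop68}: when $s=0$ one has $Z(\zeta)\equiv I$, so $G_Z(\mu)\equiv I$ on $\Sigma_Z$ and the Cauchy integral in \eqref{expressionofQ1} vanishes identically. Consequently $x^{\frac{1}{2}\sigma_3}Q_1(x,0)x^{-\frac{1}{2}\sigma_3}=R(0)/x^2$, and the explicit $4\times 1$ matrix in the statement is obtained by substituting $\nu=0$ into \eqref{defofRnu} and simplifying the four entries.

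The main technical obstacle in the overall small-$x$ analysis was the explicit Puiseux expansion of $E_W(\zeta)$ near $\zeta=0$ and the identification of the residue of $x^{-1}G_Z(\mu)$, both handled in Proposition \ref{prop68}; together with the $L^2$-bounds from Proposition \ref{smallL2norm}, the present Corollary is then a bookkeeping step. I do not anticipate any additional analytic difficulty.
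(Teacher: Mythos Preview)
Your proposal is correct and matches the paper's (implicit) argument: the Corollary is an immediate bookkeeping consequence of substituting the residue expansion \eqref{f12} from Proposition~\ref{prop68} and the Cauchy--Schwarz reduction (stated just before it) into the exact identity \eqref{expressionofQ1}, together with the observation that $G_Z\equiv I$ when $s=0$. Two minor slips to fix: in the odd-$2s$ case the raw quadratic remainder from Proposition~\ref{smallL2norm} is $\max\{x^4,x^{4+8s}|\ln(x^2)|^2\}$, not $\max\{x^4,x^{4+4s}|\ln(x^2)|\}$ (harmless since $s\ge\tfrac12$ there), and the matrix at $s=0$ is $2\times 2$, not ``$4\times 1$''.
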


\begin{cor}\label{nzero} With $v=v(z;s)$ as in \eqref{e73}, we obtain from \eqref{f13}, as $z\downarrow 0$, provided $s\neq 0$,
\begin{align*}
	v(z;s)=&\,-\frac{z^{1+2s}}{\Gamma(2+2s)}\bigg(\frac{\Gamma(1+s)}{\Gamma(1+2s)}\bigg)^2\frac{1}{2\cos(\pi s)}
	+\mathcal{O}\big(\max\{z^2,z^{2+4s}\}\big),\ \ \ \ \ \ \ \ \ \ 2s\notin\mathbb{Z}_{\geq 0},\\
	v(z;s)=&\,-\frac{z^{1+2s}}{\Gamma(2+2s)}\bigg(\frac{\Gamma(1+s)}{\Gamma(1+2s)}\bigg)^2\frac{\e^{2\pi\im s}}{\pi}\Big[\psi(1+s)-\psi(2+2s)+\ln(2z)\Big]\sin(\pi s)\\
	&\hspace{5.65cm}+\mathcal{O}(\max\{z^2,z^{2+2s}|\ln z|\}),\ \ \ \ 2s\ \textnormal{odd},
\end{align*}
and $v(z;s)=\mathcal{O}(z^2)$ for even $2s\in\mathbb{Z}_{\geq 1}$. Here, $\psi(z)$ is the digamma function. Additionally, for any $z>0$,
\begin{equation}\label{f14}
	v(z;0)=-\frac{z}{2}.
\end{equation}
\end{cor}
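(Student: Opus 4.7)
The plan is to derive the small-$z$ asymptotic of $v(z;s)$ by substituting the small-$x$ expansion \eqref{f13} of the $(2,1)$-entry of $Q_1(x,s)$ into the defining identity \eqref{e73}, with $x=2\sqrt{z/\beta}$, and then carefully tracking the cancellations. Reading the $(2,1)$-entry of $x^{\sigma_3/2}Q_1(x,s)x^{-\sigma_3/2}$, which equals $x^{-1}Q_1^{21}(x,s)$, against \eqref{f13} and using $R(2s)^{21}=\frac{\im}{8}(16s^2-1)$ together with the explicit form of $E_{21}$, one gets
\begin{equation*}
Q_1^{21}(x,s)=\frac{\im(16s^2-1)}{8x}+x\bigg(\frac{\im\beta}{4}+\frac{\pi\,\e^{-\im\pi s}\mathsf{H}_s(x)}{4s^2(4^s\Gamma(2s))^2}\bigg)+\frac{1}{x}\,\mathcal{E}(x,s),
\end{equation*}
with $\mathcal{E}(x,s)$ the error term in \eqref{f13}. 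The first step of the proof is then to compute, directly from \eqref{e73},
\begin{equation*}
-\im\sqrt{z/\beta}\,Q_1^{21}\big(2\sqrt{z/\beta},s\big)=\Big(s^2-\tfrac{1}{16}\Big)+\tfrac{z}{2}-\frac{2\im z}{\beta}\cdot\frac{\pi\,\e^{-\im\pi s}\mathsf{H}_s(2\sqrt{z/\beta})}{4s^2(4^s\Gamma(2s))^2}+\text{error},
\end{equation*}
so that the constants $\frac{1}{16}-s^2$ and the term $-z/2$ in \eqref{e73} cancel identically, and the leading non-trivial contribution to $v(z;s)$ is supplied by the $\mathsf{H}_s$-part of $E_{21}$.

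Second, I would reduce $\mathsf{H}_s(x)$ to a Beta integral. Substituting $\tau=-\beta(1-t)$ in the integrals defining $\mathsf{h}_s(x)$ and $\tilde{\mathsf{h}}_s(x)$ yields $\int_{-\beta}^0(-\tau)^s(\tau+\beta)^s\,\d\tau=\beta^{2s+1}\Gamma(s+1)^2/\Gamma(2s+2)$, while for the logarithmic case the same substitution, combined with $\int_0^1 t^s(1-t)^s\ln(1-t)\d t=B(s+1,s+1)[\psi(s+1)-\psi(2s+2)]$, produces the full $\ln$-dependence. Using $x^{4s}=4^{2s}z^{2s}/\beta^{2s}$ and the identity $4s^2\Gamma(2s)^2=\Gamma(2s+1)^2$ then collapses the resulting expression into the compact form claimed in the corollary. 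For $2s\notin\mathbb{Z}_{\geq 0}$ one reads off the coefficient $\mathsf{d}_{1,1}(s)=-[\Gamma(1+s)/\Gamma(1+2s)]^2/[\Gamma(2+2s)\cdot 2\cos(\pi s)]$, while for odd $2s$ the logarithmic prefactor matches the stated value.

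Third, for $s=0$ one invokes the last sentence of Proposition \ref{smallL2norm} (or equivalently the fact that the jump $G_Z(\mu;x,0)\equiv I$, whence $Z\equiv I$). Hence in \eqref{expressionofQ1} only the $R(0)/x^2$ contribution survives, so $Q_1^{21}(x,0)=-\im/(8x)$ exactly, and substituting into \eqref{e73} yields $v(z;0)=-z/2$ verbatim.

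The main obstacle is book-keeping: one must translate the matrix error estimates of \eqref{f13} into pointwise errors for the scalar $v(z;s)$ after the non-trivial cancellations of the leading constants $\tfrac{1}{16}-s^2$ and of $-z/2$, and then verify that the exponents $\max\{z^2,z^{2+4s}\}$, $z^2$, and $\max\{z^2,z^{2+4s}|\ln z|\}$ claimed in each regime correspond (after the rescaling $x=2\sqrt{z/\beta}$) to the ones provided by \eqref{f13}. A further minor subtlety is ensuring consistency with the constant $\ln\beta+\psi$-terms absorbed into the $\ln(2z)$ displayed inside the odd-$2s$ expression; this is a matter of computing $\ln(\beta x^2)$ together with the $\psi$-corrections arising from the Beta integral.
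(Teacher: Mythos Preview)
Your proposal is correct and follows essentially the same route as the paper's own proof: substitute the $(2,1)$-entry of \eqref{f13} into \eqref{e73}, observe the cancellation of the constants $s^2-\tfrac{1}{16}$ and of $\tfrac{z}{2}$ coming from $R(2s)^{21}$ and the $\im\beta/4$ part of $E_{21}$, then evaluate the remaining $\mathsf{H}_s$-integral as a Beta integral (with a logarithmic differentiation for the odd-$2s$ case), and treat $s=0$ via $Q_1^{21}(x,0)=-\im/(8x)$. One small slip: in your error bookkeeping for odd $2s$ you wrote $\max\{z^2,z^{2+4s}|\ln z|\}$, whereas the error in \eqref{f13} is $\tfrac{1}{x^2}\mathcal{O}(\max\{x^4,x^{4+4s}|\ln x^2|\})$, which after the extra $x^2$ picked up in passing from $Q_1^{21}$ to $v$ and the substitution $x^2\propto z$ gives $\mathcal{O}(\max\{z^2,z^{2+2s}|\ln z|\})$ as stated in the corollary.
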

\begin{proof} These follow from \eqref{e73} and \eqref{f13}. Namely, for $s\neq 0$, with $\mathcal{E}(x)$ shorthand for the error term in \eqref{f13}, we have
\begin{equation}\label{expressionforv}
v(z;s)\stackrel{\eqref{e73}}{=}-\im\sqrt{\frac{z}{\beta}}Q_1^{21}\Big(2\sqrt{\frac{z}{\beta}},s\Big)+\frac{1}{16}-s^2-\frac{z}{2}
\stackrel{\eqref{f13}}{=}-2\pi\im\frac{z}{\beta}\frac{\e^{-\im\pi s}\,\mathsf{H}_{s}(2\sqrt{z/\beta})}{2^{4s}\Gamma^2(1+2s)}+\mathcal{E}(2\sqrt{z/\beta}),
\end{equation}
with $\mathsf{H}_{s}(x)\in \{\mathsf{h}_{s}(x),\tilde{\mathsf{h}}_{s}(x)\}$ given in Proposition \ref{prop68}. We  now evaluate
\begin{equation*}
	\frac{2\pi\im}{\beta 2^{4s}}\e^{-\im\pi s}\,\mathsf{h}_{s}(2\sqrt{z/\beta})\stackrel{\eqref{f9}}{=}\frac{z^{2s}}{2\cos(\pi s)}\frac{\Gamma^2(1+s)}{\Gamma(2+2s)},\ \ \ \ 2s\notin\mathbb{Z}_{\geq 0},
\end{equation*}
and
\begin{align*}
	\frac{2\pi\im}{\beta 2^{4s}}\e^{-\im\pi s}\,\tilde{\mathsf{h}}_{s}(2\sqrt{z/\beta})&\,\stackrel{\eqref{f10}}{=}z^{2s}\frac{\e^{2\pi\im s}}{\pi}\sin(\pi s)\frac{\Gamma^2(1+s)}{\Gamma(2+2s)}\big(\psi(1+s)-\psi(2+2s)+\ln(2z)\big),\ \ \ 2s\in\mathbb{Z}_{\geq 1}.
\end{align*}
The derivation of \eqref{f14} was already achieved in Remark \ref{sospecial}, using $Q_1^{21}(x,0)=-\frac{\im}{8x}$. This completes the proof of the Corollary.
\end{proof}
The expansions for $v(z;s)$ near $z=0$ in Corollary \ref{nzero} are crude, except for \eqref{f14}. One can compute higher order terms by Remark \ref{structure} and iteration of the Neumann series underwriting \eqref{z1}. Specifically, we have the following asymptotic expansion of $v(z;s)$ as $z\downarrow 0$ with certain  coefficients depending on $s$. 
\begin{prop}\label{loopy} Let $2s>-1$. Then there exists $z_{0}=z_{0}(s)\in(0,1)$ such that for $0<z\leq z_{0}<1$ and any integer $K\geq 0$, we have the following asymptotic expansions: for even $2s\in\mathbb{Z}_{\geq 1}$, 
\begin{equation*}
	v(z;s)= z\bigg(\sum_{j=1}^s\mathsf{d}_{2j}(s)z^{2j-1}+\sum_{j=2s}^{K}\mathsf{d}_{j+1}(s)z^j\bigg)+\mathcal{O}_{K,s}\left(z^{K+2}\right),
\end{equation*}
and for odd $2s\in\mathbb{Z}_{\geq 1}$,
\begin{align*}
v(z;s) =&\,\,z\bigg(\sum_{j=1}^{s+\frac{1}{2}}\mathsf{d}_{2j}(s)z^{2j-1}+\sum_{j=2s+1}^{K}\mathsf{d}_{j+1}(s)z^j+\sum_{\ell=1}^{K+1}\sum_{j=\ell-1}^{K}\mathsf{d}_{j+1,\ell}(s)(\ln z)^{\ell}z^{j+2\ell s}\bigg)\\
&\hspace{1.5cm}+\mathcal{O}_{K,s}\left(\max\Big\{|\ln z|z^{2s+K+2},z^{K+2}\Big\}\right),
\end{align*}
and for $2s\notin\mathbb{Z}_{\geq 0}$, provided that $s\in(m,m+1],m\in\mathbb{Z}_{\geq 0}$ or $s\in(-\frac{1}{2},0]$, 
\begin{align*}
v(z;s)=&\,\,z\left(\sum_{j=1}^{m+1}\mathsf{d}_{2j}(s)z^{2j-1}+\sum_{j=2m+2}^{K}\mathsf{d}_{j+1}(s)z^j+\sum_{\ell=1}^{K+1}\sum_{j=\ell-1}^{K}\mathsf{d}_{j+1,\ell}(s)z^{j+2\ell s}\right)\\
&\hspace{1.5cm}+\mathcal{O}_{K,s}\left(\max\Big\{z^{(1+2s)(K+2)},z^{K+2}\Big\}\right),
\end{align*}
where we set $m=0$ when $s\in (-\frac{1}{2},0]$, and the coefficients $ \mathsf{d}_{j}(s),\mathsf{d}_{j,\ell}(s)$ depend only on $s$.
\end{prop}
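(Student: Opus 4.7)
The plan is to refine the Deift--Zhou analysis of Section \ref{sectiononsmallx} by iterating the Neumann series \eqref{neumann} to arbitrary order, and then to invoke the $\sigma$-Painlev\'e III$'$ equation \eqref{e42} to pin down the structure of the resulting formal expansion. The key structural input is already available: by Remark \ref{structure}, the jump $G_Z-I$ on $\Sigma_Z=\partial\mathbb{D}_\epsilon(0)$ admits a convergent expansion in $x^2$ (for even $2s$), in $x^2$ and $x^{4s}$ (for $2s\notin\mathbb{Z}_{\geq 0}$), or in $x^2$, $x^{4s}$, and $x^{4s}\ln(x^2)$ (for odd $2s$), with coefficients meromorphic in $\zeta$ and uniformly bounded on $\Sigma_Z$ for $|x|<x_0$.

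First I would iterate $Z_-=\sum_{n\geq 0}\mathcal{C}_G^n[I]$, noting that each composition $\mathcal{C}_G^n$ raises the overall power in $x$ by at least $2$, so truncating at a sufficiently large order and estimating the tail via $\|\mathcal{C}_G\|\leq\frac{1}{2}$ together with Proposition \ref{smallL2norm} produces $Z_-$ to the required accuracy. Substituting into \eqref{expressionofQ1} and evaluating each term by residue calculus at $\zeta=0$ — generalising the single-step computation in Proposition \ref{prop68} — yields a formal expansion of $x^{\frac{1}{2}\sigma_3}Q_1(x,s)x^{-\frac{1}{2}\sigma_3}$ as a double series in $x^2$ and $x^{4s}$, with an additional $\ln(x^2)$-factored series in the odd-$2s$ case. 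Via \eqref{expressionforv} with $x=2\sqrt{z/\beta}$, this translates into a formal expansion of $v(z;s)$ containing all powers $z^k$, $z^{k+2s}$, and, when $2s$ is odd, logarithmic corrections $(\ln z)^{\ell}z^{k+2\ell s}$. Matching leading coefficients against Corollary \ref{nzero} provides the base of the recursion and identifies the lowest non-vanishing terms.

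The main obstacle is the \emph{gap structure}: showing that the coefficients $\mathsf{d}_{2j+1}(s)$ of the odd regular powers $z^{2j+1}$ vanish for $j$ strictly below the stated threshold. To establish this I would exploit Corollary \ref{cheat}, which guarantees that $v(z;s)$ solves \eqref{e42}, and substitute the triply-indexed ansatz into \eqref{e42}. Matching like terms order by order produces a recursion on the coefficients in which, up to the threshold, the quadratic structure of \eqref{e42} couples each odd regular coefficient to a homogeneous linear combination of lower-order odd coefficients; initialised by the explicit vanishing $\mathsf{d}_{3}(s)=0$ visible at leading order from the Neumann iteration, the recursion propagates to force $\mathsf{d}_{2j+1}(s)=0$ throughout the relevant range. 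For even $2s$ this recovers the derivation in \cite[Section 4.3]{Basor_2019}. The remainder estimates in the three cases then follow from the truncation bound in Step 1 combined with the convergence estimates from Remark \ref{structure}. The delicate bookkeeping required to disentangle the interactions between the power-series, the $z^{k+2s}$, and the logarithmic contributions, particularly for odd $2s$, is the principal technical hurdle.
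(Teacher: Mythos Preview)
Your proposal is essentially the same approach as the paper's proof: iterate the Neumann series \eqref{neumann}, truncate and bound the tail by $\|\mathcal{C}_G\|^{N+1}\|G_Z-I\|$, expand each $\mathcal{C}_G^n[I]$ via residues using the explicit series structure of $G_Z-I$ from Remark \ref{structure}, pass to $v$ through \eqref{expressionofv}, and finally feed the resulting ansatz into the $\sigma$-Painlev\'e III$'$ equation \eqref{e42} to force the gap structure on the odd regular coefficients. One small correction: the seed for the gap recursion is $\mathsf{d}_1(s)=0$ (the coefficient of $z^1$), which is read off directly from Corollary \ref{nzero}, not $\mathsf{d}_3(s)=0$; substituting the ansatz into \eqref{e42} with this initial condition then yields $\mathsf{d}_{2j-1}(s)=0$ for all $2\le j<s+1$.
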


\begin{proof}
Abbreviate
\beas
M(x,s)=\frac{\im}{2\pi}\int_{\Sigma_Z}Z_-(\mu)\big(G_Z(\mu)-I\big)\d\mu.
\eeas
By \eqref{e73},\eqref{expressionofQ1} and \eqref{expressionforv}, we have 
\bea\label{expressionofv}
v(z;s)=-\frac{\im}{2}M^{21}\left(2\sqrt{\frac{z}{\beta}},s\right)-\frac{z}{2}.
\eea
In the following, we only prove the conclusion for the case $2s \notin \mathbb{Z}_{\ge 0}$,  the other two cases follow from almost identical arguments. Choose $x_{0}=x_{0}(s)\in (0,1)$, such that $\beta x_{0}^2<\epsilon/2$ and \eqref{less1} holds. Firstly, we claim that
\bea\label{truncation}
M(x,s)=\sum_{n=0}^{N}\frac{\im}{2\pi}\int_{\Sigma_Z}\mathcal{C}_{G}^{n}[I](\mu)\big(G_Z(\mu)-I\big)\d\mu+\mathcal{O}_{s}\left(\max\Big\{x^{2(N+2)},x^{(2+4s)(N+2)}\Big\}\right)
\eea
holds for any $x\in [0,x_{0})$ and $N\in\mathbb{Z}_{\geq 0}$. Indeed, by \eqref{normofcg},\eqref{neumann} and Cauchy-Schwarz inequality, 
\begin{align*}
&M(x,s)-\sum_{n=0}^{N}\frac{\im}{2\pi}\int_{\Sigma_Z}\mathcal{C}_{G}^{n}[I](\mu)\big(G_Z(\mu)-I\big)\d\mu
\leq \frac{1}{2\pi}\left\|\sum_{n=N+1}^{\infty}\mathcal{C}_{G}^{n}[I](\cdot;x,s)\right\|_{L^2(\Sigma_Z)}\|G_Z(\cdot;x,s)-I\|_{ L^2(\Sigma_Z)}\\
&\hspace{2cm}\leq c\,\|G_Z(\cdot;x,s)-I\|_{L^{\infty}(\Sigma_Z)}^{N+1}\|G_Z(\cdot;x,s)-I\|_{ L^2(\Sigma_Z)},\ \ \ \ \ c=c(\Sigma_Z)>0.
\end{align*}
Then (\ref{truncation}) follows from \eqref{less1}. Secondly, we claim that for each $n\in\mathbb{Z}_{\geq 0}$,
\begin{align}\label{seriesforM}
\frac{\im}{2\pi}\int_{\Sigma_Z}\mathcal{C}_{G}^{n}[I](\mu)\big(G_Z(\mu)-I\big)\d\mu
=
\sum_{k=n+1}^\infty \mathsf{d}_{k}^{(n)}(s)
\left(\beta x^2\right)^k+\sum_{\ell=1}^{n+1}\sum_{k=n+1}^\infty \mathsf{d}_{k,\ell}^{(n)}(s)\left(\beta x^2\right)^{2\ell s+k},
\end{align}
where $\mathsf{d}_{k}^{(n)}(s), \mathsf{d}_{k,\ell}^{(n)}(s)=\mathcal{O}_n\left(k^{(n+1)|s|/2+2n}\epsilon^{-k}\right)$, and
by the choice of $x_{0}$, the series is absolutely and uniformly convergent for $x \in [0, x_{0}]$. Indeed, by \eqref{explicitfornonint}, we know that each entry of $G(\zeta;x,s)-I$ has the following form, 
\begin{align}\label{entryform}
\sum_{k=1}^\infty \mathsf{a}_{k}(s)\frac{f(\zeta)}{\zeta^{k}}\left(\beta x^2\right)^k+x^{4s}\sum_{k=1}^\infty \mathsf{b}_{k}(s)\frac{g(\zeta)}{\zeta^{k}}\left(\beta x^2\right)^k,
\end{align}
where $
\mathsf{a}_{k}(s),\mathsf{b}_{k}(s)=\mathcal{O}_{s}(k^{|s|/2})$,
and $f(\zeta)$, $g(\zeta)$ are analytic in  $\mathbb{D}_{\epsilon}(0)$ and given in terms of the entries of $E_{W}(\zeta)$ 
and $E_{W}^{-1}(\zeta)$. Moreover, the series \eqref{entryform} is uniformly convergent for $\zeta \in \Sigma_{Z}$ for any $x\in (0,x_{0}]$. Next, by the definition of $\mathcal{C}_{G}$ as in \eqref{defofcg}, for $\zeta\in \Sigma_{Z}$, 
\begin{align}
\lim_{\substack{\tau\to\zeta\\ \tau\in\text{side}\,-}}
    \frac{1}{2\pi \im}\int_{\Sigma_Z} \frac{\eqref{entryform}}{\mu - \tau}\d\mu
=\lim_{\substack{\tau\to\zeta\\ \tau\in\text{side}\,-}}\frac{1}{2\pi \im}\sum_{k=1}^{\infty}\left(\mathsf{a}_{k}(s)\int_{\Sigma_Z} \frac{f(\mu)}{\mu^{k}(\mu - \tau)}\d\mu+x^{4s}\,\mathsf{b}_{k}(s)\int_{\Sigma_Z} \frac{g(\mu)}{\mu^{k}(\mu - \tau)}\d\mu\right)(\beta x^2)^{k}\label{Cauchyintegral}.
\end{align}
For fixed $\tau$ inside of $\Sigma_{Z}$, noting that $\Sigma_{Z}=\partial\mathbb{D}_{\epsilon}(0)$ is oriented clockwise, Cauchy's integral formula yields,
\begin{align*}
\frac{1}{2\pi \im}\int_{\Sigma_Z}\frac{f(\mu)}{\mu^{k}(\mu - \tau)}\d\mu=-\frac{f(\tau)}{\tau^k}-\frac{1}{(k-1)!}\frac{\d^{k-1}}{\d\mu^{k-1}}\left(\frac{f(\mu)}{\mu-\tau}\right)\Big|_{\mu=0}=\sum_{j=0}^{k-1}\frac{f^{(k-1-j)}(0)}{(k-1-j)!}\frac{1}{\tau^{j+1}}-\frac{f(\tau)}{\tau^k},
\end{align*}
and consequently
\begin{align*}
\eqref{Cauchyintegral}=\lim_{\substack{\tau\to\zeta\\ \tau\in\text{side}\,-}}\sum_{k=1}^{\infty}\left(\mathsf{a}_k(s)\left[\sum_{j=1}^{k}\frac{f^{(k-j)}(0)}{(k-j)!}\frac{1}{\tau^{j}}-\frac{f(\tau)}{\tau^k}\right]+x^{4s}\,\mathsf{b}_k(s)\left[\sum_{j=1}^{k}\frac{g^{(k-j)}(0)}{(k-j)!}\frac{1}{\tau^{j}}-\frac{g(\tau)}{\tau^k}\right]\right)(\beta x^2)^{k}.
\end{align*}
Since $f(\mu)$ and $g(\mu)$ are analytic in $\mathbb{D}_{\epsilon}(0)$, $|f^{(j)}(0)/j!|$, $|g^{(j)}(0)/j!|=\mathcal{O}(\epsilon^{-j})$ for any $j\geq 0$ and $|f(\tau)|,|g(\tau)|=\mathcal{O}(1)$. According to our choice, we have $\beta x^2/\epsilon < 1/2$ for any $x \in (0, x_0]$. Hence, the limit with respect to $\tau$ can be interchanged with the infinite series in $k$, and so each entry in $\mathcal{C}_{G}[I](\zeta;x,s)$ is of the form
\begin{align}\label{entryform1}
\sum_{k=1}^\infty \left(\sum_{j=1}^{k}\mathsf{a}_{k,j}^{(1)}(s)\frac{1}{\zeta^{j}}-\mathsf{a}_k(s)\frac{f(\zeta)}{\zeta^k}\right)\left(\beta x^2\right)^k+x^{4s}\sum_{k=1}^\infty \left(\sum_{j=1}^{k}\mathsf{b}_{k,1,j}^{(1)}(s)\frac{1}{\zeta^{j}}-\mathsf{b}_k(s)\frac{g(\zeta)}{\zeta^k}\right)\left(\beta x^2\right)^k
\end{align}
where $\mathsf{a}_{k,j}^{(1)}(s), \mathsf{b}_{k,1,j}^{(1)}(s)=\mathcal{O}\left(k^{|s|/2}\epsilon^{-(k-j)}\right)$. Note that for each $n\geq 2$,
\begin{align*}
\mathcal{C}_{G}^{n}[I](\zeta;x,s)=\lim_{\substack{\tau\to\zeta\\ \tau\in\text{side}\,-}}
    \frac{1}{2\pi \im}\int_{\Sigma_Z}\mathcal{C}_{G}^{n-1}[I](\mu;x,s)\big(G(\mu;x,s)-I\big)\frac{\d\mu}{\mu-\tau}.
\end{align*}
By an argument similar to that of (\ref{entryform1}), and using induction, we obtain that each entry of $\mathcal{C}_{G}^{n}[I](\zeta;x,s)$ can be written as a sum of $\mathcal{O}_n(1)$ many terms of the form
\begin{align}\label{entryformforgeneraln}
\sum_{k=n}^\infty \sum_{j=1}^{k}\left(\mathsf{a}_{k,j}^{(n)}(s)-\tilde{\mathsf{a}}_{k,j}^{(n)}(s)f_n(\zeta)\right)\frac{(\beta x^2)^k}{\zeta^{j}}+\sum_{\ell=1}^{n}\sum_{k=n}^\infty \sum_{j=1}^{k}\left(\mathsf{b}_{k,\ell,j}^{(n)}(s)-\tilde{\mathsf{b}}_{k,\ell,j}^{(n)}(s)g_n(\zeta)\right)\frac{(\beta x^2)^{2\ell s+k}}{\zeta^{j}},
\end{align}
where 
\begin{equation*}
	\mathsf{a}_{k,j}^{(n)}(s), \mathsf{b}_{k,\ell,j}^{(n)}(s),\tilde{\mathsf{a}}_{k,j}^{(n)}(s),\tilde{\mathsf{b}}_{k,\ell,j}^{(n)}(s)=\mathcal{O}_n\left(k^{n|s|/2+2(n-1)}\epsilon^{-(k-j)}\right),
\end{equation*}
and $f_n(\zeta),g_n(\zeta)$ are certain analytic functions in $\mathbb{D}_{\epsilon}(0)$. By our choice of $x_{0}$ such that $\beta x_{0}^2/\epsilon < 1/2$, the series expansion of $\mathcal{C}_{G}^{n}[I](\zeta;x,s)$ in powers of $x$ is uniformly convergent with respect to $\zeta \in \Sigma_{Z}$ for any $x \in (0, x_{0}]$. Then, after multiplying by $G(\zeta; x, s) - I$, whose entries are of the form (\ref{entryform}), we can rewrite each entry of $\mathcal{C}_{G}^{n}[I](\zeta; x, s)(G(\zeta; x, s) - I)$  as a new series having a form similar to \eqref{entryformforgeneraln} with $n$ replaced by $n+1$. Applying the Cauchy integral formula to each term in this series, we obtain (\ref{seriesforM}), as claimed. Thirdly, by \eqref{truncation} and \eqref{seriesforM}, we obtain that for $N\in\mathbb{Z}_{\geq 0}$, and $x\in (0, x_{0}]$,
\begin{align}\label{seriesexpansionforM}
M(x,s)=&\sum_{k=1}^{N+1}\tilde{\mathsf{d}}_{k}(s)
\left(\beta x^2\right)^k+\sum_{\ell=1}^{N+1}\sum_{k=\ell}^{N+1}\tilde{\mathsf{d}}_{k,\ell}(s)\left(\beta x^2\right)^{2\ell s+k}
+\mathcal{O}_{N,s}\left(\max\Big\{x^{(2+4s)(N+2)},x^{2(N+2)}\Big\}\right),
\end{align}
where $\tilde{\mathsf{d}}_{k}(s)=\sum_{n=0}^{k-1}\mathsf{d}_{k}^{(n)}(s)$ and  $\tilde{\mathsf{d}}_{k,\ell}(s)=\sum_{n=\ell-1}^{k-1}\mathsf{d}_{k,\ell}^{(n)}(s)$ with $\mathsf{d}_{k}^{(n)}(s),\mathsf{d}_{k,\ell}^{(n)}(s)$ as in \eqref{seriesforM}. Finally, we return to the asymptotic expansion of $v(z, s)$ as $z \downarrow 0$, as claimed in the Proposition. By \eqref{expressionofv} and \eqref{seriesexpansionforM}, for $2s\notin \mathbb{Z}_{\geq 0}$,
\begin{align}\label{seriesexpansionforv}
v(z,s)=&\sum_{k=1}^{N+1}\mathsf{d}_{k}(s)
z^k+\sum_{\ell=1}^{N+1}\sum_{k=\ell}^{N+1}\mathsf{d}_{k,\ell}(s)z^{2\ell s+k}+\mathcal{O}_{N,s}\left(\max\Big\{z^{(1+2s)(N+2)},z^{N+2}\Big\}\right).
\end{align}
By Corollary \ref{nzero}, we have $\mathsf{d}_{1}(s)=0$. Choose $N=K$. Set $z_{0}=\beta x_{0}^2/4$. Note that $v(z,s)$ satisfies the $\sigma$-Painlev\'e III$'$ equation (\ref{e42}) by Corollary \ref{cheat}. Substituting the expansion (\ref{seriesexpansionforv}) into (\ref{e42}) together with the initial condition $\mathsf{d}_{1}(s)=0$, we obtain the gap structure $\mathsf{d}_{2j-1}(s)=0$ for any $2\leq j<s+1$. Therefore, we derive the asymptotic formula for $v(z,s)$ as claimed above. This completes the proof.
 \end{proof}

At the end of this section, we state the following result, which is a direct consequence of Proposition \ref{loopy} and will be used in the proof of Theorem \ref{maintheorem1}.
\begin{cor}\label{afterintegral} Let $2s>-1$. Then there exists $z_{0}=z_{0}(s)\in(0,1)$ such that for $0<z\leq z_{0}<1$, we have the following asymptotic expansions: for fixed even $2s\in\mathbb{Z}_{\geq 1}$, 
\begin{equation*}
	\int_{0}^{z}v(x;s)\frac{\d x}{x}= \sum_{k=1}^s\mathsf{d}_{2k}(s)\frac{z^{2k}}{2k}+\mathcal{O}(z^{2s+1}),
\end{equation*}
and for fixed odd $2s\in\mathbb{Z}_{\geq 1}$,
\begin{align*}
\int_{0}^{z}v(x;s)\frac{\d x}{x} =&\sum_{k=1}^{s+\frac{1}{2}}\mathsf{d}_{2k}(s)\frac{z^{2k}}{2k}+\mathcal{O}\left(\int_{0}^{z}t^{2s}\ln t\,\d t\right).
\end{align*}
Lastly for fixed $2s\notin\mathbb{Z}_{\geq 0}$, provided that $s\in(m,m+1],m\in\mathbb{Z}_{\geq 0}$ or $2s\in(-1,0)$, 
\begin{align*}
\int_{0}^{z}v(x;s)\frac{\d x}{x}=\sum_{k=1}^{m+1}\mathsf{d}_{2k}(s)\frac{z^{2k}}{2k}+\mathcal{O}\left(z^{2s+1}\right),
\end{align*}
where we set $m=0$ when $2s\in (-1,0)$.
\end{cor}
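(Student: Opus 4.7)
The proof amounts to dividing the asymptotic expansions of $v(x;s)$ in Proposition \ref{loopy} by $x$ and integrating term by term from $0$ to $z$. The plan is to fix $K \in \mathbb{Z}_{\geq 0}$ large enough to capture the claimed leading terms with an explicit error, write
\begin{equation*}
\int_0^z v(x;s)\,\frac{\d x}{x},
\end{equation*}
and exploit the gap structure $\mathsf{d}_1(s)=0$ guaranteed by Proposition \ref{loopy} (in each of the three regimes) to ensure that the integrand has an integrable singularity at $x=0$. Once this is in place, the estimates in the corollary follow by routine integration of power and power-times-logarithm terms. Below I outline how each of the three cases is handled.

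\textbf{Even $2s\in\mathbb{Z}_{\geq 1}$.} Apply Proposition \ref{loopy} with, e.g., $K=2s$ to obtain
\begin{equation*}
\frac{v(x;s)}{x}=\sum_{j=1}^{s}\mathsf{d}_{2j}(s)\,x^{2j-1}+\mathsf{d}_{2s+1}(s)\,x^{2s}+\mathcal{O}_{s}(x^{2s+1}),\qquad 0<x\leq z_{0}.
\end{equation*}
Using $\int_{0}^{z}x^{2j-1}\,\d x=z^{2j}/(2j)$ and integrating the remainder gives
\begin{equation*}
\int_{0}^{z}v(x;s)\,\frac{\d x}{x}=\sum_{k=1}^{s}\mathsf{d}_{2k}(s)\frac{z^{2k}}{2k}+\mathsf{d}_{2s+1}(s)\frac{z^{2s+1}}{2s+1}+\mathcal{O}(z^{2s+2}),
\end{equation*}
and the two trailing contributions are absorbed into $\mathcal{O}(z^{2s+1})$.

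\textbf{Odd $2s\in\mathbb{Z}_{\geq 1}$.} Choose $K=2s+1$ (say) in Proposition \ref{loopy}, so that $v(x;s)/x$ consists of the finite polynomial part $\sum_{j=1}^{s+1/2}\mathsf{d}_{2j}(s)x^{2j-1}$ plus higher-order powers $\mathsf{d}_{j+1}(s)x^{j}$ for $j\geq 2s+1$ and mixed terms $\mathsf{d}_{j+1,\ell}(s)(\ln x)^{\ell}x^{j+2\ell s}$ with $j\geq \ell-1$. Term-by-term integration produces the main sum $\sum_{k=1}^{s+1/2}\mathsf{d}_{2k}(s)z^{2k}/(2k)$, and the dominant remainder comes from the $(\ell,j)=(1,0)$ contribution, which yields $\mathsf{d}_{1,1}(s)\int_{0}^{z}t^{2s}\ln t\,\d t$; all other terms are of strictly lower order at $z\downarrow 0$ (higher powers of $x$ and/or stronger decay rates from the extra factors of $x^{2\ell s}$ and $(\ln x)^{\ell-1}$), so they are absorbed into the stated $\mathcal{O}\big(\int_{0}^{z}t^{2s}\ln t\,\d t\big)$.

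\textbf{Non-integer $2s$.} For $s\in(m,m+1]$ (or $2s\in(-1,0)$ with $m=0$), apply Proposition \ref{loopy} with $K=2m+2$. Then
\begin{equation*}
\frac{v(x;s)}{x}=\sum_{j=1}^{m+1}\mathsf{d}_{2j}(s)x^{2j-1}+\sum_{j\geq 2m+2}\mathsf{d}_{j+1}(s)x^{j}+\sum_{\ell,j}\mathsf{d}_{j+1,\ell}(s)x^{j+2\ell s}+\mathcal{O}\big(\max\{x^{(1+2s)(K+2)-1},x^{K+1}\}\big).
\end{equation*}
Integration reproduces $\sum_{k=1}^{m+1}\mathsf{d}_{2k}(s)z^{2k}/(2k)$ from the first group, while the $(\ell,j)=(1,0)$ term contributes $\mathcal{O}(z^{2s+1})$ and all remaining terms are of strictly lower order at $z\downarrow 0$.

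The only point that deserves a quick sanity check, rather than being a genuine obstacle, is that the error terms in Proposition \ref{loopy} are uniform in $x\in(0,z_{0}]$, so the $\mathcal{O}$-remainders may be integrated. This is immediate from the uniform convergence of the Neumann series in the proof of Proposition \ref{loopy}, and no further Riemann-Hilbert analysis is required.
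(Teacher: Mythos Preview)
Your proposal is correct and matches the paper's approach: the paper simply states that the corollary is ``a direct consequence of Proposition \ref{loopy}'' without further elaboration, and your term-by-term integration of the expansions there is precisely how that consequence is extracted. The only minor remark is that for the odd and non-integer cases it suffices to quote the simpler truncations recorded in Proposition \ref{1102asymoforsmallz} rather than the full Proposition \ref{loopy} with large $K$, but this is a matter of convenience, not substance.
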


\section{Proofs of Theorem \ref{expf} and Proposition \ref{1103addprop}}\label{sectiononexpf}\label{sectionofprovingTheoremexpf}

Before proving Theorem \ref{maintheorem1}, we introduce certain connections between the joint moments $F_{N}(s,h)$ and the family of real-valued random variables $\mathsf{X}(s)$ defined as in Definition \ref{defofxs}. Let $X_N$ be a random matrix in $\mathbf{H}(N)$, the set of all $N\times N$ complex Hermitian matrices, with 
\begin{equation*}
	\textnormal{Law}(X_N) = m^{(s)}_N,
\end{equation*}
the finite dimensional Hua-Pickrell measure. The distribution of the eigenvalues of this ensemble is given by the following probability measure on $\mathbb{R}^N/S_N$, with $S_N$ the symmetric group of degree $N$, 
\begin{equation}\label{HP0}
\frac{1}{\mathrm{D}_N^{(s)}}\prod_{1\leq j\leq k\leq N}|x_{k}-x_{j}|^2
\prod_{j=1}^N (1+x_j^2)^{-s-N}\d x_{j},
\end{equation}
where $\mathrm{D}_{N}^{(s)}$ is the normalising constant, which is given explicitly by
\begin{equation*}
 	\mathrm{D}_N^{(s)} = {\pi^N2^{-N(N+2s -1)}} \prod_{j=0}^{N-1} \frac{j!\,\Gamma(2s + N-j)}{\Gamma(s+N-j)^2}.
\end{equation*} 
It was proven by Borodin and Olshanski \cite[Theorem 5.1]{Borodin_2001} that ${\rm Tr}(X_N)/N$ converges almost surely and then in distribution, as $N\rightarrow\infty$, to a random variable, denoted by $\mathsf{X}(s)$ for any $s$ with $s>-1/2$. The
explicit identification of $\mathsf{X}(s)$ as (\ref{defofran}) comes from Qiu \cite{Qiu}; see Theorems 1.2, 1.3, and in particular Theorem 2.3 therein. For fixed $s$, we denote by $\mathbb{E}_{N}^{(s)}[\cdot]$ the expectation with respect to $m_{N}^{(s)}$, and by $\mathbb{E}[\cdot]$ the expectation with respect to the Hua-Pickrell measure on $\mathbf{H}(\infty)$, the space of infinite-dimensional Hermitian matrices. Note that the Hua-Pickrell measure is defined as the projective limit of $m_{N}^{(s)}$ on $\mathbf{H}(N)$, so 
\begin{equation*}
	\mathbb{E}\left[\e^{\frac{\im t}{2N}\rm Tr(X_{N})}\right]=\mathbb{E}_{N}^{(s)}\left[\e^{\frac{\im t}{2N}\rm Tr(X_{N})}\right],\ \ \ t\in\mathbb{R}.
\end{equation*}
 What follows is
\begin{equation}\label{the convergence for the expectation}
\lim_{N\rightarrow \infty}\mathbb{E}_{N}^{(s)}\left[\e^{\frac{\im t}{2N} \rm{Tr}(X_{N})}\right]=\mathbb{E}\left[\e^{\frac{\im t}{2}\mathsf{X}(s)}\right],
\end{equation}
uniformly on compact subsets of $\mathbb{R}\ni t$. There is a connection between the joint moments $F_{N}(s,h)$ and the expectation of the trace of random Hermitian matrices with respect to the finite-dimensional Hua-Pickrell measure. Specifically, we have the below identity, given in \cite[Proposition 2.7]{AKW}:
\begin{equation}\label{alternativeexpression}
\frac{F_N(s,h)}{N^{s^2+2h}}
= \frac{F_N(s,0)}{N^{s^2}} \, 2^{-2h} \,
\mathbb{E}_{N}^{(s)}\!\left[\left|\frac{{\rm Tr}(X_{N})}{N}\right|^{2h}\right].
\end{equation}
When taking the limit as $N \to \infty$ in \eqref{alternativeexpression}, we have the following result, which naturally extends Lemma \ref{convergenceat0} to the complex exponents $h\in \mathbb{C}$.
\begin{prop}\label{exofh}
Let $s>-\frac{1}{2}$. Let $h\in \mathbb{C}$ with $0\leq \textnormal{Re}(h)<s+\frac{1}{2}$. Then
\begin{equation*}
\lim_{N\rightarrow\infty}\frac{F_N(s,h)}{N^{s^2+2h}}=\frac{G(s+1)^2}{G(2s+1)}2^{-2h}\,\mathbb{E}\left[\left|\mathsf{X}(s)\right|^{2h}\right]<\infty.
\end{equation*}
\end{prop}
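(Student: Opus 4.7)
The plan is to combine the factorization \eqref{alternativeexpression} with the Borodin--Olshanski distributional convergence $\mathrm{Tr}(X_N)/N \Rightarrow \mathsf{X}(s)$ and a uniform integrability estimate drawn from the real-exponent case in Lemma \ref{convergenceat0}.

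First I would invoke the $h=0$ case of Lemma \ref{convergenceat0} (equivalently the classical Keating--Snaith formula), which gives
\begin{equation*}
\lim_{N\to\infty}\frac{F_N(s,0)}{N^{s^2}} = \frac{G(s+1)^2}{G(2s+1)},
\end{equation*}
so in view of \eqref{alternativeexpression} the task reduces to showing
\begin{equation*}
\lim_{N\to\infty}\mathbb{E}_N^{(s)}\!\left[\Big|\tfrac{\mathrm{Tr}(X_N)}{N}\Big|^{2h}\right]
= \mathbb{E}\!\left[|\mathsf{X}(s)|^{2h}\right] < \infty.
\end{equation*}
By \cite[Theorem 5.1]{Borodin_2001}, $\mathrm{Tr}(X_N)/N$ converges almost surely (in the natural projective-limit coupling), and in particular in distribution, to $\mathsf{X}(s)$.

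The next step is to upgrade this weak convergence to moment convergence against the function $x\mapsto|x|^{2h}$. Fix $\varepsilon>0$ so small that $h' := \textnormal{Re}(h)+\varepsilon$ satisfies $h'<s+\tfrac{1}{2}$. Applying Lemma \ref{convergenceat0} with the \emph{real} exponent $h'$ yields
\begin{equation*}
\sup_{N\geq 1}\mathbb{E}_N^{(s)}\!\left[\Big|\tfrac{\mathrm{Tr}(X_N)}{N}\Big|^{2h'}\right] < \infty,
\end{equation*}
and since $\big||x|^{2h}\big| = |x|^{2\,\textnormal{Re}(h)}$ with $2\textnormal{Re}(h) < 2h'$, this bound supplies uniform integrability of the family $\{|\mathrm{Tr}(X_N)/N|^{2h}\}_{N\geq 1}$. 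When $\textnormal{Re}(h)>0$ the function $x\mapsto|x|^{2h}$ is continuous on $\mathbb{R}$ (with the convention $|0|^{2h}=0$), so uniform integrability combined with weak convergence delivers the sought limit by standard arguments; the trivial case $h=0$ is handled directly by the Keating--Snaith asymptotics, and the finiteness of the limit follows since $\sup_N$ is passed through to the limit.

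The main subtlety is the boundary case $\textnormal{Re}(h)=0$ with $h\neq 0$, for which $x\mapsto|x|^{2h}$ is bounded but discontinuous at the origin. I would resolve this either by invoking the density statement of Theorem \ref{expf} to conclude that the law of $\mathsf{X}(s)$ places no mass at zero, so that the portmanteau/continuous mapping argument still applies, or, to avoid forward reference, by using the Fourier representation \eqref{i6} to write $|x|^{2h}\mathrm{e}^{-\epsilon|x|}$ as an integral of $K_{2h}^{\epsilon}(t)\mathrm{e}^{\im tx}$ and then passing to the limit in $N$ through the characteristic-function convergence \eqref{the convergence for the expectation}, followed by $\epsilon\downarrow 0$; dominated convergence is justified by the pointwise bound $|K_{2h}^{\epsilon}(t)|\leq |\Gamma(1+2h)|/(\pi(\epsilon^2+t^2)^{(2\textnormal{Re}(h)+1)/2})$ used in conjunction with the tail control on the characteristic functions. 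Either route removes the continuity obstruction and completes the proof.
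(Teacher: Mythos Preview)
Your approach is essentially the same as the paper's: reduce via \eqref{alternativeexpression} and the Keating--Snaith asymptotic, then pass from distributional convergence of $\mathrm{Tr}(X_N)/N$ to moment convergence by a uniform-integrability argument. The paper differs in one implementation detail: it splits into two cases, invoking \cite[(17)]{AKW} directly for the higher-moment bound when $s>0$, and for $-\tfrac12<s\le 0$ it instead uses almost-sure convergence together with Scheff\'e's lemma (since $\mathbb{E}_N^{(s)}[|\mathrm{Tr}(X_N)/N|^{2\mathrm{Re}(h)}]\to\mathbb{E}[|\mathsf{X}(s)|^{2\mathrm{Re}(h)}]$ is already supplied by the real-$h$ result). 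Your unified route---drawing the uniform bound from convergence of the $(2h')$-moment in Lemma~\ref{convergenceat0} rather than from \cite[(17)]{AKW}---is a valid simplification that works for all $s>-\tfrac12$ at once.

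You also flag the discontinuity of $x\mapsto|x|^{2h}$ at the origin when $\mathrm{Re}(h)=0$, $h\neq 0$, which the paper's proof does not single out; both arguments implicitly require $\mathbb{P}(\mathsf{X}(s)=0)=0$. Your first proposed fix via Theorem~\ref{expf} is logically sound (that theorem does not rely on Proposition~\ref{exofh}), and in fact the paper's own invocation of the continuous mapping theorem rests on the same fact.
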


\begin{proof}
Note that for $s>-\frac{1}{2}$, by \cite{keating2000random}, 
\beas
\lim_{N\rightarrow \infty}\frac{F_{N}(s,0)}{N^{s^2}}=\frac{G(s+1)^2}{G(2s+1)}.
\eeas
So by (\ref{alternativeexpression}), to prove the claim of the current Proposition, it is sufficient to show that for $s>-\frac{1}{2}$ and $0\leq \textnormal{Re}(2h)<2s+1$,
\bea\label{limitexisits}
\lim_{N\rightarrow \infty}\mathbb{E}_{N}^{(s)}\!\left[\mathfrak{y}_N\right]=\mathbb{E}\left[\mathfrak{y}\right],\ \ \ \ \mathfrak{y}_N:=\frac{1}{N}|\textnormal{Tr}(X_N)|^{2h},\ \ \ \mathfrak{y}:=|\mathsf{X}(s)|^{2h}.
\eea
Firstly, we prove that $(\ref{limitexisits})$ holds for $s>0$ and $h\in \mathbb{C}$ with $0\leq \textnormal{Re}(h)<s+1/2$. By the continuous mapping theorem, $\mathfrak{y}_{N}$ converges in law, induced by the infinite-dimensional Hua-Pickrell measure, to $\mathfrak{y}$. For fixed $h$, choose a suitable $\delta>0$ such that $2\textnormal{Re}(h)(1+\delta)<2s+1$, then by \cite[(17)]{AKW}, we have
\begin{equation*}
	\sup_{N\geq 1}\mathbb{E}_{N}^{(s)}\Big[\left|\mathfrak{y}_{N}\right|^{1+\delta}\Big]<\infty,
\end{equation*}
so the sequence $\{\mathfrak{y}_{N}\}_{N\geq 1}$ is uniformly integrable, and thus $\mathbb{E}[\mathfrak{y}_{N}]\rightarrow\mathbb{E}[\mathfrak{y}]$ as $N\rightarrow \infty$. So $(\ref{limitexisits})$ holds for $s>0$ and $h\in \mathbb{C}$ with $0\leq \textnormal{Re}(h)<s+1/2$. 

We next show that \eqref{limitexisits} also holds for $-1/2<s\leq 0$, $h\in \mathbb{C}$ with $0\leq \textnormal{Re}(h)<s+1/2$. To that end we recall Scheff\'e's lemma, which states that if a sequence of the random variables $\{\mathfrak{z}_{n}\}_{n=1}^{\infty}$ with finite expectations converges almost surely to another random variable $\mathfrak{z}$ with a finite expectation, and $\lim_{n\rightarrow \infty}\mathsf{E}[|\mathfrak{z}_{n}|]=\mathsf{E}[|\mathfrak{z}|]$, where $\mathsf{E}[\cdot]$ is the expectation, then $\lim_{n\rightarrow \infty}\mathsf{E}[\mathfrak{z}_{n}]=\mathsf{E}[\mathfrak{z}]$. Note that ${\rm Tr}(X_N)/N$ converges almost surely to $\mathsf{X}(s)$, as $N\rightarrow\infty$, see \cite[Theorem 5.1]{Borodin_2001}. By the continuous mapping theorem, we have that $\mathfrak{y}_{N}$ convergence almost surely to $\mathfrak{y}$. Note that by the proof of \cite[Theorem 1.2]{AKW}, we have 
\begin{equation*}
	\mathbb{E}[|\mathfrak{y}_{N}|]=\mathbb{E}\left[\left|\textnormal{Tr}(X_{N})/N\right|^{2\textnormal{Re} (h)}\right]=\mathbb{E}_{N}^{(s)}\left[\left|\textnormal{Tr}(X_{N})/N\right|^{2\textnormal{Re} (h)}\right]<\infty,
\end{equation*}
and $\mathbb{E}[|\mathfrak{y}|]=\mathbb{E}[|\mathsf{X}(s)|^{2\textnormal{Re}(h)}]<\infty$, as well as $\lim_{N\rightarrow \infty}\mathbb{E}_{N}^{(s)}[|\mathfrak{y}_{N}|]=\mathbb{E}[|\mathfrak{y}|]$. So, by Scheff\'e's theorem, 
\begin{equation*}
	\lim_{N\rightarrow \infty}\mathbb{E}_{N}^{(s)}[\mathfrak{y}_{N}]=\mathbb{E}[\mathfrak{y}].
\end{equation*}
So \eqref{limitexisits} also holds for $-1/2<s\leq 0$, $h\in \mathbb{C}$ with $0\leq \textnormal{Re}(h)<s+1/2$. The proof of the Proposition is now complete.
\end{proof}
Next, we connect $\mathbb{E}[\e^{\frac{\im t}{2}\mathsf{X}(s)}]$ to $v(t;s)$, defined as in (\ref{e73}). Specifically, this connection is given in (\ref{1008e6}). We begin with the following Lemma.

\begin{lem}\cite[Proposition 2.4]{ABGS}\label{theanalyticfunction}
Let $s>-\frac{1}{2}$. For $z\in \mathbb{H}_{+}:=\{z\in\mathbb{C}:\,\textnormal{Re}(z)>0\}$, define
\begin{equation}\label{definitionofFN}
\mathcal{F}_{N}(z):=\frac{\e^{-\frac{z}{2}}}{N!\prod_{j=1}^{N}\Gamma(j)\Gamma(2s+j)}\int_{0}^{\infty}\cdots\int_{0}^{\infty}\Delta^2(y_{1},\ldots,y_{N})\prod_{j=1}^{N}\left(y_j+\frac{z}{N}\right)^{s}y_j^{s}\e^{-y_j}\d y_j,
\end{equation}
where $\Delta(y_1,\ldots,y_N)$ is the Vandermonde determinant. Then $\mathcal{F}_{N}(z)$ is analytic on $\mathbb{H}_{+}$ and there is an analytic function $\mathcal{F}(z)$ on
$\mathbb{H}_{+}$, such that
$\mathcal{F}(0)=1$ and for $t\in (0,\infty)$,
\begin{equation}
\mathcal{F}(t)=\mathbb{E}\left[\e^{\frac{\im t}{2}\mathsf{X}(s)}\right].
\end{equation}
Moreover, for $z\in \mathbb{H}_{+}$ and $m=0,1,2,\ldots$, we have
\begin{equation*}
\lim_{N\rightarrow \infty}\frac{\d^{m}}{\d z^{m}}\mathcal{F}_{N}(z)=\frac{\d^{m}}{\d z^{m}}\mathcal{F}(z),
\end{equation*}
uniformly on compact subsets of $\mathbb{H}_{+}$.
\end{lem}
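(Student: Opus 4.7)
The plan is to split the proof into three steps: analyticity with locally uniform (in $N$) bounds on $\{\mathcal{F}_N\}$, identification of the pointwise limit on the positive real axis via the Hua-Pickrell measure, and propagation of convergence to all of $\mathbb{H}_+$ by Vitali's convergence theorem.

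For each fixed $y\in(0,\infty)^N$, the integrand in \eqref{definitionofFN} is analytic in $z$ on $\mathbb{H}_+$, since $y_j+z/N$ remains in the open right half-plane and we use the principal branch of $(\cdot)^s$. A dominating function uniform in $N$ on compact sets $K\subset\mathbb{H}_+$ is obtained by splitting according to the sign of $s$: when $s\in(-1/2,0)$ the estimate $|y_j+z/N|\geq\mathrm{Re}(y_j+z/N)\geq y_j$ gives $|y_j+z/N|^s\leq y_j^s$, and when $s\geq 0$ one uses $|y_j+z/N|^s\leq(y_j+R/N)^s$ with $R=\sup_K|z|$. Combined with $|\e^{-z/2}|=\e^{-\mathrm{Re}(z)/2}$ and the Selberg evaluation \eqref{Selb}, these bounds simultaneously yield $\mathcal{F}_N\in\mathrm{Hol}(\mathbb{H}_+)$ (by differentiation under the integral sign, or Morera plus Fubini) and $\sup_N\sup_{z\in K}|\mathcal{F}_N(z)|<\infty$.

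For $t>0$ real, comparing \eqref{definitionofFN} with the integral representation \eqref{e43} at $\xi=t/(2N)$ and normalising by \eqref{Selb} yields the identity
\[
\mathcal{F}_N(t)=\frac{J_N(t/(2N),s)}{J_N(0,s)}=\mathbb{E}_N^{(s)}\!\left[\e^{\im t\,\mathrm{Tr}(X_N)/(2N)}\right],
\]
i.e.\ the characteristic function at frequency $t/(2N)$ of $\mathrm{Tr}(X_N)$ under the finite-dimensional Hua-Pickrell measure $m_N^{(s)}$, where the second equality uses the Fourier representation \eqref{defofJ}. In particular $\mathcal{F}_N(0)=1$ for every $N$, and by the projective-limit relation \eqref{the convergence for the expectation} one obtains the pointwise convergence
\[
\lim_{N\to\infty}\mathcal{F}_N(t)=\mathbb{E}\!\left[\e^{\im t\,\mathsf{X}(s)/2}\right]=:\mathcal{F}(t),\qquad t>0.
\]

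Finally, Montel's theorem applied to the locally uniformly bounded family $\{\mathcal{F}_N\}$ from the first step makes it normal on $\mathbb{H}_+$. The pointwise convergence on $(0,\infty)$ (a set with accumulation points in $\mathbb{H}_+$) together with Vitali's convergence theorem then upgrades this to uniform convergence on compact subsets of $\mathbb{H}_+$, with limit $\mathcal{F}\in\mathrm{Hol}(\mathbb{H}_+)$; uniform convergence $\mathcal{F}_N^{(m)}\to\mathcal{F}^{(m)}$ on compacta for every $m\geq 0$ follows from Cauchy's integral formula. The boundary value $\mathcal{F}(0)=1$ is read off from the continuity at $0^+$ of the characteristic function $t\mapsto\mathbb{E}[\e^{\im t\,\mathsf{X}(s)/2}]$ of the real random variable $\mathsf{X}(s)$. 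I expect the main obstacle to be the $N$-uniform dominating estimate of the first step, particularly near the singular endpoint $y_j=0$ in the regime $s\in(-1/2,0)$, where the factor $y_j^{2s}$ is not integrable on its own and the Vandermonde $\Delta^2(y_1,\ldots,y_N)$ must be exploited to reduce matters to the Laguerre ($\alpha=2s$) normalisation provided by \eqref{Selb}.
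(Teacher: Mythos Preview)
The paper does not prove this lemma; it is quoted verbatim from \cite[Proposition~2.4]{ABGS}, so there is no in-paper argument to compare against. Your three-step strategy (analyticity with locally uniform bounds, identification of the real-axis limit via the Hua--Pickrell characteristic function, then Vitali/Montel) is the natural route and is essentially correct.

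There is, however, a genuine gap in your Step~1 when $s\ge 0$. Your majorant $(y_j+R/N)^s$ produces the integral
\[
\int_{(0,\infty)^N}\Delta^2(y)\prod_{j=1}^N(y_j+R/N)^s y_j^{s}\e^{-y_j}\,\d y,
\]
which is \emph{not} a Selberg integral, so ``the Selberg evaluation \eqref{Selb}'' does not by itself deliver a bound uniform in $N$. You can close this in two ways. Either reorder the argument: establish first (your Step~2) that $\mathcal{F}_N(t)=J_N(t/(2N),s)/J_N(0,s)=\mathbb{E}_N^{(s)}[\e^{\im t\,\mathrm{Tr}(X_N)/(2N)}]$ for real $t>0$, hence $0<\mathcal{F}_N(R)\le 1$, and then observe that for $|z|\le R$ one has $|\mathcal{F}_N(z)|\le \e^{R/2}\mathcal{F}_N(R)\le \e^{R/2}$. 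Or keep Step~1 self-contained via the AM--GM bound $y_j(y_j+R/N)\le (y_j+R/(2N))^2$, so that $(y_j+R/N)^s y_j^s\le (y_j+R/(2N))^{2s}$; after the shift $u_j=y_j+R/(2N)$ (which leaves $\Delta^2$ unchanged) and enlarging the domain to $(0,\infty)^N$, the resulting integral \emph{is} the Laguerre Selberg integral with exponent $2s$, giving $|\mathcal{F}_N(z)|\le \e^{R/2}$ directly.

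Incidentally, your closing remark misidentifies the delicate regime: for $s\in(-\tfrac12,0)$ the inequality $|y_j+z/N|\ge y_j+\mathrm{Re}(z)/N>y_j$ immediately yields $|\mathcal{F}_N(z)|\le 1$ via the Laguerre Selberg integral with exponent $2s>-1$, so that case is in fact the easy one; the only missing piece is the $s\ge 0$ bound above.
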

\begin{prop}\label{smooth}
Let $s>-\frac{1}{2}$. Let $v_{N}(t;s)$ and $v(t;s)$ be as in \eqref{defofvN} and \eqref{e73}. Then $t\mapsto v_{N}(t;s)$ and $t\mapsto v(t;s)$ are smooth on $(0,+\infty)\subset\mathbb{R}$, and we have
\begin{equation}\label{convergenceofthederivative}
\lim_{N\rightarrow \infty}\frac{\d^{m}}{\d t^{m}}v_{N}(t;s)=\frac{\d^{m}}{\d t^{m}}v(t;s).
\end{equation}
pointwise in $t\in(0,+\infty)$ and $m\in\mathbb{Z}_{\geq 0}$.
Moreover, for $t\geq 0$,
\begin{equation}\label{1008e6}
\mathbb{E}\Big[\e^{\frac{\im t}{2}\mathsf{X}(s)}\Big]=\exp\left[\int_0^{t}v(x;s)\frac{\d x}{x}\right].
\end{equation}
\end{prop}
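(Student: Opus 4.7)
The plan is to link the Painlev\'e-theoretic object $v_N(z;s)$ to the analytic function $\mathcal{F}_N(z)$ of Lemma \ref{theanalyticfunction} by a direct identity, and then pass to the limit $N\to\infty$ using the convergence of all derivatives of $\mathcal{F}_N$ on compact subsets of $\mathbb{H}_+$ provided by the same lemma. Comparing the integral \eqref{definitionofFN} with the representation \eqref{e43} for $W_N$, one sees that
\begin{equation*}
\mathcal{F}_N(z)=\frac{\e^{-z/2}}{N!\prod_{j=1}^N\Gamma(j)\Gamma(2s+j)}\,W_N(z/N,s).
\end{equation*}
Using $u_N(\xi;s)=\xi\,\tfrac{\d}{\d\xi}\ln W_N(\xi,s)$ together with the definition \eqref{defofvN} of $v_N$, a direct differentiation of the logarithm yields the key identity
\begin{equation*}
v_N(z;s)=z\,\frac{\d}{\d z}\ln\mathcal{F}_N(z),\qquad z>0.
\end{equation*}
Since $\mathcal{F}_N(t)>0$ for $t\ge 0$ is manifest from the positivity of the integrand in \eqref{definitionofFN}, and $\mathcal{F}_N$ is analytic on $\mathbb{H}_+$, this immediately gives smoothness of $v_N$ on $(0,\infty)$.

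By Lemma \ref{theanalyticfunction}, every derivative of $\mathcal{F}_N$ converges uniformly on compacts of $\mathbb{H}_+$ to that of $\mathcal{F}$. Combined with the pointwise convergence $v_N(t;s)\to v(t;s)$ from Corollary \ref{cheat} and repeated application of the Leibniz and quotient rules to the key identity, one obtains the pointwise convergence \eqref{convergenceofthederivative}, together with the relation $v(t;s)=t\,\mathcal{F}'(t)/\mathcal{F}(t)$ on $(0,\infty)$, conditional upon $\mathcal{F}(t)\neq 0$ there. These also imply smoothness of $v$ on $(0,\infty)$ from the analyticity of $\mathcal{F}$.

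The main obstacle is to establish $\mathcal{F}(t)>0$ for all $t>0$ and deduce \eqref{1008e6}; both follow from an ODE uniqueness argument. At finite $N$, the key identity rewrites as the linear ODE $\mathcal{F}_N'(t)=(v_N(t;s)/t)\mathcal{F}_N(t)$ on $(0,\infty)$. Passing pointwise to the limit via Lemma \ref{theanalyticfunction} and Corollary \ref{cheat} produces the limit ODE $\mathcal{F}'(t)=a(t)\mathcal{F}(t)$ with $a(t):=v(t;s)/t$. From Proposition \ref{1102asymoforsmallz} one has $v(x;s)=\mathcal{O}(x^{\min\{2,\,1+2s\}})$ as $x\downarrow 0$, so $a\in L^1_{\mathrm{loc}}([0,\infty))$ since $2s>-1$, and the function
\begin{equation*}
\mathcal{G}(t):=\exp\left[\int_0^t a(x)\,\d x\right]
\end{equation*}
is continuous on $[0,\infty)$, strictly positive, satisfies the same ODE on $(0,\infty)$, and obeys $\mathcal{G}(0)=1=\mathcal{F}(0)$, the latter from Lemma \ref{theanalyticfunction}. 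Computing $(\mathcal{F}/\mathcal{G})'=0$ on $(0,\infty)$ with limit $1$ at $0^+$ yields $\mathcal{F}\equiv\mathcal{G}$ on $[0,\infty)$. This simultaneously establishes $\mathcal{F}(t)>0$ (closing the gap in the previous paragraph) and the formula $\mathcal{F}(t)=\exp[\int_0^t v(x;s)\,\d x/x]$, which is precisely \eqref{1008e6} since $\mathcal{F}(t)=\mathbb{E}[\e^{\im t\mathsf{X}(s)/2}]$ for $t>0$ by Lemma \ref{theanalyticfunction}.
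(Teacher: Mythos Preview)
Your proof is correct and follows essentially the same route as the paper: identify $v_N(z;s)=z\,\tfrac{\d}{\d z}\ln\mathcal{F}_N(z)$, invoke Lemma \ref{theanalyticfunction} for convergence of all derivatives, and combine with the $\sigma$-Painlev\'e III$'$ asymptotics to justify that $\int_0^t v(x;s)\,\d x/x$ is finite and that the exponential identity \eqref{1008e6} holds via an ODE/initial-value comparison.

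The one genuine difference is how you secure $\mathcal{F}(t)>0$ on $(0,\infty)$. The paper imports this from the proof of \cite[Theorem~1.2]{ABGS}, where strict positivity of $t\mapsto\mathbb{E}[\e^{\im t\mathsf{X}(s)/2}]$ is shown directly, and only afterwards writes $v(t;s)=t\,\tfrac{\d}{\d t}\ln\mathcal{F}(t)$. You instead pass the finite-$N$ ODE $\mathcal{F}_N'=(v_N/t)\mathcal{F}_N$ to the limit first (which requires only the convergence of $\mathcal{F}_N,\mathcal{F}_N'$ and of $v_N$, not positivity), obtain $\mathcal{F}'=(v/t)\mathcal{F}$, and then deduce $\mathcal{F}\equiv\exp[\int_0^{\cdot}v/x\,\d x]>0$ by uniqueness with $\mathcal{F}(0)=1$. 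This is slightly more self-contained, as it avoids the external citation for positivity; the paper's version is shorter because that fact is taken off the shelf. Either way, once positivity is in hand, your conditional conclusions about \eqref{convergenceofthederivative} and the smoothness of $v$ become unconditional exactly as you indicate.
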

\begin{proof}
Let $\mathcal{F}_{N}(z)$ be as in \eqref{definitionofFN}. Note that $v_N(z;s)=u_N(z/N;s)-z/2$. By the definition of $\mathbb{E}_{N}^{(s)}[\cdot]$, 
\bea\label{expressionforEN}
\mathbb{E}_N^{(s)}\left[\e ^{\frac{\im t}{2N} {\rm Tr}(X_N)}\right]
=J_{N}(0,s)^{-1}J_{N}\left(\frac{t}{2N},s\right).
\eea
with $J_{N}(\xi,s)$ as in \eqref{defofJ}. Then by \eqref{i13},\eqref{1102defu} and \eqref{expressionforEN}, we have for $t\in[0,\infty)$,
\begin{equation*}\label{1008e1}
\mathcal{F}_{N}(t)=\exp\left(\int_{0}^{t}v_{N}(x;s)\frac{dx}{x}\right)=\mathbb{E}_N^{(s)}\left[\e^{\frac{\im t}{2N} {\rm Tr}(X_N)}\right].
\end{equation*}
By the proof workings of \cite[Theorem 1.2]{ABGS}, $t\mapsto\mathbb{E}_N^{(s)}[\e ^{\frac{\im t}{2N} {\rm Tr}(X_N)}]$ and $t\mapsto\mathbb{E}[\e^{\frac{\im t}{2}\mathsf{X}(s)}]$ are strictly positive on $\mathbb{R}$. Consequently, for $t\in(0,+\infty)$, we have
\begin{equation*}\label{1008e2}
v_{N}(t;s)=t\frac{\d}{\d t}\ln \mathcal{F}_{N}(t).
\end{equation*}
By (\ref{e73x}) and Lemma \ref{theanalyticfunction}, for $t\in(0,+\infty)$,
\begin{equation}\label{1008e3}
v(t;s)=t\frac{\d}{\d t} \ln \mathbb{E}\left[\e^{\frac{\im t}{2}\mathsf{X}(s)}\right].
\end{equation}
By Corollaries \ref{cor5555} and \ref{nzero}, see also Remark \ref{mero},
\begin{equation*}
	t\mapsto\exp\left[\int_0^{t}v(x;s)\frac{\d x}{x}\right]
\end{equation*}
is well defined and it tends to $1$ as $t\downarrow0$. Note that $\mathbb{E}[\e^{\frac{\im t}{2}\mathsf{X}(s)}]|_{t=0}=1$, so by \eqref{1008e3}, we have \eqref{1008e6}. The smoothness of $ v_{N}(t; s)$ and $v(t; s)$, together with \eqref{convergenceofthederivative}, is a consequence of Lemma \ref{theanalyticfunction}.
\end{proof}

Now we are ready to prove Theorem \ref{expf} and then Proposition \ref{1103addprop}. 
\begin{proof}[Proof of Theorem \ref{expf}]
By (\ref{1008e6}) and Corollary \ref{cor5555}, we obtain that $\mathbb{E}[\e^{\frac{\im t}{2}\mathsf{X}(s)}]$ decays exponentially fast when $t\rightarrow+\infty$. Let
\begin{equation}\label{defof}
\rho^{(s)}(x):=\frac{1}{2\pi}\int_{-\infty}^{\infty}\e^{\im xt}\,\mathbb{E}\Big[\e^{\im t\mathsf{X}(s)}\Big]\d t=\frac{1}{\pi}\,\textnormal{Re}\left(\int_{0}^{\infty}\e^{\im xt}\,\mathbb{E}[\e^{\im t\mathsf{X}(s)}]\d t\right),
\end{equation}
where the second equality follows from the fact that $t\mapsto \mathbb{E}[\e^{\im t\mathsf{X}(s)}]$ is even. By the exponential decay and continuity of $t\mapsto\mathbb{E}[\e^{\im t\mathsf{X}(s)}]$, \eqref{defof} is well defined for any $x\in \mathbb{R}$, that is, the distribution of $\mathsf{X}(s)$ has a bounded and continuous probability density function with respect to the Lebesgue measure on $\mathbb{R}$. Moreover, the same density is smooth.
\end{proof}

\begin{proof}[Proof of Proposition \ref{1103addprop}]
It follows from (\ref{defof}) and (\ref{1008e6}).
\end{proof}

From Proposition \ref{exofh}, for \(h \in \mathbb{C}\) with \(0 \leq \mathrm{Re}(h) < \frac{1}{2} + s\), we see that the moment \(\mathbb{E}[|\mathsf{X}(s)|^{2h}]\) is finite. In fact, by Theorem \ref{expf}, the finiteness can be extended to the range $s>-\frac{1}{2}$ and \(-\frac{1}{2} < \mathrm{Re}(h) < \frac{1}{2} + s\). This result was previously known for non-negative integer values of \(s\) in \cite{ABGS}.

\begin{prop}\label{propexpectation}
Let $s>-\frac{1}{2}$. Let $\rho^{(s)}(x)$ be given as in \eqref{defof}. Let $h\in\mathbb{C}$ with $-\frac{1}{2}<\textnormal{Re}(h)<\frac{1}{2}+s$. Then
\begin{equation}\label{expectationexpression}
\mathbb{E}\Big[|\mathsf{X}(s)|^{2h}\Big]=\int_{-\infty}^{\infty}|x|^{2h}\rho^{(s)}(x)\d x<\infty.
\end{equation}
\end{prop}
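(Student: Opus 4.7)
The plan is to obtain \eqref{expectationexpression} by combining the boundedness and smoothness of $\rho^{(s)}$ from Theorem \ref{expf} with the moment finiteness established in Proposition \ref{exofh}. Since $|x|^{2h}=\exp(2h\ln|x|)$ for $x\neq0$, absolute convergence of the integral in \eqref{expectationexpression} reduces to showing
\begin{equation*}
\int_{\mathbb{R}}|x|^{2\textnormal{Re}(h)}\rho^{(s)}(x)\,\d x<\infty,\qquad -\tfrac{1}{2}<\textnormal{Re}(h)<\tfrac{1}{2}+s.
\end{equation*}
I would split this integral at $|x|=1$ and handle the tail $\{|x|>1\}$ and the near-origin piece $\{|x|\leq 1\}$ separately, each using a distinct ingredient from the preceding results.

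For the tail, fix any real $h_0$ with $\max\{0,\textnormal{Re}(h)\}\leq h_0<s+\tfrac{1}{2}$. The monotonicity $|x|^{2\textnormal{Re}(h)}\leq|x|^{2h_0}$ for $|x|\geq 1$ together with Proposition \ref{exofh} (applied with exponent $h_0\in\mathbb{R}_{\geq 0}$, which already produces the law-of-the-unconscious-statistician identity for real non-negative exponents, since $\mathsf{X}(s)$ is a continuous random variable with density $\rho^{(s)}$) yields
\begin{equation*}
\int_{|x|>1}|x|^{2\textnormal{Re}(h)}\rho^{(s)}(x)\,\d x\leq \int_{\mathbb{R}}|x|^{2h_0}\rho^{(s)}(x)\,\d x=\mathbb{E}\big[|\mathsf{X}(s)|^{2h_0}\big]<\infty.
\end{equation*}
The near-origin piece is where the new content resides: by Theorem \ref{expf}, $\rho^{(s)}$ is bounded on $\mathbb{R}$, hence
\begin{equation*}
\int_{|x|\leq 1}|x|^{2\textnormal{Re}(h)}\rho^{(s)}(x)\,\d x\leq\|\rho^{(s)}\|_{L^{\infty}(\mathbb{R})}\int_{-1}^{1}|x|^{2\textnormal{Re}(h)}\,\d x,
\end{equation*}
which is finite precisely under the assumption $2\textnormal{Re}(h)>-1$. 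Combined, the two bounds give the absolute integrability of $x\mapsto|x|^{2h}\rho^{(s)}(x)$ throughout the strip $-\tfrac{1}{2}<\textnormal{Re}(h)<\tfrac{1}{2}+s$.

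Finally, the identity $\mathbb{E}[|\mathsf{X}(s)|^{2h}]=\int_{\mathbb{R}}|x|^{2h}\rho^{(s)}(x)\,\d x$ follows from the standard change-of-variables formula (law of the unconscious statistician) for the continuous random variable $\mathsf{X}(s)$ with density $\rho^{(s)}$ applied to the absolutely integrable complex-valued function $x\mapsto|x|^{2h}$; the absolute integrability just established is exactly the hypothesis required for this formula to apply to complex exponents. I do not anticipate a serious obstacle in this argument; the only delicate point is the boundedness of $\rho^{(s)}$ near the origin, which already underlies the extension of the moment range from $\textnormal{Re}(h)\geq 0$ (treated in Proposition \ref{exofh}) to the full interval $\textnormal{Re}(h)>-\tfrac{1}{2}$, and this is exactly what Theorem \ref{expf} supplies.
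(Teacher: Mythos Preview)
Your argument is correct. Both your proof and the paper's rely on Theorem \ref{expf} (existence and boundedness of the density) together with Proposition \ref{exofh} (finiteness of moments for $\textnormal{Re}(h)\geq 0$), so the overall strategy is the same.

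There is a small but genuine difference in how the two proofs handle the tail $|x|\to\infty$ when $\textnormal{Re}(h)<0$. The paper splits into the cases $\textnormal{Re}(h)\geq 0$ and $\textnormal{Re}(h)<0$; in the latter case it invokes the pointwise decay $\rho^{(s)}(x)=\mathcal{O}(|x|^{-1})$, obtained by integrating by parts in the Fourier inversion formula \eqref{defof} and using the exponential decay of $\frac{\d}{\d t}\mathbb{E}[\e^{\im t\mathsf{X}(s)}]$ from \eqref{1008e6} and Corollary \ref{cor5555}. You instead split the integral spatially and control the tail for all $h$ at once by the single moment comparison $|x|^{2\textnormal{Re}(h)}\leq |x|^{2h_0}$ on $\{|x|>1\}$ with $h_0\in[\max\{0,\textnormal{Re}(h)\},s+\tfrac12)$, appealing only to Proposition \ref{exofh}. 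Your route is shorter and avoids the extra analytic input about the decay of $\rho^{(s)}$; the paper's route, on the other hand, produces the decay estimate $\rho^{(s)}(x)=\mathcal{O}(|x|^{-1})$ as a by-product, which may be of independent interest but is not needed for the proposition itself.
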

\begin{proof}
The first equality in \eqref{expectationexpression} follows from Theorem \ref{expf}. By Proposition \ref{exofh}, for \(h \in \mathbb{C}\) with \(0 \leq \mathrm{Re}(2h) < 2s+1\), we have \(\mathbb{E}[|\mathsf{X}(s)|^{2h}] < \infty\). For \(h \in \mathbb{C}\) with \(-1 < \mathrm{Re}(2h) < 0\), we use the following two facts: 
\begin{equation*}
	0 \leq \rho^{(s)}(x) \leq 1\ \ \ \forall\,x\in\mathbb{R}\hspace{1cm}\textnormal{and}\hspace{1cm}\rho^{(s)}(x) = \mathcal{O}\!\left(x^{-1}\right)\ \ \textnormal{as}\ |x|\rightarrow\infty.
\end{equation*}	
The decay follows from integration by parts in \eqref{defof} and from the exponential decay of \(\frac{\d}{\d t}\mathbb{E}[\mathrm{e}^{\im t \mathsf{X}(s)}]\) as \(|t| \to \infty\), established via \eqref{1008e6} and Corollary \ref{cor5555}.
\end{proof}

\begin{rem}\label{analyticfunction}
By Proposition \ref{propexpectation}, applying an argument similar to that in \cite[(71)]{ABGS}, we have that $h\mapsto \mathbb{E}[|\mathsf{X}(s)|^{2h}]$ is analytic on $
\{ h \in \mathbb{C} : -1 < \mathrm{Re}(2h) < 2s + 1 \}$, and so is $h\mapsto\lim_{N\rightarrow\infty}F_N(s,h)/N^{s^2+2h}$ by Proposition \ref{exofh}.
\end{rem}

\section{Proofs of Theorem \ref{maintheorem1}, Propositions \ref{1103prop} and \ref{rule}}

We begin with an integral representation of the moments of the random variable \( \mathsf{X}(s) \) expressed in terms of the Painlev\'e function \eqref{e42},\eqref{e73repeat}.

\begin{lem}\label{fractionalmoments}
Let $s>-\frac{1}{2}$ and $h\in\mathbb{C}$ with $-\frac{1}{2}<\textnormal{Re}(h)<\frac{1}{2}+s$. 
Then
\begin{equation}\label{1008even1}
\mathbb{E}[|\mathsf{X}(s)|^{2h}]=2\lim_{\epsilon\downarrow 0}\int_{0}^{\infty}\exp\left[\int_0^{2t}v(x;s)\frac{\d x}{x}\right]K_{2h}^{\epsilon}(t)\d t,
\end{equation}
with $v(x;s)$ and $K_{2h}^{\epsilon}(t)$ as in \eqref{e73} and \eqref{i5}, respectively.
\end{lem}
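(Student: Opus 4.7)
The plan is to interpret the right-hand side of \eqref{1008even1} as a Fourier/Parseval-type identity and reduce it to the density formula for $\mathsf{X}(s)$ established in Proposition \ref{1103addprop}. First I would use \eqref{1008e6} (applied with $t\mapsto 2t$) to recognise
\begin{equation*}
\phi(2t):=\exp\!\left[\int_0^{2t}v(x;s)\frac{\d x}{x}\right]=\mathbb{E}\!\left[\e^{\im t\mathsf{X}(s)}\right],
\end{equation*}
which is real, positive, even in $t$, and decays exponentially as $|t|\to\infty$: Corollary \ref{cor5555} combined with the smoothness of $v(\cdot;s)$ near the origin (Proposition \ref{1102asymoforsmallz}) gives $\int_0^{2t}v(x;s)\d x/x=-t+2s\sqrt{2t}-(3s^2/4)\ln t+\mathcal{O}(1)$ as $t\to+\infty$. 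By evenness the $(0,\infty)$-integral in the statement is half of the corresponding integral over $\mathbb{R}$.

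Next I would insert the explicit integral representation \eqref{i5} of $K_{2h}^{\epsilon}(t)$ and swap the two integrations by Fubini. The absolute double-integrand $|x|^{\textnormal{Re}(2h)}\e^{-\epsilon|x|}|\phi(2t)|$ factors, and for $\epsilon>0$ and $\textnormal{Re}(2h)>-1$ the $x$-factor is a finite $\Gamma$-integral while the $t$-factor is finite by the exponential decay just described. After the swap the inner $t$-integral
\begin{equation*}
\int_{-\infty}^{\infty}\phi(2t)\e^{-\im tx}\d t
\end{equation*}
is precisely the Fourier inverse of the characteristic function of $\mathsf{X}(s)$; since $\phi(2\cdot)\in L^1(\mathbb{R})$ and $\rho^{(s)}$ is continuous (Theorem \ref{expf}), the Fourier inversion theorem identifies it with $2\pi\rho^{(s)}(x)$.

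Collapsing the $t$-integral, the right-hand side of \eqref{1008even1} becomes $\int_{-\infty}^{\infty}|x|^{2h}\e^{-\epsilon|x|}\rho^{(s)}(x)\d x$. Passing $\epsilon\downarrow 0$ by dominated convergence, with envelope $|x|^{\textnormal{Re}(2h)}\rho^{(s)}(x)$ which is integrable by Proposition \ref{propexpectation}, yields $\int_{-\infty}^{\infty}|x|^{2h}\rho^{(s)}(x)\d x=\mathbb{E}[|\mathsf{X}(s)|^{2h}]$, completing the identity. I do not anticipate a serious obstacle: the whole argument is a controlled Fubini computation supported by the two facts already established in the paper, namely the exponential decay of $\phi$ and the finiteness of $\mathbb{E}[|\mathsf{X}(s)|^{2\textnormal{Re}(h)}]$. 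The one subtlety worth flagging is that for $\textnormal{Re}(2h)\le 0$ the kernel $K_{2h}^{\epsilon}$ itself is \emph{not} in $L^{1}(\mathbb{R})$, so Fubini must be invoked through the representation \eqref{i5} rather than directly against $K_{2h}^{\epsilon}$.
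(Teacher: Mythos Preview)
Your proposal is correct and follows essentially the same route as the paper: both arguments identify $\exp[\int_0^{2t}v(x;s)\d x/x]$ with the characteristic function via \eqref{1008e6}, invoke its exponential decay from Corollary \ref{cor5555}, apply Fubini against the defining integral \eqref{i5} of $K_{2h}^{\epsilon}$, recognise the resulting $t$-integral as $2\pi\rho^{(s)}$, and finish with dominated convergence justified by Proposition \ref{propexpectation}. The only cosmetic difference is that the paper starts from $\mathbb{E}[|\mathsf{X}(s)|^{2h}]$ and works toward the right-hand side, whereas you start from the right-hand side and work back; the substance is identical.
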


\begin{proof}
Let $\rho^{(s)}(x)$ be the density of the random variable $\mathsf{X}(s)$, given in \eqref{defof}. Note that $\rho^{(s)}(x)$ is non-negative and by Proposition \ref{propexpectation},
\begin{equation*}
\int_{-\infty}^{\infty}|x|^{2\textnormal{Re}(h)}\rho^{(s)}(x)\d x<+\infty.
\end{equation*}
So, by the dominated convergence theorem,  for $h\in\mathbb{C}$ with $-\frac{1}{2}<\textnormal{Re}(h)<\frac{1}{2}+s$,
\begin{equation*}
\mathbb{E}\Big[|\mathsf{X}(s)|^{2h}\Big]=\lim_{\epsilon\downarrow 0}\int_{-\infty}^{\infty}|x|^{2h}\e^{-\epsilon|x|}\rho^{(s)}(x)\d x.
\end{equation*}
By the definition of $\rho^{(s)}(x)$,
\begin{equation*}
\mathbb{E}[|\mathsf{X}(s)|^{2h}]=\lim_{\epsilon\downarrow 0}\frac{1}{2\pi}\int_{-\infty}^{\infty}|x|^{2h}\e^{-\epsilon|x|}\left(\int_{-\infty}^{\infty}\e^{\im xt}\,\mathbb{E}\Big[\e^{\im t\mathsf{X}(s)}\Big]\d t\right)\d x.
\end{equation*}
From the exponential decay of $\mathbb{E}[\mathrm{e}^{\mathrm{i} t \mathsf{X}(s)}]$, obtained by \eqref{1008e6} and Corollary \ref{cor5555}, we have
\begin{equation*}
\int_{-\infty}^{\infty}\int_{-\infty}^{\infty}|x|^{2\textnormal{Re}(h)}\e^{-\epsilon|x|}\Big|\mathbb{E}\Big[\e^{\im t\mathsf{X}(s)}\Big]\Big|\d x\d t<\infty.
\end{equation*}
Thus, by Fubini's theorem,
\begin{equation*}
\mathbb{E}\Big[|\mathsf{X}(s)|^{2h}\Big]=\lim_{\epsilon\downarrow 0}\int_{-\infty}^{\infty}\mathbb{E}\Big[\e^{\im t\mathsf{X}(s)}\Big]\left(\frac{1}{2\pi}\int_{-\infty}^{\infty}|x|^{2h}\e^{-\epsilon|x|}\e^{\im xt}\d x\right)\d t\stackrel{\eqref{i5}}{=}\lim_{\epsilon\downarrow 0}\int_{-\infty}^{\infty}\mathbb{E}\Big[\e^{\im t\mathsf{X}(s)}\Big]K_{2h}^{\epsilon}(t)\d t.
\end{equation*}
Note that $\mathbb{E}[\e^{\im t\mathsf{X}(s)}]$ and $K_{2h}^{\epsilon}(t)$ are even functions of $t$, combined with (\ref{1008e6}), we conclude the claim in this lemma.
\end{proof}

\begin{remark}\label{rangeofFN}
For fixed $N\in\mathbb{Z}_{\geq 1}$, since \(\mathrm{Tr}(X_{N})\) is also a random variable when \(X_{N}\) is chosen randomly from
\begin{equation*} 
	\Big(\mathbf{H}(N), m_{N}^{(s)}\Big),
\end{equation*}
we can, by an argument similar to that of Lemma \ref{fractionalmoments}, obtain for \(s > -\tfrac{1}{2}\) and \(h \in \mathbb{C}\) with \(-\tfrac{1}{2} < \mathrm{Re}(h) < \tfrac{1}{2} + s\) that,
\begin{equation}\label{1105even1}
\mathbb{E}_{N}^{(s)}\!\Big[|\mathrm{Tr}(X_{N})|^{2h}\Big]
= 2 \lim_{\epsilon \downarrow 0} \int_{0}^{\infty}
\exp\!\left[\int_{0}^{2t}\!\!\left(u_{N}(x;s) - \frac{N x}{2}\right)\!\frac{\mathrm{d}x}{x}\right]
K_{2h}^{\epsilon}(t)\,\mathrm{d}t,
\end{equation}
where \(u_{N}(x;s)\) and \(K_{2h}^{\epsilon}(t)\) are as defined in \eqref{1102defu} and \eqref{i5}, respectively. In fact, one may perform a large-\(x\) asymptotic analysis on the corresponding Riemann--Hilbert problem for \(u_{N}(x;s)\), as was done in Corollary \ref{cor5555} for \(v(x;s)\).
Alternatively, since \(u_{N}(x;s)\) admits an explicit representation in terms of a Hankel determinant involving \(U(a,b,x)\) as defined in \eqref{1102defu}, one may use the known asymptotic expansion \cite[\S 13.7.3]{NIST}
\[
U(a,b,t) \sim t^{-a}\sum_{m=0}^{\infty}(-1)^{m}\frac{(a)_{m}(a-b+1)_{m}}{m!\,t^{m}}, \qquad t \to +\infty.
\]
This yields
\begin{equation}\label{asymptoticforu}
u_{N}(t;s) = N s + \mathcal{O}(t^{-1}), \qquad t \to \infty.
\end{equation}
Consequently, from \eqref{asymptoticforu} and \eqref{1008e1} we have
\[
\mathbb{E}_{N}^{(s)}\left[\e^{\im t\mathrm{Tr}(X_{N})}\right]
= \exp\!\left[\int_{0}^{2t}\!\!\left(u_{N}(x;t) - \frac{N x}{2}\right)\!\frac{\mathrm{d}x}{x}\right],\ \ \ t>0,
\]
which decays exponentially as \(t \to +\infty\). Hence, combining this with the facts that 
\begin{equation*}
	t\mapsto\mathbb{E}_{N}^{(s)}\left[\e^{\im t\, \mathrm{Tr}(X_{N})}\right]
\end{equation*}
is an even function which equals $1$ at $t=0$, we get \(t\mapsto\mathbb{E}_{N}^{(s)}\left[\e^{\im t\mathrm{Tr}(X_{N})}\right] \in L^{1}(\mathbb{R})\). Therefore, by \eqref{alternativeexpression}, \eqref{1105even1}, \eqref{expressionforEN}, and Proposition \ref{1104inprop}, we conclude that \eqref{1104i7} holds for \(2s > -1\) and \(h \in \mathbb{C}\) with \(-1 < \mathrm{Re}(2h) < 2s+1\).
\end{remark}

As a corollary of Lemma~\ref{fractionalmoments}, we now prove Proposition~\ref{1103prop}.
\begin{proof}[Proof of Proposition~\ref{1103prop}]
The result follows directly from Proposition~\ref{exofh} and Lemma~\ref{fractionalmoments}.
\end{proof}

We now prove Theorem~\ref{maintheorem1}. Before doing so, we need some preparations. For $n \in \mathbb{N}$ and $t>0$, define
\begin{equation}\label{defofu}
u^{(n)}(t) = u^{(n)}(t; s)
:= \frac{\mathrm{d}^n}{\mathrm{d}t^n}
\exp\!\left[
\int_{0}^{2t} v(x; s)\, \frac{\mathrm{d}x}{x}
\right].
\end{equation}

We now compute the right-hand side of (\ref{1008even1}) for $2h \in \mathbb{Z}_{\geq 1}$. 
\begin{prop}\label{lim1} Let $s>-\frac{1}{2}$ and $h\in\mathbb{C}:-\frac{1}{2}<\textnormal{Re}(h)<\frac{1}{2}+\textnormal{Re}(s)$ with $2h\in\mathbb{Z}_{\geq 1}$. Then we have
\begin{equation*}
	\lim_{\epsilon\downarrow 0}\int_0^{\infty}K_{2h}^{\epsilon}(t)\exp\left[\int_0^{2t}v(x;s)\frac{\d x}{x}\right]\d t=\begin{cases}\displaystyle \frac{1}{2}(-1)^h u^{(2h)}(0),&2h\ \textnormal{even}\bigskip\\ \displaystyle\frac{1}{\pi}(-1)^{h+\frac{1}{2}}\int_0^{\infty}u^{(2h)}(\lambda)\frac{\d\lambda}{\lambda},&2h\ \textnormal{odd}\end{cases},
\end{equation*}
with $u^{(n)}(t)$ as in \eqref{defofu}.
\end{prop}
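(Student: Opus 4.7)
The strategy is to write $K_{2h}^\epsilon(t)$ as an $\epsilon$-derivative of the Poisson kernel via \eqref{e38}, trade $\epsilon$-derivatives for $t$-derivatives using harmonicity ($\partial_\epsilon^2=-\partial_t^2$), and then integrate by parts on $(0,\infty)$ to transfer all derivatives onto
\begin{equation*}
f(t):=\exp\!\left[\int_0^{2t}v(x;s)\,\frac{\d x}{x}\right]=u^{(0)}(t).
\end{equation*}
Three properties of $f$ drive the argument and I would establish these at the outset: \textbf{(a)} by Proposition \ref{smooth}, $f$ extends smoothly to $t=0$ with $f(0)=1$; \textbf{(b)} by \eqref{1008e6}, $f(t)=\mathbb{E}[\e^{\im t\mathsf{X}(s)}]$, and since the correlation kernel $\mathsf{K}^{(s)}$ in Definition \ref{defofxs} is invariant under $(x,y)\mapsto(-x,-y)$, the law of $\mathsf{X}(s)$ is symmetric about $0$, so $f$ is real-valued and even on $\mathbb{R}$; in particular $f^{(2j+1)}(0)=0$ for every $j\ge 0$; \textbf{(c)} from Corollary \ref{cor5555}, $\int_0^{2t}v(x;s)\d x/x=-t+\mathcal{O}(\sqrt{t})$ as $t\to+\infty$, so $f$ and each of its derivatives decay at least like $\e^{-t}$.

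\textbf{Even case $2h=2k$.} Applying $\partial_\epsilon^{2k}=(-1)^k\partial_t^{2k}$ to the Poisson kernel yields $K_{2k}^\epsilon(t)=\frac{(-1)^k}{\pi}\partial_t^{2k}\!\bigl(\frac{\epsilon}{\epsilon^2+t^2}\bigr)$. Integrating by parts $2k$ times on $(0,\infty)$, the terms at $t=\infty$ vanish by (c), and each boundary term at $t=0$ is of the form $g^{(a)}(0)\,f^{(b)}(0)$ with $a+b=2k-1$ odd and $g(t)=\epsilon/(\epsilon^2+t^2)$ even in $t$; hence exactly one of $a,b$ is odd, and that factor vanishes either because $g^{(\mathrm{odd})}(0)=0$ (even $g$) or by (b). Therefore
\begin{equation*}
\int_0^\infty K_{2k}^\epsilon(t)\,f(t)\,\d t=(-1)^k\int_0^\infty\frac{\epsilon/\pi}{\epsilon^2+t^2}\,f^{(2k)}(t)\,\d t,
\end{equation*}
and since $\epsilon/[\pi(\epsilon^2+t^2)]$ acts on $(0,\infty)$ as a one-sided approximate identity of total mass $\tfrac{1}{2}$, the limit equals $\tfrac{(-1)^k}{2}f^{(2k)}(0)=\tfrac{(-1)^h}{2}u^{(2h)}(0)$.

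\textbf{Odd case $2h=2k+1$ and main obstacle.} One fewer harmonicity step applies, but the identity $\partial_\epsilon\bigl(\tfrac{\epsilon}{\epsilon^2+t^2}\bigr)=-\partial_t\bigl(\tfrac{t}{\epsilon^2+t^2}\bigr)$ converts the leftover $\epsilon$-derivative into a $t$-derivative, yielding $K_{2k+1}^\epsilon(t)=\frac{(-1)^k}{\pi}\partial_t^{2k+1}\!\bigl(\frac{t}{\epsilon^2+t^2}\bigr)$. Now $t/(\epsilon^2+t^2)$ is odd in $t$, so the identical parity cancellation kills every boundary contribution after $2k+1$ integrations by parts, and the overall sign $(-1)^{2k+1}=-1$ from IBP gives
\begin{equation*}
\int_0^\infty K_{2k+1}^\epsilon(t)\,f(t)\,\d t=-\frac{(-1)^k}{\pi}\int_0^\infty\frac{t\,f^{(2k+1)}(t)}{\epsilon^2+t^2}\,\d t.
\end{equation*}
The main obstacle is the $\epsilon\downarrow 0$ limit here, since the pointwise limit of the integrand is $f^{(2k+1)}(t)/t$, which is formally singular at $t=0$. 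The key point is that (b) forces $f^{(2k+1)}$ to be odd and smooth, so $f^{(2k+1)}(t)=\mathcal{O}(t)$ as $t\to 0$ and $f^{(2k+1)}(t)/t$ is continuous at $t=0$; combined with (c) this places $f^{(2k+1)}(\cdot)/\cdot\in L^1(0,\infty)$. The uniform bound $t/(\epsilon^2+t^2)\le 1/t$ then supplies a dominating function for dominated convergence, producing
\begin{equation*}
\lim_{\epsilon\downarrow 0}\int_0^\infty K_{2k+1}^\epsilon(t)\,f(t)\,\d t=\frac{(-1)^{k+1}}{\pi}\int_0^\infty\frac{u^{(2k+1)}(\lambda)}{\lambda}\,\d\lambda,
\end{equation*}
and $h=k+\tfrac{1}{2}$ gives $(-1)^{k+1}=(-1)^{h+\frac{1}{2}}$, matching the claimed formula.
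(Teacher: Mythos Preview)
Your argument is correct and reaches the stated formulae, but it proceeds quite differently from the paper. The paper keeps all $2h$ derivatives on $\epsilon$: it writes $\int_0^\infty K_{2h}^\epsilon(t)u(t)\,\d t=\frac{(-1)^{2h}}{\pi}\partial_\epsilon^{2h}\int_0^\infty \frac{\epsilon u(t)}{\epsilon^2+t^2}\,\d t$, substitutes $t=\epsilon\lambda$, and differentiates under the integral to obtain $\frac{(-1)^{2h}}{\pi}\int_0^\infty u^{(2h)}(\epsilon\lambda)\,\lambda^{2h}(1+\lambda^2)^{-1}\,\d\lambda$. It then expands $\lambda^{2h}$ via a geometric progression in $(\lambda\mp\im)$, which produces a polynomial piece handled by integration by parts (yielding terms proportional to $\epsilon^{-k-1}u^{(2h-k-1)}(0)$) and a remainder $\int_0^\infty(\frac{1}{\lambda-\im}-\frac{(-1)^{2h}}{\lambda+\im})u^{(2h)}(\epsilon\lambda)\,\d\lambda$. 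The divergent-looking polynomial terms are disposed of by invoking existence of the limit together with Corollary~\ref{afterintegral} (the gap structure), and the remainder is evaluated by dominated convergence using $u^{(2h)}(t)=\mathcal{O}(t^\delta)$ from the same corollary. Your route---harmonicity $\partial_\epsilon^2=-\partial_t^2$ plus the identity $\partial_\epsilon P=-\partial_t(t/(\epsilon^2+t^2))$, followed by integration by parts in $t$ with parity killing every boundary term---bypasses the geometric-progression algebra entirely and never generates $\epsilon$-divergent terms. What it buys is a cleaner structure; what it costs is the probabilistic input in (b), namely that $\mathsf{X}(s)\stackrel{d}{=}-\mathsf{X}(s)$ so that $f$ is even.

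Two citations should be tightened. First, Proposition~\ref{smooth} gives smoothness of $v$ on $(0,\infty)$, not of $f$ at $t=0$; the regularity you actually use follows either from Corollary~\ref{afterintegral} or from Proposition~\ref{propexpectation} (finiteness of $\mathbb{E}[|\mathsf{X}(s)|^r]$ for every $r<2s+1$), which makes $f\in C^{2h}(\mathbb{R})$. Second, in the odd case ``$f^{(2k+1)}$ odd and smooth hence $\mathcal{O}(t)$'' overstates what is available: with $2h=2k+1<2s+1$ one only gets $f\in C^{2k+1}(\mathbb{R})$, so $f^{(2k+1)}(0)=0$ but not $f^{(2k+2)}(0)$. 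What you need, and what is true, is $f^{(2k+1)}(t)=\mathcal{O}(t^\delta)$ for some $\delta>0$: pick $\delta\in(0,2s+1-2h)$ so that $\mathbb{E}[|\mathsf{X}(s)|^{2h+\delta}]<\infty$, then $|f^{(2h)}(t)-f^{(2h)}(0)|\leq 2^{1-\delta}|t|^\delta\,\mathbb{E}[|\mathsf{X}(s)|^{2h+\delta}]$ (or invoke Corollary~\ref{afterintegral} directly, as the paper does). This $\mathcal{O}(t^\delta)$ is enough for $f^{(2k+1)}(\cdot)/\cdot\in L^1(0,\infty)$ and the dominated-convergence step.
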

\begin{proof} 
By Corollary \ref{cor5555} and Corollary \ref{afterintegral}, we have 
\begin{equation}\label{central}
	L^1\cap L^{\infty}(0,\infty)\ni t^{m}\frac{\d^n}{\d t^n}u(t)=\begin{cases}\mathcal{O}(1),&t\downarrow 0\smallskip\\ \mathcal{O}(\exp(-(1-\delta)t)),&t\rightarrow+\infty\end{cases}\ \ \ \forall\,n,m\in\mathbb{Z}_{\geq 0}:\,n\leq\textnormal{Re}(2s+1),
\end{equation}
where $\delta$ is a sufficiently small positive number.
Hence, by the dominated convergence theorem, applied twice to swap integration with differentiation,
\begin{align}
	\int_0^{\infty}K_{2h}^{\epsilon}(t)u(t)\d t\stackrel{\eqref{e38}}{=}\frac{(-1)^{2h}}{\pi}\frac{\partial^{2h}}{\partial\epsilon^{2h}}\int_0^{\infty}\frac{\epsilon u(t)}{\epsilon^2+t^2}\d t=&\,\,\frac{(-1)^{2h}}{\pi}\frac{\partial^{2h}}{\partial\epsilon^{2h}}\int_0^{\infty}\frac{u(\epsilon\lambda)}{1+\lambda^2}\d\lambda\nonumber\\
	=&\,\,\frac{(-1)^{2h}}{\pi}\int_0^{\infty}u^{(2h)}(\epsilon\lambda)\frac{\lambda^{2h}}{1+\lambda^2}\d\lambda.\label{e40}
\end{align}
However, by geometric progression,
\begin{equation*}
	\lambda^{2h}=(\lambda\mp\im)\sum_{k=0}^{2h-1}\lambda^k(\pm\im)^{2h-1-k}+(\pm\im)^{2h},\ \ \ 2h\in\mathbb{Z}_{\geq 1},
\end{equation*}
so that in turn for the integral in the right hand side of \eqref{e40},
\begin{align}
	\int_0^{\infty}&\,u^{(2h)}(\epsilon\lambda)\bigg(\frac{1}{\lambda-\im}-\frac{1}{\lambda+\im}\bigg)\lambda^{2h}\d\lambda\nonumber\\
	=&\,\,\sum_{k=0}^{2h-1}\big(\im^{2h-1-k}-(-\im)^{2h-1-k}\big)\int_0^{\infty}\lambda^ku^{(2h)}(\epsilon\lambda)\d\lambda+\im^{2h}\int_0^{\infty}\bigg(\frac{1}{\lambda-\im}-\frac{(-1)^{2h}}{\lambda+\im}\bigg)u^{(2h)}(\epsilon\lambda)\d\lambda.\label{e41}
\end{align}
But, integrating by parts $k\in\{0,\ldots,2h-1\}$ times,
\begin{equation*}
	\int_0^{\infty}\lambda^ku^{(2h)}(\epsilon\lambda)\d\lambda=(-1)^k\epsilon^{-k}k!\int_0^{\infty}u^{(2h-k)}(\epsilon\lambda)\d\lambda=(-1)^{k+1}\epsilon^{-k-1}k!\,u^{(2h-k-1)}(0),
\end{equation*}
where $u^{(2h-k-1)}(0)<\infty$ for all $k\in\{0,\ldots,2h-1\}$ by (\ref{central}) since $2h<\textnormal{Re}(2s+1)$ by assumption. Also,
\begin{equation*}
	u^{(2h)}(t)=\begin{cases}\mathcal{O}(1),& 2h\ \textnormal{even}\\
	\mathcal{O}(t^{\delta}),& 2h\ \textnormal{odd}
	\end{cases}\ \ \ \ \ \textnormal{as}\ \ t\downarrow 0,\ \textnormal{with some}\ \delta>0,
\end{equation*}
by Corollary \ref{afterintegral} and so dominated convergence yields
\begin{align*}
	\int_0^{\infty}\bigg(\frac{1}{\lambda-\im}-\frac{(-1)^{2h}}{\lambda+\im}\bigg)u^{(2h)}(\epsilon\lambda)\d\lambda\stackrel{\epsilon\downarrow 0}{\longrightarrow}\begin{cases}\im\pi u^{(2h)}(0),&2h\ \textnormal{even}\smallskip\\
	2\int_0^{\infty}u^{(2h)}(\lambda)\frac{\d\lambda}{\lambda},&2h\ \textnormal{odd}\end{cases}.
\end{align*}
What results from the above is, seeing that the limit under investigation necessarily exists by \eqref{i7},
\begin{equation*}
	\lim_{\epsilon\downarrow 0}\int_0^{\infty}K_{2h}^{\epsilon}(t)u(t)\d t=\begin{cases}\displaystyle \frac{1}{2}(-1)^h u^{(2h)}(0),&2h\ \textnormal{even}\bigskip\\ \displaystyle\frac{1}{\pi}(-1)^{h+\frac{1}{2}}\int_0^{\infty}u^{(2h)}(\lambda)\frac{\d\lambda}{\lambda},&2h\ \textnormal{odd}\end{cases},
\end{equation*}
as claimed before. This completes our proof.
\end{proof}
\begin{remark}
We remark that, when $2h$ is even, the above argument coincides with that in \cite[Section 4]{Basor_2019}.
\end{remark}
We next compute the right-hand side of (\ref{1008even1}) for $2h \notin \mathbb{Z}_{\geq 1}$.

\begin{prop}\label{lim2} Let $s>-\frac{1}{2}$ and $h\in\mathbb{C}:-\frac{1}{2}<\textnormal{Re}(h)<\frac{1}{2}+\textnormal{Re}(s)$ with $2h\notin\mathbb{Z}_{\geq 1}$,
\begin{equation*}
	\lim_{\epsilon\downarrow 0}\int_0^{\infty}K_{2h}^{\epsilon}(t)\exp\left[\int_0^{2t}v(x;s)\frac{\d x}{x}\right]\d t=-\frac{\sin(\pi h)}{\pi}\,\Gamma(2h-2M)\int_0^{\infty}u^{(2M+1)}(t)\frac{\d t}{t^{2h-2M}}
\end{equation*}
with $u^{(n)}(t)$ as in \eqref{defofu}. Here, $\textnormal{Re}(h)\in(M,M+1)$ with $M\in\mathbb{Z}_{\geq 0}$, and we set $M=-\frac{1}{2}$ when $\textnormal{Re}(h)\in(-\frac{1}{2},0)$.
\end{prop}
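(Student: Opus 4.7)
The plan is to express $K_{2h}^{\epsilon}(t)$ as the $(2M{+}1)$-st $t$-derivative of a simpler function and integrate by parts, using the small-$t$ gap structure of $v(z;s)$ from Propositions \ref{1102asymoforsmallz} and \ref{loopy} to annihilate every boundary contribution. The starting identity is
\[
K_{2h}^{\epsilon}(t) = \frac{i(-1)^{M}\Gamma(2h-2M)}{2\pi}\,\frac{\d^{2M+1}}{\d t^{2M+1}}F_{\epsilon}(t), \qquad F_{\epsilon}(t):=(\epsilon+it)^{-(2h-2M)}-(\epsilon-it)^{-(2h-2M)},
\]
which follows at once from $\frac{\d^{2M+1}}{\d t^{2M+1}}(\epsilon\pm it)^{-(2h-2M)} = \mp i(-1)^{M}\frac{\Gamma(2h+1)}{\Gamma(2h-2M)}(\epsilon\pm it)^{-(2h+1)}$ combined with the defining formula \eqref{i5}.

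I then integrate by parts $2M+1$ times in $\int_{0}^{\infty}K_{2h}^{\epsilon}(t)u(t)\,\d t$, where $u(t):=\exp\bigl[\int_{0}^{2t}v(x;s)\,\d x/x\bigr]$. Boundary contributions at $t=\infty$ vanish because Corollary \ref{cor5555} together with the formula for $u$ forces $u$ and each of its derivatives to decay exponentially. At $t=0$ a direct computation gives $F_{\epsilon}^{(j)}(0)\propto (-i)^{j}-i^{j}$, which vanishes for even $j$, so only pairings $F_{\epsilon}^{(2M-k)}(0)u^{(k)}(0)$ with odd $k\in\{1,3,\ldots,2M-1\}$ could survive. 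Propositions \ref{1102asymoforsmallz} and \ref{loopy} show that $v(z;s)$ contains no odd integer power $z^{2p+1}$ for $0\leq p\leq m$ (writing $s\in(m,m+1]$ in the non-integer-$2s$ case, with the obvious analogue otherwise), so $\int_{0}^{2t}v(x;s)\,\d x/x$, and hence $u(t)$, is an even function of $t$ modulo a remainder of order $t^{2m+3}$; in particular $u^{(k)}(0)=0$ for every odd $k\leq 2m+1$. The hypothesis $\mathrm{Re}(h)<s+\tfrac12$ forces $M\leq m+1$, hence $2M-1\leq 2m+1$, and every surviving boundary term at $t=0$ is killed.

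With all boundary terms gone, I send $\epsilon\downarrow 0$ via dominated convergence. For $t>0$, principal branches together with $\sin(\pi(h-M))=(-1)^{M}\sin(\pi h)$ give
\[
F_{\epsilon}(t)\longrightarrow F_{0}(t) = -\frac{2i(-1)^{M}\sin(\pi h)}{t^{2h-2M}},
\]
with the uniform bound $|F_{\epsilon}(t)|\leq C\,t^{-\mathrm{Re}(2h-2M)}$. The factor $u^{(2M+1)}(t)$ vanishes at least linearly at $t=0$ when $M\leq m$, whereas in the remaining case $M=m+1$ it inherits an at worst $O(t^{2s-2M})$ singularity from the $z^{2s+1}$ term in $v(z;s)$; in either situation the product $|F_{\epsilon}(t)\,u^{(2M+1)}(t)|$ is dominated by an integrable function on $(0,\infty)$ precisely because $\mathrm{Re}(h)<s+\tfrac12$. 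Collecting the constants then produces exactly $-\sin(\pi h)\Gamma(2h-2M)/\pi$ times the desired integral. The borderline case $M=-\tfrac12$ (for $\mathrm{Re}(h)\in(-\tfrac12,0)$) needs no integration by parts: one applies dominated convergence directly with $|K_{2h}^{\epsilon}(t)|\leq Ct^{-\mathrm{Re}(2h+1)}$ and $\mathrm{Re}(2h+1)\in(0,1)$.

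The main obstacle is verifying the vanishing of every boundary term at $t=0$: it hinges on the precise gap structure of $v(z;s)$ from Proposition \ref{loopy} and on the delicate interplay between $M$ and the integer part $m$ of $s$ imposed by $\mathrm{Re}(h)<s+\tfrac12$, the same hypothesis being needed a second time to justify dominated convergence near $t=0$ in the borderline regime $M=m+1$, where $u^{(2M+1)}(t)$ fails to be smooth at the origin.
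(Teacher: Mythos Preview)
Your argument is correct and follows essentially the same route as the paper: integrate by parts $2M+1$ times and then pass to the limit via dominated convergence. The one noteworthy difference is in how the boundary terms at $t=0$ are eliminated. You kill them directly by combining the parity of $F_{\epsilon}^{(j)}(0)$ with the vanishing $u^{(k)}(0)=0$ for odd $k\le 2M-1$, which you read off from the gap structure of $v(z;s)$ in Propositions~\ref{1102asymoforsmallz} and~\ref{loopy}. The paper instead carries the $\epsilon$-dependent boundary terms along and invokes the a~priori existence of the limit (Proposition~\ref{propexpectation} and Lemma~\ref{fractionalmoments}) to conclude that they must disappear. Your version is more self-contained; the paper's version trades the gap-structure computation for the probabilistic input.

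Two minor imprecisions that do not affect the argument: the remainder in the even part of $u(t)$ is of order $t^{2s+1}$ rather than $t^{2m+3}$ (the latter only at the endpoint $s=m+1$), though this still gives $u^{(k)}(0)=0$ for odd $k\le 2m+1$ since $2s+1>2m+1$; and for $M\le m$ the behaviour of $u^{(2M+1)}(t)$ near $0$ is $O(t^{\min(1,\,2s-2M)})$, which can be sublinear when $M=m$ and $s\in(m,m+\tfrac12)$, but the integrability you need still holds because $\mathrm{Re}(2h)<2s+1$.
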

\begin{proof} Suppose, convergence questions aside for a moment, we integrate by parts $L$-times so that
\begin{equation*}
	\int_0^{\infty}\frac{u(t)\d t}{(\epsilon\pm\im t)^{2h+1}}=\sum_{k=0}^{L-1}\frac{u^{(k)}(t)}{\epsilon^{2h-k}}\bigg\{\prod_{\ell=0}^k\frac{(2h-\ell)^{-1}}{\pm\im}\bigg\}\Bigg|_{t=0}^{\infty}+\bigg\{\prod_{\ell=0}^{L-1}\frac{(2h-\ell)^{-1}}{\pm\im}\bigg\}\int_0^{\infty}\frac{u^{(L)}(t)\d t}{(\epsilon\pm\im t)^{2h+1-L}}.
\end{equation*}
The concrete choice of $L\in\mathbb{Z}_{\geq 1}$ is now motivated by the requirements
\begin{equation}\label{j1}
	u^{(k)}(t)=\mathcal{O}(1),\ \ t\downarrow 0\ \ \forall\,k\in\{0,1,\ldots,L-1\}\ \ \ \ \ \ \textnormal{and}\ \ \ \ \ \ L^1(0,\infty)\ni t\mapsto u^{(L)}(t) t^{-2h-1+L},
\end{equation}
seeing that $u^{(k)}(t)$ decays exponentially fast as $t$ tends to $+\infty$ for any $k\in\mathbb{Z}_{\geq 0}$. For \eqref{j1} to hold, we recall Corollary \ref{afterintegral} and the overall constraint $0<\textnormal{Re}(2h)<1+\textnormal{Re}(2s)$. In turn, if $\textnormal{Re}(2h)\in (2M,2M+2)$ with $M\in\mathbb{Z}_{\geq 0}$, we necessarily must have $2M<\textnormal{Re}(2s)+1$ and we take $L=2M+1$. In turn,
\begin{equation*}
	u^{(k)}(t)=\mathcal{O}(1),\ t\downarrow 0\ \ \ \forall\,k\in\{0,1,\ldots,2M\}\ \ \ \ \ \textnormal{as well as}\ \ \ \ \ L^1(0,\infty)\ni t\mapsto u^{(2M+1)}(t)t^{-2h+2M},
\end{equation*}
where we also use that $L=2M+1$ is an odd natural number. Consequently, the aforementioned $L$-fold integration by parts is legitimate, and the limit under consideration necessarily exists by Proposition \ref{propexpectation} and (\ref{1008even1}).
We deduce
\begin{equation}\label{directlyusethedef}
	\lim_{\epsilon\downarrow 0}\int_0^{\infty}K_{2h}^{\epsilon}(t)u(t)\d t=\frac{\Gamma(2h-2M)}{2\pi\,\im^{2M+1}}\lim_{\epsilon\downarrow 0}\int_0^{\infty}\bigg(\frac{1}{(\epsilon+\im t)^{2h-2M}}-\frac{1}{(\epsilon-\im t)^{2h-2M}}\bigg)u^{(2M+1)}(t)\d t,
\end{equation}
and hence, by the dominated convergence theorem,
\begin{equation*}
	\lim_{\epsilon\downarrow 0}\int_0^{\infty}K_{2h}^{\epsilon}(t)u(t)\d t=-\frac{\sin(\pi h)}{\pi}\Gamma(2h-2M)\int_0^{\infty}u^{(2M+1)}(t)\frac{\d t}{t^{2h-2M}},
\end{equation*}
as claimed in the statement. In particular, when $-\frac{1}{2}<\textnormal{Re}(h)<0$, we directly use (\ref{directlyusethedef}) with $M=-\frac{1}{2}$ by the definition of $K_{2h}^{\epsilon}(t)u(t)\d t$. Then by the dominated convergence theorem, we have the conclusion in this proposition. This completes our proof. 
%
\end{proof}

\begin{remark} Observe how Proposition \ref{lim2} links back to Proposition \ref{lim1}: if on one hand $2h\rightarrow 2M+1\in(2M,2M+2)$ approaches an odd natural number then Proposition \ref{lim2} yields at once
\begin{equation*}
	-\frac{\sin(\pi h)}{\pi}\Gamma(2h-2M)\int_0^{\infty}u^{(2M+1)}(t)\frac{\d t}{t^{2h-2M}}\longrightarrow\frac{1}{\pi}(-1)^{M+1}\int_0^{\infty}u^{(2M+1)}(t)\frac{\d t}{t},
\end{equation*}
and if on the other hand $2h\rightarrow 2M$ approaches an even natural number then Proposition \ref{lim2} says
\begin{equation*}
	-\frac{\sin(\pi h)}{\pi}\Gamma(2h-2M)\int_0^{\infty}u^{(2M+1)}(t)\frac{\d t}{t^{2h-2M}}\longrightarrow-\frac{(-1)^M}{2}\int_0^{\infty}u^{(2M+1)}(t)\d t=\frac{1}{2}(-1)^M u^{(2M)}(0).
\end{equation*}
The so-obtained results match onto the ones stated in Proposition \ref{lim1}.
\end{remark}

Summarizing the above, we are now ready to prove Theorem \ref{maintheorem1}.

\begin{proof}[Proof of Theorem \ref{maintheorem1}]
Using Corollary \ref{cheat}, we obtain the explicit form of $v(z;s)$ given in (\ref{e73repeat}). By (\ref{1008even1}) and Proposition \ref{exofh},
\begin{equation}\label{1008even2}
\lim_{N\rightarrow\infty}\frac{1}{N^{s^2+2h}}F_N(s,h)=\frac{G(s+1)^2}{G(2s+1)}2^{1-2h}\lim_{\epsilon\downarrow 0}\int_{0}^{+\infty}\exp\left[\int_0^{2t}v(x;s)\frac{\d x}{x}\right]K_{2h}^{\epsilon}(t)\d t
\end{equation}
holds for $s>-\frac{1}{2}$ and $h\in\mathbb{C}$ with $0\leq \textnormal{Re}(h)<\frac{1}{2}+s$. Now the conclusion of Theorem \ref{maintheorem1} follows from Propositions \ref{lim1} and \ref{lim2}.
\end{proof}

We conclude this section by providing a detailed proof of Proposition~\ref{rule}.
\begin{proof}[Proof of Proposition \ref{rule}]
Let $s=m$. When $h\in (m,m+\frac{1}{2})$, by Proposition \ref{lim2},
\bea\label{1028formula1}
\lim_{N\rightarrow\infty}\frac{F_N(m,h)}{N^{m^2+2h}}=-\frac{G^2(1+m)}{G(1+2m)}\displaystyle\frac{2^{1-2h}}{\pi}\sin(\pi h)\Gamma(2h-2m)\int_0^{\infty}u^{(2m+1)}(\lambda)\frac{\d\lambda}{\lambda^{2h-2m}}
\eea
Recall that 
$
u(\lambda)=\exp\left[\int_0^{2\lambda}v(x;m)\frac{\d x}{x}\right].
$
Denote 
$
f(\lambda)=\int_0^{2\lambda}v(x;m)\frac{\d x}{x}.
$
By Fa\`{a} di Bruno's formula, 
\beas
u^{(n)}(\lambda)
=u(\lambda)\sum_{l_1+2l_2+\cdots+nl_n =n} \frac{n!}{l_1!\cdots l_n!} \prod_{j=1}^n \left(\frac{f^{(j)}(\lambda)}{j!}\right)^{l_j}.
\eeas
Substitute this into (\ref{1028formula1}), 
\begin{align*}
\lim_{N\rightarrow\infty}\frac{F_N(m,h)}{N^{m^2+2h}}=-\frac{G^2(1+m)}{G(1+2m)}\displaystyle\frac{2^{1-2h}}{\pi}\sin(\pi h)\Gamma(2h-2m)\left(\mathcal{G}_{1}+\mathcal{G}_{2}\right),
\end{align*}
where
\begin{equation}\label{zzz66}
\mathcal{G}_{1}:=\sum_{l_1+2l_2+\cdots+2ml_{2m} =2m+1} \frac{(2m+1)!}{l_1!\cdots l_{2m}!} \int_0^{\infty}u(\lambda)\prod_{j=1}^{2m} \left(\frac{f^{(j)}(\lambda)}{j!}\right)^{l_j}\frac{\d\lambda}{\lambda^{2h-2m}},
\end{equation}
and 
\beas
\mathcal{G}_{2}:=\int_0^{\infty}u(\lambda)f^{(2m+1)}(\lambda)\frac{\d\lambda}{\lambda^{2h-2m}}.
\eeas
For $\mathcal{G}_{1}$, observe that there exists an odd $j$ such that $l_{j}\neq 0$, since $l_{1}+2l_{2}+\cdots+2ml_{2m}$ is odd. By the gap structure in Corollary \ref{afterintegral}, we have 
\begin{equation}\label{zz66}
u(\lambda)\prod_{j=1}^{2m} \left(\frac{f^{(j)}(\lambda)}{j!}\right)^{l_j} = \mathcal{O}_{m}(\lambda) 
\quad \text{as} \quad \lambda \downarrow 0.
\end{equation}
Combining \eqref{zz66} with the exponential rate of decay of the function in \eqref{zz66} as $\lambda \to \infty$, deduced from Corollary \ref{cor5555}, and applying the dominated convergence theorem, we can move the limit $h \to m+\frac{1}{2}$ in \eqref{remark2.8} inside the integral in \eqref{zzz66}. Hence, 
$
\lim_{h \to m+\frac{1}{2}}\mathcal{G}_{1} =\mathcal{O}_{m}(1).
$
For $\mathcal{G}_{2}$, by Proposition \ref{loopy} with $K=2m+1$, 
\beas
u(\lambda)f^{(2m+1)}(\lambda)=\mathsf{d}_{2m+1}(m)2^{2m+1}(2m)!+\mathcal{O}_{m}(\lambda), \quad \text{as} \quad \lambda \downarrow 0.
\eeas
Again by the  exponential rate of decay of $\lambda\mapsto u(\lambda)f^{(2m+1)}(\lambda)$ as $\lambda \to \infty$, then 
\beas
\lim_{h \to m+\frac{1}{2}}\mathcal{G}_{2}=\mathsf{d}_{2m+1}(m)2^{2m+1}(2m)!\lim_{h \to m+\frac{1}{2}}\int_0^{1}\frac{\d\lambda}{\lambda^{2h-2m}}+\mathcal{O}_{m}(1).
\eeas
Note that $\lim_{h \rightarrow m+\frac{1}{2}}\int_0^{1}\frac{\d\lambda}{\lambda^{2h-2m}}=+\infty$. So we conclude the claim in Proposition \ref{rule}.
\end{proof}

\begin{appendix}
\section{The Painlev\'e-III model problem}
The following RHP is closely related to the one in \cite[Section $1.2$]{XZ}. Throughout $\beta>0$ is a fixed parameter. 
\begin{problem}\label{PIIImodel} Let $x\in(0,\infty)$ and $s>-\frac{1}{2}$. Find $Q(\zeta)=Q(\zeta;x,s)\in\mathbb{C}^{2\times 2}$ with the following properties:
\begin{enumerate}
	\item[(1)] $\zeta\mapsto Q(\zeta)$ is analytic for $\zeta\in\mathbb{C}\setminus\Sigma_Q$ where $\Sigma_Q=\bigcup_{j=1}^4\Sigma_j$ consists of the four oriented rays
	\begin{equation*}
		\Sigma_1:=(-\beta,0)\subset\mathbb{R},\ \ \ \ \ \Sigma_2:=\e^{\im\frac{5\pi}{3}}(0,\infty),\ \ \ \ \ \Sigma_3:=(0,\infty)\subset\mathbb{R},\ \ \ \ \ \Sigma_4:=\e^{\im\frac{\pi}{3}}(0,\infty),
	\end{equation*}
	and is shown in Figure \ref{figa}. On $\Sigma_Q\setminus\{-\beta,0\}$, $Q(\zeta)$ admits continuous limiting values $Q_{\pm}(\zeta)$ as we approach $\Sigma_Q\setminus\{-\beta,0\}$ from either side of $\mathbb{C}\setminus\Sigma_Q$.
	\item[(2)] The limiting values $Q_{\pm}(\zeta)$ on $\Sigma_Q\setminus\{-\beta,0\}$ satisfy $Q_+(\zeta)=Q_-(\zeta)G_Q(\zeta)$ with $G_Q(\zeta)=G_Q(\zeta;s)$ given by
	\begin{align*}
		G_Q(\zeta)=\e^{-\im\pi s\sigma_3},\ \ \zeta\in\Sigma_1;&\hspace{1cm} G_Q(\zeta)=\begin{bmatrix}0&1\\ -1&0\end{bmatrix},\ \ \zeta\in\Sigma_3;\\
		G_Q(\zeta)=\begin{bmatrix}1 & 0 \\ \e^{2\pi\im s}& 1\end{bmatrix},\ \ \zeta\in\Sigma_2;&\hspace{1cm} G_Q(\zeta)=\begin{bmatrix}1&0\\ \e^{-2\pi\im s} & 1\end{bmatrix},\ \ \zeta\in\Sigma_4.
	\end{align*}
	\begin{figure}[tbh]
	\begin{tikzpicture}[xscale=0.9,yscale=0.9]
	\draw [thick,red,decoration={markings,mark= at position 0.6 with {\arrow{>}}},postaction={decorate}] (0,0) -- (1,1.732050);
	\draw [thick,red,decoration={markings,mark= at position 0.6 with {\arrow{>}}},postaction={decorate}] (0,0) -- (1,-1.732050);
	 \draw [->] (-2,0) -- (3.5,0) node [right] {$\footnotesize{\textnormal{Re}(\zeta)}$};
        \draw [thick,red,decoration={markings,mark= at position 0.5 with {\arrow{>}}},postaction={decorate}] (0,0) -- (3,0);
  \draw [->] (0,-2) -- (0,2) node [above] {$\footnotesize{\textnormal{Im}(\zeta)}$};
  	\draw [thick,red,decoration={markings,mark= at position 0.6 with {\arrow{>}}},postaction={decorate}] (-1,0) -- (0,0);
  \draw [fill, color=blue!60!black] (-1,0) circle [radius=0.04];
  \draw [fill, color=blue!60!black] (0,0) circle [radius=0.04];
		\node [right] at (1,1.72) {{\footnotesize $\Sigma_4$}};
		\node [right] at (1,-1.72) {{\footnotesize $\Sigma_2$}};
		\node [above] at (2,0.1) {{\footnotesize $\Sigma_3$}};
		\node [above] at (-0.65,0.1) {{\footnotesize $\Sigma_1$}};
		\node [below] at (-0.5,-0.75) {{\footnotesize\textcolor{cyan}{ $\Omega_1$}}};
		\node [above] at (-0.5,0.75) {{\footnotesize\textcolor{cyan}{ $\Omega_4$}}};
		\node [above] at (2.5,0.75) {{\footnotesize\textcolor{cyan}{ $\Omega_3$}}};
		\node [below] at (2.5,-0.75) {{\footnotesize\textcolor{cyan}{ $\Omega_2$}}};
\end{tikzpicture}
\caption{The oriented jump contour $\Sigma_Q$ in the complex $\zeta$-plane. The (end)points $\zeta=0,-\beta$ are colored in \textcolor{blue!60!black}{blue}. The sectors in between in \textcolor{cyan}{cyan}.}
\label{figa}
\end{figure}
	
	\item[(3)] Near $\zeta=-\beta$, $\zeta\mapsto Q(\zeta)$ is weakly singular in that, with some $\zeta\mapsto\widehat{Q}(\zeta)=\widehat{Q}(\zeta;x,s)$ analytic at $\zeta=-\beta$ and non-vanishing, $Q(\zeta)$ is of the form
	\begin{equation}\label{A1}
		Q(\zeta)=\widehat{Q}(\zeta)(\zeta+\beta)^{\frac{s}{2}\sigma_3},\ \ \zeta\in\mathbb{D}_{\epsilon}(-\beta),\ \ \ \ \ \ \textnormal{arg}(\zeta+\beta)\in(0,2\pi).
	\end{equation}
	\item[(4)] Near $\zeta=0$, $\zeta\mapsto Q(\zeta)$ is weakly singular in that, with some $\zeta\mapsto\widecheck{Q}(\zeta)=\widecheck{Q}(\zeta;x,s)$ analytic at $\zeta=0$ and non-vanishing, $Q(\zeta)$ is of the form
	\begin{equation}\label{A2}
		Q(\zeta)=\widecheck{Q}(\zeta)\zeta^{\frac{s}{2}\sigma_3}\mathcal{M}_j(\zeta),\ \ \zeta\in\Omega_j\cap\mathbb{D}_{\epsilon}(0),\ \ \ \ \ \ \textnormal{arg}\,\zeta\in(0,2\pi),
	\end{equation}
	with the piecewise defined, for $s\notin\mathbb{Z}_{\geq 0}$,
	\begin{align*}
		\mathcal{M}_4(\zeta)=&\,\,\begin{bmatrix}1&\frac{\im}{2}\frac{\e^{\im\pi s}}{\sin(\pi s)}\\ 0 & 1\end{bmatrix},\ \ \ \ \ \mathcal{M}_3(\zeta)=\mathcal{M}_4(\zeta)\begin{bmatrix}1&0\\ -\e^{-2\pi\im s}& 1\end{bmatrix},\\
		 \mathcal{M}_1(\zeta)=&\,\,\mathcal{M}_4(\zeta)\e^{\im\pi s\sigma_3},\ \ \ \ \ \mathcal{M}_2(\zeta)=\mathcal{M}_1(\zeta)\begin{bmatrix}1&0\\ \e^{2\pi\im s}& 1\end{bmatrix}.
	\end{align*}
	If $s\in\mathbb{Z}_{\geq 0}$, then \eqref{A2} holds
	with the piecewise defined
	\begin{align*}
		\mathcal{M}_4(\zeta)=&\,\,\begin{bmatrix}1 & \frac{\im}{2\pi}\ln\zeta\\ 0 & 1\end{bmatrix},\ \ \ \ \ \mathcal{M}_3(\zeta)=\mathcal{M}_4(\zeta)\begin{bmatrix}1&0\\ -\e^{-2\pi\im s}& 1\end{bmatrix},\\
		 \mathcal{M}_1(\zeta)=&\,\,\mathcal{M}_4(\zeta)\e^{\im\pi s\sigma_3},\ \ \ \ \ \mathcal{M}_2(\zeta)=\mathcal{M}_1(\zeta)\begin{bmatrix}1&0\\ \e^{2\pi\im s}& 1\end{bmatrix}.
	\end{align*}
	\item[(5)] As $\zeta\rightarrow\infty$ and $\zeta\notin\Sigma_Q$, $Q(\zeta)$ is normalized as follows,
	\begin{equation}\label{A4}
		Q(\zeta)=\bigg\{I+\sum_{k=1}^2Q_k(x,s)\zeta^{-k}+\mathcal{O}\big(\zeta^{-3}\big)\bigg\}\zeta^{\frac{1}{4}\sigma_3}\frac{1}{\sqrt{2}}\begin{bmatrix}1&1\\ -1&1\end{bmatrix}\e^{-\im\frac{\pi}{4}\sigma_3}\e^{-\im\varpi(\zeta;x)\sigma_3},
	\end{equation}
	with $Q_k(x,s)$ independent of $\zeta$ and the function $\varpi(\zeta;x):=x\zeta^{\frac{1}{2}}$, as well as all other fractional powers, defined with cut on $[0,\infty)\subset\mathbb{R}$ such that $\textnormal{arg}\,\zeta\in(0,2\pi)$.
\end{enumerate}	
\end{problem}

\begin{prop}\label{Fredana} Let $x\in(0,\infty)$ and $s>-\frac{1}{2}$. Then RHP \ref{PIIImodel} admits a unique solution $Q(\zeta)=Q(\zeta;x,s)\in\mathbb{C}^{2\times 2}$.
\end{prop}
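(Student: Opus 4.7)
The proof splits naturally into uniqueness (elementary) and existence (the hard part). The plan is to dispatch uniqueness via a Liouville-type argument, then obtain existence by combining Zhou's vanishing lemma with the Fredholm alternative, following the template of \cite{XZ} to which RHP \ref{PIIImodel} is closely related.

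\textbf{Uniqueness.} Suppose $Q_1, Q_2$ both solve RHP \ref{PIIImodel} and set $R(\zeta) := Q_1(\zeta)Q_2(\zeta)^{-1}$. Since both factors share the same jump matrices on $\Sigma_Q$, the ratio $R$ is jump-free and so extends analytically across $\Sigma_Q \setminus \{-\beta,0\}$. At $\zeta=-\beta$, the local form \eqref{A1} for each $Q_j$ exhibits the common right factor $(\zeta+\beta)^{(s/2)\sigma_3}$, which cancels in the ratio, leaving $R$ analytic at $-\beta$. Exactly the same argument applies at $\zeta=0$ using \eqref{A2}: the common right factor $\zeta^{(s/2)\sigma_3}\mathcal{M}_j(\zeta)$ cancels in $R$ on each sector, so $R$ extends as an entire function. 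The normalization \eqref{A4} yields $R(\zeta)\to I$ as $\zeta\to\infty$, and Liouville's theorem forces $R\equiv I$, hence $Q_1 \equiv Q_2$.

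\textbf{Existence — setup.} The first step will be to apply the transformation $H(\zeta) := Q(\zeta)\,\e^{\im\varpi(\zeta;x)\sigma_3}$ from \eqref{f1}, so that the tame normalization at infinity of RHP \ref{traf1} replaces \eqref{A4}. Next, reformulate the RHP for $H$ as the singular integral equation $(I-\mathcal{C}_w)\mu = f$ on $\Sigma_Q$, where $\mathcal{C}_w$ is the Cauchy operator weighted by the jump matrix of RHP \ref{traf1} and the ambient space is an appropriate weighted $L^2(\Sigma_Q)$. The weights are chosen so as to absorb the mild singularities at $\zeta=-\beta$ and $\zeta=0$; this is precisely where the constraint $s>-1/2$ enters, guaranteeing that the local behaviors in \eqref{A1} and \eqref{A2} (including the logarithmic modification when $s\in\mathbb{Z}_{\geq 0}$) are $L^2$-integrable near the endpoints. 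Standard arguments show that $I-\mathcal{C}_w$ is Fredholm of index zero in this space, so existence reduces to injectivity, i.e.\ to the vanishing lemma.

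\textbf{Existence — vanishing lemma (main obstacle).} The heart of the argument is to show that the homogeneous RHP (same jumps and local conditions as RHP \ref{PIIImodel} but with $Q\to 0$ in the normalization \eqref{A4}) admits only the trivial solution. The plan is the classical Zhou argument: introduce $Q^{\dagger}(\zeta):=\overline{Q(\bar\zeta)}^T$, which solves a Schwarz-reflected RHP, and form the auxiliary matrix $A(\zeta) := Q(\zeta)Q^{\dagger}(\zeta)$. The key structural observations are that on $\Sigma_1$ the jump $\e^{-\im\pi s\sigma_3}$ is unitary (for real $s$), that on $\Sigma_3$ the jump $\bigl(\begin{smallmatrix}0 & 1\\ -1 & 0\end{smallmatrix}\bigr)$ is unitary, and that the jumps on $\Sigma_2$ and $\Sigma_4$ are Hermitian conjugates under $\zeta\mapsto\bar\zeta$. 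These symmetries force $A$ to be continuous across $\Sigma_Q\setminus\{-\beta,0\}$, hence analytic in $\mathbb{C}\setminus\mathbb{R}$, with boundary values on $\mathbb{R}$ that are integrable by the $s>-1/2$ condition. The main technical obstacle is then to execute the contour deformation: one pushes the integral $\int_{\mathbb{R}}A_+(\lambda)\,\d\lambda$ onto a large arc and uses the vanishing normalization together with the fact that $\mathrm{Re}(2\im\varpi(\zeta;x))<0$ along $\Sigma_{2,4}$ (which holds precisely because $x>0$) to conclude that the integral vanishes. The positive semi-definiteness of $A_+(\lambda)=Q_+(\lambda)Q_+(\lambda)^{\dagger}$ on the real axis then compels $Q\equiv 0$. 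Combined with the Fredholm alternative, this yields existence and completes the proof.
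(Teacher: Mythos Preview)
Your overall architecture (uniqueness via Liouville, existence via Fredholm alternative plus Zhou's vanishing lemma) matches the paper's, but the vanishing-lemma execution has a genuine gap.

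The problem is the claim that $A_+(\lambda)=Q_+(\lambda)Q_+(\lambda)^\dagger$ is positive semi-definite on $\mathbb{R}$. With $A(\zeta)=Q(\zeta)\,\overline{Q(\bar\zeta)}^{\,T}$, the boundary value from above is $A_+(\lambda)=Q_+(\lambda)\,Q_-(\lambda)^\dagger$, not $Q_+(\lambda)Q_+(\lambda)^\dagger$. Hence $A_+ + A_+^\dagger = Q_-\,(G_Q+G_Q^\dagger)\,Q_-^\dagger$, and on $\Sigma_1=(-\beta,0)$ one has $G_Q+G_Q^\dagger=2\cos(\pi s)I$, which is \emph{not} nonnegative once $s>\tfrac12$; on $\Sigma_3$ it equals zero and gives nothing. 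So the positivity step fails as stated. This is precisely why the paper first collapses $\Sigma_{2},\Sigma_4$ onto $\Sigma_3$ (the $Z$-transform, which also absorbs the $\e^{\im\varpi}$ factor) and then applies a further rotation $W$: after these transformations the jump on $(0,\infty)$ acquires the structure that yields a genuinely positive form $W_-\bigl[\begin{smallmatrix}1&0\\0&0\end{smallmatrix}\bigr]W_-^\dagger$.

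Even once positivity is arranged, it only forces the first column of $W_-$ to vanish on $(0,\infty)$; this does not immediately give $Q\equiv 0$. The paper's missing ingredient in your outline is Carlson's theorem: after using the jump to kill the second column of $W_+$ on $(0,\infty)$ and analytically continuing, the remaining nonzero entries are packaged into scalar functions $h_k$ solving a one-ray RHP, and Carlson's theorem (via the exponential decay from $x>0$) forces $h_k\equiv 0$. Without this step the argument is incomplete.
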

\begin{proof} Unicity follows directly from RHP \ref{PIIImodel} noting that any solution of the same RHP is unimodular. For existence we invoke Zhou's vanishing lemma \cite{Z} and first consider the homogeneous variant of RHP \ref{PIIImodel} in which conditions $(1)-(4)$ are as is, but condition $(5)$ is replaced by
\begin{equation*}
	Q(\zeta)=\mathcal{O}\big(\zeta^{-1}\big)\zeta^{\frac{1}{4}\sigma_3}\frac{1}{\sqrt{2}}\begin{bmatrix}1&1\\ -1&1\end{bmatrix}\e^{-\im\frac{\pi}{4}\sigma_3}\e^{-\im\varpi(\zeta;x)\sigma_3},\ \ \ \zeta\rightarrow\infty,\ \ \textnormal{arg}\,\zeta\in(0,2\pi),\ \ \zeta\notin\Sigma_Q.
\end{equation*}
Then, collapsing $\Sigma_4,\Sigma_2$ with $\Sigma_3$ and normalizing at infinity,
\begin{equation*}
	Z(\zeta):=Q(\zeta)\begin{cases}\begin{bmatrix}1&0\\ \e^{-2\pi\im s} & 1\end{bmatrix}\e^{\im\varpi(\zeta;x)\sigma_3},&\zeta\in\Omega_3\smallskip\\
	\begin{bmatrix}1&0\\ -\e^{2\pi\im s} & 1\end{bmatrix}\e^{\im\varpi(\zeta;x)\sigma_3},&\zeta\in\Omega_2\smallskip\\ \e^{\im\varpi(\zeta;x)\sigma_3},&\zeta\in\Omega_1\cup\Omega_4
	\end{cases},
\end{equation*}
solves a RHP with jump on $\Sigma_1\cup\Sigma_3\subset\mathbb{R}$ only,
\begin{equation*}
	Z_+(\zeta)=Z_-(\zeta)\e^{-\im\pi s\sigma_3},\ \ \zeta\in\Sigma_1;\ \ \ \ Z_+(\zeta)=Z_-(\zeta)\begin{bmatrix}\e^{-2\pi\im s+2\im x\sqrt{\zeta}} & 1\\ 0 & \e^{2\pi\im s-2\im x\sqrt{\zeta}}\end{bmatrix},\ \ \zeta\in\Sigma_3,
\end{equation*}
with normalization $Z(\zeta)=\mathcal{O}(\zeta^{-\frac{3}{4}})$ as $\zeta\rightarrow\infty$ with $\textnormal{arg}\,\zeta\in(0,2\pi)$ and singular behaviours at $\zeta=-\beta,0$ of the type
\begin{equation*}
	Z(\zeta)=\mathcal{O}(1)(\zeta+\beta)^{\frac{s}{2}\sigma_3},\ \textnormal{arg}(\zeta+\beta)\in(0,2\pi);\hspace{0.25cm} Z(\zeta)=\mathcal{O}(1)\zeta^{\frac{s}{2}\sigma_3}\begin{bmatrix}\mathcal{O}(1)& \mathcal{O}(1+\tau\ln\zeta)\\ 0 &\mathcal{O}(1)\end{bmatrix},\ \textnormal{arg}(\zeta)\in(0,2\pi);
\end{equation*}
with $\tau=0$ if $s\notin\mathbb{Z}_{\geq 0}$. From hereon the argument proceeds as in \cite[page $1651-1653$]{XZ}, utilizing $2s\in(-1,\infty)\subset\mathbb{R}$ and $x>0$: first one introduces
\begin{equation*}
	W(\zeta):=Z(\zeta)\begin{cases}\begin{bmatrix}0 & -1\\ 1 & 0\end{bmatrix},&\textnormal{Im}(\zeta)>0\smallskip\\ I,&\textnormal{Im}(\zeta)<0\end{cases},
\end{equation*}
which has jumps
\begin{equation*}
	W_+(\zeta)=W_-(\zeta)\begin{bmatrix}0&-1\\ 1&0\end{bmatrix},\ \ \ \zeta\in(-\infty,-\beta);\hspace{1.5cm} W_+(\zeta)=W_-(\zeta)\begin{bmatrix}0 & -\e^{-\im\pi s}\\ \e^{\im\pi s} & 0\end{bmatrix},\ \ \ \zeta\in(-\beta,0);
\end{equation*}
\begin{equation*}
	W_+(\zeta)=W_-(\zeta)\begin{bmatrix}1 & -\e^{-2\pi\im s+2\im x\sqrt{\zeta}}\\ \e^{2\pi\im s-2\im x\sqrt{\zeta}} & 0\end{bmatrix},\ \ \ \zeta\in(0,\infty);
\end{equation*}
normalization at infinity $W(\zeta)=\mathcal{O}(\zeta^{-\frac{3}{4}})$ off $\mathbb{R}$ and properly adjusted singular behaviours near $\zeta=-\beta,0$. In turn the auxiliary function
\begin{equation*}
	H(\zeta):=W(\zeta)\big(W(\overline{\zeta})\big)^{\dagger},\ \ \ \ \zeta\in\mathbb{C}\setminus\mathbb{R},
\end{equation*}
is analytic off $\mathbb{R}$, it has $H(\zeta)=\mathcal{O}(1)$ as $\zeta\rightarrow-\beta$, $H(\zeta)=\mathcal{O}(\zeta^s\ln\zeta)$ as $\zeta\rightarrow 0$ and $H(\zeta)=\mathcal{O}(\zeta^{-\frac{3}{2}})$ as $\zeta\rightarrow\infty$, all for $\zeta\notin\mathbb{R}$. Consequently, by Cauchy's theorem $\int_{-\infty}^{\infty}H_+(\zeta)\d\zeta=0_{2\times 2}$ and adding to that its Hermitian conjugate, we derive
\begin{equation*}
	\begin{bmatrix}0 & 0\\ 0 & 0\end{bmatrix}=\int_{-\infty}^{\infty}\Big[H_+(\zeta)+\big(H_+(\zeta)\big)^{\dagger}\Big]\d\zeta=2\int_0^{\infty}W_-(\zeta)\begin{bmatrix}1 & 0\\ 0 & 0\end{bmatrix}\big(W_-(\zeta)\big)^{\dagger}\d\zeta,
\end{equation*}
and so $W_-^{11}(\zeta)=W_-^{21}(\zeta)\equiv 0$ for all $\zeta\in(0,\infty)$. Consequently also $W_+^{12}(\zeta)=W_+^{22}(\zeta)\equiv 0$ for $\zeta\in(0,\infty)$ in light of the jump of $\zeta\mapsto W(\zeta)$ on $(0,\infty)$. Moreover, since the jump matrix on $(0,\infty)$ extends analytically to all of $\mathbb{C}\setminus(-\infty,0]$, we can extend $W(\zeta)$ from $\Im(\zeta)>0$ analytically to $\mathbb{C}\setminus(-\infty,0]$ such that $W^{12}(\zeta)=W^{22}(\zeta)\equiv 0$ for $\zeta\in(0,\infty)$. In turn, $W^{12}(\zeta)=W^{22}(\zeta)\equiv 0$ for $\textnormal{Im}(\zeta)>0$, and, by similar reasoning, $W^{11}(\zeta)=W^{21}(\zeta)\equiv 0$ for $\textnormal{Im}(\zeta)<0$. Next, for $k\in\{1,2\}$ one assembles the scalar-valued functions
\begin{equation*}
	h_k(\zeta):=\begin{cases}W^{k1}(\zeta),&\textnormal{Im}(\zeta)>0\smallskip\\
	W^{k2}(\zeta),&\textnormal{Im}(\zeta)<0\end{cases},
\end{equation*}
and realizes that $\zeta\mapsto h_k(\zeta)$ are analytic on $\mathbb{C}\setminus[-\beta,\infty)$ with jumps given by
\begin{equation*}
	h_{k+}(\zeta)=h_{k-}(\zeta)\e^{\im\pi s},\ \ \zeta\in(-\beta,0);\hspace{1.25cm}h_{k+}(\zeta)=h_{k-}(\zeta)\e^{2\pi\im s-2\im x\sqrt{\zeta}},\ \ \zeta\in(0,\infty),
\end{equation*}
with weak singularities near $\zeta=-\beta,0$ and normalisation $h_k(\zeta)=\mathcal{O}(\zeta^{-\frac{3}{4}})$ at $\zeta=\infty$ off $\mathbb{R}$. After a rotation in the $\zeta$-variable, and adapting \cite[$(2.68)$]{XZ} to our needs, one is finally in a situation to apply Carlson's theorem from which it follows $h_k(\zeta)\equiv 0$ for $\zeta\notin\mathbb{R}$. This translates into $W(\zeta)\equiv 0_{2\times 2}$ for $\zeta\notin\mathbb{R}$, and thus necessarily $W(\zeta)\equiv 0_{2\times 2}$ on all of $\mathbb{C}$. Tracing back, $Q(\zeta)\equiv 0_{2\times 2}$ on $\mathbb{C}$, so the homogeneous variant of RHP \ref{PIIImodel} has only the trivial solution. That is sufficient to deduce solvability of RHP \ref{PIIImodel} by Zhou's vanishing lemma, for any $z>0$ and $s>-\frac{1}{2}$. Our proof is complete.
\end{proof}

\begin{rem}\label{mero} A more refined analysis, invoking the analytic Fredholm alternative of \cite{Z}, reveals that RHP \ref{PIIImodel}, given $2s>-1$, is solvable for all $\textnormal{Re}(x)>0$ away from a discrete set that is disjoint from $(0,\infty)\subset\mathbb{R}$. In particular $Q(\zeta;x,s)$ is meromorphic in $x$ in the half-plane $\textnormal{Re}(x)>0$.

\end{rem}

Unicity of the solution of RHP \ref{PIIImodel} imposes symmetry constraints on the same problem and below we list one of those.
\begin{cor} Suppose $Q(\zeta)=Q(\zeta;x,s)\in\mathbb{C}^{2\times 2}$ solves RHP \ref{PIIImodel} for $x\in(0,\infty)$ and $s>-\frac{1}{2}$. Then
\begin{equation*}
	Q(\zeta;x,s)=\sigma_3\overline{Q(\bar{\zeta};x,s)}\sigma_3,\ \ \ \zeta\in\mathbb{C}\setminus\Sigma_Q,
\end{equation*}
which yields $Q_1^{11}(x,s),Q_1^{22}(x,s)\in\mathbb{R}$ and $Q_1^{21}(x,s),Q_1^{12}(x,s)\in\im\mathbb{R}$, in particular. Moreover it yields
\begin{equation*}
	\widehat{Q}(-\beta)=\sigma_3\overline{\widehat{Q}(-\beta)}\e^{-\im\pi s\sigma_3}\sigma_3.
\end{equation*}
\end{cor}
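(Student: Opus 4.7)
The natural approach is a Schwarz-reflection argument combined with the unicity statement of Proposition \ref{Fredana}. Specifically, I would introduce the candidate function
\begin{equation*}
\widetilde Q(\zeta):=\sigma_3\overline{Q(\bar\zeta;x,s)}\sigma_3,\ \ \ \zeta\in\mathbb{C}\setminus\Sigma_Q,
\end{equation*}
and verify that $\widetilde Q$ satisfies every one of conditions $(1)$--$(5)$ of RHP \ref{PIIImodel} with identical data. Since $x>0$ and $s>-\tfrac{1}{2}$ are real, and the contour $\Sigma_Q$ is invariant under complex conjugation (the real rays $\Sigma_1,\Sigma_3$ are self-conjugate while $\overline{\Sigma_2}=\Sigma_4$), the unicity assertion of Proposition \ref{Fredana} would then force $\widetilde Q=Q$, which is precisely the asserted reflection symmetry.

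The verification proceeds condition by condition. For the jumps on $\Sigma_1$ and $\Sigma_3$, conjugation interchanges the $\pm$ boundary values, so one computes $\widetilde Q_+(\zeta)=\sigma_3\overline{Q_-(\zeta)}\sigma_3$, and the reality of $s$ together with the identity $\sigma_3\bigl(\begin{smallmatrix}0 & -1\\1 & 0\end{smallmatrix}\bigr)\sigma_3=\bigl(\begin{smallmatrix}0 & 1\\-1 & 0\end{smallmatrix}\bigr)$ reproduces the stated jumps. The rays $\Sigma_2$ and $\Sigma_4$ are swapped by conjugation and their jump matrices are exchanged under $\sigma_3(\cdot)\sigma_3$, an exchange made exact by $\overline{\e^{\pm 2\pi\im s}}=\e^{\mp 2\pi\im s}$. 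At the endpoint $\zeta=-\beta$, the key branch identity is
\begin{equation*}
\overline{(\bar\zeta+\beta)^{\frac{s}{2}\sigma_3}}=\e^{-\im\pi s\sigma_3}(\zeta+\beta)^{\frac{s}{2}\sigma_3},\ \ \ \textnormal{arg}(\zeta+\beta)\in(0,2\pi),
\end{equation*}
which follows from $\textnormal{arg}(\bar\zeta+\beta)=2\pi-\textnormal{arg}(\zeta+\beta)$; it shows that $\widetilde Q$ admits the correct weak singularity with analytic prefactor $\widetilde{\widehat Q}(\zeta)=\sigma_3\overline{\widehat Q(\bar\zeta)}\sigma_3\,\e^{-\im\pi s\sigma_3}$. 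A parallel, sector-by-sector argument at $\zeta=0$ uses that the four sectors $\Omega_j$ are permuted by conjugation and that the explicit matrices $\mathcal{M}_j(\zeta)$ are compatible with this permutation (this is where one invokes $\overline{\ln\bar\zeta}=\ln\zeta-2\pi\im$ in the logarithmic case $s\in\mathbb{Z}_{\geq 0}$). Finally, at $\zeta=\infty$ the branches give $\overline{\bar\zeta^{\frac{1}{4}}}=\e^{-\im\pi/2}\zeta^{\frac{1}{4}}$ and $\overline{\bar\zeta^{\frac{1}{2}}}=-\zeta^{\frac{1}{2}}$, and one checks the algebraic identity
\begin{equation*}
(-\im)\sigma_3 V^{-1}\e^{\im\frac{\pi}{4}\sigma_3}=V\,\e^{-\im\frac{\pi}{4}\sigma_3},\ \ \ V:=\frac{1}{\sqrt{2}}\begin{bmatrix}1 & 1\\ -1 & 1\end{bmatrix},
\end{equation*}
which together with $\sigma_3 V\sigma_3=V^{-1}$ shows that $\widetilde Q$ matches the normalization in condition $(5)$ with expansion coefficients $\sigma_3\overline{Q_k}\sigma_3$.

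Once all five conditions are verified, unicity yields $Q=\widetilde Q$, and reading $Q_1=\sigma_3\overline{Q_1}\sigma_3$ entrywise gives $Q_1^{11},Q_1^{22}\in\mathbb{R}$ and $Q_1^{12},Q_1^{21}\in\im\mathbb{R}$. Specializing the local factorization identity $\widehat Q(\zeta)=\sigma_3\overline{\widehat Q(\bar\zeta)}\sigma_3\,\e^{-\im\pi s\sigma_3}$ at $\zeta=-\beta$, and using that $\sigma_3$ commutes with the diagonal matrix $\e^{-\im\pi s\sigma_3}$, yields $\widehat Q(-\beta)=\sigma_3\overline{\widehat Q(-\beta)}\,\e^{-\im\pi s\sigma_3}\sigma_3$. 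The main (and essentially only) obstacle is careful branch bookkeeping: every fractional power and logarithm appearing in conditions $(3)$--$(5)$ must be tracked under $\zeta\mapsto\bar\zeta$ and complex conjugation, with special care in the logarithmic sector at $\zeta=0$ when $s\in\mathbb{Z}_{\geq 0}$. Apart from this, the argument is a routine application of the Schwarz reflection principle underpinned by Proposition \ref{Fredana}.
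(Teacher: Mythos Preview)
Your proposal is correct and coincides with the paper's approach: the paper states the corollary as an immediate consequence of the unicity established in Proposition \ref{Fredana}, leaving the Schwarz-reflection verification you outline as implicit. Your write-up fills in precisely those details.
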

Seeing that $G_Q(\zeta;s)$ is piecewise constant in $\zeta$ and $x$-independent the below Lax system \eqref{A5} is a natural consequence of RHP \ref{PIIImodel}.
\begin{prop} Suppose $Q(\zeta)=Q(\zeta;x,s)\in\mathbb{C}^{2\times 2}$ solves RHP \ref{PIIImodel} for $x\in(0,\infty)$ and $s>-\frac{1}{2}$. Then
\begin{equation}\label{A5}
	\frac{\partial Q}{\partial\zeta}(\zeta)=\bigg\{xF_{\infty}+\frac{F_0}{\zeta}+\frac{F_1}{\zeta+\beta}\bigg\}Q(\zeta),\ \ \ \ \ \frac{\partial Q}{\partial x}(\zeta)=\Big\{\zeta G_{\infty}+G_0\Big\}Q(\zeta),	
\end{equation}
with $\zeta$-independent coefficients
\begin{equation*}
	F_{\infty}=\frac{1}{2}G_{\infty}=\frac{\im}{2}\sigma_+,\ \ \ \ \ F_0=\frac{1}{2}\begin{cases}s\,\widecheck{Q}(0)\sigma_3\widecheck{Q}(0)^{-1},&s\neq 0\smallskip\\
	\frac{\im}{\pi}\widecheck{Q}(0)\sigma_+\widecheck{Q}(0)^{-1},&s=0\end{cases},\ \ \ \ \ F_1=\frac{s}{2}\widehat{Q}(-\beta)\sigma_3\widehat{Q}(-\beta)^{-1},
\end{equation*}
\begin{equation*}
	G_0=\im\sigma_-+\im\big[Q_1,\sigma_+\big];\hspace{1.5cm}\sigma_-=\begin{bmatrix}0&0\\ 1 & 0\end{bmatrix},\ \ \sigma_+=\begin{bmatrix}0&1\\ 0 & 0\end{bmatrix},
\end{equation*}
that are expressed in terms of the data occurring in conditions $(3),(4)$ and $(5)$ of RHP \ref{PIIImodel}. The same coefficients satisfy the constraints
\begin{equation}\label{A6}
	F_0+F_1=\frac{1}{4}\sigma_3+\frac{x}{2}G_0,\ \ \ \ \ \ \  \det F_0=\det F_1=-\frac{1}{4}s^2.
\end{equation}
Moreover, parametrising the entries of $G_0$ and $F_1$ in the following $(x,s)$-dependent fashion,
\begin{equation*}
	a:=-\im Q_1^{21}=G_0^{11}=-G_0^{22}\in\mathbb{R},\hspace{1cm} b:=\im\big(Q_1^{11}-Q_1^{22}\big)=G_0^{12}\in\im\mathbb{R},
\end{equation*}
\begin{equation*}
	p:=F_1^{11}=-F_1^{22}\in\mathbb{R},\hspace{1cm}q:=F_1^{12}\in\im\mathbb{R},\hspace{1cm}r:=F_1^{21}\in\im\mathbb{R},
\end{equation*}
we also record the following constraints
\begin{equation}\label{A7}
	a^2+\frac{\d a}{\d x}=-\im b,\ \ \ \ \ \beta(-\im r)=\frac{1}{2}\frac{\d}{\d x}(xa),\ \ \ \ \ \ 4\big(p^2+qr\big)=s^2.
\end{equation}
\end{prop}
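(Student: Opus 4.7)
The starting observation is that every jump matrix $G_Q(\zeta;s)$ in condition $(2)$ of RHP \ref{PIIImodel} is piecewise constant in both $\zeta$ and $x$. Consequently the matrix logarithmic derivatives $R_\zeta := (\partial_\zeta Q)\,Q^{-1}$ and $R_x := (\partial_x Q)\,Q^{-1}$ have no jumps across $\Sigma_Q$ and extend to meromorphic functions of $\zeta\in\mathbb{C}$ whose singularities are concentrated at $\zeta\in\{-\beta,0,\infty\}$. The plan is to read off the principal parts at each of those three points from conditions $(3)$--$(5)$, invoke Liouville to obtain \eqref{A5}, and finally derive \eqref{A6} from $\zeta$-asymptotics and \eqref{A7} from the compatibility of the Lax pair together with one further asymptotic matching.

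Local analysis at the finite singularities is direct. Near $\zeta=-\beta$, the factorization \eqref{A1} with $\widehat{Q}$ analytic and invertible gives
\[
R_\zeta=\partial_\zeta\widehat{Q}\cdot\widehat{Q}^{-1}+\frac{s}{2(\zeta+\beta)}\widehat{Q}\sigma_3\widehat{Q}^{-1},\qquad R_x=\partial_x\widehat{Q}\cdot\widehat{Q}^{-1},
\]
so $R_\zeta$ has a simple pole with residue $F_1=\frac{s}{2}\widehat{Q}(-\beta)\sigma_3\widehat{Q}(-\beta)^{-1}$ (whence $\det F_1=-s^2/4$) while $R_x$ is analytic there, because $(\zeta+\beta)^{s\sigma_3/2}$ is $x$-free. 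A parallel analysis at $\zeta=0$ uses the sector-dependent factorization \eqref{A2}; the key observation is that $\mathcal{M}_j$ is $\zeta$-constant when $s\notin\mathbb{Z}_{\ge 0}$ and acquires a logarithm only when $s=0$, which is precisely why the formula for $F_0$ splits into two cases (the $s=0$ residue emerging from $\partial_\zeta\mathcal{M}_4\cdot\mathcal{M}_4^{-1}=\frac{\im}{2\pi\zeta}\sigma_+$). At $\zeta=\infty$ one writes $Q(\zeta)=Y(\zeta)\e^{-\im x\zeta^{1/2}\sigma_3}$ with $Y(\zeta)=(I+Q_1/\zeta+Q_2/\zeta^2+\cdots)\zeta^{\sigma_3/4}M$, $M:=\tfrac{1}{\sqrt{2}}\bigl[\begin{smallmatrix}1&1\\-1&1\end{smallmatrix}\bigr]\e^{-\im\pi\sigma_3/4}$; the algebraic identity $M\sigma_3 M^{-1}=-(\sigma_++\sigma_-)$ combined with $\zeta^{\sigma_3/4}(\sigma_++\sigma_-)\zeta^{-\sigma_3/4}=\zeta^{1/2}\sigma_++\zeta^{-1/2}\sigma_-$ ensures that the formally multivalued pieces pair up into a genuine large-$\zeta$ Laurent series. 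The leading growth reveals $xF_\infty=\tfrac{\im x}{2}\sigma_+$ and $\zeta G_\infty=\im\zeta\sigma_+$, while the constant coefficient of $R_x$ forces $G_0=\im\sigma_-+\im[Q_1,\sigma_+]$. Liouville's theorem applied with these principal parts yields the Lax pair \eqref{A5}.

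The two parts of \eqref{A6} are now immediate: the first follows by matching the $O(\zeta^{-1})$ coefficient of $R_\zeta$ at $\infty$ (which equals $\tfrac{\sigma_3}{4}+\tfrac{\im x}{2}(\sigma_-+[Q_1,\sigma_+])=\tfrac{\sigma_3}{4}+\tfrac{x}{2}G_0$) with $F_0+F_1$, while the determinantal claims are already recorded in the local analysis. For \eqref{A7} we exploit the zero-curvature relation $\partial_x R_\zeta-\partial_\zeta R_x+[R_\zeta,R_x]=0$: its residues at $\zeta=-\beta$ and $\zeta=0$ give the matrix ODEs $\partial_x F_1=\beta[F_1,G_\infty]-[F_1,G_0]$ and $\partial_x F_0=-[F_0,G_0]$, whose sum combined with $\partial_x(F_0+F_1)=\tfrac{1}{2}G_0+\tfrac{x}{2}\partial_x G_0$ collapses to $\partial_x(xG_0)=2\beta[F_1,G_\infty]-\tfrac{1}{2}[\sigma_3,G_0]$. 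Reading off the $(1,1)$-entry with the parametrizations $G_0=\bigl[\begin{smallmatrix}a&b\\ \im&-a\end{smallmatrix}\bigr]$ and $F_1=\bigl[\begin{smallmatrix}p&q\\r&-p\end{smallmatrix}\bigr]$ delivers the middle identity $\beta(-\im r)=\tfrac{1}{2}\tfrac{\d}{\d x}(xa)$ of \eqref{A7}, and the third identity $4(p^2+qr)=s^2$ is simply a restatement of $\det F_1=-s^2/4$. To obtain the first identity $a^2+\partial_x a=-\im b$, substitute the full asymptotic expansion of $Q$ into $\partial_x Q=(\zeta G_\infty+G_0)Q$: matching at order $\zeta^{-1}$ gives $\partial_x Q_1=-\im[Q_2,\sigma_+]-\im Q_1\sigma_-+G_0 Q_1$, whose $(2,1)$-entry is $Q_2$-free (since $[\,\cdot\,,\sigma_+]$ has vanishing $(2,1)$-entry) and, upon inserting $a=-\im Q_1^{21}$ and $b=\im(Q_1^{11}-Q_1^{22})$, reduces to the claim.

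The main technical obstacle will be the asymptotic bookkeeping at $\zeta=\infty$ in the second step, specifically producing the explicit expansion of $\im\zeta^{1/2}Y\sigma_3 Y^{-1}$ to order $O(\zeta^{-1})$ so that both the constant coefficient $G_0$ and the $O(\zeta^{-1})$ coefficient (needed to extract $\partial_x Q_1$ and hence the first identity in \eqref{A7}) can be read off cleanly; in particular, one must verify that the odd powers of $\zeta^{1/2}$ arising from $Y\sigma_3 Y^{-1}$ exactly cancel the $\zeta^{1/2}$ prefactor. A subordinate nuisance is the $s=0$ local analysis at $\zeta=0$, where the logarithm in $\mathcal{M}_4$ forces the residue formula for $F_0$ to switch from a $\sigma_3$-conjugation to a $\sigma_+$-conjugation; checking that these two distinct representations are consistent with the degeneration $s\downarrow 0$ of the generic case is a small but necessary step.
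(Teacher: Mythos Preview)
Your proposal is correct and follows exactly the standard isomonodromic route that the paper alludes to but does not spell out (the paper simply states ``the below Lax system \eqref{A5} is a natural consequence of RHP \ref{PIIImodel}'' and records the result without proof). One small imprecision: you write that $\mathcal{M}_j$ ``acquires a logarithm only when $s=0$,'' but the logarithm is present for all $s\in\mathbb{Z}_{\ge 0}$; the reason the $s\neq 0$ formula for $F_0$ still applies when $s\in\mathbb{Z}_{\ge 1}$ is that the logarithmic contribution $\zeta^{s\sigma_3/2}\cdot\frac{\im}{2\pi\zeta}\sigma_+\cdot\zeta^{-s\sigma_3/2}=\frac{\im}{2\pi}\zeta^{s-1}\sigma_+$ is then analytic at $\zeta=0$ and does not affect the residue.
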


\begin{cor}\label{PcorIII} Compatibility of system \eqref{A5} is equivalent to the coupled ODE system
\begin{equation*}
	\frac{\d p}{\d x}=-\im(\beta r+q)+br,\ \ \ \ \ \frac{\d q}{\d x}=2\im\beta p+2(aq-bp),\ \ \ \ \ \frac{\d r}{\d x}=2(\im p-ar),
\end{equation*}
\begin{equation*}
	\frac{\d}{\d x}(xa)=-2\im\beta r,\ \ \ \ \ \ \frac{\d}{\d x}(xb)=4\im\beta p-b.
\end{equation*}

%
%
%
%
%
%
%
%
\end{cor}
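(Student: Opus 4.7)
The plan is to write out the zero-curvature condition $\partial_x A - \partial_\zeta B = [B,A]$ for the Lax pair \eqref{A5}, where $A(\zeta)=xF_\infty+F_0/\zeta+F_1/(\zeta+\beta)$ and $B(\zeta)=\zeta G_\infty+G_0$, and then read off coefficients. Observing that $G_\infty=2F_\infty$ (so $[G_\infty,F_\infty]=0$), one computes
\begin{equation*}
\partial_xA-\partial_\zeta B= F_\infty-G_\infty+\frac{\partial_x F_0}{\zeta}+\frac{\partial_x F_1}{\zeta+\beta}=-\frac{\im}{2}\sigma_++\frac{\partial_x F_0}{\zeta}+\frac{\partial_x F_1}{\zeta+\beta},
\end{equation*}
while after using $\frac{\zeta}{\zeta+\beta}=1-\frac{\beta}{\zeta+\beta}$,
\begin{equation*}
[B,A]=x[G_0,F_\infty]+[G_\infty,F_0+F_1]+\frac{[G_0,F_0]}{\zeta}+\frac{[G_0,F_1]-\beta[G_\infty,F_1]}{\zeta+\beta}.
\end{equation*}
The three residue/polynomial matchings then yield (i) a constant-in-$\zeta$ identity, (ii) $\partial_x F_0=[G_0,F_0]$, and (iii) $\partial_x F_1=[G_0,F_1]-\beta[G_\infty,F_1]$.

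Next I would verify that identity (i) is automatic from the constraint \eqref{A6}: substituting $F_0+F_1=\tfrac14\sigma_3+\tfrac{x}{2}G_0$ into $[G_\infty,F_0+F_1]$ and adding $x[G_0,F_\infty]$, a direct $2\times 2$ bookkeeping collapses everything to $-\frac{\im}{2}\sigma_+$, so no new information is contributed by the constant part. This shows the entire compatibility reduces to (ii) and (iii).

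Now for (iii), write out $G_0=\bigl[\begin{smallmatrix}a&b\\ \im&-a\end{smallmatrix}\bigr]$ and $F_1=\bigl[\begin{smallmatrix}p&q\\ r&-p\end{smallmatrix}\bigr]$, then expand $[G_0,F_1]-\beta[G_\infty,F_1]$ entry-wise. Equating it with $\partial_x F_1$ at the $(1,1)$, $(1,2)$, $(2,1)$ positions recovers exactly
\begin{equation*}
p'=-\im(\beta r+q)+br,\qquad q'=2\im\beta p+2(aq-bp),\qquad r'=2(\im p-ar),
\end{equation*}
and the $(2,2)$ position gives back $-p'$, consistent with the tracelessness already built in. For (ii), I would use the constraint \eqref{A6} to eliminate $F_0$ in favor of $F_1$, namely $F_0=\tfrac14\sigma_3+\tfrac{x}{2}G_0-F_1$, so that $\partial_x F_0=\tfrac12 G_0+\tfrac{x}{2}\partial_x G_0-\partial_x F_1$ and $[G_0,F_0]=\tfrac14[G_0,\sigma_3]-[G_0,F_1]$. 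The $(1,1)$-entry of (ii), after substituting the already obtained expression for $p'$, collapses to the clean identity $(xa)'=-2\im\beta r$; the $(1,2)$-entry, after substituting $q'$, collapses to $(xb)'=4\im\beta p-b$; the $(2,1)$-entry reduces to the $r'$-equation already obtained, giving a consistency check.

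The content is essentially bookkeeping; the one thing to be careful with is not to miss the automatic character of the constant-in-$\zeta$ balance, which is what allows one to drop the two constraints in \eqref{A6} as independent equations and keep only the five evolution equations stated. The converse direction (that these five ODEs force the compatibility of \eqref{A5}) is immediate by reversing the same matching of residues, now using \eqref{A6} as a defining relation for $F_0$ in terms of $(a,b,p,q,r,x)$.
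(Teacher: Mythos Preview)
Your proof is correct and is exactly the standard zero-curvature computation one would carry out; the paper states this result as a corollary without proof, so there is nothing to compare against beyond noting that your approach is the intended one. Your observation that the constant-in-$\zeta$ balance is automatic from the constraint \eqref{A6}, and your use of that same constraint to eliminate $F_0$ when processing the $1/\zeta$-residue, are both clean and correct.
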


\section{The Airy model problem}
Suppose $\textnormal{Ai}(\zeta)$ denotes the Airy function and $\textnormal{Ai}'(\zeta)$ its derivative. We construct
\begin{equation}\label{A10}
	A(\zeta):=\sqrt{2\pi}\,\e^{-\im\frac{\pi}{4}}\begin{bmatrix}\textnormal{Ai}(\zeta)&\e^{\im\frac{\pi}{3}}\textnormal{Ai}(\e^{-\im\frac{2\pi}{3}}\zeta)\smallskip\\
	\textnormal{Ai}'(\zeta) & \e^{-\im\frac{\pi}{3}}\textnormal{Ai}'(\e^{-\im\frac{2\pi}{3}}\zeta)\end{bmatrix}
	\begin{cases}I,&\textnormal{arg}\,\zeta\in(0,\frac{2\pi}{3})\smallskip\\ \bigl[\begin{smallmatrix}1&0\\ -1 & 1\end{smallmatrix}\bigr],&\textnormal{arg}\,\zeta\in(\frac{2\pi}{3},\pi)\smallskip\\ \bigl[\begin{smallmatrix}1&-1\\ 0 & 1\end{smallmatrix}\bigr],&\textnormal{arg}\,\zeta\in(-\frac{2\pi}{3},0)\smallskip\\ \bigl[\begin{smallmatrix}1&-1\\ 0 & 1\end{smallmatrix}\bigr]\bigl[\begin{smallmatrix}1&0\\ 1 & 1\end{smallmatrix}\bigr],&\textnormal{arg}\,\zeta\in(-\pi,-\frac{2\pi}{3})\end{cases},
\end{equation}
and realise that $A(\zeta)$ is uniquely characterised by the following four properties:
\begin{problem}\label{Airymodel} The model function $A(\zeta)\in\mathbb{C}^{2\times 2}$ defined in \eqref{A10} is such that:
\begin{enumerate}
	\item[(1)] $\zeta\mapsto A(\zeta)$ is analytic for $\zeta\in\mathbb{C}\setminus\Sigma_A$ with $\Sigma_A$ shown in Figure \ref{figb}. On $\Sigma_A$, $A(\zeta)$ admits continuous limiting values $A_{\pm}(\zeta)$ as we approach $\Sigma_A$ from either side of $\mathbb{C}\setminus\Sigma_A$.
	\item[(2)] The limiting values $A_{\pm}(\zeta)$ on $\Sigma_A\ni\zeta$ satisfy
	\begin{equation*}
		A_+(\zeta)=A_-(\zeta)\begin{bmatrix}0&1\\ -1 & 0\end{bmatrix},\ \ \zeta<0;\ \ \ \ \ \ \ \ \ A_+(\zeta)=A_-(\zeta)\begin{bmatrix}1 & 1\\ 0 & 1\end{bmatrix},\ \ \zeta>0;
	\end{equation*}
	\begin{equation*}
		A_+(\zeta)=A_-(\zeta)\begin{bmatrix}1&0\\ 1 & 1\end{bmatrix},\ \ \textnormal{arg}\,\zeta=\pm\frac{2\pi}{3}.
	\end{equation*}
	\item[(3)] $\zeta\mapsto A(\zeta)$ is bounded near $\zeta=0$.
	\begin{figure}[tbh]
	\begin{tikzpicture}[xscale=0.9,yscale=0.9]
	 \draw [->] (-3,0) -- (3,0) node [right] {$\footnotesize{\textnormal{Re}(\zeta)}$};
	   \draw [->] (0,-2) -- (0,2) node [above] {$\footnotesize{\textnormal{Im}(\zeta)}$};
	\draw [thick, color=red, decoration={markings, mark=at position 0.25 with {\arrow{>}}}, decoration={markings, mark=at position 0.75 with {\arrow{>}}}, postaction={decorate}] (-2.5,0) -- (2.5,0);
	\draw [thick, color=red, decoration={markings, mark=at position 0.5 with {\arrow{>}}}, postaction={decorate}] (-1,1.73205) -- (0,0);
	\draw [thick, color=red, decoration={markings, mark=at position 0.5 with {\arrow{>}}}, postaction={decorate}] (-1,-1.73205) -- (0,0);
	\node [below] at (2.6,-0.15) {{\footnotesize $0$}};
	\node [below] at (-2.6,-0.15) {{\footnotesize $-\pi$}};
	\node [above] at (-2.5,0.15) {{\footnotesize $\pi$}};
	\node [right] at (-1.3,2.2) {{\footnotesize $\frac{2\pi}{3}$}};
	\node [right] at (-1.3,-2.2) {{\footnotesize $-\frac{2\pi}{3}$}};
	\draw [fill, color=blue!60!black] (0,0) circle [radius=0.04];
\end{tikzpicture}
\caption{The oriented jump contour $\Sigma_A$, shown in red, for the model function $A(\zeta)$ in the complex $\zeta$-plane, with indicated values of $\textnormal{arg}\,\zeta$ on the four rays.}
\label{figb}
\end{figure}

	\item[(4)] As $\zeta\rightarrow\infty$ and $\zeta\notin\Sigma_A$, $A(\zeta)$ is normalised as follows,
	\begin{equation*}
		A(\zeta)=\Bigg\{I-\frac{7}{48\zeta}\begin{bmatrix}0 & 0\\ 1 &0\end{bmatrix}+\frac{5}{48\zeta^2}\begin{bmatrix}0&1\\0&0\end{bmatrix}+\mathcal{O}\big(\zeta^{-3}\big)\Bigg\}\zeta^{-\frac{1}{4}\sigma_3}\frac{1}{\sqrt{2}}\begin{bmatrix}1&1\\ -1 & 1\end{bmatrix}\e^{-\im\frac{\pi}{4}\sigma_3}\e^{-\frac{2}{3}\zeta^{\frac{3}{2}}\sigma_3},
	\end{equation*}
	where all fractional powers are principal ones with cut on $(-\infty,0]\subset\mathbb{R}$ such that $\textnormal{arg}\,\zeta\in(-\pi,\pi)$.
\end{enumerate}
\end{problem}

\section{The Bessel model problem}

Suppose $I_{\nu}(\zeta),K_{\nu}(\zeta)$ denote the principal branch modified Bessel functions of index $\nu>-1$ and $I_{\nu}'(\zeta),K_{\nu}'(\zeta)$ their derivatives. We construct, where $\zeta^{\alpha}:\mathbb{C}\setminus(-\infty,0]\rightarrow\mathbb{C}$ is defined with its principal branch,
\begin{equation}\label{A11}
	J(\zeta;\nu):=\begin{bmatrix}\frac{1}{8}(4\nu^2+3)&1\\ -1 & 0\end{bmatrix}\sqrt{\pi}\,\e^{-\im\frac{\pi}{4}}\begin{bmatrix}I_{\nu}((\e^{-\im\pi}\zeta)^{\frac{1}{2}}) & -\frac{\im}{\pi}K_{\nu}((\e^{-\im\pi}\zeta)^{\frac{1}{2}})\smallskip\\
	(\e^{-\im\pi}\zeta)^{\frac{1}{2}}I_{\nu}'((\e^{-\im\pi}\zeta)^{\frac{1}{2}}) & -\frac{\im}{\pi}(\e^{-\im\pi}\zeta)^{\frac{1}{2}}K_{\nu}'((\e^{-\im\pi}\zeta)^{\frac{1}{2}})\end{bmatrix}\mathcal{N}(\zeta;\nu),
\end{equation}
in terms of the piecewise constant multiplier
\begin{equation*}
	\mathcal{N}(\zeta;\nu):=\begin{cases}\bigl[\begin{smallmatrix}1&0\\ -\e^{-\im\pi\nu}&1\end{smallmatrix}\bigr],&\textnormal{arg}\,\zeta\in(0,\frac{\pi}{3})\smallskip\\ I,&\textnormal{arg}\,\zeta\in(\frac{\pi}{3},\frac{5\pi}{3})\smallskip\\
	\bigl[\begin{smallmatrix}1&0\\ \e^{\im\pi\nu} & 1\end{smallmatrix}\bigr],&\textnormal{arg}\,\zeta\in(\frac{5\pi}{3},2\pi)\end{cases},
\end{equation*}
and realise that $J(\zeta)=J(\zeta;\nu)$ is uniquely characterised by the following four properties:
\begin{problem}\label{BessRHP} The model function $J(\zeta)=J(\zeta;\nu)\in\mathbb{C}^{2\times 2}$ defined in \eqref{A11} is such that:
\begin{enumerate}
	\item[(1)] $\zeta\mapsto J(\zeta)$ is analytic for $\zeta\in\mathbb{C}\setminus\Sigma_J$ with $\Sigma_J$ shown in Figure \ref{figc}. On $\Sigma_J\setminus\{0\}$, $J(\zeta)$ admits continuous limiting values $J_{\pm}(\zeta)$ as we approach $\Sigma_J$ from either side of $\mathbb{C}\setminus\Sigma_J$.
	\item[(2)] The limiting values $J_{\pm}(\zeta)$ on $(\Sigma_J\setminus\{0\})\ni\zeta$ satisfy
	\begin{equation*}
		J_+(\zeta)=J_-(\zeta)\begin{bmatrix}1 & 0\\ \e^{-\im\pi\nu}&1\end{bmatrix},\ \ \textnormal{arg}\,\zeta=\frac{\pi}{3};\hspace{1cm}J_+(\zeta)=J_-(\zeta)\begin{bmatrix}1 & 0\\ \e^{\im\pi\nu}&1\end{bmatrix},\ \ \textnormal{arg}\,\zeta=\frac{5\pi}{3};
	\end{equation*}
	\begin{equation*}
		J_+(\zeta)=J_-(\zeta)\begin{bmatrix}0&1\\ -1 & 0\end{bmatrix},\ \ \zeta>0.
	\end{equation*}
	\item[(3)] Near $\zeta=0$, $\zeta\mapsto J(\zeta)$ is weakly singular in that, with some $\zeta\mapsto\widecheck{J}(\zeta)=\widecheck{J}(\zeta;\nu)$ analytic at $\zeta=0$ and non-vanishing, $J(\zeta)$ is of the form
	\begin{equation*}
		J(\zeta)=\widecheck{J}(\zeta)\big(\e^{-\im\pi}\zeta\big)^{\frac{\nu}{2}\sigma_3}\begin{bmatrix}1 & \frac{\im}{2}\frac{1}{\sin(\pi\nu)}\\ 0 & 1\end{bmatrix}\mathcal{N}(\zeta;\nu),
	\end{equation*}
	provided $\nu\notin\mathbb{Z}_{\geq 0}$ and if $\nu\in\mathbb{Z}_{\geq 0}$ then it is of the form
	\begin{equation*}
		J(\zeta)=\widecheck{J}(\zeta)\big(\e^{-\im\pi}\zeta\big)^{\frac{\nu}{2}\sigma_3}\begin{bmatrix}1 & -\frac{\e^{\im\pi\nu}}{2\pi\im}\ln(\e^{-\im\pi}\zeta)\\ 0 & 1\end{bmatrix}\mathcal{N}(\zeta;\nu)
	\end{equation*}
	Throughout, principal branches with cuts on $(-\infty,0]\subset\mathbb{R}$ are employed.
	\begin{figure}[tbh]
	\begin{tikzpicture}[xscale=0.9,yscale=0.9]
	 \draw [->] (-1,0) -- (3,0) node [right] {$\footnotesize{\textnormal{Re}(\zeta)}$};
	   \draw [->] (0,-2) -- (0,2) node [above] {$\footnotesize{\textnormal{Im}(\zeta)}$};
	\draw [thick, color=red, decoration={markings, mark=at position 0.5 with {\arrow{>}}},  postaction={decorate}] (0,0) -- (2.5,0);
	\draw [thick, color=red, decoration={markings, mark=at position 0.5 with {\arrow{<}}}, postaction={decorate}] (1,1.73205) -- (0,0);
	\draw [thick, color=red, decoration={markings, mark=at position 0.5 with {\arrow{<}}}, postaction={decorate}] (1,-1.73205) -- (0,0);
	\node [below] at (2.6,-0.15) {{\footnotesize $2\pi$}};
	\node [above] at (2.6,0.15) {{\footnotesize $0$}};
	\node [right] at (1,2.2) {{\footnotesize $\frac{\pi}{3}$}};
	\node [right] at (1,-2.2) {{\footnotesize $\frac{5\pi}{3}$}};
	\draw [fill, color=blue!60!black] (0,0) circle [radius=0.04];
\end{tikzpicture}
\caption{The oriented jump contour $\Sigma_J$, shown in red, for the model function $J(\zeta)$ in the complex $\zeta$-plane, with indicated values of $\textnormal{arg}\,\zeta$ on the three rays.}
\label{figc}
\end{figure}

	\item[(4)] As $\zeta\rightarrow\infty$ and $\zeta\notin\Sigma_J$, $J(\zeta)$ is normalised as follows,
	\begin{align*}
		J(\zeta)=&\,\Bigg\{I-\frac{1}{\zeta}\begin{bmatrix}-\frac{1}{128}(4\nu^2-1)(4\nu^2-9)&-\frac{1}{1536}(4\nu^2-1)(4\nu^2-9)(4\nu^2-13)\smallskip\\ \frac{1}{8}(4\nu^2-1) & \frac{1}{128}(4\nu^2-1)(4\nu^2-9)\end{bmatrix}+\mathcal{O}\big(\zeta^{-2}\big)\Bigg\}\\
		&\hspace{3cm}\times\big(\e^{-\im\pi}\zeta\big)^{\frac{1}{4}\sigma_3}\frac{1}{\sqrt{2}}\begin{bmatrix}1&1\\ -1 & 1\end{bmatrix}\e^{-\im\frac{\pi}{4}\sigma_3}\exp\Big[\big(\e^{-\im\pi}\zeta\big)^{\frac{1}{2}}\sigma_3\Big],
	\end{align*}
	with $\textnormal{arg}\,\zeta\in(0,2\pi)$ throughout and the fractional power with cut on $(-\infty,0]\subset\mathbb{R}$
\end{enumerate}
\end{problem}
\begin{rem}\label{nearzero} Utilising \cite[$\S 10.25,10.27$]{NIST} we record the following expressions for $\widecheck{J}(\zeta)=\widecheck{J}(\zeta;\nu)$ in condition $(3)$ of RHP \ref{BessRHP},
\begin{equation*}
	\widecheck{J}(\zeta)=\begin{bmatrix}\frac{1}{8}(4\nu^2+3) & 1\\ -1 & 0\end{bmatrix}\sqrt{\pi}\,\e^{-\im\frac{\pi}{4}}\begin{bmatrix}I_{11}(\zeta;\nu) & -\frac{\im}{2}\frac{1}{\sin(\pi\nu)}I_{11}(\zeta;-\nu)\smallskip\\ I_{21}(\zeta;\nu) & -\frac{\im}{2}\frac{1}{\sin(\pi\nu)}I_{21}(\zeta;-\nu)\end{bmatrix},\ \ \nu\notin\mathbb{Z}_{\geq 0},
\end{equation*}
in terms of the entire functions
\begin{equation*}
	I_{11}(\zeta;\nu):=\frac{1}{2^{\nu}}\sum_{k=0}^{\infty}\frac{(-\frac{\zeta}{4})^k}{k!\,\Gamma(1+\nu+k)},\ \ \ \ \ \ \ \ I_{21}(\zeta;\nu):=\frac{1}{2^{\nu}}\sum_{k=0}^{\infty}\frac{(\nu+2k)(-\frac{\zeta}{4})^k}{k!\,\Gamma(1+\nu+k)}.
\end{equation*}
In the non-generic case $\nu\in\mathbb{Z}_{\geq 0}$ we have instead
\begin{align*}
	\widecheck{J}(\zeta;\nu)=\begin{bmatrix}\frac{1}{8}(4\nu^2+3) & 1\\ -1 & 0\end{bmatrix}\sqrt{\pi}\,\e^{-\im\frac{\pi}{4}}\begin{bmatrix}I_{11}(\zeta;\nu) & I_{12}(\zeta;\nu)\smallskip\\ I_{21}(\zeta;\nu) & I_{22}(\zeta;\nu)+\frac{\im}{\pi}I_{11}(\zeta;\nu)\zeta^{\nu}\end{bmatrix}\begin{bmatrix}1 & \frac{1}{\im\pi}\zeta^{\nu}\ln 2\\ 0 & 1\end{bmatrix},\ \ \nu\in\mathbb{Z}_{\geq 0},
\end{align*}
in terms of $I_{11}(\zeta;\nu),I_{21}(\zeta;\nu)$ as before and the entire functions
\begin{equation*}
	I_{12}(\zeta;\nu):=-\frac{\im}{2\pi}\Bigg[2^{\nu}\sum_{k=0}^{\nu-1}\frac{\Gamma(\nu-k)}{k!}\Big(\frac{\zeta}{4}\Big)^k+2^{-\nu}\zeta^{\nu}\sum_{k=0}^{\infty}\big(\psi(k+1)+\psi(\nu+k+1)\big)\frac{(-\frac{\zeta}{4})^k}{k!(\nu+k)!}\Bigg],
\end{equation*}
\begin{equation*}
	I_{22}(\zeta;\nu):=\frac{\im}{2\pi}\Bigg[2^{\nu}\sum_{k=0}^{\nu-1}\frac{\Gamma(\nu-k)}{k!}(\nu-2k)\Big(\frac{\zeta}{4}\Big)^k-2^{-\nu}\zeta^{\nu}\sum_{k=0}^{\infty}\big(\psi(k+1)+\psi(\nu+k+1)\big)(\nu+2k)\frac{(-\frac{\zeta}{4})^k}{k!(\nu+k)!}\Bigg],
\end{equation*}
using the convention $\sum_{k=0}^na_k=0$ for $n<0$.
\end{rem}
\begin{rem}\label{impcons} Utilising \eqref{A11} the unique solution of RHP \ref{PIIImodel} for $s=0$ is given by
\begin{equation*}
	Q(\zeta;x,0)=x^{-\frac{1}{2}\sigma_3}\e^{\im\frac{\pi}{4}\sigma_3}J\big(x^2\zeta;0\big),\ \ \ \zeta\in\mathbb{C}\setminus\Sigma_J,\ \ \ x>0,
\end{equation*}
so, in particular, $Q_1^{21}(x,0)=-\frac{\im}{8x}$.
\end{rem}
\end{appendix}



\begin{bibsection}
\begin{biblist}

\bib{ABGS}{article}{
   AUTHOR = {Assiotis, Theodoros},
   author={Bedert, Benjamin},
   author={Gunes, Mustafa Alper},
   author={Soor, Arun},
     TITLE = {On a distinguished family of random variables and {P}ainlev\'{e}
              equations},
   JOURNAL = {Probab. Math. Phys.},
  FJOURNAL = {Probability and Mathematical Physics},
    VOLUME = {2},
      YEAR = {2021},
    NUMBER = {3},
     PAGES = {613--642},
      ISSN = {2690-0998},
   MRCLASS = {60B20 (11M50 15B52 33C10 37J65)},
  MRNUMBER = {4408021},
MRREVIEWER = {J\"{o}rg Neunh\"{a}userer},
       DOI = {10.2140/pmp.2021.2.613},
       URL = {https://doi-org.bris.idm.oclc.org/10.2140/pmp.2021.2.613},
}

\bib{AGKW}{article}{
      title={Exchangeable arrays and integrable systems for characteristic polynomials of random matrices}, 
      author={Assiotis, Theodoros },
      author={Gunes, Mustafa Alper},
      author={Keating, Jonathan P.},
      author={Wei, Fei},
      year={2024},
      eprint={https://arxiv.org/abs/2407.19233},
      archivePrefix={arXiv},
      primaryClass={math.PR},
}

\bib{assiotis2025joint}{article}{
  title={Joint Moments of Characteristic Polynomials from the Orthogonal and Unitary Symplectic Groups},
  author={Assiotis, Theodoros },
      author={Gunes, Mustafa Alper},
      author={Keating, Jonathan P.},
      author={Wei, Fei},
  year={2025},
  eprint={https://arxiv.org/abs/2508.09910},
  archivePrefix={arXiv},
  primaryClass={math-ph}
}

\bib{AKW}{article}{
  AUTHOR = {Assiotis, Theodoros},
  author={Keating, Jonathan P.},
  author={Warren, Jon},
     TITLE = {On the joint moments of the characteristic polynomials of
              random unitary matrices},
   JOURNAL = {Int. Math. Res. Not. IMRN},
  FJOURNAL = {International Mathematics Research Notices. IMRN},
      YEAR = {2022},
    NUMBER = {18},
     PAGES = {14564--14603},
      ISSN = {1073-7928},
   MRCLASS = {60B20 (11M50 15B52 43A80)},
  MRNUMBER = {4485964},
MRREVIEWER = {Giorgio Cipolloni},
       DOI = {10.1093/imrn/rnab336},
       URL = {https://doi-org.bris.idm.oclc.org/10.1093/imrn/rnab336},
}

\bib{Bailey_2019}{article}{
   AUTHOR = {Bailey, E. C.},
   author={Bettin, S.},
   author={Blower, G.},
   author={Conrey, J. B.},
   author={Prokhorov, A.},
   author={Rubinstein, M. O.},
   author={Snaith, N. C.},
     TITLE = {Mixed moments of characteristic polynomials of random unitary
              matrices},
   JOURNAL = {J. Math. Phys.},
  FJOURNAL = {Journal of Mathematical Physics},
    VOLUME = {60},
      YEAR = {2019},
    NUMBER = {8},
     PAGES = {083509, 26},
      ISSN = {0022-2488},
   MRCLASS = {15B52 (11M06 11M50 15A15 34M55)},
  MRNUMBER = {3995715},
MRREVIEWER = {Santosh Kumar},
       DOI = {10.1063/1.5092780},
       URL = {https://doi-org.bris.idm.oclc.org/10.1063/1.5092780},
}

\bib{Basor_2019}{article}{
   AUTHOR = {Basor, Estelle},
   author={Bleher, Pavel},
   author={Buckingham, Robert},
   author={Grava, Tamara},
   author={Its, Alexander},
   author={Its, Elizabeth},
   author={Keating, Jonathan P.},
     TITLE = {A representation of joint moments of {CUE} characteristic
              polynomials in terms of {P}ainlev\'{e} functions},
   JOURNAL = {Nonlinearity},
  FJOURNAL = {Nonlinearity},
    VOLUME = {32},
      YEAR = {2019},
    NUMBER = {10},
     PAGES = {4033--4078},
      ISSN = {0951-7715},
   MRCLASS = {60B20 (11M50 30E25 33E17 34M55 35Q15)},
  MRNUMBER = {4012580},
MRREVIEWER = {Shuaixia Xu},
       DOI = {10.1088/1361-6544/ab28c7},
       URL = {https://doi-org.bris.idm.oclc.org/10.1088/1361-6544/ab28c7},
}

\bib{Borodin_2001}{article}{
   AUTHOR = {Borodin, Alexei},
   author={Olshanski, Grigori},
     TITLE = {Infinite random matrices and ergodic measures},
   JOURNAL = {Comm. Math. Phys.},
  FJOURNAL = {Communications in Mathematical Physics},
    VOLUME = {223},
      YEAR = {2001},
    NUMBER = {1},
     PAGES = {87--123},
      ISSN = {0010-3616},
   MRCLASS = {60B15 (15A52 28C10 82B31)},
  MRNUMBER = {1860761},
MRREVIEWER = {Steven B. Damelin},
       DOI = {10.1007/s002200100529},
       URL = {https://doi-org.bris.idm.oclc.org/10.1007/s002200100529},
}

\bib{BW2}{article}{
      title={On the joint moments of the characteristic polynomial for the Circular Jacobi Ensemble and Painlev\'e equations}, 
      author={Bothner, Thomas},
      author={Wei, Fei},
      year={2025},
journal={in preparation}, 
}

\bib{chen2012coulumb}{article}{
  AUTHOR = {Chen, Yang},
  author={McKay, Matthew R.},
     TITLE = {Coulumb fluid, {P}ainlev\'{e} transcendents, and the information
              theory of {MIMO} systems},
   JOURNAL = {IEEE Trans. Inform. Theory},
  FJOURNAL = {Institute of Electrical and Electronics Engineers.
              Transactions on Information Theory},
    VOLUME = {58},
      YEAR = {2012},
    NUMBER = {7},
     PAGES = {4594--4634},
      ISSN = {0018-9448},
   MRCLASS = {94A17 (15B52)},
  MRNUMBER = {2949840},
       DOI = {10.1109/TIT.2012.2195154},
       URL = {https://doi-org.bris.idm.oclc.org/10.1109/TIT.2012.2195154},
}

\bib{CFKRS1}{article}{
 AUTHOR = {Conrey, J. B.},
 author={Farmer, D. W.},
 author={Keating, J. P.},
 author={Rubinstein, M. O.},
 author={Snaith, N. C.},
     TITLE = {Autocorrelation of random matrix polynomials},
   JOURNAL = {Comm. Math. Phys.},
  FJOURNAL = {Communications in Mathematical Physics},
    VOLUME = {237},
      YEAR = {2003},
    NUMBER = {3},
     PAGES = {365--395},
      ISSN = {0010-3616},
   MRCLASS = {11M41 (11M06 60B15)},
  MRNUMBER = {1993332},
MRREVIEWER = {Ze\'{e}v Rudnick},
       DOI = {10.1007/s00220-003-0852-2},
       URL = {https://doi-org.bris.idm.oclc.org/10.1007/s00220-003-0852-2},
}

\bib{CFKRS2}{article}{
  AUTHOR = {Conrey, J. B.},
  author={Farmer, D. W.},
  author={Keating, J. P.},
  author={Rubinstein, M. O.},
  author={Snaith, N. C.},
     TITLE = {Integral moments of {$L$}-functions},
   JOURNAL = {Proc. London Math. Soc. (3)},
  FJOURNAL = {Proceedings of the London Mathematical Society. Third Series},
    VOLUME = {91},
      YEAR = {2005},
    NUMBER = {1},
     PAGES = {33--104},
      ISSN = {0024-6115},
   MRCLASS = {11M26},
  MRNUMBER = {2149530},
MRREVIEWER = {K. Soundararajan},
       DOI = {10.1112/S0024611504015175},
       URL = {https://doi-org.bris.idm.oclc.org/10.1112/S0024611504015175},
}

\bib{conreyetal}{article}{
	AUTHOR = {Conrey, J. B.},
	author={Rubinstein, M. O.},
	author={Snaith, N. C.},
     TITLE = {Moments of the derivative of characteristic polynomials with
              an application to the {R}iemann zeta function},
   JOURNAL = {Comm. Math. Phys.},
  FJOURNAL = {Communications in Mathematical Physics},
    VOLUME = {267},
      YEAR = {2006},
    NUMBER = {3},
     PAGES = {611--629},
      ISSN = {0010-3616},
   MRCLASS = {11M06},
  MRNUMBER = {2249784},
MRREVIEWER = {Tanguy Rivoal},
       DOI = {10.1007/s00220-006-0090-5},
       URL = {https://doi-org.bris.idm.oclc.org/10.1007/s00220-006-0090-5},
}

\bib{Dehaye2008}{article}{
AUTHOR = {Dehaye, Paul-Olivier},
     TITLE = {Joint moments of derivatives of characteristic polynomials},
   JOURNAL = {Algebra Number Theory},
  FJOURNAL = {Algebra \& Number Theory},
    VOLUME = {2},
      YEAR = {2008},
    NUMBER = {1},
     PAGES = {31--68},
      ISSN = {1937-0652},
   MRCLASS = {15A52 (05E10 33C80 60B15)},
  MRNUMBER = {2377362},
       DOI = {10.2140/ant.2008.2.31},
       URL = {https://doi-org.bris.idm.oclc.org/10.2140/ant.2008.2.31},
}

\bib{Deift1999}{book}{
  AUTHOR = {Deift, P. A.},
     TITLE = {Orthogonal polynomials and random matrices: a
              {R}iemann-{H}ilbert approach},
    SERIES = {Courant Lecture Notes in Mathematics},
    VOLUME = {3},
 PUBLISHER = {New York University, Courant Institute of Mathematical
              Sciences, New York; American Mathematical Society, Providence,
              RI},
      YEAR = {1999},
     PAGES = {viii+273},
      ISBN = {0-9658703-2-4; 0-8218-2695-6},
   MRCLASS = {47B80 (15A52 30E25 33D45 37K10 42C05 47B36 60F99)},
  MRNUMBER = {1677884},
MRREVIEWER = {Alexander Vladimirovich Kitaev},
}

 \bib{DKMVZ}{article}{
 AUTHOR = {Deift, P.},
 author={Kriecherbauer, T.},
 author={McLaughlin, K. T.-R.},
 author={Venakides, S.},
 author={Zhou, X.},
     TITLE = {Uniform asymptotics for polynomials orthogonal with respect to
              varying exponential weights and applications to universality
              questions in random matrix theory},
   JOURNAL = {Comm. Pure Appl. Math.},
  FJOURNAL = {Communications on Pure and Applied Mathematics},
    VOLUME = {52},
      YEAR = {1999},
    NUMBER = {11},
     PAGES = {1335--1425},
      ISSN = {0010-3640},
   MRCLASS = {42C05 (15A52 41A60 82B41)},
  MRNUMBER = {1702716},
MRREVIEWER = {D. S. Lubinsky},
       DOI =
              {10.1002/(SICI)1097-0312(199911)52:11<1335::AID-CPA1>3.0.CO;2-1},
       URL =
              {https://doi-org.bris.idm.oclc.org/10.1002/(SICI)1097-0312(199911)52:11<1335::AID-CPA1>3.0.CO;2-1},
}

\bib{DZ}{article}{
 AUTHOR = {Deift, P.},
 author={Zhou, X.},
     TITLE = {A steepest descent method for oscillatory {R}iemann-{H}ilbert
              problems. {A}symptotics for the {MK}d{V} equation},
   JOURNAL = {Ann. of Math. (2)},
  FJOURNAL = {Annals of Mathematics. Second Series},
    VOLUME = {137},
      YEAR = {1993},
    NUMBER = {2},
     PAGES = {295--368},
      ISSN = {0003-486X},
   MRCLASS = {35Q53 (34A55 34L25 35Q15 35Q55)},
  MRNUMBER = {1207209},
MRREVIEWER = {Alexey V. Samokhin},
       DOI = {10.2307/2946540},
       URL = {https://doi-org.bris.idm.oclc.org/10.2307/2946540},
}

\bib{FN}{article}{
    AUTHOR = {Flaschka, Hermann},
    author={Newell, Alan C.},
     TITLE = {Monodromy- and spectrum-preserving deformations. {I}},
   JOURNAL = {Comm. Math. Phys.},
  FJOURNAL = {Communications in Mathematical Physics},
    VOLUME = {76},
      YEAR = {1980},
    NUMBER = {1},
     PAGES = {65--116},
      ISSN = {0010-3616},
   MRCLASS = {35Q20 (14D05 58F07 81C05)},
  MRNUMBER = {588248},
MRREVIEWER = {H\'{e}l\`ene Airault},
       URL = {http://projecteuclid.org.bris.idm.oclc.org/euclid.cmp/1103908189},
}

\bib{FIK2}{article}{
  AUTHOR = {Fokas, A. S.},
  author={Its, A. R.},
  author={Kitaev, A. V.},
     TITLE = {Discrete {P}ainlev\'{e} equations and their appearance in quantum
              gravity},
   JOURNAL = {Comm. Math. Phys.},
  FJOURNAL = {Communications in Mathematical Physics},
    VOLUME = {142},
      YEAR = {1991},
    NUMBER = {2},
     PAGES = {313--344},
      ISSN = {0010-3616},
   MRCLASS = {58F07 (81T40)},
  MRNUMBER = {1137067},
MRREVIEWER = {Nikolai A. Kostov},
       URL = {http://projecteuclid.org.bris.idm.oclc.org/euclid.cmp/1104248588},
}

\bib{forrester2025higher}{article}{
  title={Higher order linear differential equations for unitary matrix integrals: applications and generalisations},
  author={Forrester, Peter J},
  author={Wei, Fei},
  year={2025}
  eprint={https://arxiv.org/abs/2508.20797},
  archivePrefix={arXiv},
  primaryClass={math-ph}
}

\bib{ForresterWittePainleve1}{article}{
    AUTHOR = {Forrester, P. J.},
    author={Witte, N. S.},
     TITLE = {Application of the {$\tau$}-function theory of {P}ainlev\'{e}
              equations to random matrices: {PIV}, {PII} and the {GUE}},
   JOURNAL = {Comm. Math. Phys.},
  FJOURNAL = {Communications in Mathematical Physics},
    VOLUME = {219},
      YEAR = {2001},
    NUMBER = {2},
     PAGES = {357--398},
      ISSN = {0010-3616},
   MRCLASS = {82B41 (15A52 34M55)},
  MRNUMBER = {1833807},
MRREVIEWER = {Oleksiy Khorunzhiy},
       DOI = {10.1007/s002200100422},
       URL = {https://doi-org.bris.idm.oclc.org/10.1007/s002200100422},
}

\bib{forrester2002application}{article}{
  AUTHOR = {Forrester, P. J.},
  author={Witte, N. S.},
     TITLE = {Application of the {$\tau$}-function theory of {P}ainlev\'{e}
              equations to random matrices: {$\rm P_V$}, {$\rm P_{III}$},
              the {LUE}, {JUE}, and {CUE}},
   JOURNAL = {Comm. Pure Appl. Math.},
  FJOURNAL = {Communications on Pure and Applied Mathematics},
    VOLUME = {55},
      YEAR = {2002},
    NUMBER = {6},
     PAGES = {679--727},
      ISSN = {0010-3640},
   MRCLASS = {33E17 (34M55 37K10 82B31)},
  MRNUMBER = {1885665},
MRREVIEWER = {Mark Adler},
       DOI = {10.1002/cpa.3021},
       URL = {https://doi-org.bris.idm.oclc.org/10.1002/cpa.3021},
}

\bib{Forrester_2006}{article}{
    AUTHOR = {Forrester, P. J.},
    author={Witte, N. S.},
     TITLE = {Boundary conditions associated with the {P}ainlev\'{e} {III{$'$}}
              and {V} evaluations of some random matrix averages},
   JOURNAL = {J. Phys. A},
  FJOURNAL = {Journal of Physics. A. Mathematical and General},
    VOLUME = {39},
      YEAR = {2006},
    NUMBER = {28},
     PAGES = {8983--8995},
      ISSN = {0305-4470},
   MRCLASS = {82B41 (11M06 11Z05 33E17 34M55)},
  MRNUMBER = {2240469},
MRREVIEWER = {David W. Farmer},
       DOI = {10.1088/0305-4470/39/28/S13},
       URL = {https://doi-org.bris.idm.oclc.org/10.1088/0305-4470/39/28/S13},
}

\bib{gonek2007hybrid}{article}{
  AUTHOR = {Gonek, S. M.},
  author={Hughes, C. P.},
  author={Keating, J. P.},
     TITLE = {A hybrid {E}uler-{H}adamard product for the {R}iemann zeta
              function},
   JOURNAL = {Duke Math. J.},
  FJOURNAL = {Duke Mathematical Journal},
    VOLUME = {136},
      YEAR = {2007},
    NUMBER = {3},
     PAGES = {507--549},
      ISSN = {0012-7094},
   MRCLASS = {11M26 (11M06 15A52)},
  MRNUMBER = {2309173},
MRREVIEWER = {Steven Joel Miller},
       DOI = {10.1215/S0012-7094-07-13634-2},
       URL = {https://doi-org.bris.idm.oclc.org/10.1215/S0012-7094-07-13634-2},
}

\bib{GR}{book}{
  AUTHOR = {Gradshteyn, I. S.},
  author={Ryzhik, I. M.},
     TITLE = {Table of integrals, series, and products},
   EDITION = {Seventh Edition},
      NOTE = {Translated from the Russian,
              Translation edited and with a preface by Alan Jeffrey and
              Daniel Zwillinger,
              With one CD-ROM (Windows, Macintosh and UNIX)},
 PUBLISHER = {Elsevier/Academic Press, Amsterdam},
      YEAR = {2007},
     PAGES = {xlviii+1171},
      ISBN = {978-0-12-373637-6; 0-12-373637-4},
   MRCLASS = {00A22 (33-00 65-00 65A05)},
  MRNUMBER = {2360010},
}

\bib{Hughes}{article}{
  title={On the characteristic polynomial of a random unitary matrix and the {R}iemann
zeta function},
  author={Hughes, Christopher P.},
  journal={PhD Thesis, University of Bristol},
  year={2001}
}


\bib{ItsAlexander}{article}{
    author = {Its, A.R.},
    isbn = {9780198744191},
    title = {Painlevé transcendents},
    booktitle = {The Oxford Handbook of Random Matrix Theory},
    publisher = {Oxford University Press},
    year = {2015},
    doi = {10.1093/oxfordhb/9780198744191.013.9},
    url = {https://doi.org/10.1093/oxfordhb/9780198744191.013.9},
}

\bib{ItsN}{book}{
AUTHOR = {Its, Alexander R.},
author={Novokshenov, Victor Yu.},
     TITLE = {The isomonodromic deformation method in the theory of
              {P}ainlev\'{e} equations},
    SERIES = {Lecture Notes in Mathematics},
    VOLUME = {1191},
 PUBLISHER = {Springer-Verlag, Berlin},
      YEAR = {1986},
     PAGES = {iv+313},
      ISBN = {3-540-16483-9},
   MRCLASS = {34A20 (35Q20 58F07)},
  MRNUMBER = {851569},
MRREVIEWER = {Alexander A. Pankov},
       DOI = {10.1007/BFb0076661},
       URL = {https://doi-org.bris.idm.oclc.org/10.1007/BFb0076661},
}

\bib{JM}{article}{
  AUTHOR = {Jimbo, Michio},
  author={Miwa, Tetsuji},
     TITLE = {Monodromy preserving deformation of linear ordinary
              differential equations with rational coefficients. {II}},
   JOURNAL = {Phys. D},
  FJOURNAL = {Physica D. Nonlinear Phenomena},
    VOLUME = {2},
      YEAR = {1981},
    NUMBER = {3},
     PAGES = {407--448},
      ISSN = {0167-2789},
   MRCLASS = {34A20 (14K25 58A15 58F07 81C05)},
  MRNUMBER = {625446},
MRREVIEWER = {V. A. Golubeva},
       DOI = {10.1016/0167-2789(81)90021-X},
       URL = {https://doi-org.bris.idm.oclc.org/10.1016/0167-2789(81)90021-X},
}

 \bib{keating2000random}{article}{
  AUTHOR = {Keating, J. P.},
  author={Snaith, N. C.},
     TITLE = {Random matrix theory and {$\zeta(1/2+it)$}},
   JOURNAL = {Comm. Math. Phys.},
  FJOURNAL = {Communications in Mathematical Physics},
    VOLUME = {214},
      YEAR = {2000},
    NUMBER = {1},
     PAGES = {57--89},
      ISSN = {0010-3616},
   MRCLASS = {11M26 (15A52 82B41)},
  MRNUMBER = {1794265},
MRREVIEWER = {Ze\'{e}v Rudnick},
       DOI = {10.1007/s002200000261},
       URL = {https://doi-org.bris.idm.oclc.org/10.1007/s002200000261},
}

\bib{keating-fei}{article}{
  AUTHOR = {Keating, Jonathan P.},
  author={Wei, Fei},
     TITLE = {Joint moments of higher order derivatives of {CUE}
              characteristic polynomials {II}: structures, recursive
              relations, and applications},
   JOURNAL = {Nonlinearity},
  FJOURNAL = {Nonlinearity},
    VOLUME = {37},
      YEAR = {2024},
    NUMBER = {8},
     PAGES = {Paper No. 085009, 54},
      ISSN = {0951-7715},
   MRCLASS = {11M50 (33C90 33E17 60B20)},
  MRNUMBER = {4767096},
MRREVIEWER = {Steven Joel Miller},
       DOI = {10.1088/1361-6544/ad5948},
       URL = {https://doi-org.bris.idm.oclc.org/10.1088/1361-6544/ad5948},
}

\bib{O}{article}{
  AUTHOR = {Okamoto, Kazuo},
     TITLE = {Studies on the {P}ainlev\'{e} equations. {IV}. {T}hird {P}ainlev\'{e}
              equation {$P_{{\rm III}}$}},
   JOURNAL = {Funkcial. Ekvac.},
  FJOURNAL = {Funkcialaj Ekvacioj. Serio Internacia},
    VOLUME = {30},
      YEAR = {1987},
    NUMBER = {2-3},
     PAGES = {305--332},
      ISSN = {0532-8721},
   MRCLASS = {58F05 (34C20 58F35)},
  MRNUMBER = {927186},
MRREVIEWER = {H\'{e}l\`ene Airault},
       URL = {http://www.math.kobe-u.ac.jp/~fe/xml/mr0927186.xml},
}

\bib{NIST}{book}{
 TITLE = {N{IST} handbook of mathematical functions},
    EDITOR = {Olver, Frank W. J.},
    editor={Lozier, Daniel W.},
    editor={Boisvert, Ronald F.},
    editor={Clark, Charles W.},
      NOTE = {With 1 CD-ROM (Windows, Macintosh and UNIX)},
 PUBLISHER = {U.S. Department of Commerce, National Institute of Standards
              and Technology, Washington, DC; Cambridge University Press,
              Cambridge},
      YEAR = {2010},
     PAGES = {xvi+951},
      ISBN = {978-0-521-14063-8},
   MRCLASS = {33-00 (00A20 65-00)},
  MRNUMBER = {2723248},
}

\bib{Qiu}{article}{
    AUTHOR = {Qiu, Yanqi},
     TITLE = {Infinite random matrices \& ergodic decomposition of finite
              and infinite {H}ua-{P}ickrell measures},
   JOURNAL = {Adv. Math.},
  FJOURNAL = {Advances in Mathematics},
    VOLUME = {308},
      YEAR = {2017},
     PAGES = {1209--1268},
      ISSN = {0001-8708},
   MRCLASS = {60B20 (22E66 28D05 33C45)},
  MRNUMBER = {3600086},
       DOI = {10.1016/j.aim.2017.01.003},
       URL = {https://doi-org.bris.idm.oclc.org/10.1016/j.aim.2017.01.003},
}

\bib{simm2024moments}{article}{
  title={On moments of the derivative of CUE characteristic polynomials and the Riemann zeta function},
  author={Simm, Nick},
  author={Wei, Fei},
  year={2024},
  eprint={https://arxiv.org/abs/2409.03687},
  archivePrefix={arXiv},
  primaryClass={math.PR}
}

\bib{tracy1994fredholm}{article}{
 AUTHOR = {Tracy, Craig A.},
 author={Widom, Harold},
     TITLE = {Fredholm determinants, differential equations and matrix
              models},
   JOURNAL = {Comm. Math. Phys.},
  FJOURNAL = {Communications in Mathematical Physics},
    VOLUME = {163},
      YEAR = {1994},
    NUMBER = {1},
     PAGES = {33--72},
      ISSN = {0010-3616},
   MRCLASS = {82B05 (33C90 47A75 47G10 47N55 82B10)},
  MRNUMBER = {1277933},
MRREVIEWER = {Peter J. Forrester},
       URL = {http://projecteuclid.org.bris.idm.oclc.org/euclid.cmp/1104270379},
}

\bib{V}{article}{
  AUTHOR = {Vanlessen, M.},
     TITLE = {Strong asymptotics of {L}aguerre-type orthogonal polynomials
              and applications in random matrix theory},
   JOURNAL = {Constr. Approx.},
  FJOURNAL = {Constructive Approximation. An International Journal for
              Approximations and Expansions},
    VOLUME = {25},
      YEAR = {2007},
    NUMBER = {2},
     PAGES = {125--175},
      ISSN = {0176-4276},
   MRCLASS = {15A52 (33C10 33C45 82B44)},
  MRNUMBER = {2283495},
MRREVIEWER = {Guangming Pan},
       DOI = {10.1007/s00365-005-0611-z},
       URL = {https://doi-org.bris.idm.oclc.org/10.1007/s00365-005-0611-z},
}

\bib{W}{article}{
 AUTHOR = {Winn, B.},
     TITLE = {Derivative moments for characteristic polynomials from the
              {CUE}},
   JOURNAL = {Comm. Math. Phys.},
  FJOURNAL = {Communications in Mathematical Physics},
    VOLUME = {315},
      YEAR = {2012},
    NUMBER = {2},
     PAGES = {531--562},
      ISSN = {0010-3616},
   MRCLASS = {60B20},
  MRNUMBER = {2971735},
MRREVIEWER = {VenKata K. B. Kota},
       DOI = {10.1007/s00220-012-1512-1},
       URL = {https://doi-org.bris.idm.oclc.org/10.1007/s00220-012-1512-1},
}

\bib{XZ}{article}{
  AUTHOR = {Xu, Shuai-Xia},
  author={Zhao, Yu-Qiu},
     TITLE = {Critical edge behavior in the modified {J}acobi ensemble and
              {P}ainlev\'{e} equations},
   JOURNAL = {Nonlinearity},
  FJOURNAL = {Nonlinearity},
    VOLUME = {28},
      YEAR = {2015},
    NUMBER = {6},
     PAGES = {1633--1674},
      ISSN = {0951-7715},
   MRCLASS = {60B20 (30E25 34M55)},
  MRNUMBER = {3350603},
       DOI = {10.1088/0951-7715/28/6/1633},
       URL = {https://doi-org.bris.idm.oclc.org/10.1088/0951-7715/28/6/1633},
}

\bib{Z}{article}{
 AUTHOR = {Zhou, Xin},
     TITLE = {The {R}iemann-{H}ilbert problem and inverse scattering},
   JOURNAL = {SIAM J. Math. Anal.},
  FJOURNAL = {SIAM Journal on Mathematical Analysis},
    VOLUME = {20},
      YEAR = {1989},
    NUMBER = {4},
     PAGES = {966--986},
      ISSN = {0036-1410},
   MRCLASS = {34B25 (35G15 45F15 45P05)},
  MRNUMBER = {1000732},
MRREVIEWER = {David J. Kaup},
       DOI = {10.1137/0520065},
       URL = {https://doi-org.bris.idm.oclc.org/10.1137/0520065},
}

\end{biblist}
\end{bibsection}

\end{document}